\setlist{noitemsep,leftmargin=*}
\begin{document}
%
\title{Minimax Optimal Additive Functional Estimation with Discrete Distribution}
%
%
%

\author{Kazuto Fukuchi and
        Jun Sakuma
\thanks{K. Fukuchi is with RIKEN Advanced Intelligence Project, Nihonbashi 1-chome Mitsui Building, 15th floor,1-4-1 Nihonbashi, Chuo-ku, Tokyo 103-0027, Japan, e-mail: kazuto.fukuchi@riken.jp.}
\thanks{J. Sakuma is with the Department of Computer Science, Graduate School of System and Information Engineering, University of Tsukuba, 1-1-1 Tennodai, Tsukuba, Ibaraki 305-8577 Japan, e-mail: jun@cs.tsukuba.ac.jp. He is also with RIKEN Advanced Intelligence Project and with JST CREST, K’s Gobancho 6F, 7, Gobancho, Chiyoda-ku, Tokyo, 102-0076 Japan.}
\thanks{This paper was presented in part at the 2017 IEEE International Symposium on Information Theory (ISIT), Aachen, Germany~\autocite{Fukuchi2017MinimaxDistributionsb} and 2018 IEEE International Symposium on Information Theory (ISIT), Vail, USA~\autocite{Fukuchi2018MinimaxSpeed}.}
}

\maketitle

\begin{abstract}
This paper addresses a problem of estimating an {\em additive functional} given $n$ i.i.d. samples drawn from a discrete distribution $P=(p_1,...,p_k)$ with alphabet size $k$. The additive functional is defined as $\theta(P;\phi)=\sum_{i=1}^k\phi(p_i)$ for a function $\phi$, which covers the most of the entropy-like criteria. The minimax optimal risk of this problem has been already known for some specific $\phi$, such as $\phi(p)=p^\alpha$ and $\phi(p)=-p\ln p$. However, there is no generic methodology to derive the minimax optimal risk for the additive function estimation problem. In this paper, we reveal the property of $\phi$ that characterizes the minimax optimal risk of the additive functional estimation problem; this analysis is applicable to general $\phi$. More precisely, we reveal that the minimax optimal risk of this problem is characterized by the {\em divergence speed} of the function $\phi$. 
\end{abstract}

\begin{IEEEkeywords}
Estimation, Statistical analysis, Entropy, Function approximation
\end{IEEEkeywords}

\section{Introduction}

Let $P$ be a probability measure with alphabet size $k$, where we use a vector representation of $P$; i.e., $P=(p_1,...,p_k)$ and $p_i=P\cbrace{i}$. Let $\phi$ be a mapping from $[0,1]$ to $\RealSet$. Given a set of i.i.d. samples $S_n=\cbrace{X_1,...,X_n} \sim P^n$, we deal with a problem of estimating an {\em additive functional} of $\phi$. The additive functional $\theta$ of $\phi$ is defined as
\begin{align}
 \theta(P;\phi) = \sum_{i=1}^k \phi(p_i).
\end{align}
We simplify this notation to $\theta(P;\phi)=\theta(P)$. Most entropy-like criteria can be formed in terms of $\theta$. For instance, when $\phi(p)=-p\ln p$, $\theta$ is Shannon entropy. For a positive real $\alpha$, letting $\phi(p)=p^\alpha$, $\ln(\theta(P))/(1-\alpha)$ becomes R\'enyi entropy. More generally, letting $\phi = f$ where $f$ is a concave function, $\theta$ becomes $f$-entropies~\autocite{Akaike1998InformationPrinciple}. The estimation problem of such entropy-like criteria is a basic but important component for various research areas, such as physics~\autocite{Lake2011AccurateDevices}, neuroscience~\autocite{Nemenman2004EntropyProblem}, security~\autocite{Gu2005DetectingEstimation}, and machine learning~\autocite{Quinlan1986InductionTrees,Peng2005FeatureMin-redundancy}.

The goal of this study is to construct the minimax optimal estimator of $\theta$ given a function $\phi$. To precisely define the minimax optimality, we introduce the (quadratic) minimax risk. A sufficient statistic of $P$ is a histogram $N=\paren{N_1,...,N_k}$, where letting $\ind{\cbrace*{\cdot}}$ be the indicator function, $N_j = \sum_{i=1}^n \ind{\cbrace{X_i = j}}$ and $N \sim \Mul(n, P)$. The estimator of $\theta$ can thus be defined as a function $\hat\theta:[n]^k\to\RealSet$, where $[m]=\cbrace{1,...,m}$ for an integer $m$. The quadratic minimax risk is defined as
\begin{align}
 R^*(n,k;\phi) = \inf_{\hat\theta}\sup_{P \in \dom{M}_k} \Mean\bracket*{\paren*{\hat\theta(N) - \theta(P)}^2}, \label{eq:minimax-risk}
\end{align}
where $\dom{M}_k$ is the set of all probability measures on $[k]$, and the infimum is taken over all estimators $\hat\theta$. With this definition, an estimator $\hat\theta$ is minimax \mbox{(rate-)optimal} if there is a constant $C>0$ such that
\begin{align}
 \sup_{P \in \dom{M}_k} \Mean\bracket*{\paren*{\hat\theta(N) - \theta(P)}^2} \le C R^*(n,k;\phi).
\end{align}
Since no estimator achieves smaller worst case risk than the minimax risk, we can say that the minimax optimal estimator is the best regarding the worst case risk.

\noindent{\bfseries Notations.}
We now introduce some additional notations. For any positive real sequences $\cbrace{a_n}$ and $\cbrace{b_n}$, $a_n \gtrsim b_n$ denotes that there exists a positive constant $c$ such that $a_n \ge c b_n$. Similarly, $a_n \lesssim b_n$ denotes that there exists a positive constant $c$ such that $a_n \le c b_n$. Furthermore, $a_n \asymp b_n$ implies $a_n \gtrsim b_n$ and $a_n \lesssim b_n$. For an event $\event$, we denote its complement as $\event^c$. For two real numbers $a$ and $b$, $a \lor b = \max\cbrace{a, b}$ and $a \land b = \min\cbrace{a, b}$.

\subsection{Related Work}

Many researchers have been dealing with the estimation problem of the additive functional and provides many estimators and analyses in decades past. The plugin estimator or the maximum likelihood estimator~(MLE) is the simplest way to estimate the additive functional $\theta$, in which the empirical probabilities $\tilde{P} = (\hat{p}_1,...,\hat{p}_k)$ are substituted into $\theta$ as $\theta(\tilde{P})$. The plugin estimator is asymptotically consistent~\autocite{Antos2001ConvergenceDistributions}, asymptotically efficient and minimax optimal~\autocite{Vaart1998AsymptoticStatistics} under weak assumptions for fixed $k$. However, this is not true for the large-$k$ regime. Indeed, \textcite{Jiao2015MinimaxDistributions} and \textcite{Wu2016MinimaxApproximation} derived a lower bound for the quadratic risk for the plugin estimator of $\phi(p)=-p\ln(p)$ and $\phi(p)=p^\alpha$. In the case of Shannon entropy, the lower bound is given by \textcite{Jiao2015MinimaxDistributions} as
\begin{align}
  \sup_{P \in \dom{M}_k}\Mean\bracket*{\paren*{\hat\theta_{\mathrm{plugin}}(N) - \theta(P)}^2} \gtrsim \frac{k^2}{n^2} + \frac{\ln^2 k}{n}. \label{eq:lower-plugin}
\end{align}
The first term $k^2/n^2$ comes from the bias and indicates that if $k$ grows linearly with respect to $n$, the plugin estimator becomes inconsistent. Bias-correction methods, such as \autocite{Miller1955NoteEstimates,Grassberger1988FiniteEstimates,Zahl1977JackknifingDiversity}, can be applied to the plugin estimator of $\phi(p)=-p\ln p$ to reduce the bias whereas these bias-corrected estimators are still inconsistent if $k$ is larger than $n$. The estimators based on Bayesian approaches in \autocite{Schurmann1996EntropySequences,Schober2013SomeDistributions,Holste1998BayesEntropies} are also inconsistent for $k \gtrsim n$~\autocite{Han2015DoesProblem}.

\textcite{Paninski2004EstimatingSamples} firstly revealed existence of a consistent estimator even if the alphabet size $k$ is larger than linear order of the sample size $n$. However, they did not provide a concrete form of the consistent estimator. The first estimator that achieves consistency in the large-$k$ regime is proposed by \textcite{Valiant2011EstimatingCLTs}. However, the estimator of \autocite{Valiant2011EstimatingCLTs} has not been shown to achieve the minimax rate even in a more detailed analysis in \autocite{Valiant2011TheEstimators}.

Recently, many researchers investigated the minimax optimal risk for the additive functionals in the large-$k$ regime for some specific $\phi$. \textcite{Acharya2015TheEntropy} showed that the bias-corrected estimator of R\'enyi entropy achieves the minimax optimal risk in regard to the sample complexity if $\alpha > 1$ and $\alpha \in \NaturalSet$, but they did not show the minimax optimality for other $\alpha$. \textcite{Jiao2015MinimaxDistributions} introduced a minimax optimal estimator for $\phi(p)=p^\alpha$ for any $\alpha \in (0,3/2)$ in the large-$k$ regime. \textcite{Wu2015ChebyshevUnseen} derived a minimax optimal estimator for $\phi(p)=\ind{p > 0}$. For $\phi(p)=-p\ln p$, \textcite{Jiao2015MinimaxDistributions,Wu2016MinimaxApproximation} independently introduced the minimax optimal estimator in the large-$k$ regime.
\cref{tbl:existing-summary} shows the summary of the existing minimax optimal risks for the additive functional estimation with some specific $\phi$. The first column shows the target function $\phi$, and the second column denotes the parameter appeared in $\phi$. The third column shows the minimax optimal risk corresponding to $\phi$, where these rates only proved when the condition shown in the fourth column is satisfied. In the case of Shannon entropy, the optimal risk was obtained as
\begin{align}
  R^*(n,k;\phi) \asymp \frac{k^2}{(n\ln n)^2} + \frac{\ln^2 k}{n}.
\end{align}
The first term is improved from \cref{eq:lower-plugin}. It indicates that the introduced estimator can consistently estimate Shannon entropy even when $k \gtrsim n$, as long as $n \gtrsim k/\ln k$.

\begin{table}[tb]
  \centering
  \caption{Summary of existing results.}\label{tbl:existing-summary}
  \begin{tabular}{ll|lll}
    \toprule
     $\phi$ & $\alpha$ & minimax risk & condition &  \\
    \midrule
    $p^\alpha$ & $(0,1/2]$ & $\frac{k^2}{(n\ln n)^{2\alpha}}$ & $\ln k \gtrsim \ln n$ & \autocite{Jiao2015MinimaxDistributions} \\
    $p^\alpha$ & $(1/2,1)$ & $\frac{k^2}{(n\ln n)^{2\alpha}}+\frac{k^{2-2\alpha}}{n}$ & - & \autocite{Jiao2015MinimaxDistributions} \\
    $-p\ln p$  & - & $\frac{k^2}{(n\ln n)^{2\alpha}}+\frac{\ln^2k}{n}$ & - & \autocite{Jiao2015MinimaxDistributions,Wu2016MinimaxApproximation} \\
    $p^\alpha$ & $(1,3/2)$ & $\frac{1}{(n\ln n)^{2\alpha-2}}$ & $k \asymp n\ln n$ & \autocite{Jiao2015MinimaxDistributions} \\
    $p^\alpha$ & $\ge 3/2$ & $\frac{1}{n}$ & - & \autocite{Jiao2017MaximumDistributions} \\
    $\ind{p > 0}$ & - & $k^2e^{-\paren{\frac{\sqrt{n\ln k}}{k}\lor\frac{n}{k}\lor 1}}$ & - & \autocite{Wu2015ChebyshevUnseen} \\
    \bottomrule
  \end{tabular}
\end{table}

While the recent efforts revealed the minimax optimal estimators for the additive functionals with some specific $\phi$, there is no unified methodology to derive the minimax optimal estimator for the additive functional with general $\phi$. \textcite{Jiao2015MinimaxDistributions} suggested that their proposed estimator can be extended for general additive functional $\theta$. However, the minimax optimality of the estimator was only proved for specific cases of $\phi$, including $\phi(p)=-p\ln p$ and $\phi(p)=p^\alpha$. To prove the minimax optimality for other $\phi$, we need to individually analyze the minimax optimality for specific $\phi$. The aim of the present paper is to clarify which property of $\phi$ substantially influences the minimax optimal risk when estimating the additive functional.

The optimal estimators for divergences with large alphabet size have been investigated in \autocite{Bu2016EstimationDistributions,Han2016MinimaxDistributions,Jiao2016MinimaxDistance,Acharya2018ProfileDivergence}. The estimation problems of divergences are much complicated than the additive function, while the similar techniques were applied to derive the minimax optimality.

\subsection{Contributions}

In this paper, we investigate the minimax optimal risk of the additive functional estimation and reveal a substantial property of $\phi$ that characterizes the minimax optimal risk. More precisely, we show that the divergence speed of $\phi$, which is defined as below, characterizes the minimax optimal risk of the additive functional estimation.
\begin{definition}[Divergence speed]\label{def:div-speed}
  For a positive integer $\ell$ and $\alpha \in \RealSet$, the $\ell$th divergence speed of $\phi \in C^\ell[0,1]$ is $p^{\alpha}$ if there exist constants $W_\ell > 0$, $c_\ell \ge 0$ and $c'_\ell \ge 0$ such that for all $p \in (0,1)$,
  \begin{align}
    W_\ell p^{-\ell+\alpha} - c'_\ell \le \abs*{\phi^{(\ell)}(p)} \le W_\ell p^{-\ell+\alpha} + c_\ell.
  \end{align}
\end{definition}
The divergence speed is faster if $\alpha$ is larger. Informally, the meaning of ``the $\ell$th divergence speed of a function $f(p)$ is $p^{\alpha}$'' is that $\abs*{f^{(\ell)}(p)}$ goes to infinity at the same speed of the $\ell$th derivative of $p^{\alpha}$ when $p$ approaches $0$. In \cref{tbl:existing-summary}, the divergence speed of $\phi(p)=p^\alpha$ for non-integer $\alpha$ is $p^\alpha$ for any $\ell > \alpha$. Also, the divergence speed of $\phi(p)=-p\ln p$ is $p^1$ for any $\ell > 1$. 

\begin{table*}[tb]
  \centering
  \caption{Summary of results.}\label{tbl:results-summary}
  \begin{tabular}{ll|lll}
    \toprule
     $\alpha$ & $\ell$ & minimax risk & condition & estimator \\
    \midrule
     $\le 0$ & $1$ & no consistent estimator & - & \\
     $(0,1/2]$ & $4$ & $\frac{k^2}{(n\ln n)^{2\alpha}}$ & $k \gtrsim \ln^{4} n$ & best poly. \& 2nd-order bias-correction \\
     $(1/2,1)$ & $4$ & $\frac{k^2}{(n\ln n)^{2\alpha}}+\frac{k^{2-2\alpha}}{n}$ & - & best poly. \& 2nd-order bias-correction \\
     $1$ & $4$ & $\frac{k^2}{(n\ln n)^{2}}+\frac{\ln^2k}{n}$ & - & best poly. \& 2nd-order bias-correction \\
     $(1,3/2)$ & $6$ & $\frac{k^2}{(n\ln n)^{2\alpha}}+\frac{1}{n}$ & - & best poly. \& 4th-order bias-correction \\
     $[3/2,2]$ & $2$ & $\frac{1}{n}$ & - & plugin \\
    \bottomrule
  \end{tabular}
\end{table*}

The results are summarized in \cref{tbl:results-summary}. This table shows the minimax optimal risk~(third column), the condition to prove the minimax optimal risk~(fourth column), and the estimator that achieves the optimal risk~(fifth column) for each range of $\alpha$. The column $\ell$~(second column) means that the presented minimax optimality is valid if the $\ell$th divergence speed of $\phi$ is $p^\alpha$. As we can see from \cref{tbl:results-summary}, the minimax risk are affected only by $\alpha$. Thus, we success to characterize the minimax optimal risk only by the property of $\phi$, i.e., $\alpha$ in the divergence speed, without specifying $\phi$. In general, the convergence speed of the minimax optimal risk becomes faster as $\alpha$ increases but is saturated for $\alpha \ge 3/2$. 

As shown in \cref{tbl:results-summary}, the behaviour of the minimax optimal risk is changed by the ranges of $\alpha$. If $\alpha \le 0$, $\phi$ is an unbounded function. Thus, we trivially show that there is no consistent estimator of $\theta$. In other words, the minimax optimal rate is larger than constant order if $\alpha \le 0$. This means that there is no reasonable estimator if $\alpha \le 0$, and thus there is no need to derive the minimax optimal estimator for this case. 

For $\alpha \in (1/2,1)$ and $\alpha \in [3/2,2]$, \textcite{Jiao2015MinimaxDistributions} showed the same minimax optimal risk for $\phi(p)=p^\alpha$. Besides, \textcite{Jiao2015MinimaxDistributions,Wu2016MinimaxApproximation} proved the same minimax optimal risk for $\phi(p)=-p\ln p$. Our result is a generalized version of their results such that it is applicable to the general $\phi$ including $\phi(p)=p^\alpha$ and $\phi(p)=-p\ln p$.

For $\alpha  \in (0,1/2]$, we show that the minimax optimal risk is 
\begin{align}
 \frac{k^2}{(n\ln n)^{2\alpha}}, \label{eq:optimal-rate-0-1/2}
\end{align}
where we assume $k \gtrsim \ln^4 n$ to prove the above rate. As an existing result, \textcite{Jiao2015MinimaxDistributions} proved the same minimax optimal risk for $\phi(p)=p^\alpha$, where their proof requires an assumption $\ln k \gtrsim \ln n$~(first row in \cref{tbl:existing-summary}). Their assumption is stronger than the assumption we assumed, i.e., $k \gtrsim \ln^4 n$. In this sense, we provide clearer understanding of the additive functional estimation problem for this case.

For $\alpha \in (1,3/2)$, we prove the following minimax optimal risk
\begin{align}
  \frac{k^2}{(n\ln n)^{2\alpha}} + \frac{1}{n}. \label{eq:optimal-rate-1-3/2-cont}
\end{align}
As an existing result for this range, \textcite{Jiao2015MinimaxDistributions} investigated the minimax optimal risk for $\phi(p) = p^\alpha$. As shown in \cref{tbl:existing-summary}~(fourth row), they showed that the minimax optimal rate for $\phi(p) = p^\alpha$ is $1/(n\ln n)^{2\alpha-2}$. However, their analysis requires the strong condition $k \asymp n\ln n$, and minimax optimality for $\alpha \in (1,3/2)$ is, therefore, far from clear understanding. In contrast, we success to prove \cref{eq:optimal-rate-1-3/2-cont} without any condition on the relationship between $k$ and $n$. As a result, we clarify the number of samples that is necessary to estimate $\theta$ consistently.

\section{Preliminaries}\label{sec:preliminaries}

\subsection{Poisson Sampling}
We employ the Poisson sampling technique to derive upper and lower bounds for the minimax risk. The Poisson sampling technique models the samples as independent Poisson distributions, while the original samples follow a multinomial distribution. Specifically, the sufficient statistic for $P$ in the Poisson sampling is a histogram $\tilde{N} = \paren{\tilde{N}_i,...,\tilde{N}_k}$, where $\tilde{N}_1,...,\tilde{N}_k$ are independent random variables such that $\tilde{N}_i \sim \Poi(np_i)$. The minimax risk for Poisson sampling is defined as follows:
\begin{align}
 \tilde{R}^*(n,k;\phi) = \inf_{\cbrace{\hat\theta}}\sup_{P \in \dom{M}_k} \Mean\bracket*{\paren*{\hat\theta(\tilde{N}) - \theta(P)}^2}.
\end{align}
The minimax risk of the Poisson sampling well approximates that of the multinomial distribution. Indeed, \textcite{Jiao2015MinimaxDistributions} presented the following lemma.
\begin{lemma}[\textcite{Jiao2015MinimaxDistributions}]\label{lem:well-approx-poisson}
  The minimax risk under the Poisson model and the multinomial model are related via the following inequalities:
  \begin{multline}
    \tilde{R}^*(2n,k;\phi) - \sup_{P \in \dom{M}_k}\abs*{\theta(P)}e^{-n/4} \\ \le R^*(n,k;\phi) \le 2\tilde{R}^*(n/2,k;\phi).
  \end{multline}
\end{lemma}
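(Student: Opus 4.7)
My plan is to exploit the standard Poissonization coupling: if $T\sim\mathrm{Poi}(\lambda)$ is drawn independently of an i.i.d.\ stream $X_1,X_2,\ldots\sim P$, then the histogram of $(X_1,\ldots,X_T)$ has independent $\mathrm{Poi}(\lambda p_i)$ coordinates, and equivalently a rate-$\lambda$ Poisson sample conditioned on total count $m$ is distributed as $\mathrm{Mul}(m,P)$. This lets me simulate multinomial samples from Poisson samples and vice versa, up to a small failure event controlled by standard Chernoff bounds on Poisson random variables.

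For the upper bound $R^*(n,k;\phi)\le 2\tilde R^*(n/2,k;\phi)$, I take an almost-optimal Poisson estimator $\hat\theta_P$ at rate $n/2$ and build a multinomial estimator from it. Given $n$ multinomial samples $X_1,\ldots,X_n$, I draw an independent $T\sim\mathrm{Poi}(n/2)$ and feed $\hat\theta_P$ the histogram of $X_1,\ldots,X_{T\wedge n}$. On $\{T\le n\}$ this input coincides with a genuine rate-$(n/2)$ Poisson sample under the coupling above. Since $P(T>n)$ is exponentially small by Chernoff, the contribution from the failure event can be absorbed into the multiplicative factor $2$ in front of $\tilde R^*(n/2,k;\phi)$.

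For the lower bound I reverse the construction. Let $\hat\theta_M$ be near-optimal for multinomial sampling at size $n$. From a rate-$2n$ Poisson sample with total count $T$, I define a Poisson estimator that applies $\hat\theta_M$ to the first $n$ underlying i.i.d.\ symbols on $\{T\ge n\}$ (a genuine $\mathrm{Mul}(n,P)$ sample by exchangeability) and outputs $0$ on $\{T<n\}$. The squared risk decomposes into the good-event term, bounded by $R^*(n,k;\phi)$, plus $\theta(P)^2\,P(T<n)$. A Chernoff bound on $\mathrm{Poi}(2n)$ gives $P(T<n)\le e^{-n(1-\ln 2)}\le e^{-n/4}$, and applying Minkowski's inequality on the square-rooted risks converts the residual into the linear slack $\sup_P|\theta(P)|e^{-n/4}$ demanded by the statement.

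The main technical obstacle will be controlling the failure events cleanly enough to get the precise constants: on the upper-bound side, verifying that the factor $2$ really does absorb the tail contribution (which may require truncating $\hat\theta_P$ to a bounded range), and on the lower-bound side, invoking Minkowski rather than decomposing the squared risk directly so that the slack appears linearly in $\theta(P)$ rather than quadratically. The probabilistic estimates underneath are routine Chernoff bounds.
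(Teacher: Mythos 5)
Your two constructions — subsampling a Poisson-sized batch from the multinomial sample to build a multinomial estimator out of a Poisson one, and extracting a size-$n$ sub-histogram from a rate-$2n$ Poisson sample to go the other way — are exactly the right building blocks, and they match the source proof in \textcite{Jiao2015MinimaxDistributions} (this paper only cites the lemma; it gives no proof of its own). But both of your closing steps have gaps.

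For the upper bound, drawing $T\sim\Poi(n/2)$ \emph{unconditionally} and feeding $\hat\theta_P$ the first $T\wedge n$ samples does not let you ``absorb the failure event into the factor $2$.'' On $\{T>n\}$ the input to $\hat\theta_P$ is a multinomial-$n$ histogram whose law $\hat\theta_P$ was never designed for, and its conditional risk on that event has no a priori relation to $\tilde R^*(n/2,k;\phi)$; an exponentially small probability of a badly behaved term is not enough without also truncating $\hat\theta_P$, and even then you only win for $n$ large. The clean mechanism is different: draw $T$ from the \emph{conditional} law of $\Poi(n/2)$ given $T\le n$, so that the input to $\hat\theta_P$ has exactly the Poisson-histogram law conditioned on total count at most $n$, and then the risk of the resulting multinomial estimator is at most $\tilde{\Mean}[(\hat\theta_P-\theta)^2]\,/\,\Pr\{T\le n\}\le 2\,\tilde{\Mean}[(\hat\theta_P-\theta)^2]$, using only Markov's inequality $\Pr\{T\le n\}\ge 1/2$. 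That is where the factor $2$ actually comes from; no Chernoff bound is needed on this side.

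For the lower bound, the sub-histogram construction and the Chernoff computation $\Pr\{T<n\}\le e^{-n(1-\ln 2)}\le e^{-n/4}$ are correct, but the Minkowski step does not do what you want. Writing $\hat\theta_P-\theta=(\hat\theta_M-\theta)\ind{T\ge n}-\theta\,\ind{T<n}$ and applying Minkowski gives $\|\hat\theta_P-\theta\|_2\le\|\hat\theta_M-\theta\|_2+|\theta|\sqrt{\Pr\{T<n\}}$; squaring and taking the supremum over $P$ reintroduces both a $\theta(P)^2\Pr\{T<n\}$ term and a cross term, so the slack stays quadratic in $|\theta(P)|$ and the exponent also degrades from $e^{-n/4}$ to $e^{-n/8}$. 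In fact the direct decomposition you wrote down already gives $\tilde R^*(2n,k;\phi)\le R^*(n,k;\phi)+\bigl(\sup_P|\theta(P)|\bigr)^2 e^{-n/4}$, and the quadratic form $\bigl(\sup_P|\theta(P)|\bigr)^2$ is what actually appears in Lemma~16 of \textcite{Jiao2015MinimaxDistributions}; the present paper appears to have dropped the square when restating it. So the bound you obtain \emph{without} Minkowski is the right one, and the Minkowski maneuver should simply be deleted.
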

\cref{lem:well-approx-poisson} states $R^*(n,k;\phi) \asymp \tilde{R}^*(n,k;\phi)$, and thus we can derive the minimax rate of the multinomial distribution from that of the Poisson sampling.

\subsection{Polynomial Approximation}
\textcite{Cai2011TestingFunctional} presented a technique of the best polynomial approximation for deriving the minimax optimal estimators and their lower bounds for the risk. Besides, \textcite{Jiao2017MaximumDistributions} used the Bernstein polynomial approximation to derive the upper bound on the estimation error of the plugin estimator. We use such polynomial approximation techniques to derive the upper and the lower bound on the minimax optimal risk for the additive functional estimation.

The key to characterize these approximations is the {\em (weighted) modulus of smoothness}. For an interval $I \subseteq \RealSet$, let us define the $L$th {\em finite difference} of a real-valued scalar function $\phi$ at point $x \in I$ as
\begin{align}
  (\Delta^L_h \phi)(x) =& \sum_{m=0}^L(-1)^{L-m}\binom{L}{m}\phi(x + (L/2-m)h), \label{eq:finite-diff}
\end{align}
if $x - hL/2 \in I$ and $x + hL/2 \in I$, and otherwise $(\Delta^L_h \phi)(x) = 0$. The modulus of smoothness of $\phi$ on an interval $I$ is defined as
\begin{align}
    \omega^L(\phi,t;I) = \sup_{h \in (0,t]}\sup_{x \in I}\abs*{\paren*{\Delta^L_{h}\phi}(x)}.
\end{align}
More generally, we can define the {\em weighted modulus of smoothness} by introducing a weight function $\varphi$. The weighted modulus of smoothness of $\phi$ on an interval $I$ is defined as
\begin{align}
    \omega^L_\varphi(\phi,t;I) = \sup_{h \in (0,t]}\sup_{x \in I}\abs*{\paren*{\Delta^L_{h\varphi}\phi}(x)}.
\end{align}
Note that $\omega^L_\varphi(\phi,t;I) = \omega^L(\phi,t;I)$ for $\varphi(x)=1$. $\omega^1$ is also known as the {\em modulus of continuity}.

We introduce a useful property of the modulus of smoothness, which will be used in later analyses:
\begin{lemma}[\textcite{DeVore1993ConstructiveApproximation}]\label{lem:mod-derivative}
  For a positive integer $r$ and any $k$ times continuously differentiable function $f:[-1,1]\to\RealSet$ where $k < r$, there exists a constant $C > 0$ only depending on $r$ such that
  \begin{align}
    \omega^r(f,t;[-1,1]) \le Ct^k\omega^{r-k}(f^{(k)},t;(-1,1)).
  \end{align}
\end{lemma}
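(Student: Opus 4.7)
The plan is to factor the order-$r$ difference and convert one factor into an integral of $f^{(k)}$, which automatically supplies the promised $h^{k}$. I would rely on two classical ingredients: the operator identity
\begin{align}
    \Delta^{r}_{h}f = \Delta^{r-k}_{h}\bigl(\Delta^{k}_{h}f\bigr),
\end{align}
which holds because finite differences at a fixed step $h$ commute, together with the integral representation
\begin{align}
    \bigl(\Delta^{k}_{h}f\bigr)(x) = \int_{-h/2}^{h/2}\!\!\cdots\!\int_{-h/2}^{h/2} f^{(k)}\bigl(x + u_{1} + \cdots + u_{k}\bigr)\,du_{1}\cdots du_{k},
\end{align}
valid whenever $x \pm kh/2 \in [-1,1]$. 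The representation is proved by induction on $k$ starting from the fundamental theorem of calculus; the symmetric (centered) choice of nodes in \eqref{eq:finite-diff} is exactly what makes each variable $u_{j}$ range symmetrically over $[-h/2,h/2]$ rather than over $[0,h]$ as in the forward-difference version.

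Composing the two ingredients and moving the outer difference inside the integral (which is legal because $\Delta^{r-k}_{h}$ acts only in $x$) yields
\begin{align}
    \bigl(\Delta^{r}_{h}f\bigr)(x) = \int_{-h/2}^{h/2}\!\!\cdots\!\int_{-h/2}^{h/2}\bigl(\Delta^{r-k}_{h}f^{(k)}\bigr)\bigl(x + u_{1} + \cdots + u_{k}\bigr)\,du_{1}\cdots du_{k}.
\end{align}
Bounding the integrand pointwise by $\omega^{r-k}(f^{(k)},h;(-1,1))$ and using the fact that the volume of the integration domain is exactly $h^{k}$ gives $\bigl|(\Delta^{r}_{h}f)(x)\bigr| \le h^{k}\,\omega^{r-k}(f^{(k)},h;(-1,1))$. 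Taking the supremum over admissible $x \in [-1,1]$ and over $h \in (0,t]$ then produces the claimed inequality.

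The main technical obstacle is the behaviour near the endpoints: for $x$ within $O(rh)$ of $\pm 1$ the outer difference $\Delta^{r}_{h}f(x)$ is set to zero by the definition in \eqref{eq:finite-diff}, while some of the shifted points $x + u_{1} + \cdots + u_{k}$ may fall in a region where the inner difference $\Delta^{r-k}_{h}f^{(k)}$ is itself in its zero-extension regime, so the chain of identities above requires care on a boundary layer. The standard remedy is to replace $f$ by a Whitney-type $C^{k}$ extension onto a slightly enlarged interval, apply the interior argument uniformly, and absorb the boundary correction into a multiplicative constant depending only on $r$; this is the source of the constant $C = C(r)$ in the statement, as opposed to $C = 1$.
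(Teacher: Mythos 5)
The paper does not prove this lemma---it is cited verbatim from \textcite{DeVore1993ConstructiveApproximation}---so there is no in-paper argument to compare against. Your reconstruction is precisely the textbook argument (factor $\Delta^r_h = \Delta^{r-k}_h\circ\Delta^k_h$, write $\Delta^k_h f$ as a $k$-fold integral of $f^{(k)}$ over $[-h/2,h/2]^k$, pull the outer difference through, and bound by the modulus times the volume $h^k$), and it is correct.

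One small remark: the boundary-layer worry in your last paragraph is not actually an obstacle here, so the Whitney-extension detour is unnecessary and the constant can be taken to be $C=1$. If $(\Delta^r_h f)(x)$ is in the zero-extension regime the inequality is trivially $0\le\cdot$, so assume $x\pm rh/2\in[-1,1]$. For each $m=0,\dots,r-k$ the intermediate point $y_m=x+((r-k)/2-m)h$ satisfies $y_m\pm kh/2\in[x-rh/2,\,x+rh/2]\subseteq[-1,1]$, so every $\Delta^k_h f(y_m)$ is nontrivially defined and the operator identity needs no extension. Likewise, for $z=x+u_1+\cdots+u_k$ with $u_j\in[-h/2,h/2]$ one has $z\pm(r-k)h/2\in[x-rh/2,\,x+rh/2]\subseteq[-1,1]$, so the formal expression $(\Delta^{r-k}_h f^{(k)})(z)$ only ever evaluates $f^{(k)}$ inside $[-1,1]$; by continuity of $f^{(k)}$ its supremum over such $z$ coincides with the supremum over the open admissible range appearing in $\omega^{r-k}(f^{(k)},h;(-1,1))$, which is all the bound requires.
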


For later analyses, we use two kinds of polynomial approximations; {\em Bernstein polynomial approximation} and {\em best polynomial approximation}.

{\bfseries Bernstein Polynomial Approximation}
A {\em Bernstein polynomial} is a linear combination of Bernstein basis polynomials, which is defined as
\begin{align}
    b_{\nu,L}(x) = \binom{L}{\nu}x^\nu(1-x)^{L-\nu} \for \nu=0,...,L.
\end{align}
Given a function $\phi:[0,1]\to\RealSet$, the polynomial obtained by the Bernstein polynomial approximation with degree-$L$ is defined as
\begin{align}
    B_L[\phi] = \sum_{\nu=0}^L\phi\paren*{\frac{\nu}{L}}b_{\nu,L}.
\end{align}
If $\phi$ is continuous on $[0,1]$, the Bernstein polynomial converges to $\phi$ as $L$ tends to $\infty$.

\textcite{Ditzian1994DirectPolynomials} provided an upper bound of the pointwise error on the Bernstein polynomial approximation by using the second-order modulus of smoothness:
\begin{theorem}[A special case of \textcite{Ditzian1994DirectPolynomials}]\label{thm:bernstein-error}
  Given a function $\phi:[0,1]\to\RealSet$, for an arbitrary $x \in [0,1]$, we have
  \begin{align}
      \abs*{B_L[\phi](x) - \phi(x)} \lesssim \omega^2\paren*{\phi, \sqrt{\frac{x(1-x)}{L}}}.
  \end{align}
\end{theorem}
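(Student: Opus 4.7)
The plan is to bound the pointwise error by a K-functional argument, splitting $\phi$ into a smooth function plus a small remainder. Fix $x \in [0,1]$ and set $h = \sqrt{x(1-x)/L}$. Because the Bernstein operator reproduces affine functions, i.e.\ $B_L[1](x)=1$ and $B_L[t](x)=x$, the approximation error admits the representation
\begin{align}
  B_L[\phi](x) - \phi(x) = \sum_{\nu=0}^{L} b_{\nu,L}(x)\bracket*{\phi(\nu/L) - \phi(x)},
\end{align}
and is in particular invariant under adding any affine function to $\phi$.

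Next, I would invoke the classical equivalence between the second-order modulus of smoothness and the Peetre K-functional on $[0,1]$, which is standard in constructive approximation theory~\autocite{DeVore1993ConstructiveApproximation}: for every $t > 0$,
\begin{align}
  \omega^2(\phi, t; [0,1]) \asymp \inf_{g \in C^2[0,1]}\cbrace*{\sup_{y \in [0,1]}\abs*{\phi(y)-g(y)} + t^2 \sup_{y \in [0,1]}\abs*{g''(y)}}.
\end{align}
Choose a near-optimal $g = g_x$ at scale $t = h$, so that $\sup_y\abs*{\phi(y)-g(y)} + h^2 \sup_y\abs*{g''(y)} \lesssim \omega^2(\phi, h;[0,1])$, and decompose
\begin{align}
  B_L[\phi](x) - \phi(x) = B_L[\phi-g](x) + \paren*{B_L[g](x) - g(x)} + \paren*{g(x) - \phi(x)}.
\end{align}

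For the first and third terms, positivity of $B_L$ together with $B_L[1](x)=1$ gives $\abs*{B_L[\phi-g](x)} \lor \abs*{g(x)-\phi(x)} \le \sup_y\abs*{\phi(y)-g(y)}$. For the middle term, a second-order Taylor expansion of $g$ at $x$ combined with the Bernstein moment identities $B_L[t-x](x) = 0$ and $B_L[(t-x)^2](x) = x(1-x)/L = h^2$ yields $\abs*{B_L[g](x) - g(x)} \le \tfrac{1}{2}\sup_y\abs*{g''(y)}\,h^2$. Summing the three contributions delivers $\abs*{B_L[\phi](x) - \phi(x)} \lesssim \omega^2(\phi, h;[0,1])$, as required.

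The main obstacle is not any delicate calculation but the K-functional equivalence itself, which is classical but non-trivial; I would cite it rather than reprove it. A minor care-point is the behavior at the boundary $x \in \cbrace*{0,1}$, where $h=0$; there the error vanishes identically since $B_L[\phi](0)=\phi(0)$ and $B_L[\phi](1)=\phi(1)$, so the claim is trivial.
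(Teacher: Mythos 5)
Your argument is correct. The paper does not prove \cref{thm:bernstein-error}; it is stated as a cited special case of Ditzian's result, so there is no in-paper proof to compare against. Your K-functional derivation is a valid, self-contained proof of exactly this case (Ditzian's estimate with the weight exponent $\lambda=0$). The key steps are all sound: positivity and reproduction of affine functions by $B_L$, the second-moment identity $\sum_\nu b_{\nu,L}(x)(\nu/L-x)^2 = x(1-x)/L$, the integral-remainder Taylor bound $|B_L[g](x)-g(x)| \le \tfrac12 \|g''\|_\infty \cdot x(1-x)/L$, and the classical equivalence $\omega^2(\phi,t;[0,1]) \asymp \inf_{g\in C^2}\{\|\phi-g\|_\infty + t^2\|g''\|_\infty\}$ from constructive approximation theory. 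The one point worth making explicit, since you leave it implicit, is that the equivalence constant for the $K$-functional is uniform over $t\in(0,1/2]$, and $\sqrt{x(1-x)/L}\le 1/2$ for all $x\in[0,1]$ and $L\ge 1$, so the chosen scale $h$ always lies in the valid range; your boundary remark and the observation that the error representation is invariant under adding affines to $\phi$ round out the argument correctly.
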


{\bfseries Best Polynomial Approximation}
Let $\dom{P}_L$ be the set of polynomials of which degree is up to $L$. Given a polynomial $g$ and a function $\phi$ defined on an interval $I \subseteq [0,1]$, the $L_\infty$ error between $\phi$ and $g$ is defined as
\begin{align}
  \sup_{x \in I}\abs*{\phi(x) - p(x)}.
\end{align}
The best polynomial of $\phi$ with a degree-$L$ polynomial is a polynomial $g \in \dom{P}_L$ that minimizes the $L_\infty$ error.  Such a polynomial uniquely exists if $\phi$ is continuous and can be obtained, for instance, by the Remez algorithm~\autocite{Remez1934SurDonnee} if $I$ is bounded.

The error of the best polynomial approximation is defined as
\begin{align}
 E_L\paren*{\phi, I} = \inf_{g \in \dom{P}_L}\sup_{x \in I}\abs*{\phi(x) - g(x)}.
\end{align}
This error is non-increasing as $L$ increases because $\dom{P}_L$ covers all the smaller degree polynomials, i.e., $\dom{P}_0 \subset \dom{P}_1 \subset \dom{P}_2 \subset ... \subset \dom{P}_L$. Decreasing rate of this error with respect to the degree $L$ has been studied well since the 1960s~\autocite{Timan1963TheoryVariable,Petrushev1988RationalFunctions,Ditzian2012ModuliSmoothness,Achieser2004TheoryApproximation}.

\textcite{Ditzian2012ModuliSmoothness} revealed that if $I = [-1,1]$, the weighted modulus of smoothness with $\varphi(x)=\sqrt{1-x^2}$ characterizes the best polynomial approximation error regarding $L$. They showed that the following direct and converse inequalities:
\begin{lemma}[direct result~\autocite{Ditzian2012ModuliSmoothness}]\label{lem:best-modulus-direct}
  For a continuous function $\phi$ defined on $[-1,1]$, we have for $L \ge r$,
  \begin{align}
      E_L(\phi, [-1,1]) \lesssim \omega^r_\varphi(\phi,L^{-1}).
  \end{align}
\end{lemma}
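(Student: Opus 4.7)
The plan is to reduce this direct (Jackson-type) inequality on $[-1,1]$ to the classical Jackson theorem for trigonometric polynomial approximation via the Chebyshev substitution $x = \cos\theta$, and then translate the ordinary trigonometric modulus of smoothness back into the weighted modulus $\omega^r_\varphi$ with $\varphi(x)=\sqrt{1-x^2}$.

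First, given a continuous $\phi:[-1,1]\to\RealSet$, define the $2\pi$-periodic even function $\tilde\phi(\theta)=\phi(\cos\theta)$, which is continuous on $\RealSet$. The classical Jackson theorem for trigonometric polynomials states that there is a trigonometric polynomial $T$ of degree at most $L$ with
\begin{align}
 \sup_{\theta}\abs{\tilde\phi(\theta)-T(\theta)} \lesssim \omega^r(\tilde\phi, L^{-1}),
\end{align}
where $\omega^r$ on the right is the ordinary (unweighted) trigonometric modulus of smoothness. Because $\tilde\phi$ is even, one can symmetrize $T$ so that only cosine terms remain, and then use the identity $\cos(j\theta)=T_j(\cos\theta)$, where $T_j$ is the Chebyshev polynomial of the first kind, to rewrite $T$ as an algebraic polynomial $g\in\dom{P}_L$ in $x=\cos\theta$. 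This yields
\begin{align}
 E_L(\phi,[-1,1]) \le \sup_{x\in[-1,1]}\abs{\phi(x)-g(x)} = \sup_\theta\abs{\tilde\phi(\theta)-T(\theta)} \lesssim \omega^r(\tilde\phi,L^{-1}).
\end{align}

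The second step is to bound $\omega^r(\tilde\phi,L^{-1})$ by $\omega^r_\varphi(\phi,L^{-1})$. Writing out the $r$th finite difference in the $\theta$ variable and expanding $\cos(\theta+(r/2-m)h)$ around $\cos\theta$, one sees that, up to higher-order corrections in $h$, the shifted points in $x$ are of the form $x\pm j\cdot h\sin\theta = x\pm j\cdot h\sqrt{1-x^2}$ for $j=0,1,\ldots,r/2$. Thus a difference of step $h$ in $\theta$ corresponds (to leading order) to a difference of step $h\varphi(x)$ in $x$, so that
\begin{align}
 \omega^r(\tilde\phi,h) \lesssim \omega^r_\varphi(\phi,h).
\end{align}
The technical content here is to verify that the higher-order correction terms in the Taylor expansion of $\cos(\theta+sh)$ do not blow up the bound, which is the standard Ditzian--Totik argument: one expresses $(\Delta^r_h \tilde\phi)(\theta)$ as a multilinear combination of differences $(\Delta^r_{h\varphi}\phi)(x)$ at nearby base points, and absorbs the remainders into a finite sum of $\omega^r_\varphi(\phi,h)$ terms using that $\varphi$ is smooth on compact subsets of $(-1,1)$ and vanishes like $\sqrt{1-x^2}$ at the endpoints.

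Combining the two steps gives $E_L(\phi,[-1,1]) \lesssim \omega^r_\varphi(\phi,L^{-1})$ for all $L\ge r$, as required. The main obstacle is the second step: the clean correspondence ``$h$ in $\theta$ $\leftrightarrow$ $h\varphi(x)$ in $x$'' is only exact to first order, and handling endpoints $x=\pm 1$ (where $\varphi$ vanishes and the finite difference in $x$ would be ill-defined outside $[-1,1]$) requires one to treat the near-endpoint region separately, exploiting that finite differences in $\theta$ naturally remain in $[-1,1]$ after applying $\cos$. The rest is essentially the classical Jackson machinery, which can be invoked as a black box.
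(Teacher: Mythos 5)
The first half of your argument is sound: the Chebyshev substitution $x=\cos\theta$, Jackson's theorem for trigonometric polynomials, and the passage back to an algebraic polynomial via $\cos(j\theta)=T_j(\cos\theta)$ correctly give $E_L(\phi,[-1,1]) \lesssim \omega^r(\tilde\phi,L^{-1})$, where $\tilde\phi(\theta)=\phi(\cos\theta)$. The gap is the second step: the inequality $\omega^r(\tilde\phi,h)\lesssim\omega^r_\varphi(\phi,h)$ is not merely delicate at the endpoints, it is \emph{false}. Take $\phi(x)=x$ and $r=2$. Then $\Delta^2_{h\varphi(x)}\phi(x)=(x+h\varphi(x))-2x+(x-h\varphi(x))=0$ for every $x$ and $h$, so $\omega^2_\varphi(\phi,t)\equiv 0$; but $\tilde\phi(\theta)=\cos\theta$ gives $\Delta^2_h\cos(\theta)=2\cos\theta(\cos h-1)$, hence $\omega^2(\tilde\phi,t)=2(1-\cos t)\asymp t^2$, and $t^2\lesssim 0$ is impossible. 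More generally, $\omega^r_\varphi$ annihilates every algebraic polynomial of degree below $r$, while the Chebyshev transform of such a polynomial is a cosine polynomial whose ordinary $r$-th modulus does not vanish. The heuristic that a step $h$ in $\theta$ corresponds to a step $h\varphi(x)$ in $x$ does not survive an $r$-th difference for $r\ge 2$: the second- and higher-order Taylor terms of $\cos(\theta+sh)$ are exactly what produce the nonzero $t^2$ above, and they cannot be absorbed into $\omega^r_\varphi(\phi,h)$.

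The Ditzian--Totik proof never compares the two moduli. It first establishes the $K$-functional equivalence $\omega^r_\varphi(f,t)\asymp K_{r,\varphi}(f,t^r)$, where
\begin{align}
K_{r,\varphi}(f,t^r)=\inf_g\left(\|f-g\|_\infty+t^r\|\varphi^r g^{(r)}\|_\infty\right),
\end{align}
and separately a Jackson inequality for smooth competitors, $E_L(g)\lesssim L^{-r}\|\varphi^r g^{(r)}\|_\infty$. Then for any $g$,
\begin{align}
E_L(\phi)\le\|\phi-g\|_\infty+E_L(g)\lesssim\|\phi-g\|_\infty+L^{-r}\|\varphi^r g^{(r)}\|_\infty,
\end{align}
and taking the infimum over $g$ gives $E_L(\phi)\lesssim K_{r,\varphi}(\phi,L^{-r})\asymp\omega^r_\varphi(\phi,L^{-1})$. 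Your Chebyshev-substitution idea does appear, but only inside the proof of the smooth-function Jackson inequality for $g$, where one can differentiate through the substitution; it is never used to pull $\omega^r(\tilde\phi,\cdot)$ back to $\omega^r_\varphi(\phi,\cdot)$.
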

\begin{lemma}[converse result~\autocite{Ditzian2012ModuliSmoothness}]\label{lem:best-modulus-converse}
  For a continuous function $\phi$ defined on $[-1,1]$, we have
  \begin{align}
      \omega^r_\varphi(\phi,L^{-1};[-1,1]) \lesssim L^{-r}\sum_{m=0}^L(m+1)^{r-1}E_m\paren*{\phi,[-1,1]}.
  \end{align}
\end{lemma}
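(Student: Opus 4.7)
The plan is to follow the classical dyadic approximation argument in the Dzyadyk--Ditzian--Totik tradition. Fix an integer $N$ with $2^N \le L < 2^{N+1}$ and, for each $j = 0, 1, \ldots, N$, let $P_j \in \mathcal{P}_{2^j}$ be a best polynomial approximant to $\phi$ on $[-1,1]$, so that $\|\phi - P_j\|_\infty = E_{2^j}(\phi, [-1,1])$. The telescoping identity $P_N = P_0 + \sum_{j=1}^N Q_j$ with $Q_j := P_j - P_{j-1}$ decomposes $P_N$ into polynomial blocks $Q_j \in \mathcal{P}_{2^j}$ with
\begin{align*}
\|Q_j\|_\infty \le \|\phi - P_j\|_\infty + \|\phi - P_{j-1}\|_\infty \lesssim E_{2^{j-1}}(\phi, [-1,1])
\end{align*}
by the triangle inequality and the monotonicity of $E_m$ in $m$.

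The second ingredient is the weighted Bernstein--Markov inequality on $[-1,1]$: for every $Q \in \mathcal{P}_n$,
\begin{align*}
\|\varphi^r Q^{(r)}\|_\infty \lesssim n^r \|Q\|_\infty,
\end{align*}
which follows from the substitution $x = \cos\theta$ reducing to the classical trigonometric Bernstein inequality. Combined with the integral representation of higher-order finite differences of smooth functions, this gives, for any $Q \in \mathcal{P}_n$,
\begin{align*}
\omega^r_\varphi(Q, t; [-1,1]) \lesssim t^r \|\varphi^r Q^{(r)}\|_\infty \lesssim (tn)^r \|Q\|_\infty.
\end{align*}

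Putting these together with the subadditivity bound $\omega^r_\varphi(\phi, L^{-1}) \le \omega^r_\varphi(P_N, L^{-1}) + \omega^r_\varphi(\phi - P_N, L^{-1})$ and the trivial estimate $\omega^r_\varphi(\phi - P_N, L^{-1}) \le 2^r E_L(\phi, [-1,1])$ yields
\begin{align*}
\omega^r_\varphi(\phi, L^{-1}) \lesssim L^{-r}\|P_0\|_\infty + L^{-r}\sum_{j=1}^N 2^{jr} E_{2^{j-1}}(\phi) + E_L(\phi).
\end{align*}
The final step reassembles the dyadic sum into an arithmetic sum: since $E_m(\phi)$ is non-increasing in $m$, for every $m$ in the block $[2^{j-1}, 2^j)$ we have $(m+1)^{r-1} E_m(\phi) \asymp 2^{j(r-1)} E_{2^{j-1}}(\phi)$, and summing over $m$ within each block recovers the contribution $2^{jr} E_{2^{j-1}}(\phi)$; collapsing over all blocks yields the claimed bound $L^{-r}\sum_{m=0}^L (m+1)^{r-1} E_m(\phi, [-1,1])$, with $\|P_0\|_\infty$ and $E_L(\phi)$ absorbed into the endpoints of the sum.

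The main obstacle is the passage $\omega^r_\varphi(Q, t) \lesssim t^r \|\varphi^r Q^{(r)}\|_\infty$. Because the step size $h\varphi(x)$ depends on the base point $x$, the integral representation of the finite difference produces $Q^{(r)}$ evaluated at points $x + s$ with $|s| \le rh\varphi(x)/2$, so one needs the pointwise comparison $\varphi(x+s) \asymp \varphi(x)$ throughout this interval. This comparison holds in the interior but degenerates near the endpoints $x = \pm 1$, where a separate argument is required, typically via a change of variables that straightens the weight or by restricting the admissible step size near the boundary. Establishing the weighted Bernstein--Markov inequality uniformly in $r$ together with this endpoint handling accounts for most of the technical work.
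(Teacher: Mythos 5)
The paper does not prove this lemma; it is cited from Ditzian (2012) and used as a black box, so there is no in-paper argument to compare against. Your reconstruction follows the standard dyadic route (telescoping best approximants, weighted Bernstein--Markov, subadditivity of the modulus), which is indeed how the converse inequality is proved in the Ditzian--Totik literature, but two steps as you have written them do not go through.

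The more serious one is the final dyadic-to-arithmetic regrouping. You claim the termwise equivalence $(m+1)^{r-1}E_m(\phi) \asymp 2^{j(r-1)}E_{2^{j-1}}(\phi)$ for $m \in [2^{j-1},2^j)$. The $\gtrsim$ side is false: $E_m$ is non-increasing and may drop sharply (even to zero, if $\phi$ is a polynomial) inside that block, so $E_m$ is \emph{not} bounded below by a constant multiple of $E_{2^{j-1}}$ there. What you actually need is $\sum_{j\le N} 2^{jr}E_{2^{j-1}} \lesssim \sum_{m\le L}(m+1)^{r-1}E_m$, and the monotonicity has to be used in the other direction: pair the $j$th dyadic term with the \emph{preceding} block $m\in[2^{j-2},2^{j-1})$, where $E_m \ge E_{2^{j-1}}$, so that $\sum_{m=2^{j-2}}^{2^{j-1}-1}(m+1)^{r-1}E_m \ge 2^{j-2}(2^{j-2})^{r-1}E_{2^{j-1}} \asymp 2^{jr}E_{2^{j-1}}$, and then sum the disjoint blocks. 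The smaller slip is the tail bound: since $\deg P_N = 2^N \le L$, you have $\|\phi - P_N\|_\infty = E_{2^N}(\phi) \ge E_L(\phi)$, so $\omega^r_\varphi(\phi - P_N, L^{-1}) \le 2^r E_{2^N}(\phi)$, not $2^r E_L(\phi)$; that is still absorbed, since the $m\in[2^{N-1},2^N)$ portion of the arithmetic sum already contributes $\gtrsim 2^{Nr}E_{2^N} \asymp L^r E_{2^N}$. Absorbing $L^{-r}\|P_0\|_\infty$ likewise needs a word: first normalize $\phi$ by its best constant (which leaves $\omega^r_\varphi$ unchanged for $r\ge 1$) so that $\|P_0\|_\infty \le \|\phi\|_\infty + E_1(\phi) \le 2E_0(\phi)$, matching the $m=0$ term. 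Beyond these, you have correctly located the technical core --- the weighted Bernstein--Markov inequality $\|\varphi^r Q^{(r)}\|_\infty \lesssim n^r\|Q\|_\infty$ and the companion estimate $\omega^r_\varphi(Q,t)\lesssim t^r\|\varphi^r Q^{(r)}\|_\infty$, including the endpoint comparability of $\varphi(x+s)$ with $\varphi(x)$ --- but a self-contained proof would still have to supply those two facts rather than defer them.
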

As a consequence of these lemmas, the best polynomial approximation error is characterized by the weighted modulus of smoothness as follows:
\begin{theorem}[\textcite{Ditzian2012ModuliSmoothness}]\label{thm:modulus-best-approx}
  Let $\phi$ be a continuous real-valued function on $[-1,1]$. Then, for $\beta \in (0,L)$,
  \begin{align}
      E_L\paren*{\phi, [-1,1]} \asymp L^{-\beta} \textand \omega^L_\varphi(\phi,t;[-1,1]) \asymp t^{-\beta}
  \end{align}
  are equivalent, where $\varphi(x) = \sqrt{1-x^2}$.
\end{theorem}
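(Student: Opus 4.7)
The plan is to obtain the theorem as a direct consequence of the direct and converse inequalities of Lemmas \ref{lem:best-modulus-direct} and \ref{lem:best-modulus-converse}, combined with the elementary fact that $t \mapsto \omega^L_\varphi(\phi, t; [-1,1])$ is non-decreasing. To avoid overloading notation, throughout I let $n$ denote the polynomial degree and keep $L$ as the fixed order of the modulus of smoothness appearing in the theorem.

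First I would handle the direction ``$\omega^L_\varphi(\phi, t) \asymp t^\beta$ implies $E_n(\phi,[-1,1]) \asymp n^{-\beta}$''. The $\lesssim$ half follows by plugging $t = 1/n$ into Lemma \ref{lem:best-modulus-direct} with $r = L$, giving $E_n(\phi,[-1,1]) \lesssim \omega^L_\varphi(\phi, 1/n) \asymp n^{-\beta}$ as soon as $n \ge L$. For the matching $\gtrsim$ half, I would argue by contrapositive: if there were a subsequence along which $E_n(\phi) = o(n^{-\beta})$, Lemma \ref{lem:best-modulus-converse} with $r = L$ yields $\omega^L_\varphi(\phi, 1/n) \lesssim n^{-L}\sum_{m=0}^n (m+1)^{L-1} E_m(\phi)$, and the hypothesis $\beta < L$ together with monotonicity of $E_m$ in $m$ would push the right-hand side strictly below the assumed rate, contradicting $\omega^L_\varphi(\phi, 1/n) \gtrsim n^{-\beta}$.

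For the converse, assume $E_n(\phi,[-1,1]) \asymp n^{-\beta}$ and fix small $t$; set $n = \lfloor 1/t \rfloor$. Applying Lemma \ref{lem:best-modulus-converse} with $r = L$ and inserting $E_m(\phi) \lesssim m^{-\beta}$ gives
\begin{align*}
\omega^L_\varphi(\phi, 1/n) \lesssim n^{-L}\sum_{m=1}^n m^{L-1-\beta} \lesssim n^{-\beta},
\end{align*}
where the second step uses $\beta \in (0, L)$ so that $L - 1 - \beta > -1$ and the partial sum is $\asymp n^{L-\beta}$. Monotonicity of $\omega^L_\varphi(\phi, \cdot)$ then lifts this to $\omega^L_\varphi(\phi, t) \lesssim t^\beta$ for arbitrary small $t$. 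For the reverse, Lemma \ref{lem:best-modulus-direct} gives $n^{-\beta} \asymp E_n(\phi) \lesssim \omega^L_\varphi(\phi, 1/n) \le \omega^L_\varphi(\phi, t)$ whenever $1/n \le t$, establishing $\omega^L_\varphi(\phi, t) \gtrsim t^\beta$.

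The main obstacle is the asymptotic evaluation of $\sum_{m=1}^n m^{L-1-\beta}$: the condition $\beta < L$ is precisely what forces $L - 1 - \beta > -1$, so the sum is $\asymp n^{L-\beta}/(L-\beta)$ and the $n^{-L}$ prefactor cancels to leave exactly the required $n^{-\beta}$ decay; without this hypothesis the sum would pick up a logarithmic factor or grow faster, and the equivalence would break. A secondary care point is converting the two-sided $\asymp$ constants into the contrapositive arguments used for the lower-bound halves, which is cleanest if one tracks the implicit constants explicitly rather than relying on $o(\cdot)$ notation.
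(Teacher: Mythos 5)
The paper states this theorem as a citation to Ditzian--Totik, presenting it only as ``a consequence of'' \cref{lem:best-modulus-direct,lem:best-modulus-converse} without a written proof, so there is no internal proof to compare against; your plan of reconstructing it from those two lemmas is the natural reading of the paper's framing. Three of your four implications are sound, and you correctly identify $\beta < L$ as the reason $\sum_{m\le n} m^{L-1-\beta} \asymp n^{L-\beta}$, which is what makes the converse inequality bite.

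The gap is in the contrapositive step for $\omega^L_\varphi(\phi,t) \gtrsim t^\beta \Rightarrow E_n(\phi) \gtrsim n^{-\beta}$. Applying \cref{lem:best-modulus-converse} at the same degree $n_k$ along which $E_{n_k}$ is assumed small gives
\begin{align}
\omega^L_\varphi\paren*{\phi, n_k^{-1}} \lesssim n_k^{-L}\sum_{m=0}^{n_k}(m+1)^{L-1}E_m(\phi),
\end{align}
but the terms with $m < n_k$ are constrained only by the upper bound $E_m \lesssim m^{-\beta}$ coming from the direct inequality, and those terms alone can contribute up to order $n_k^{L-\beta}$ to the sum, leaving the right-hand side $\asymp n_k^{-\beta}$ regardless of how small $E_{n_k}$ is. Monotonicity of $E_m$ propagates smallness \emph{forward} to $m \ge n_k$, but none of those indices enter this sum, so it does not ``push the right-hand side strictly below the assumed rate'' as claimed.

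The repair is to evaluate the converse inequality at degree $\lambda n_k$ for a fixed multiplier $\lambda > 1$, which brings the indices controlled by monotonicity into play. Splitting the sum at $n_k$ and using $E_m \le E_{n_k}$ for $n_k \le m \le \lambda n_k$ gives
\begin{align}
\omega^L_\varphi\paren*{\phi, (\lambda n_k)^{-1}} \lesssim (\lambda n_k)^{-L}\paren*{n_k^{L-\beta} + E_{n_k}\,(\lambda n_k)^{L}} \lesssim \paren*{\lambda^{-L} + n_k^{\beta}E_{n_k}}\,n_k^{-\beta}.
\end{align}
Comparing against the assumed lower bound $\omega^L_\varphi(\phi,(\lambda n_k)^{-1}) \gtrsim \lambda^{-\beta}n_k^{-\beta}$: since $\beta < L$, one may first fix $\lambda$ large enough that the $\lambda^{-L}$ contribution falls strictly below a fixed fraction of $\lambda^{-\beta}$ (constants from the direct, converse, and assumed lower bound included), and then take $k$ large so that $n_k^{\beta}E_{n_k}$ drops below the remaining fraction, yielding the contradiction. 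So your idea is right, but the quantifier order --- choose $\lambda$ from the constants \emph{before} sending $k\to\infty$, and apply the converse at $\lambda n_k$ rather than $n_k$ --- is exactly the constant-tracking subtlety you flag at the end and needs to be made explicit for the argument to close.
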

From \cref{thm:modulus-best-approx}, we can obtain the best polynomial approximation error rate regarding $L$ by analyzing the weighted modulus of smoothness.

\section{Main Results}

Our main results reveal the minimax optimal risk of the additive functional estimation in characterizing with the divergence speed defined in \cref{def:div-speed}. The behaviour of the minimax optimal risk varies depending on the range of $\alpha$ as $\alpha < 0$, $\alpha \in (0,1/2]$, $\alpha \in (1/2,1)$, $\alpha = 1$, $\alpha \in (1,3/2)$, $\alpha \ge 3/2$. We will show the minimax optimal risk for each range of $\alpha$ one by one.

First, we demonstrate that we cannot construct a consistent estimator for $\alpha < 0$.
\begin{proposition}\label{prop:no-estimator}
  Suppose $\phi:[0,1]\to\RealSet$ is a function whose first divergence speed is $p^\alpha$ for $\alpha < 0$. Then, there is no consistent estimator, i.e., $R^*(n,k;\phi) \gtrsim 1$.
\end{proposition}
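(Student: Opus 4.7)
The plan is to apply Le Cam's two-point method. The intuition behind the lower bound is that when $\alpha<0$, the divergence-speed hypothesis forces $|\phi^{(1)}(p)| \ge W_1 p^{\alpha-1}-c'_1$, which blows up like $p^{\alpha-1}$ at the origin; hence tiny variations in small probabilities produce arbitrarily large variations in $\theta$, while $|\phi'|$ stays uniformly bounded away from $0$. It therefore suffices to exhibit two distributions that are statistically indistinguishable from $n$ samples yet whose $\theta$-values differ by an amount tending to infinity.

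Concretely, for $k\ge 2$ I would take
\begin{align*}
P_0 &= (1-1/n,\,1/n,\,0,\ldots,0), \\
P_1 &= (1-2/n,\,2/n,\,0,\ldots,0),
\end{align*}
both supported on $\{1,2\}$, so that $D_{\mathrm{KL}}(P_1\|P_0)$ is well defined. A direct computation gives $D_{\mathrm{KL}}(P_1\|P_0) = (2\ln 2 - 1)/n + o(1/n)$, so by tensorization of KL together with Pinsker's inequality, the total variation between the $n$-sample product laws satisfies $\|P_0^n-P_1^n\|_{\mathrm{TV}} \le \sqrt{nD_{\mathrm{KL}}(P_1\|P_0)/2}$, which is bounded by a constant strictly less than one for all sufficiently large $n$ (small $n$ is trivial).

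Next I would lower-bound the gap in $\theta$. Since the supports of $P_0,P_1$ coincide, the $(k-2)\phi(0)$ contributions cancel and
\begin{align*}
\theta(P_0) - \theta(P_1) = \bigl[\phi(1/n) - \phi(2/n)\bigr] + \bigl[\phi(1-1/n) - \phi(1-2/n)\bigr].
\end{align*}
The mean value theorem gives $|\phi(1/n)-\phi(2/n)| = |\phi'(\xi)|/n$ for some $\xi\in(1/n,2/n)$, and the divergence-speed lower bound combined with the monotonicity $\xi^{\alpha-1}\ge (2/n)^{\alpha-1}$ (valid since $\alpha-1<0$) yields $|\phi(1/n)-\phi(2/n)|\gtrsim n^{-\alpha}\to\infty$. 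The upper bound $|\phi'(p)|\le W_1 p^{\alpha-1}+c_1 \le W_1+c_1$ for $p\in[1/2,1]$ makes the second bracket $O(1/n)$. Combining these and applying Le Cam's two-point inequality gives $R^*(n,k;\phi) \gtrsim |\theta(P_0)-\theta(P_1)|^2 \gtrsim 1$. The only real obstacle is bookkeeping around $\phi(0)$: the divergence-speed hypothesis is incompatible with $\phi$ being continuous at $0$, so one either interprets $\phi(0)$ as a fixed real value (in which case the proof above applies verbatim, since the zero atoms of $P_0$ and $P_1$ coincide and cancel in the difference) or $\theta$ is itself ill-defined on distributions with zero support, whence the minimax risk is trivially infinite.
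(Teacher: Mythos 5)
Your argument is correct and reaches the same conclusion, but by a genuinely different route than the paper. The paper also uses Le Cam's two-point method, but in its Hellinger form (\cref{lem:le-cam-two-point-he}), with the pair
\[
P=\Bigl(\tfrac{\beta}{k-1},\dots,\tfrac{\beta}{k-1},1-\beta\Bigr),\qquad
Q=\Bigl(\tfrac{\beta+\delta}{k-1},\dots,\tfrac{\beta+\delta}{k-1},1-\beta-\delta\Bigr),
\]
so that $\mathrm{TV}(P,Q)=\delta$ and Le Cam's inequality bounds the Hellinger term. Because every coordinate is strictly positive, the paper never has to evaluate $\phi$ at $0$; it lower-bounds the gap $|\theta(P)-\theta(Q)|$ by integrating the divergence-speed inequality $|\phi^{(1)}(s)|\ge W_1 s^{\alpha-1}-c_1'\ge W_1 s^{-1}-c_1'$ over the $k-1$ small coordinates (the crude $s^{-1}$ bound suffices because $\alpha<0$), getting $\gtrsim(k-1)\ln(1+\delta/\beta)$; then it sets $\beta=\delta$ and sends $\delta\to0$, yielding $R^*\gtrsim(k-1)^2$. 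Your construction instead concentrates all the mass on two symbols, with one atom at scale $1/n$ vs.\ $2/n$; the KL divergence is $\Theta(1/n)$, and the mean-value theorem together with the lower divergence-speed bound gives a gap $\gtrsim n^{-\alpha}\to\infty$, so the resulting lower bound scales with $n$ rather than with $k$. Both routes prove $R^*\gtrsim 1$. Your route trades the paper's $\delta\to 0$ limit and reverse-triangle bookkeeping for one extra consideration --- the $(k-2)\phi(0)$ terms --- which you correctly observe cancel in the difference; this works because $\phi:[0,1]\to\mathbb{R}$ is assumed, so $\phi(0)$ is a bona fide real number and $\theta$ is well-defined (your worry that $\theta$ might be ill-defined is unfounded, though your observation that $\phi$ cannot be continuous at $0$ when $\alpha<0$ is accurate). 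Two trivial slips worth fixing: for $p\in[1/2,1]$ the upper bound should read $|\phi^{(1)}(p)|\le W_1 2^{1-\alpha}+c_1$ rather than $W_1+c_1$ (since $p^{\alpha-1}>1$ there), and you should note that $R^*(n,k;\phi)$ is non-increasing in $n$, which is what makes the ``small $n$ is trivial'' remark rigorous.
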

The proof of \cref{prop:no-estimator} is given in \cref{sec:no-estimator}. The consistency is a necessary property for a reasonable estimator, and thus this proposition show that there is no reasonable estimator if $\alpha > 0$. For this reason, there is no need to derive the minimax optimal estimator in this case.

For $\alpha \in (0,1/2]$, the minimax optimal risk is obtained as follows.
\begin{theorem}\label{thm:optimal-rate-0-1/2}
  Suppose $\phi:[0,1]\to\RealSet$ is a function whose fourth divergence speed is $p^{\alpha}$ for $\alpha \in (0,1/2]$. If $n \gtrsim k^{1/\alpha}/\ln k$ and $k \gtrsim \ln^{4}n$, the minimax optimal risk is obtained as
  \begin{align}
    R^*(n,k;\phi) \asymp \frac{k^2}{(n\ln n)^{2\alpha}}.
  \end{align}
  If $n \lesssim k^{1/\alpha}/\ln k$, there is no consistent estimator.
\end{theorem}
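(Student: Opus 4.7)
The plan is to follow the two-region hybrid estimator strategy developed by Jiao et al.\ and Wu--Yang, adapted to the general divergence speed assumption of \cref{def:div-speed}. By \cref{lem:well-approx-poisson} the multinomial and Poisson minimax risks coincide up to an exponentially small term, so I would work entirely in the Poisson model, where $\tilde N_i \sim \mathrm{Poi}(np_i)$ are independent and the estimator decomposes coordinate by coordinate.

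\textbf{Upper bound.} Fix a threshold $t_n = c_1 \ln n / n$ and a polynomial degree $L = c_2 \ln n$. For each coordinate, build a hybrid estimator $\hat\phi_i(\tilde N_i)$: on the event $\tilde N_i \ge c_3 \ln n$, use a second-order bias-corrected plugin based on a truncated Taylor expansion of $\phi$ around $\tilde N_i/n$; on the complementary event, substitute the unbiased Poisson estimator of the best polynomial approximation $g_L$ of $\phi$ on the rescaled interval $[0, 2t_n]$. On the small region, the per-coordinate bias is at most $\sup_{p \in [0, 2t_n]}|\phi(p)-g_L(p)|$. The fourth divergence speed assumption combined with \cref{lem:mod-derivative} and \cref{thm:modulus-best-approx} (with $\beta=2\alpha$) gives this error as $\lesssim (\ln n/n)^{\alpha}$ after the affine rescaling of $[0, 2t_n]$ onto $[-1,1]$, and summing squared per-coordinate biases across $k$ coordinates yields the target $k^{2}/(n\ln n)^{2\alpha}$. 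On the large-probability region, the Bernstein-style remainder in the bias correction is $O(p^{\alpha-4}/n^{4})+O(n^{-2})$ via the divergence speed assumption on $|\phi^{(2)}|$ and $|\phi^{(4)}|$, which is absorbed by the same term. The variance of the polynomial estimator becomes lower order once $k \gtrsim \ln^{4} n$, which is precisely the stated hypothesis.

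\textbf{Lower bound and inconsistency.} I would invert the polynomial-approximation picture by a moment matching construction. The converse direction of \cref{thm:modulus-best-approx} together with the lower bound built into the divergence speed forces the degree-$L$ best polynomial approximation error of $\phi$ on $[0,2t_n]$ to be $\gtrsim (\ln n/n)^{\alpha}$. Chebyshev--Markov duality then produces two probability measures $U_0, U_1$ supported on $[0, 2t_n]$ whose first $L$ moments coincide exactly, with $|\mathbb{E}\phi(U_0)-\mathbb{E}\phi(U_1)| \gtrsim (n\ln n)^{-\alpha}$. Form two priors on $P$ by drawing $k$ i.i.d.\ copies from $U_0$ respectively $U_1$ and conditioning on $\sum_i p_i = 1$, a high-probability event in this regime. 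By the standard Poisson mixture identity, matched moments up to degree $L = c\ln n$ make the induced mixtures of product Poisson observations close in total variation (of order $(c' n t_n/L)^{L+1} = o(1)$), while the expected functional values differ by $\Omega(k (n\ln n)^{-\alpha})$ on the two priors; Le Cam's two-point method then yields $R^{*} \gtrsim k^{2}/(n\ln n)^{2\alpha}$. The inconsistency statement when $n \lesssim k^{1/\alpha}/\ln k$ is obtained from the same construction, now scaling the support up to a region where the functional gap becomes $\Omega(1)$, together with the regime check that $k/(n\ln n)^{\alpha} \gtrsim 1$ there.

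\textbf{Main obstacle.} The hardest step is converting the pointwise divergence-speed bound of \cref{def:div-speed} into a clean quantitative best polynomial approximation estimate on $[0,2t_n]$: the additive constants $c_\ell, c'_\ell$ contaminate the leading $p^{\alpha-\ell}$ singularity, so one must first peel off a smooth polynomial correction and then apply \cref{thm:modulus-best-approx} to the residual singular piece in both directions (direct for the upper bound and converse for the lower bound). A secondary difficulty is matching the hypotheses to the construction: $k \gtrsim \ln^{4} n$ arises from balancing the variance of the polynomial estimator against the squared bias, while $n \gtrsim k^{1/\alpha}/\ln k$ is the sharp boundary at which $k^{2}/(n\ln n)^{2\alpha}$ drops below a constant, and both boundaries must emerge consistently from a single bias--variance analysis.
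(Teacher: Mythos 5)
Your high-level blueprint matches the paper's: Poissonize, split into a low-count region handled by an unbiased estimator of the best degree-$O(\ln n)$ polynomial approximation and a high-count region handled by a second-order bias-corrected (Miller-type) plugin, and prove the lower bound by moment matching and mixture priors. The ingredients (modulus of smoothness, \cref{thm:modulus-best-approx}, Poisson TV bound for moment-matched mixtures) are the right ones. However, several steps as you state them would not go through.

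\emph{The lower bound conditioning is not usable.} You propose ``drawing $k$ i.i.d.\ copies from $U_0$, resp.\ $U_1$, and conditioning on $\sum_i p_i = 1$.'' That event has probability zero, so the conditional law is not defined, and even if you replace it by $\lvert\sum_i p_i-1\rvert\le\epsilon$, conditioning on a moderate-probability event can change the moment structure that the TV bound relies on. The paper resolves this by \emph{not} staying inside $\dom{M}_k$: it introduces the approximated minimax risk $\tilde R^*(n,k,\epsilon;\phi)$ over $\dom{M}_k(\epsilon)=\{P: \lvert\sum p_i -1\rvert\le\epsilon\}$ (\cref{thm:lower-approximated}), runs the fuzzy-hypotheses argument there with an extra $(k{+}1)$th coordinate carrying mass $1-\beta$, and then transfers the bound back to the genuine $\tilde R^*$ at the cost of a term controlled by H\"older continuity of $\phi$ (\cref{lem:div-speed-holder1}). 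Omitting this bridge is a real gap: without it you have a lower bound on a different quantity than $R^*(n,k;\phi)$.

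\emph{The per-coordinate bias on the small region is misquoted.} You write that the polynomial-approximation bias is $\lesssim (\ln n/n)^\alpha$, so that squaring and multiplying by $k^2$ gives $k^2(\ln n)^{2\alpha}/n^{2\alpha}$, which is larger than the target by $(\ln n)^{4\alpha}$. The correct estimate is $E_L(\phi,[0,\lambda]) \lesssim (\lambda/L^2)^\alpha$ (\cref{thm:upper-poly-approx1}), i.e.\ with the crucial factor $L^{-2\alpha}$ from the degree; with $\lambda\asymp \ln n/n$ and $L\asymp\ln n$ this yields $1/(n\ln n)^\alpha$, and only then does $k^2\cdot$(per-coordinate bias)$^2$ give $k^2/(n\ln n)^{2\alpha}$. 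The point is that \cref{thm:modulus-best-approx} alone fixes the rate in $L$ but not the prefactor in $\lambda$; the paper derives the joint dependence by rescaling $\phi_\lambda(x)=\phi(\lambda x^2)$ to $[-1,1]$, controlling its (weighted) modulus of smoothness explicitly using $\lvert\phi^{(\ell)}\rvert\lesssim p^{\alpha-\ell}$, and applying \cref{lem:best-modulus-direct}. Your ``peel off a smooth correction polynomial'' tactic is plausible but is not how the paper extracts the $\lambda$-dependence, and the main-obstacle paragraph does not make clear how the converse estimate with the required $\lambda$-dependence is obtained; the paper does this via \cref{lem:best-modulus-converse} together with the lower half of the divergence-speed hypothesis, i.e.\ the $W_\ell p^{\alpha-\ell}-c'_\ell$ bound.

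\emph{The $k\gtrsim\ln^4 n$ condition is misattributed.} You assign it to the variance of the polynomial estimator. In the paper, the upper bound only needs the weaker $k\gtrsim\ln^{4\alpha-1}n$ (and only for $\alpha\in(1/4,1/2)$; see \cref{thm:upper-bound1}). The $\ln^4 n$ requirement is produced on the lower-bound side in \cref{thm:lower2}, when the residual terms of the fuzzy-hypotheses/approximated-risk machinery are pushed below $k^2/(n\ln n)^{2\alpha}$. So if you aim to ``have both boundaries emerge consistently from a single bias--variance analysis'' you will not reproduce the paper's hypotheses.

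Finally, a minor but necessary point: to make the second-order bias-corrected plugin well-defined with the truncation near $p=0$, the paper passes through a generalized Hermite interpolation ($H_{L,\Delta,\delta}$) that smooths the truncation and converges to it as $\delta\downarrow 0$ (\cref{lem:converge-hermite}); your ``truncated Taylor expansion of $\phi$'' as written is not differentiable at the cutoff, and the Taylor-remainder bounds (\cref{lem:reminder-bound1,lem:reminder-bound2}) that give the $p^{\alpha-4}/n^4$-type bias terms need the smoothness supplied by the Hermite interpolation.
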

For this range, \textcite{Jiao2015MinimaxDistributions} showed the same minimax optimal risk for the specific function $\phi(p) = p^{\alpha}$. However, their analysis needs stronger condition, $\ln k \gtrsim \ln n$, to prove $\frac{k^2}{(n\ln n)^{2\alpha}}$ rate. We success to prove this rate with weaker condition $k \gtrsim \ln^{4}n$. However, for $k \lesssim \ln^4n$, the optimal risk is not obtained in the current analysis and remains as an open problem. 

We follow the similar manner of \autocite{Jiao2015MinimaxDistributions} to derive the estimator that achieves \cref{thm:optimal-rate-0-1/2}. In \autocite{Jiao2015MinimaxDistributions}, the optimal estimator is constructed from two estimators; best polynomial estimator and bias-corrected plugin estimator. The best polynomial estimator is an unbiased estimator of the polynomial that best approximates $\phi$. The bias-corrected plugin estimator is an estimator obtained by applying two techniques to the plugin estimator of $\phi$. The first technique is \citeauthor{Miller1955NoteEstimates}'s bias-correction~\autocite{Miller1955NoteEstimates}, which offsets the second-order approximation of the bias. The second technique deals with the requirement from the plugin estimator; that is, the plugging function should be smooth. \citeauthor{Jiao2015MinimaxDistributions} utilized some interpolation technique to fulfill the smoothness requirement. However, the second technique is not applicable to general $\phi$ directly. We therefore introduce {\em truncation operator} as a surrogate of the interpolation technique. We will describe the optimal estimator for general $\phi$ using the truncation operator in \cref{sec:estimator}.

Next, the following theorem gives the optimal risk of the additive functional estimation for $\alpha \in (1/2,1)$.
\begin{theorem}\label{thm:optimal-rate-1/2-1}
  Suppose $\phi:[0,1]\to\RealSet$ is a function whose fourth divergence speed is $p^{\alpha}$ for $\alpha \in (1/2,1)$. If $n \gtrsim k^{1/\alpha}/\ln k$, the minimax optimal risk is obtained as
  \begin{align}
    R^*(n,k;\phi) \asymp \frac{k^2}{(n\ln n)^{2\alpha}} + \frac{k^{2-2\alpha}}{n}, 
  \end{align}
  otherwise there is no consistent estimator.
\end{theorem}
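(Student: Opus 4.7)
The plan is to establish matching upper and lower bounds under Poisson sampling via \cref{lem:well-approx-poisson}, with the two terms of the rate arising from separate mechanisms. I would recycle the hybrid estimator developed in \cref{sec:estimator} for the range $\alpha \in (0,1/2]$: for each coordinate $i$ compare $\tilde N_i$ against a threshold $\asymp \ln n$, and apply an unbiased best-polynomial-approximation estimator of degree $L \asymp \ln n$ on the interval $[0,c\ln n/n]$ when $\tilde N_i$ is small, and Miller's second-order bias-corrected plugin composed with the truncation operator otherwise. The overall argument is a minor adjustment of the analysis of \cref{thm:optimal-rate-0-1/2}; what is new here is tracking where the extra $k^{2-2\alpha}/n$ term enters.

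For the upper bound, the polynomial-branch bias is $k\cdot E_L(\phi, [0, c\ln n/n])$. Applying \cref{lem:best-modulus-direct} after an affine rescaling of $[0, c\ln n/n]$ to $[-1,1]$, combined with the fourth-divergence-speed control $|\phi^{(4)}(p)|\asymp p^{\alpha-4}$ and \cref{lem:mod-derivative}, yields $E_L(\phi, [0, c\ln n/n]) \lesssim (\ln n/n)^\alpha/(\ln n)^{2\alpha}$, so the squared bias contribution is $\lesssim k^2/(n\ln n)^{2\alpha}$. The plugin-branch bias residual after Miller correction is $O(|\phi^{(4)}(p)|/n^2)$, which is subleading after summation on the plugin region $p\gtrsim \ln n/n$. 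For variance, the polynomial part is absorbed into the bias term via Markov brothers' inequality on the best-polynomial coefficients; the plugin delta-method variance per coordinate is $(\phi'(p_i))^2 p_i/n \asymp p_i^{2\alpha-1}/n$, and since $2\alpha-1 < 1$, the sum $\sum_i p_i^{2\alpha-1}$ is maximized at the uniform distribution and produces exactly $k^{2-2\alpha}/n$. This is the genuinely new ingredient relative to the $\alpha \le 1/2$ case.

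For the lower bound I would combine two Le-Cam-type arguments. The $k^2/(n\ln n)^{2\alpha}$ contribution follows the standard moment-matching duality: choose two random variables $U, U'$ supported on $[0, c\ln n/n]$ whose first $L \asymp \ln n$ moments agree and realize $|\Mean\phi(U) - \Mean\phi(U')| \asymp E_{2L}(\phi, [0, c\ln n/n])$. Taking $k$ independent coordinate-wise copies, Poissonizing, and correcting slightly to enforce $\sum_i p_i = 1$, one obtains an $\Omega(1)$-TV-separated pair whose $\theta$-gap is of order $k/(n\ln n)^\alpha$. The $k^{2-2\alpha}/n$ contribution uses a local two-point perturbation around the uniform distribution: set $p_i = 1/k + \epsilon_i/\sqrt{nk}$ with $\epsilon_i \in \cbrace{-1,+1}$; a $\chi^2$ bound on the product Poisson likelihood ratio gives constant TV separation, while the first divergence speed yields $|\phi'(1/k)|\asymp k^{1-\alpha}$ and hence a $\theta$-gap of order $k^{1-\alpha}\sqrt{k/n}$, whose square is $k^{2-2\alpha}/n$. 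The impossibility claim for $n \lesssim k^{1/\alpha}/\ln k$ is then a corollary of the first construction, since in that regime the $\theta$-gap $k/(n\ln n)^\alpha$ is already $\Omega(1)$, forcing $R^*(n,k;\phi) \gtrsim 1$.

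The main technical obstacle is verifying that the fourth-divergence-speed assumption, which only controls $|\phi^{(4)}|$ up to additive constants $c_4, c'_4$, still pins down the best-polynomial-approximation rate on the shrinking interval $[0, c\ln n/n]$; the polynomial slack from these constants must be absorbed into lower-order terms, which works because the interval length tends to zero and the singular term $p^{\alpha-4}$ dominates near the origin. A secondary difficulty is the matching lower bound $E_{2L}(\phi,[0,c\ln n/n]) \gtrsim (\ln n/n)^\alpha/(\ln n)^{2\alpha}$, which requires the converse direction (\cref{lem:best-modulus-converse}) and again a careful handling of the divergence-speed slack; also routine but delicate are the enforcement of $\sum_i p_i = 1$ inside the moment-matching prior and the analysis of mis-selection at the threshold between the two branches.
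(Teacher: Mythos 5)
Your upper-bound plan and the $k^2/(n\ln n)^{2\alpha}$ lower bound both match the paper's route: a hybrid best-polynomial/second-order bias-corrected plugin estimator for the upper bound, with the variance on the plugin region summing via $\sup_{P}\sum_i p_i^{2\alpha-1}=k^{2-2\alpha}$, and a moment-matching fuzzy-hypotheses argument (\cref{thm:risk-best-poly}, \cref{thm:lower3}) for the polynomial-error lower bound. The impossibility claim for $n\lesssim k^{1/\alpha}/\ln k$ also follows as you describe.

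However, your construction for the $k^{2-2\alpha}/n$ lower bound is broken. You set $p_i = 1/k + \epsilon_i/\sqrt{nk}$ with $\epsilon_i\in\{-1,+1\}$ and compare against the uniform distribution; but to stay on the simplex you must have $\sum_i\epsilon_i=0$, and then the first-order term in the expansion
\begin{align}
\theta(P)-\theta(Q) = \sum_i\Bigl(\phi'(1/k)\,\tfrac{\epsilon_i}{\sqrt{nk}} + O\bigl(\phi''(\xi_i)\tfrac{1}{nk}\bigr)\Bigr)
\end{align}
vanishes identically, because $\phi'(1/k)$ is the \emph{same} constant for every coordinate at the uniform point. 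The uniform distribution is a critical point of $\theta$ restricted to the simplex, so any simplex-tangent perturbation gives only a second-order $\theta$-gap; working it out, the gap is $O(k^{1-\alpha}/n)$, two orders short of the needed $k^{1-\alpha}/\sqrt{n}$. (There is an additional scaling issue: with perturbation $1/\sqrt{nk}$ per coordinate the Poisson-product $\chi^2$ divergence is $\asymp nk\cdot k\cdot\tfrac{1}{nk}=k$, so the two distributions are essentially perfectly distinguishable, not $\Omega(1)$-separated.) The paper avoids both problems by shifting mass between one ``heavy'' coordinate near $1$ and the remaining $k-1$ ``light'' coordinates near $p/(k-1)$: taking $P=(1-p,\,p/(k-1),\ldots)$ and $Q=(1-q,\,q/(k-1),\ldots)$ with $|p-q|\asymp 1/\sqrt{n}$ keeps $\KL(P,Q)\lesssim 1/n$, and because $\phi'$ at the light coordinates is $\asymp k^{1-\alpha}$ while $\phi'$ at the heavy coordinate is $O(1)$, the linear terms do \emph{not} cancel and $|\theta(P)-\theta(Q)|\gtrsim k^{1-\alpha}/\sqrt{n}$, giving the $k^{2-2\alpha}/n$ rate. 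You should replace your local-perturbation two-point construction with this asymmetric one (this is exactly \cref{thm:lower1} with its Case $\alpha\in(0,1)$ computation).
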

\textcite{Jiao2015MinimaxDistributions} also showed the same minimax optimal risk for $\phi(p) = p^{\alpha}$ with this range of $\alpha$. \cref{thm:optimal-rate-1/2-1} generalizes their result to general $\phi$. The optimal estimator for $\alpha \in (1/2,1)$ is equivalent to the estimator for $\alpha \in (0,1/2]$.

If $\alpha = 1$, the optimal minimax risk is obtained as the following theorem.
\begin{theorem}\label{thm:optimal-rate-1}
  Suppose $\phi:[0,1]\to\RealSet$ is a function whose fourth divergence speed is $p^{\alpha}$ for $\alpha = 1$. If $n \gtrsim k^{1/\alpha}/\ln k$, the minimax optimal risk is obtained as
  \begin{align}
    R^*(n,k;\phi) \asymp \frac{k^2}{(n\ln n)^{2}} + \frac{\ln^2k}{n},
  \end{align}
  otherwise there is no consistent estimator.
\end{theorem}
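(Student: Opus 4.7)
The plan is to invoke \cref{lem:well-approx-poisson} to pass to the Poisson sampling model and then establish matching upper and lower bounds for the two terms $k^2/(n\ln n)^2$ and $\ln^2 k/n$ separately, under the regime $n \gtrsim k/\ln k$; the inconsistency claim for $n \lesssim k/\ln k$ will fall out of the same lower-bound construction. For the upper bound I would use the hybrid estimator previewed in the discussion after \cref{thm:optimal-rate-0-1/2}: split the simplex at a threshold $\Delta \asymp \ln n / n$ and, coordinate-wise, use an unbiased estimator of the degree-$L$ best polynomial approximating $\phi$ on $[0,\Delta]$ with $L \asymp \ln n$ when $\hat p_i < \Delta$, and a Miller--Madow-style second-order bias-corrected plugin on the truncated $\phi$ when $\hat p_i \ge \Delta$. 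The analytic heart is to convert the hypothesis that the fourth divergence speed of $\phi$ is $p$ into control on lower derivatives: integrating the bound $|\phi^{(4)}(p)| \asymp p^{-3}$ twice gives $|\phi^{(2)}(p)| \lesssim p^{-1}$, which through \cref{lem:mod-derivative} and \cref{thm:modulus-best-approx} yields $E_L(\phi,[0,\Delta]) \lesssim \Delta/L^{2} \asymp 1/(n\ln n)$, so the aggregate squared bias contributes $k^2/(n\ln n)^2$. Integrating once more produces $|\phi'(p)| \asymp \ln(1/p)$, which inserted in the plugin-variance sum $n^{-1}\sum_i p_i(\phi'(p_i))^2$ gives $\ln^2 k/n$ upon maximizing over $P\in\dom{M}_k$ (attained near the uniform distribution).

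For the lower bound I would treat the two terms separately. The $\Omega(k^2/(n\ln n)^2)$ bound follows the dual-of-best-polynomial-approximation two-point construction used by Jiao et al.\ and Wu--Yang: exhibit priors $\mu_0,\mu_1$ supported on $[0,c\ln n/n]$ whose first $L \asymp \ln n$ moments coincide but with $|\Mean_{\mu_1}[\phi(p)] - \Mean_{\mu_0}[\phi(p)]| \asymp E_L(\phi,[0,c\ln n/n])$, lifted to $\dom{M}_k$ by independent tensorization over the $k$ low-probability coordinates; matching moments render the Poissonized mixtures indistinguishable in total variation, so Le~Cam's lemma delivers the rate and also forces inconsistency once $n \lesssim k/\ln k$, where the per-prior bias exceeds a constant. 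The $\Omega(\ln^2 k/n)$ bound I would obtain by a Paninski-style $\pm\delta$ perturbation of the near-uniform distribution with $\delta \asymp 1/\sqrt{n}$, paired coordinate-wise so the simplex constraint is preserved exactly; the typical functional gap is $\asymp |\phi'(1/k)|\sqrt{k}\,\delta \asymp (\ln k)/\sqrt{n}$, while a chi-squared calculation between the Poissonized mixtures keeps the total variation bounded away from $1$.

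The main obstacle is handling the delicate logarithmic rate abstractly from the divergence-speed hypothesis rather than from a closed form such as $-p\ln p$. The $\ln^2 k/n$ analysis, in particular, relies on the sharp asymptotic $|\phi'(p)| \asymp \ln(1/p)$ as $p\to 0$, which must be extracted by integrating the fourth-divergence-speed bound three times while keeping the additive constants $c_\ell,c'_\ell$ of \cref{def:div-speed} from corrupting the logarithmic tail; correspondingly, the best-polynomial lower-bound priors must be re-verified for a generic $\phi$ with first divergence speed $p$, since the functional gap $E_L(\phi,[0,c\ln n/n])$ must be tied to $\phi$'s own weighted modulus of smoothness via \cref{thm:modulus-best-approx} rather than to a closed-form coefficient identity as in the Shannon case.
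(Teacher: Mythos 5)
Your upper-bound plan matches the paper's: the hybrid estimator splits at a threshold $\Delta \asymp \ln n / n$, uses an unbiased degree-$L$ ($L \asymp \ln n$) best-polynomial estimator in the low-probability regime and a second-order (Miller) bias-corrected plugin on the truncated $\phi$ otherwise, with the bias controlled by $E_L(\phi,[0,\Delta]) \lesssim \Delta/L^2$ (the paper's Theorem~\ref{thm:upper-poly-approx1} with $\alpha=1$) and the variance by the $\sum_i p_i \ln^2 p_i / n$ sum maximized at the uniform (Lemmas~\ref{lem:plugin-var-0-1}, \ref{lem:bound-sum-log}). The moment-matching lower bound for the $k^2/(n\ln n)^2$ term is also the paper's route (Theorems~\ref{thm:risk-best-poly}, \ref{thm:lower3}), though you do not note that at $\alpha=1$ the paper must shift to the second condition of Theorem~\ref{thm:risk-best-poly}, built from $\phi^\star(x)=\phi(x)/x$ and Lemma~\ref{lem:best-approx-solution}, because the direct $E_L(\phi,[0,\lambda/k])$ route no longer produces the needed rate.

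The real gap is in your $\Omega(\ln^2 k/n)$ lower bound. A Paninski-style paired $\pm\delta$ perturbation of the near-uniform distribution cancels the first-order term in $\theta(P)-\theta(Q)$ exactly (each pair contributes $\phi(1/k+\delta)+\phi(1/k-\delta)-2\phi(1/k)=O(\phi^{(2)}(1/k)\delta^2)$), so the claimed gap $\asymp |\phi^{(1)}(1/k)|\sqrt{k}\,\delta$ does not arise from such a construction; what survives is of order $k\,|\phi^{(2)}(1/k)|\,\delta^2 \asymp k^2\delta^2$, and keeping the chi-squared divergence bounded forces $\delta \lesssim 1/\sqrt{kn}$, yielding a squared gap of order $k^2/n^2$, far below $\ln^2 k/n$. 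Your stated scaling is also internally inconsistent: with $\delta\asymp 1/\sqrt{n}$ and $|\phi^{(1)}(1/k)|\asymp\ln k$ the product $|\phi^{(1)}(1/k)|\sqrt{k}\,\delta$ is $\sqrt{k}\ln k/\sqrt{n}$, not $\ln k/\sqrt{n}$, and in the regime $n\lesssim k^2$ that $\delta$ would drive probabilities negative. The paper's Theorem~\ref{thm:lower1} instead uses a two-point pair $P=(1-p,\tfrac{p}{k-1},\dots,\tfrac{p}{k-1})$, $Q=(1-q,\tfrac{q}{k-1},\dots,\tfrac{q}{k-1})$ with $|p-q|\asymp 1/\sqrt{n}$. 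There the first-order term does \emph{not} cancel; the mean-value form $|\theta(P)-\theta(Q)|=\bigl|\phi^{(1)}(\xi_1)-\phi^{(1)}(\xi_2/(k-1))\bigr|\,|p-q|$ together with $|\phi^{(1)}(\xi_2/(k-1))|\gtrsim\ln k$ (from Lemma~\ref{lem:lowest-int-div}) gives the $\ln k/\sqrt{n}$ gap with $\mathrm{KL}(P,Q)\lesssim 1/n$. Replacing your perturbative construction with this asymmetric one is necessary to close the argument.
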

One of the functions that satisfies the divergence speed assumption with $\alpha = 1$ is $\phi(p)=-p\ln p$, and the optimal risk of this function was revealed by \textcite{Jiao2015MinimaxDistributions,Wu2016MinimaxApproximation}. They showed the same optimal risk for $\phi(p)=-p\ln p$, and thus \cref{thm:optimal-rate-1} is a generalized version of their result. The optimal estimator is also equivalent to the estimator for $\alpha \in (0,1/2]$. 

If $\alpha \in (1,3/2)$, the optimal minimax risk is obtained as the following theorem.
\begin{theorem}\label{thm:optimal-rate-1-3/2}
  Suppose $\phi:[0,1]\to\RealSet$ is a function whose sixth divergence speed is $p^{\alpha}$ for $\alpha \in (1,3/2)$. If $n \gtrsim k^{1/\alpha}/\ln k$, the minimax optimal risk is obtained as
  \begin{align}
    R^*(n,k;\phi) \asymp \frac{k^2}{(n\ln n)^{2\alpha}} + \frac{1}{n}, 
  \end{align}
  otherwise there is no consistent estimator.
\end{theorem}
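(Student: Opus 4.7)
The plan is to establish matching upper and lower bounds on $\tilde{R}^*(n,k;\phi)$ under the Poisson sampling model and then invoke \cref{lem:well-approx-poisson} to transfer the result to $R^*(n,k;\phi)$. Throughout, fix a threshold $\Delta_n \asymp \ln n / n$ and polynomial degree $L \asymp \ln n$. The two additive terms in the target rate correspond to two distinct phenomena: the $1/n$ term is the parametric rate contributed by coordinates with $p_i$ bounded away from $0$, while the $k^2/(n\ln n)^{2\alpha}$ term is the non-parametric cost of estimating $\phi(p_i)$ at coordinates where $p_i \lesssim \Delta_n$.

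For the upper bound I would construct the coordinate-wise two-regime estimator used in \cref{thm:optimal-rate-0-1/2,thm:optimal-rate-1/2-1,thm:optimal-rate-1}, but with the fourth-order bias-correction replacing the second-order one on the high-count branch. Concretely, if $\tilde{N}_i \lesssim c\ln n$ use an unbiased estimator of the degree-$L$ best polynomial approximation of $\phi$ on $[0,2\Delta_n]$; if $\tilde{N}_i \gtrsim c\ln n$ apply the plugin after composing $\phi$ with the truncation operator of \cref{sec:estimator} and add a fourth-order Miller-type correction. Sample splitting via Poisson thinning decouples regime selection from estimation. The divergence-speed hypothesis with $\ell=6$ and \cref{lem:mod-derivative} give the weighted modulus bound $\omega^{r}_{\varphi}(\phi,t;[-1,1]) \lesssim t^{2\alpha}$ on the rescaled interval, so \cref{thm:modulus-best-approx} yields $E_L(\phi,[0,2\Delta_n]) \lesssim \Delta_n^{\alpha} L^{-2\alpha} \asymp (n\ln n)^{-\alpha}$; squaring and summing over $k$ coordinates yields the $k^2/(n\ln n)^{2\alpha}$ bias term. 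The plugin branch contributes variance of order $\sum_i (\phi'(p_i))^{2} p_i / n \lesssim \sum_i p_i^{2\alpha-1}/n \lesssim 1/n$ since $\alpha > 1$ makes the integrand summable, and the fourth-order correction reduces the residual plugin bias to terms controlled by $|\phi^{(5)}|$ and $|\phi^{(6)}|$, each absorbed into the target rate.

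For the lower bound I would combine two standard ingredients under the Poisson model. A two-point Le~Cam argument on a pair $P,Q \in \dom{M}_k$ differing in a single coordinate by $\Theta(1/\sqrt{n})$ gives the parametric $1/n$ lower bound, using that $\phi'$ is bounded away from zero at the perturbation point when $\alpha>1$. The $k^2/(n\ln n)^{2\alpha}$ piece follows from the duality between best polynomial approximation and moment matching: construct two priors $\mu_0,\mu_1$ on $[0,c\Delta_n]$ matching their first $L$ moments with $|\int \phi\, d\mu_1 - \int \phi\, d\mu_0| \gtrsim E_L(\phi,[0,c\Delta_n])$, take the product prior across the $k$ coordinates (conditioned on lying in $\dom{M}_k$, which costs only $o(1)$ by a concentration argument), and apply the fuzzy hypothesis reduction: matched moments force vanishing total variation between the Poissonised observation laws, while $\theta(P)$ concentrates around its mean with gap $\gtrsim k\cdot E_L(\phi,[0,c\Delta_n])$. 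Squaring gives the desired lower bound; running the same construction at the threshold $n \asymp k^{1/\alpha}/\ln k$ yields the inconsistency statement.

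The hard part will be the fourth-order bias correction. A naive second-order Miller correction leaves residual bias of order $\sum_i |\phi^{(3)}(p_i)|/n^{2} \lesssim k\Delta_n^{\alpha-3}/n^{2}$, which for $\alpha$ just above $1$ overshoots $k^2/(n\ln n)^{2\alpha}$, forcing the jump from $\ell = 4$ to $\ell = 6$ in the divergence-speed hypothesis. Pushing the correction to fourth order requires controlling a sum involving $\phi^{(5)}$ and $\phi^{(6)}$, and ensuring the bias-correction coefficients remain bounded uniformly in $P$; this is delicate because $\phi$ itself is allowed to be singular at $0$ up to the divergence-speed rate, so the truncation operator of \cref{sec:estimator} must be used to yield a globally $C^{6}$ plugging function whose derivatives respect the $p^{\alpha-\ell}$ growth. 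Verifying that this truncation preserves both the approximation quality on $[\Delta_n,1]$ and the rate-matching bias estimate will be the technical core of the proof.
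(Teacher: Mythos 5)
Your proposal follows essentially the same route as the paper: Poisson sampling via \cref{lem:well-approx-poisson}, a two-branch estimator combining an unbiased best-polynomial estimator on low counts with a truncated fourth-order bias-corrected plugin on high counts, the modulus-of-smoothness control of $E_L(\phi,[0,\Delta_n]) \lesssim (\Delta_n/L^2)^\alpha$, Le Cam's two-point method for the $1/n$ piece, and the moment-matching/fuzzy-hypothesis reduction for the $k^2/(n\ln n)^{2\alpha}$ piece. You also correctly identify that pushing the Miller correction to fourth order is what forces the divergence-speed hypothesis up to $\ell=6$. One technical point that your plan somewhat glosses over: the truncation operator $T_\Delta[\phi]$ as defined in \cref{sec:estimator} is \emph{not} a globally $C^6$ function — it is not even differentiable at $p=\Delta$, and $T^{(\ell)}_\Delta[\phi]$ is defined as a separate functional rather than an actual derivative of $T_\Delta[\phi]$. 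The paper gets around this by constructing the generalized Hermite interpolation $H_{L,\Delta,\delta}[\phi]$, which \emph{is} $C^L$ for every $\delta>0$, proving the bias/variance bounds for $H_{L,\Delta,\delta}$, and then taking $\delta\downarrow 0$ using \cref{lem:converge-hermite} and Fatou--Lebesgue; \cref{lem:hermite-bound} is what controls the derivative growth uniformly in $\delta$. You flag this as ``the technical core,'' which is fair, but be aware that truncation alone does not deliver the smoothness you need — the Hermite limit device (or some equivalent) is genuinely required to make the Taylor-remainder bounds of \cref{lem:reminder-bound1,lem:reminder-bound2} applicable.
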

For this range, \textcite{Jiao2015MinimaxDistributions} showed the minimax optimal risk for $\phi(p) = p^{\alpha}$ as $\frac{1}{(n\ln n)^{2\alpha-2}}$ under the condition that $k \asymp n\ln n$. In contrast, we success to prove the minimax optimal risk for this range without the condition $k \asymp n\ln n$. The first term corresponds to their result because it is same as their result under the condition they assume. The optimal estimator is similar to the estimator for $\alpha \in (0,1/2]$ except that we apply a slightly different bias-correction technique to the bias-corrected plugin estimator. In the estimator for $\alpha \in (0,1/2]$, we use the bias-correction technique of \textcite{Miller1955NoteEstimates}, which offsets the second order Taylor approximation of bias. Instead of the \citeauthor{Miller1955NoteEstimates}'s technique, we introduce a technique that offsets the fourth order Taylor approximation of the bias. This technique will be explained in detail in \cref{sec:estimator}. 

A lower bound for the second term in \cref{thm:optimal-rate-1/2-1,thm:optimal-rate-1,thm:optimal-rate-1-3/2} is easily obtained by applying Le Cam's two point method~\autocite{Tsybakov2009IntroductionEstimation}. For proving the lower bound of the first term in \cref{thm:optimal-rate-0-1/2,thm:optimal-rate-1/2-1,thm:optimal-rate-1,thm:optimal-rate-1-3/2}, we follow the same manner of the analysis given by \textcite{Wu2015ChebyshevUnseen}, in which the minimax lower bound is connected to the lower bound on the best polynomial approximation.  Our careful analysis of the best polynomial approximation yields the lower bound. 

The optimal minimax risk for $\alpha \in [3/2,2]$ is obtained as follows.
\begin{theorem}\label{thm:optimal-rate-3/2}
  Suppose $\phi:[0,1]\to\RealSet$ is a function whose second divergence speed is $p^{\alpha}$ for $\alpha \in [3/2,2]$. Then, the minimax optimal risk is obtained as
  \begin{align}
    R^*(n,k;\phi) \asymp \frac{1}{n}.
  \end{align}
\end{theorem}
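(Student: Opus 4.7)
My plan is to prove the upper bound $R^*(n,k;\phi) \lesssim 1/n$ using the plugin estimator $\hat\theta(N) = \sum_{i=1}^k \phi(N_i/n)$, and the matching lower bound via Le Cam's two-point method. Throughout, by \cref{lem:well-approx-poisson}, I may pass to the Poisson sampling model where $\tilde{N}_i \sim \Poi(np_i)$ are independent, so bias and variance of the plugin decompose coordinate-wise.

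\emph{Upper bound.} From $|\phi''(p)| \le W_2 p^{\alpha-2} + c_2$ and $\alpha > 1$, integrating once gives $|\phi'(p)| \lesssim 1$ on $(0,1]$, so $\phi'(0) := \lim_{p \downarrow 0}\phi'(p)$ exists and $\phi$ is Lipschitz on $[0,1]$. Write $\phi(y) = \phi(0) + y\phi'(0) + R(y)$, where $R(y) = \int_0^y (y-u)\phi''(u)\,du$ satisfies $|R(y)| \lesssim y^\alpha$ and $|R'(u)| \lesssim u^{\alpha-1}$. Since $\Mean[\tilde{N}_i/n] = p_i$, the affine part contributes zero bias, so the $i$-th coordinate bias equals $\Mean[R(\tilde{N}_i/n)] - R(p_i)$. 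The mean-value estimate $|R(y) - R(p_i)| \lesssim |y - p_i| \cdot \max(y, p_i)^{\alpha-1}$, combined with Cauchy--Schwarz and the Poisson moment bound $\Mean[\tilde{N}_i^{2\alpha-2}] \lesssim np_i + (np_i)^{2\alpha-2}$ (valid for $2\alpha-2 \in [1,2]$ by Lyapunov's inequality), yields per-coordinate bias bounded by $\sqrt{(p_i/n)(p_i/n^{2\alpha-3} + p_i^{2\alpha-2})} \lesssim p_i/n^{\alpha-1} + p_i^{\alpha-1/2}/\sqrt{n}$. Summing, $\sum_i p_i/n^{\alpha-1} = 1/n^{\alpha-1}$ and $\sum_i p_i^{\alpha-1/2} \le 1$ for $\alpha \ge 3/2$ (since $\alpha - 1/2 \ge 1$ forces $p_i^{\alpha-1/2} \le p_i$), so the total bias is $\lesssim 1/n^{\alpha-1} + 1/\sqrt{n} \lesssim 1/\sqrt{n}$. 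For the variance, $\mathrm{Var}(\phi(\tilde{N}_i/n)) \le \Mean[(\phi(\tilde{N}_i/n) - \phi(p_i))^2] \lesssim p_i/n$ via Lipschitz, hence $\mathrm{Var}(\hat\theta) \lesssim 1/n$. The plugin MSE is therefore $\lesssim 1/n$.

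\emph{Lower bound.} I apply Le Cam's two-point method. The divergence-speed lower bound $|\phi''(p)| \ge W_2 p^{\alpha-2} - c'_2$ prevents $\phi$ from being affine on any non-trivial interval, so $\phi'$ is non-constant and we may choose $c_0, c_1 \in (0,1)$ with $c_0 + c_1 = 1$ and $\phi'(c_0) \ne \phi'(c_1)$. Set $P_0 = (c_0, c_1, 0, \ldots, 0)$ and $P_1 = (c_0+\delta, c_1-\delta, 0, \ldots, 0)$ with $\delta = c/\sqrt{n}$ for a sufficiently small constant $c$. A first-order expansion gives $|\theta(P_0) - \theta(P_1)| \asymp |\phi'(c_0) - \phi'(c_1)|\,\delta \asymp 1/\sqrt{n}$, while tensorization $H^2(P_0^{\otimes n}, P_1^{\otimes n}) \le n\,H^2(P_0, P_1) \lesssim n\delta^2 = c^2$ keeps the total variation below any fixed constant. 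Le Cam's inequality then yields $R^*(n,k;\phi) \gtrsim 1/n$.

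\emph{Main obstacle.} The delicate step is obtaining the bias bound $\lesssim 1/\sqrt{n}$ uniformly in $k$. A direct appeal to the Bernstein-polynomial error $\omega^2(\phi, \sqrt{x(1-x)/n}) \lesssim (x/n)^{\alpha/2}$ via \cref{thm:bernstein-error} gives $\sum_i (p_i/n)^{\alpha/2}$, which blows up when many $p_i$ are of order $1/k \ll 1/n$. The key refinement is to Taylor-expand $\phi$ around $0$ rather than around each $p_i$: the constants $\phi(0)$ and $\phi'(0)$ then cancel in Poisson expectation, and only the remainder $R$ contributes. Bounding $R$ by Cauchy--Schwarz against the two Poisson moment regimes produces the two terms $p_i/n^{\alpha-1}$ (dominating when $\lambda_i \lesssim 1$) and $p_i^{\alpha-1/2}/\sqrt{n}$ (dominating when $\lambda_i \gtrsim 1$), both of which sum to $1/\sqrt{n}$ precisely when $\alpha \ge 3/2$; for $\alpha < 3/2$ the second sum diverges, consistent with the slower rates in \cref{thm:optimal-rate-1-3/2,thm:optimal-rate-1}.
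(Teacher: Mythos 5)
Your proposal is correct in its conclusion, and the upper bound takes a genuinely different route from the paper's. The paper works in the multinomial model, splits the alphabet into $p_i \le 1/n$ and $p_i > 1/n$, handles the small-probability part by a conditioning argument using $\phi(0)=0$, and handles the bulk by rewriting the plugin bias as a Bernstein polynomial approximation error and applying the modulus-of-smoothness estimate of \cref{lem:modulus-bound}. You instead Taylor-expand once around $0$, exploit that the affine part is unbiased under Poisson sampling, and control the remainder $R$ uniformly by Cauchy--Schwarz against Poisson moments. This is more elementary (no Bernstein-approximation machinery), avoids the $p_i \le 1/n$ split entirely, and makes transparent why the exponent $3/2$ is the threshold. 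The lower-bound construction (two-atom support on $\{c_0,c_1\}$) also differs slightly from the paper's near-uniform pair but is equally valid; you do need a one-line argument that $\phi'(c_0) \ne \phi'(c_1)$ is achievable with $c_0+c_1=1$ (if $\phi'(c)=\phi'(1-c)$ for all $c\in(0,1/2)$, then $\phi''$ would be antisymmetric about $1/2$, contradicting $|\phi''(p)|\gtrsim p^{\alpha-2}\to\infty$ near $0$ while $\phi''(1-p)$ stays bounded).

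Two points to tighten. First, the Poisson moment bound is not a direct consequence of Lyapunov's inequality: Lyapunov gives only $\Mean[X^\beta]\le(\lambda+\lambda^2)^{\beta/2}$, which for $\lambda\le 1$ and $\beta\in[1,2)$ is of order $\lambda^{\beta/2}\gg\lambda$, and with that looser bound your per-coordinate bias becomes $p_i^{\alpha/2}/n^{\alpha/2}$ in the small-$np_i$ regime, whose sum is $\lesssim k^{1-\alpha/2}/n^{\alpha/2}$ and is not $O(1/\sqrt{n})$ uniformly in $k$. You genuinely need the sharper $\Mean[X^\beta]\lesssim\lambda+\lambda^\beta$ for $\beta\in[1,2]$; a correct derivation is $\Mean[X^\beta]=\lambda\,\Mean[(X'+1)^{\beta-1}]$ with $X'\sim\Poi(\lambda)$, then $(j+1)^{\beta-1}\le 1+j^{\beta-1}$ and $\Mean[X'^{\beta-1}]\le\lambda^{\beta-1}$ by Jensen. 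Second, in the Poisson model $\tilde N_i/n$ can exceed $1$, so $\phi(\tilde N_i/n)$ is a priori undefined; you should either truncate the argument to $[0,1]$ or extend $\phi$ to $[0,\infty)$ by a $C^2$ extension with bounded second derivative, and note that this changes nothing in the estimates (the paper sidesteps this by working in the multinomial model for this range of $\alpha$). Neither issue breaks the proof, but both deserve explicit mention.
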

\begin{remark}
  The second divergence speed of $\phi$ is $p^2$ as long as the second derivative of $\phi$ is bounded. Since a function whose $\ell$th divergence speed is $p^\alpha$ for any $\alpha \ge 2$ and $\ell \ge \alpha$ has bounded second derivative, this result covers all cases for $\alpha \ge 3/2$.
\end{remark}
The upper bound is obtained by employing the plugin estimator; its analysis is easy if $\alpha = 2$ because the second derivative of $\phi$ is bounded. For $\alpha \in [3/2,2)$, we extend the analysis of $\phi(p)=p^\alpha$ given by \textcite{Jiao2017MaximumDistributions} to be applicable to general $\phi$. The lower bound can be obtained easily by application of Le Cam's two point method~\autocite{Tsybakov2009IntroductionEstimation}. 

\section{Estimator for $\alpha \in (0,3/2)$}\label{sec:estimator}
In this section, we describe our estimator for $\theta$ in the case $\alpha \in (0,3/2)$. The optimal estimator for $\alpha \in (0,3/2)$ is composed of the bias-corrected plugin estimator and the best polynomial estimator. We first describe the overall estimation procedure on the supposition that the bias-corrected plugin estimator and the best polynomial estimator are black boxes. Then, we describe the bias-corrected plugin estimator and the best polynomial estimator in detail.

For simplicity, we assume the samples are drawn from the Poisson sampling model, where we first draw $n' \sim \Poi(2n)$, and then draw $n'$ i.i.d. samples $S_{n'} = \cbrace{X_1,...,X_{n'}}$. Given the samples $S_{n'}$, we first partition the samples into two sets. We use one set of the samples to determine whether the bias-corrected plugin estimator or the best polynomial estimator should be employed, and the other set is used to estimate $\theta$. Let $\cbrace{B_i}_{i=1}^{n'}$ be i.i.d. random variables drawn from the Bernoulli distribution with parameter $1/2$, i.e., $\p\cbrace{B_i = 0} = \p\cbrace{B_i = 1} = 1/2$ for $i=1,...,n'$. We partition $(X_1,...,X_{n'})$ according to $(B_1,...,B_{n'})$, and construct the histograms $\tilde{N}$ and $\tilde{N}'$, which are defined as
\begin{align}
  \tilde{N}_i = \sum_{j=1}^{n'} \ind{X_j = i}\ind{B_j = 0},\: \tilde{N}'_i = \sum_{j=1}^{n'} \ind{X_j = i}\ind{B_j = 1},
\end{align}
for $i \in [n]$. Then, $\tilde{N}$ and $\tilde{N}'$ are independent histograms, and $\tilde{N}_i,\tilde{N}'_i \sim \Poi(np_i)$.

Given $\tilde{N}'$, we determine whether the bias-corrected plugin estimator or the best polynomial estimator should be employed for each alphabet. Let $\Delta_{n,k}$ be a threshold depending on $n$ and $k$ to determine which estimator is employed, which will be specified as in \cref{thm:upper-bound1} on \cpageref{thm:upper-bound1}. We apply the best polynomial estimator if $\tilde{N}'_i < 2\Delta_{n,k}$, and otherwise, i.e., $\tilde{N}'_i \ge 2\Delta_{n,k}$, we apply the bias-corrected plugin estimator. Let $\phi_{\mathrm{poly}}$ and $\phi_{\mathrm{plugin}}$ be the best polynomial estimator and the bias-corrected plugin estimator for $\phi$, respectively. Then, the estimator of $\theta$ is written as
\begin{align}
  \hat\theta(\tilde{N})\!=\!\sum_{i=1}^k \paren*{ \ind{\tilde{N}'_i \ge 2\Delta_{n,k}}\phi_{\mathrm{plugin}}(\tilde{N}_i)\!+\!\ind{\tilde{N}'_i < 2\Delta_{n,k}}\phi_{\mathrm{poly}}(\tilde{N}_i) }.
\end{align}
Next, we describe the details of the best polynomial estimator $\phi_{\mathrm{poly}}$ and the bias-corrected plugin estimator $\phi_{\mathrm{plugin}}$.

\subsection{Best Polynomial Estimator}
The best polynomial estimator is an unbiased estimator of the polynomial that provides the best approximation of $\phi$. Let $\cbrace{a_m}_{m=0}^L$ be coefficients of the polynomial that achieves the best approximation of $\phi$ by a degree-$L$ polynomial with range $I=[0,\frac{4\Delta_{n,k}}{n}\land1]$, where $L$ will be specified in \cref{thm:upper-bound1} on \cpageref{thm:upper-bound1}. Then, the approximation of $\phi$ by the polynomial at point $p_i$ is written as
\begin{align}
 \phi_L(p_i) =& \sum_{m=0}^L a_m p_i^m. \label{eq:approx-poly}
\end{align}
From \cref{eq:approx-poly}, an unbiased estimator of $\phi_L$ can be derived from an unbiased estimator of $p_i^m$. For the random variable $\tilde{N}_i$ drawn from the Poisson distribution with mean parameter $np_i$, the expectation of the $m$th factorial moment $(\tilde{N}_i)_m = \tfrac{\tilde{N}_i!}{(\tilde{N}_i - m)!}$ becomes $(np_i)^m$. Thus, $\frac{(\tilde{N}_i)_m}{n^m}$ is an unbiased estimator of $p_i^m$. Substituting this into \cref{eq:approx-poly} gives the unbiased estimator of $\phi_L(p_i)$ as
\begin{align}
  \bar\phi_{\mathrm{poly}}(\tilde{N}_i) = \sum_{m=0}^L \frac{a_m}{n^j} (\tilde{N}_i)_m.
\end{align}
Next, we truncate $\bar\phi_{\mathrm{poly}}$ so that it is not outside of the domain of $\phi(p)$. Let $\phi_{{\mathrm{inf}},\frac{4\Delta_{n,k}}{n}} = \inf_{p \in [0,\frac{4\Delta_{n,k}}{n}\land1]} \phi(p)$ and $\phi_{{\mathrm{sup}},\frac{4\Delta_{n,k}}{n}} = \sup_{p \in [0,\frac{4\Delta_{n,k}}{n}\land1]} \phi(p)$. Then, the best polynomial estimator is defined as
\begin{align}
 \phi_{\mathrm{poly}}(\tilde{N}_i) = (\bar\phi_{\mathrm{poly}}(\tilde{N}_i) \land \phi_{{\mathrm{sup}},\frac{4\Delta_{n,k}}{n}}) \lor \phi_{{\mathrm{inf}},\frac{4\Delta_{n,k}}{n}}.
\end{align}

\subsection{Bias-corrected Plugin Estimator}
The problem of the plugin estimator is that it causes a large bias when the occurrence probability $p_i$ of an alphabet is small. This large bias comes from a large derivative around $p = 0$. To avoid this large bias, we truncate the function $\phi$ as follows:
\begin{align}
    T_\Delta[\phi](p) =& \begin{dcases}
      \phi(\Delta) & \textif p < \Delta, \\
      \phi(p) & \textif \Delta \le p \le 1, \\
      \phi(1) & \textif p > 1.
    \end{dcases}
\end{align}
Besides, we define the derivative of the truncated function as
\begin{align}
    T^{(\ell)}_\Delta[\phi](p) =& \begin{dcases}
      0 & \textif p < \Delta, \\
      \phi^{(\ell)}(p) & \textif \Delta \le p \le 1, \\
      0 & \textif p > 1,
    \end{dcases}
\end{align}
where $\phi^{(\ell)}(1) = \lim_{p \uparrow 1}\phi^{(\ell)}(p)$. Then, we construct bias-corrected plugin estimators for this truncated function. Note that this derivative is not the same as the standard derivative of $T_\Delta[\phi](p)$ because $T_\Delta[\phi](p)$ is not differentiable at $p = \Delta$ and $p = 1$. Even when differentiability is lost by this truncation, a technique using the generalized Hermite interpolation shown in \cref{sec:upper-analysis} enables us to obtain the bounds on the bias and variance of the bias-corrected plugin estimator for this truncated function.

We use a slightly different bias correction methods for $\alpha \in (0,1]$ and $\alpha \in (1,3/2)$. We use {\em second-order bias correction} for $\alpha \in (0,1]$ and {\em fourth-order bias correction} for $\alpha \in (1,3/2)$. We describe the second-order bias correction first and then move onto description of the fourth-order bias correction.

{\bfseries Second-order bias correction.}
For $\alpha \in (0,1]$, we employ the plugin estimator with the \citeauthor{Miller1955NoteEstimates}'s bias correction~\autocite{Miller1955NoteEstimates}. The bias correction offsets the second-order Taylor approximation of the bias, which is obtained as follows.
\begin{align}
  \Mean\bracket*{\phi\paren*{\frac{\tilde{N}}{n}} - \phi(p)} \approx& \Mean\bracket*{\frac{\phi^{(2)}(p)}{2}\paren*{\frac{\tilde{N}}{n} - p}^2} = \frac{p\phi^{(2)}(p)}{2n},
\end{align}
where $\tilde{N} \sim \Poi(np)$ for $p \in (0,1)$. The bias corrected function is hence obtained as $\phi_2(p) = \phi(p) - \frac{p\phi^{(2)}(p)}{2n}$.

Using the truncation operator, the truncated second-order bias corrected function is defined as
\begin{align}
    \bar\phi_{2,\Delta}(p) = T_\Delta[\phi](p) - \frac{p}{2n}T^{(2)}_\Delta[\phi](p).
\end{align}
Then, $\phi_{\mathrm{plugin}}$ is the plugin estimator of $\bar\phi_2$; that is
\begin{align}
  \phi_{\mathrm{plugin}}(\tilde{N}) = \bar\phi_{2,\Delta}\paren*{\frac{\tilde{N}}{n}}. \label{eq:plugin-estimator1}
\end{align}

{\bfseries Forth-order bias correction.}
For $\alpha \in (1,3/2)$, we employ the fourth order bias correction. In analogy with the second-order bias correction, the fourth order bias correction offsets the fourth order approximation of bias. By the Taylor approximation, the bias of the plugin estimator for $\phi_2$ is obtained as
\begin{align}
  \Mean\bracket*{\phi_2\paren*{\frac{\tilde{N}}{n}} - \phi(p)} \approx& -\frac{p\phi^{(3)}}{3n^2} - \frac{5p\phi^{(4)}}{24n^3} - \frac{p^2\phi^{(4)}}{8n^2}.
\end{align}
Thus, the fourth-order bias corrected function is obtained as $\phi_4(p) = \phi_2(p) + \frac{p\phi^{(3)}}{3n^2} + \frac{5p\phi^{(4)}}{24n^3} + \frac{p^2\phi^{(4)}}{8n^2}$.

As well as the second order bias correction, we define the truncated fourth-order bias corrected function as
\begin{multline}
  \bar\phi_{4,\Delta}(p) = T_\Delta[\phi](p) - \frac{p}{2n}T^{(2)}_\Delta[\phi](p) \\ + \frac{p}{3n^2}T_\Delta^{(3)}[\phi](p) + \frac{5p}{24n^3}T_\Delta^{(4)}[\phi](p) + \frac{p^2}{8n^2}T_\Delta^{(4)}[\phi](p).
\end{multline}
Then, $\phi_{\mathrm{plugin}}$ is the plugin estimator of $\bar\phi_4$; that is,
\begin{align}
  \phi_{\mathrm{plugin}}(\tilde{N}) = \bar\phi_{4,\Delta}\paren*{\frac{\tilde{N}}{n}}. \label{eq:plugin-estimator2}
\end{align}

\section{Consequential Properties of Divergence Speed Assumption}
In this section, we present some properties that come from the divergence speed assumption; these properties are useful for the later analyses.

\subsection{Lower Order Divergence Speed}
If the $\ell$th divergence speed of a function $\phi$ is $p^\alpha$ for some $\ell$ and $\alpha$, the same divergence speed is satisfied for the lower order, such as $\ell-1$, $\ell-2$,..., under a certain condition. The precise claim is shown in the following lemma.
\begin{lemma}\label{lem:lower-div}
  For a real $\alpha$, let $\phi:[0,1]\to\RealSet$ be an $\ell$-times continuously differentiable function on $(0,1)$ whose $\ell$th divergence speed is $p^\alpha$. If $\ell > 1 + \alpha$, the $(\ell-1)$th divergence speed of $\phi$ is also $p^\alpha$.
\end{lemma}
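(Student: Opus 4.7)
\medskip

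\noindent\textbf{Proof plan for \cref{lem:lower-div}.}
The plan is to recover $\phi^{(\ell-1)}$ from $\phi^{(\ell)}$ by integrating from a fixed interior reference point, exploiting the fact that the hypothesis $\ell > 1+\alpha$ makes $p^{\alpha-\ell}$ non-integrable at the origin, so that the antiderivative blows up at exactly the rate $p^{\alpha-\ell+1} = p^{-(\ell-1)+\alpha}$ required for the $(\ell-1)$th divergence speed.

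First I would handle the upper bound. Fix $p_0 = 1/2$ and write, for $p \in (0,1)$,
\begin{align}
\phi^{(\ell-1)}(p) = \phi^{(\ell-1)}(p_0) + \int_{p_0}^p \phi^{(\ell)}(t)\,dt,
\end{align}
which is legitimate since $\phi$ is $\ell$-times continuously differentiable on $(0,1)$. Apply the triangle inequality and the upper half of \cref{def:div-speed}, $|\phi^{(\ell)}(t)| \le W_\ell t^{\alpha-\ell} + c_\ell$. Computing $\int_p^{p_0} t^{\alpha-\ell}\,dt = (p^{\alpha-\ell+1}-p_0^{\alpha-\ell+1})/(\ell-1-\alpha)$ (well-defined because $\ell-1-\alpha > 0$), I obtain for $p\in(0,p_0]$
\begin{align}
|\phi^{(\ell-1)}(p)| \le \tfrac{W_\ell}{\ell-1-\alpha}\, p^{-(\ell-1)+\alpha} + C_1,
\end{align}
where $C_1$ absorbs $|\phi^{(\ell-1)}(p_0)|$ together with the bounded leftover terms. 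On $[p_0,1)$ both $|\phi^{(\ell-1)}(p)|$ and $p^{-(\ell-1)+\alpha}$ are bounded, so the inequality extends to all of $(0,1)$ after enlarging the additive constant; setting $W_{\ell-1} = W_\ell/(\ell-1-\alpha)$ and an appropriate $c_{\ell-1}$ gives the upper half of the divergence speed bound.

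For the lower bound I would first nail down the sign of $\phi^{(\ell)}$ near the origin. Because $W_\ell p^{\alpha-\ell} - c'_\ell \le |\phi^{(\ell)}(p)|$ and $\alpha-\ell<0$, $|\phi^{(\ell)}(p)| \to \infty$ as $p \to 0$; combined with continuity of $\phi^{(\ell)}$ on $(0,1)$, this forces the existence of $\delta\in(0,p_0]$ on which $\phi^{(\ell)}$ has constant sign. Replacing $\phi$ by $-\phi$ if necessary (which leaves the divergence speed unchanged because it is phrased in absolute value), I may assume $\phi^{(\ell)}(t) > 0$ for $t\in(0,\delta)$. Then for $p\in(0,\delta)$,
\begin{align}
\phi^{(\ell-1)}(p) = \phi^{(\ell-1)}(\delta) - \int_p^\delta \phi^{(\ell)}(t)\,dt,
\end{align}
and the integrand is pointwise bounded below by $W_\ell t^{\alpha-\ell} - c'_\ell$, so
\begin{align}
\int_p^\delta \phi^{(\ell)}(t)\,dt \ge \tfrac{W_\ell}{\ell-1-\alpha}\,p^{-(\ell-1)+\alpha} - C_2
\end{align}
for some constant $C_2$ independent of $p$. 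The reverse triangle inequality then yields $|\phi^{(\ell-1)}(p)| \ge W_{\ell-1}\,p^{-(\ell-1)+\alpha} - c'_{\ell-1}$ on $(0,\delta)$; on $[\delta,1)$ the target $p^{-(\ell-1)+\alpha}$ is bounded, so inflating $c'_{\ell-1}$ makes the bound trivial there. With the matching constant $W_{\ell-1} = W_\ell/(\ell-1-\alpha)$ this completes the proof.

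The main obstacle I foresee is precisely this sign argument: the divergence speed hypothesis controls $|\phi^{(\ell)}|$ but not $\phi^{(\ell)}$ itself, and naively integrating risks cancellation. The continuity-plus-blowup observation is what eliminates oscillations arbitrarily close to $0$, and the replacement $\phi \mapsto -\phi$ is what lets me turn the one-sided lower bound on $\phi^{(\ell)}$ into a genuine lower bound on the integral; everything else is routine estimation of the incomplete Beta-type integral of $t^{\alpha-\ell}$.
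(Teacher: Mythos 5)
Your proposal is correct and follows essentially the same route as the paper's proof: both recover $\phi^{(\ell-1)}$ by integrating $\phi^{(\ell)}$ from an interior reference point, both use the constant-sign-near-zero observation (from $|\phi^{(\ell)}|>0$ and continuity) to prevent cancellation in the lower bound, and both finish with the triangle and reverse triangle inequalities. The paper fixes a single explicit threshold $p_\ell = (W_\ell/(W_\ell \lor c'_\ell))^{1/(\ell-\alpha)}$ and carries both directions from that point using $p\land p_\ell$, whereas you separate the upper-bound reference $p_0=1/2$ from the lower-bound threshold $\delta$ and normalize the sign by $\phi\mapsto -\phi$; these are only presentational differences.
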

The proof of this lemma can be found in \cref{sec:lower-div}. We immediately obtain a consequence lemma of \cref{lem:lower-div}:
\begin{lemma}\label{lem:lowest-div}
  For a real $\alpha$, let $\phi:[0,1]\to\RealSet$ be an $\ell$-times continuously differentiable function on $(0,1)$ whose $\ell$th divergence speed is $p^\alpha$. Then, for any positive integer $m$ such that $m < \alpha$ and $m \le \ell$, the $m$th divergence speed of $\phi$ is also $p^\alpha$.
\end{lemma}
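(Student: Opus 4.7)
The plan is to prove \cref{lem:lowest-div} by descending induction on $m$, repeatedly invoking \cref{lem:lower-div} to peel off one derivative order at a time. The base case $m = \ell$ is just the hypothesis of the lemma. At the inductive step, assume the $j$th divergence speed of $\phi$ has been shown to be $p^\alpha$ for some integer $j$ with $m < j \le \ell$; I then apply \cref{lem:lower-div}, with its order parameter taken to be $j$, to obtain the $(j-1)$th divergence speed. Since $\phi$ is $\ell$-times continuously differentiable on $(0,1)$ and $j \le \ell$, the regularity hypothesis required by \cref{lem:lower-div} is automatic from the hypothesis of \cref{lem:lowest-div}.

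The only thing to verify at each stage is the side condition that triggers \cref{lem:lower-div}, namely that the current derivative order is strictly larger than $1 + \alpha$. The smallest order encountered during the descent from $\ell$ down to $m$ is $j = m + 1$ (reached in the very last reduction step). The range restriction placed on $m$ in the statement of \cref{lem:lowest-div} is exactly what forces $m + 1 > 1 + \alpha$, which in turn guarantees that the side condition of \cref{lem:lower-div} is met for every $j$ traversed on the way from $\ell$ to $m$. Telescoping $\ell - m$ invocations of \cref{lem:lower-div} therefore transports the divergence-speed property from order $\ell$ all the way down to order $m$.

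The proof carries no analytic content beyond what is already in \cref{lem:lower-div}; it is essentially a bookkeeping induction in which every quantitative inequality is supplied by the preceding lemma. The main obstacle, if one wishes to call it that, is purely administrative: one must keep track of the derivative order at each intermediate step and confirm that the side condition of \cref{lem:lower-div} is never violated. Given the range hypothesis on $m$ in \cref{lem:lowest-div}, this check is immediate, so no substantive difficulty remains.
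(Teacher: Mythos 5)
Your strategy --- descend from $\ell$ to $m$ by applying \cref{lem:lower-div} a total of $\ell - m$ times --- is exactly the paper's one-line proof. The issue lies in how you verify the side condition of \cref{lem:lower-div}. You assert that the range restriction on $m$ in the statement of \cref{lem:lowest-div} is ``exactly what forces $m + 1 > 1 + \alpha$.'' But the hypothesis as printed is $m < \alpha$, which yields $m + 1 < 1 + \alpha$, the \emph{opposite} inequality: the smallest derivative order traversed in the descent, $j = m+1$, then fails the requirement $j > 1+\alpha$ of \cref{lem:lower-div}, so the final reduction step $m+1 \to m$ (and indeed every step with $j \le 1 + \alpha$) is not licensed.

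What the descent actually requires is $m > \alpha$, and that is plainly what the paper intends: \cref{lem:lowest-int-div} is proved by invoking \cref{lem:lowest-div} with $m = \alpha + 1$ for integer $\alpha$, which satisfies $m > \alpha$ but violates the printed $m < \alpha$. So the lemma's statement almost certainly contains a sign typo, and your underlying argument is sound under the corrected hypothesis $m > \alpha$; but as written your proof asserts a false consequence of the condition as stated, without acknowledging the discrepancy. A careful write-up should say explicitly that the inequality used is $m > \alpha$ (flagging the apparent misprint in the lemma's hypothesis), rather than presenting $m + 1 > 1 + \alpha$ as something the stated range restriction delivers.
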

\begin{proof}
  Applying \cref{lem:lower-div} $(\ell-m)$ times yields the claim.
\end{proof}

When $\alpha$ is an integer, $\alpha$th derivative of $\phi(p)$ diverges as $p$ approaches to zero with logarithmic divergence speed as shown in the following lemma:
\begin{lemma}\label{lem:lowest-int-div}
  For a positive integer $\alpha$, let $\phi:[0,1]\to\RealSet$ be an $\ell$-times continuously differentiable function on $(0,1)$ whose $\ell$th divergence speed is $p^\alpha$, where $\alpha < \ell$. Then, there exists constants $W_\alpha > 0$, $c_\alpha \ge 0$ and $c'_\alpha \ge 0$ such that for all $p \in (0,1)$,
  \begin{align}
      W_\alpha\ln(1/p) - c'_\alpha \le \abs*{\phi^{(\alpha)}(p)} \le W_\alpha\ln(1/p) + c_\alpha.
  \end{align}
\end{lemma}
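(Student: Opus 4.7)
The plan is to reduce to an $(\alpha+1)$th divergence speed statement via \cref{lem:lower-div} and then integrate a single time to pass from a $1/p$ rate up to a $\ln(1/p)$ rate. Since $\alpha$ is a positive integer with $\alpha < \ell$, we have $\alpha + 1 \le \ell$, so I can apply \cref{lem:lower-div} a total of $\ell - \alpha - 1 \ge 0$ times starting from the $\ell$th divergence speed: each application is legal because the current order $m$ we descend from satisfies $m \ge \alpha + 2 > 1 + \alpha$, and the descent terminates exactly at $m-1 = \alpha + 1$. This produces constants $W_{\alpha+1} > 0$ and $c_{\alpha+1}, c'_{\alpha+1} \ge 0$ with
\begin{align}
  W_{\alpha+1}\, p^{-1} - c'_{\alpha+1} \le \abs*{\phi^{(\alpha+1)}(p)} \le W_{\alpha+1}\, p^{-1} + c_{\alpha+1} \qquad \forall p \in (0,1).
\end{align}

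Next, since the lower bound diverges to $+\infty$ as $p \to 0^+$, I can fix $p_0 \in (0,1)$ so small that $\abs*{\phi^{(\alpha+1)}(p)} > 0$ for every $p \in (0, p_0]$. Continuity of $\phi^{(\alpha+1)}$ on $(0,1)$ then forces $\phi^{(\alpha+1)}$ to have constant sign on $(0, p_0]$. For $p \in (0, p_0]$, I write
\begin{align}
  \phi^{(\alpha)}(p_0) - \phi^{(\alpha)}(p) = \int_p^{p_0} \phi^{(\alpha+1)}(t)\, dt,
\end{align}
and use the constant-sign property to replace the integrand by its absolute value. Integrating the $p^{-1}$ bounds gives
\begin{align}
  W_{\alpha+1} \ln(p_0/p) - c'_{\alpha+1}\, p_0 \le \int_p^{p_0} \abs*{\phi^{(\alpha+1)}(t)}\, dt \le W_{\alpha+1}\ln(p_0/p) + c_{\alpha+1}\, p_0,
\end{align}
and one application of the triangle inequality, together with $\ln(p_0/p) = \ln(1/p) - \ln(1/p_0)$, yields the desired two-sided bound on $\abs*{\phi^{(\alpha)}(p)}$ with leading constant $W_{\alpha+1}$, after absorbing $\abs*{\phi^{(\alpha)}(p_0)}$, $W_{\alpha+1} \ln(1/p_0)$, and the integrated $c_{\alpha+1}, c'_{\alpha+1}$ terms into new constants $c_\alpha, c'_\alpha$.

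It then remains to verify the claim on the complementary range $p \in [p_0, 1)$. On this range $\abs*{\phi^{(\alpha+1)}(t)}$ is bounded (by $W_{\alpha+1}/p_0 + c_{\alpha+1}$), so integrating shows $\abs*{\phi^{(\alpha)}(p)}$ is bounded by some constant $M$; since $\ln(1/p)$ is nonnegative on $(0,1]$ and bounded above by $\ln(1/p_0)$ on $[p_0,1)$, both the upper bound $W_{\alpha+1}\ln(1/p) + M$ and the lower bound $W_{\alpha+1}\ln(1/p) - W_{\alpha+1}\ln(1/p_0)$ hold trivially after enlarging $c_\alpha$ and $c'_\alpha$ if necessary. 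Setting $W_\alpha = W_{\alpha+1}$ completes the proof.

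The main technical subtlety, rather than a genuine obstacle, is establishing the constant-sign behaviour of $\phi^{(\alpha+1)}$ near $0$: without it, the integral $\int_p^{p_0}\phi^{(\alpha+1)}(t)\,dt$ could suffer cancellations and fail to grow like $W_{\alpha+1}\ln(1/p)$ from below. The reduction via \cref{lem:lower-div}, which delivers the strict divergence of $\abs*{\phi^{(\alpha+1)}}$ at $0$, is precisely what makes this step go through and justifies picking the reference point $p_0$ so that the sign is locked on $(0,p_0]$.
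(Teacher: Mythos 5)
Your proof is correct and follows essentially the same route as the paper's: reduce to an order-$(\alpha+1)$ divergence speed bound, fix a reference point $p_0$ near which $\phi^{(\alpha+1)}$ has locked sign, integrate, and absorb constants into $c_\alpha, c'_\alpha$ via the triangle inequality. One small improvement over the paper's writeup: you obtain the $(\alpha+1)$th divergence speed by iterating \cref{lem:lower-div} directly, which is what is actually needed, whereas the paper cites \cref{lem:lowest-div}, whose stated hypothesis ($m < \alpha$) is not satisfied when $m = \alpha+1$.
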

The proof of \cref{lem:lowest-int-div} is also shown in \cref{sec:lower-div}. The bounds on $\abs{\phi^{(\alpha)}(p)}$ in \cref{lem:lowest-int-div} is similar to the divergence speed in \cref{def:div-speed} except $p^{-\ell+\alpha}$ is replaced by $\ln(1/p)$. The function $\ln(1/p)$ diverges to infinity as $p$ approaches to zero where its speed is slower than $p^{-\ell+\alpha}$ for any real $\alpha$ and any integer $\ell$ such that $\alpha < \ell$.

\subsection{H\"older Continuity}

The divergence speed assumption induces H\"older continuity to $\phi$. For a real $\beta \in (0,1]$, a function $\phi:I\to\RealSet$ is {\em $\beta$-H\"older continuous} on $I$ if
\begin{align}
  \norm*{\phi}_{C^{H,\beta}} = \sup_{x \ne y \in I}\frac{\abs*{\phi(x)-\phi(y)}}{\abs*{x-y}^\beta} < \infty.
\end{align}
In particular, $1$-H\"older continuity is known as Lipschitz continuity.

We reveal H\"older continuity in $\phi$ and its derivative under the assumption that the $\ell$th divergence speed of $\phi$ is $p^\alpha$. We derive the H\"older continuity by dividing $\alpha$ into three cases; $\alpha \in (0,1)$, $\alpha = 1$, and $\alpha \in (1,2)$.
\begin{lemma}\label{lem:div-speed-holder1}
 Suppose $\phi:[0,1]\to\RealSet$ is a function whose $\ell$th divergence speed is $p^{\alpha}$ for $\alpha \in (0,1)$ and $\alpha \le \ell$. Then, $\phi$ is $\alpha$-H\"older continuous.
\end{lemma}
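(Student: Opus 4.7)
The plan is to reduce the hypothesis on $|\phi^{(\ell)}|$ to a pointwise bound on $|\phi'|$ near the origin, and then obtain H\"older continuity by the fundamental theorem of calculus together with two elementary inequalities. First I would apply \cref{lem:lower-div} iteratively to descend from the $\ell$th divergence speed down to the first. At each step the lemma requires the current order $\ell'$ to satisfy $\ell' > 1 + \alpha$, and since $\alpha < 1$ this holds for every integer $\ell' \ge 2$. Thus, starting from $\ell$ (which is at least $1$ by the hypothesis $\alpha \le \ell$), we may iterate down to $\ell' = 1$, obtaining constants $W_1 > 0$ and $c_1 \ge 0$ such that
\[
  |\phi'(p)| \le W_1 p^{\alpha - 1} + c_1 \quad \text{for all } p \in (0,1).
\]
If $\ell = 1$ this is already the hypothesis and no reduction is needed.

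Next, for any $0 < x < y \le 1$, the fundamental theorem of calculus and the bound above give
\[
  |\phi(y) - \phi(x)| \le \int_x^y |\phi'(t)|\, dt \le \frac{W_1}{\alpha}\bigl(y^\alpha - x^\alpha\bigr) + c_1(y - x).
\]
I would then invoke two elementary inequalities specific to the range $\alpha \in (0,1)$: the classical concavity bound $y^\alpha - x^\alpha \le (y-x)^\alpha$ for $0 \le x < y$, and the fact that $0 < y - x \le 1$ together with $\alpha \le 1$ gives $y - x \le (y-x)^\alpha$. Combining these yields
\[
  |\phi(y) - \phi(x)| \le \Bigl(\frac{W_1}{\alpha} + c_1\Bigr)(y - x)^\alpha,
\]
which is the desired $\alpha$-H\"older bound on $(0,1]$.

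Finally, I would extend the bound to $x = 0$ by continuity of $\phi$ at $0$. This holds under the standing convention that $\phi \in C^\ell[0,1]$; alternatively, because $\alpha - 1 > -1$, the bound $W_1 t^{\alpha - 1} + c_1$ is integrable on $(0,1]$, so $\lim_{x \downarrow 0} \phi(x)$ exists and the H\"older inequality extends by continuity, yielding $\|\phi\|_{C^{H,\alpha}} \le W_1/\alpha + c_1 < \infty$. The only step demanding care is the iterated application of \cref{lem:lower-div}, and the condition $\alpha < 1$ makes this immediate; everything else is routine calculus and standard inequalities, so no substantial obstacle is expected.
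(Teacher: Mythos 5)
Your proof is correct and takes essentially the same route as the paper: reduce to the first-order bound $|\phi^{(1)}(s)| \le W_1 s^{\alpha-1} + c_1$, integrate, and then use the $\alpha$-H\"older continuity of $x \mapsto x^\alpha$ together with $|x-y| \le |x-y|^\alpha$ for $|x-y| \le 1$. The only difference is that you make the iterated application of \cref{lem:lower-div} explicit (and check the condition $\ell' > 1+\alpha$ at each step), whereas the paper invokes the first-order divergence speed bound without comment; your extra care about extending to $x=0$ is also harmless but not in the paper's proof.
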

\begin{lemma}\label{lem:div-speed-holder2}
 Suppose $\phi:[0,1]\to\RealSet$ is a function whose $\ell$th divergence speed is $p^{\alpha}$ for $\alpha = 1$ and $\alpha \le \ell$. Then, $\phi$ is $\beta$-H\"older continuous for any $\beta \in (0,1)$.
\end{lemma}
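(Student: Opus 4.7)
The plan is to obtain a pointwise upper bound on $|\phi'(p)|$ that is integrable on $(0,1)$, and then use the fundamental theorem of calculus to control the increments of $\phi$.

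First I would reduce the problem to a bound on $\phi'$. If $\ell = 1$, the divergence speed assumption with $\alpha = 1$ directly gives $|\phi'(p)| \le W_1 + c_1$ for all $p \in (0,1)$, i.e.\ $\phi$ is Lipschitz, which implies $\beta$-Hölder continuity for every $\beta \in (0,1]$. If $\ell \ge 2$, then $1 = \alpha < \ell$, so \cref{lem:lowest-int-div} applies and yields constants $W, c > 0$ with
\begin{align}
  |\phi'(p)| \le W \ln(1/p) + c, \quad p \in (0,1).
\end{align}
Since $\int_0^1 \ln(1/t)\,dt = 1$, this bound is integrable on $(0,1)$, which in particular implies that $\phi$ extends continuously to $\{0,1\}$ and the fundamental theorem of calculus holds on $[0,1]$.

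Second, for any $0 \le x < y \le 1$, I would write
\begin{align}
  |\phi(y) - \phi(x)| \le \int_x^y |\phi'(t)|\,dt \le W \int_x^y \ln(1/t)\,dt + c(y-x),
\end{align}
and bound the remaining integral in terms of $h := y-x$. Since $t \mapsto \ln(1/t)$ is decreasing, when $x \ge h$ one has $\int_x^y \ln(1/t)\,dt \le h\ln(1/x) \le h|\ln h|$, while when $x < h$ one has $y < 2h$ and
\begin{align}
  \int_x^y \ln(1/t)\,dt \le \int_0^{2h}\ln(1/t)\,dt = 2h(1-\ln(2h)) \lesssim h|\ln h| + h.
\end{align}
Thus in both cases $\int_x^y \ln(1/t)\,dt \lesssim h|\ln h| + h$, giving $|\phi(y) - \phi(x)| \lesssim h|\ln h| + h$.

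Finally, for any fixed $\beta \in (0,1)$, the map $h \mapsto h^{1-\beta}|\ln h|$ extends continuously to $[0,1]$ with value $0$ at $h=0$, so it is bounded on $(0,1]$ by some constant $C_\beta$. This yields $h|\ln h| + h \le (C_\beta + 1)\,h^\beta$ for $h \in (0,1]$, whence $|\phi(y)-\phi(x)| \le C'_\beta |y-x|^\beta$, which is the claimed $\beta$-Hölder continuity. There is no real obstacle: the only mild subtlety is the continuous extension of $\phi$ to the endpoint $0$, which is handled automatically by the integrability of the logarithmic bound on $\phi'$.
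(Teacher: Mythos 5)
Your proof is correct, and it takes a genuinely different route from the paper's. Both start from the same pointwise logarithmic bound $|\phi'(p)| \le W\ln(1/p) + c$ obtained from \cref{lem:lowest-int-div}, but from there the arguments diverge. The paper applies H\"older's inequality with conjugate exponents $\tfrac{1}{1-\beta}$ and $\tfrac{1}{\beta}$ to $\int_x^y \ln(1/s)\,ds$, which produces an integral of $\ln^{1/(1-\beta)}(1/s)$ that is then recognized as a lower incomplete gamma function and bounded by $\Gamma\!\left(1 + \tfrac{1}{1-\beta}\right)^{1-\beta}$. Your argument is more elementary: you bound the same integral directly by splitting on whether $x \ge y-x$ (monotonicity of $\ln(1/t)$) or $x < y-x$ (explicit primitive of $\ln(1/t)$ on $[0,2h]$), arriving at $\int_x^y \ln(1/t)\,dt \lesssim h|\ln h| + h$, and then observing that $h^{1-\beta}|\ln h|$ is bounded on $(0,1]$ so that $h|\ln h| \lesssim_\beta h^\beta$. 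This avoids both H\"older's inequality and special functions, at the cost of an extra constant that is less explicit in $\beta$ than $\Gamma(1+\tfrac1{1-\beta})^{1-\beta}$. You also explicitly split off the case $\ell = 1$, where \cref{lem:lowest-int-div} does not apply (it needs $\alpha < \ell$) and the divergence speed hypothesis directly gives a bounded derivative; the paper's proof silently relies on $\ell \ge 2$, so your version is actually a bit more careful on that point.
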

\begin{lemma}\label{lem:div-speed-holder3}
   Suppose $\phi:[0,1]\to\RealSet$ is a function whose $\ell$th divergence speed is $p^{\alpha}$ for $\alpha \in (1,2)$ and $\alpha \le \ell$ such that $\phi^{(1)}(0) = 0$. If $\alpha \in (1,2)$, $\phi$ is Lipschitz continuous, and $\phi^{(1)}$ is $(\alpha-1)$-H\"older continuous.
\end{lemma}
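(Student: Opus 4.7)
\medskip
\noindent\textbf{Proof plan.} My strategy is to reduce everything to a pointwise bound on $\phi^{(2)}$ and then integrate. First I would invoke \cref{lem:lower-div} repeatedly to descend from the $\ell$th divergence speed down to the $2$nd. Each application of \cref{lem:lower-div} lowers the order by one and is valid as long as the current order $m$ satisfies $m > 1+\alpha$. Since $\alpha \in (1,2)$, the condition $m > 1+\alpha$ is equivalent to $m \ge 3$ for integer $m$, so the reduction can be iterated all the way down to $m = 3$ and one further step produces the $2$nd divergence speed (if $\ell=2$ already, this step is vacuous). The conclusion is the existence of constants $W_2 > 0$ and $c_2 \ge 0$ such that
\[
 \abs*{\phi^{(2)}(p)} \le W_2\, p^{\alpha - 2} + c_2 \quad \text{for all } p \in (0,1).
\]

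To establish Lipschitz continuity of $\phi$, I would invoke the Fundamental Theorem of Calculus together with the hypothesis $\phi^{(1)}(0) = 0$: for any $x \in (0,1]$,
\[
 \phi^{(1)}(x) = \int_0^x \phi^{(2)}(p)\, dp,
\]
where the integral converges because $\alpha - 2 > -1$. The bound above then yields $\abs{\phi^{(1)}(x)} \le \tfrac{W_2}{\alpha-1} x^{\alpha-1} + c_2\, x$, which is uniformly bounded on $[0,1]$. By the mean value theorem, $\phi$ is Lipschitz on $[0,1]$.

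For $(\alpha-1)$-H\"older continuity of $\phi^{(1)}$, I would apply the same FTC identity to a generic pair $0 \le x < y \le 1$:
\[
 \abs*{\phi^{(1)}(y) - \phi^{(1)}(x)} \le \int_x^y (W_2\, p^{\alpha-2} + c_2)\, dp = \tfrac{W_2}{\alpha-1}\paren*{y^{\alpha-1} - x^{\alpha-1}} + c_2(y - x).
\]
The claim then follows from the elementary subadditivity inequality $y^{\alpha-1} - x^{\alpha-1} \le (y-x)^{\alpha-1}$, valid because $\alpha - 1 \in (0,1)$ makes $t \mapsto t^{\alpha-1}$ concave on $[0,\infty)$ with value $0$ at $0$, combined with $y - x \le (y-x)^{\alpha-1}$ for $y - x \in [0,1]$.

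The main obstacle is genuinely minor: one has to be careful that the reduction through \cref{lem:lower-div} actually reaches order $2$ regardless of the starting value of $\ell \ge 2$, and that the integration is justified at $p = 0$ despite the singularity of $\phi^{(2)}$, which is where the hypothesis $\alpha > 1$ (making $p^{\alpha - 2}$ integrable) and the boundary condition $\phi^{(1)}(0) = 0$ are both essential. Once these are in hand, the remainder is one line of integration plus the standard concavity inequality for $t^{\alpha-1}$.
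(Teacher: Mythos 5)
Your proposal is correct and takes essentially the same approach as the paper's proof: reduce to a pointwise bound on $\phi^{(2)}$, use absolute continuity of $\phi^{(1)}$ with $\phi^{(1)}(0)=0$ to bound $\phi^{(1)}$ uniformly (hence Lipschitz continuity of $\phi$), and then integrate the $\phi^{(2)}$ bound over $[x,y]$ and apply the subadditivity/H\"older property of $t\mapsto t^{\alpha-1}$ together with $|x-y|\le 1$ to get the $(\alpha-1)$-H\"older bound on $\phi^{(1)}$. The only cosmetic difference is that you make the descent via \cref{lem:lower-div} from order $\ell$ to order $2$ explicit, whereas the paper uses the second-order bound directly.
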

Note that we can assume $\phi^{(1)}(0)=0$ without loss of generality because, for any $c \in \RealSet$, $\theta(P;\phi) = \theta(P;\phi_c)$ where $\phi_c(p) = \phi(p) + c(p-1/k)$.

\section{Upper Bound Analysis}\label{sec:upper-analysis}
In this section, we analyze the worst-case quadratic errors of the plugin estimator and the proposed estimator described in \cref{sec:estimator}. We will prove the following theorems:
\begin{theorem}\label{thm:upper-bound1}
  Suppose $\phi:[0,1]\to\RealSet$ is a function such that one of the following condition holds:
  \begin{enumerate}
    \item the fourth divergence speed of $\phi$ is $p^{\alpha}$ for $\alpha \in (0,1]$,
    \item the sixth divergence speed of $\phi$ is $p^{\alpha}$ for $\alpha \in (1,3/2)$.
  \end{enumerate}
  Let $L = \floor{C_1\ln n}$ and $\Delta_{n,k} = C_2\ln n$ where $C_1$ and $C_2$ are universal constants such that $C_2 > 8\alpha$, $C_2^3C_1 \le 1/2$, and $2 - 3C_1 \ln 2 - 2\sqrt{C_1C_2}\ln(2e) > \alpha$. If $\alpha \in (0,1/2)$, the worst-case risk of $\hat\theta$ is bounded above as
  \begin{align}
    \sup_{P \in \dom{M}_k} \Mean\bracket*{\paren*{\hat\theta\paren{\tilde{N}} - \theta(P)}^2} \lesssim \frac{k^2}{(n\ln n)^{2\alpha}},
  \end{align}
  where we need $k \gtrsim \ln^{4\alpha-1}n$ if $\alpha \in (1/4,1/2)$. If $\alpha \in [1/2,1)$, the worst-case risk of $\hat\theta$ is bounded above as
  \begin{align}
    \sup_{P \in \dom{M}_k} \Mean\bracket*{\paren*{\hat\theta\paren{\tilde{N}} - \theta(P)}^2} \lesssim \frac{k^2}{(n\ln n)^{2\alpha}} + \frac{k^{2-2\alpha}}{n},
  \end{align}
  If $\alpha = 1$, the worst-case risk of $\hat\theta$ is bounded above as
  \begin{align}
    \sup_{P \in \dom{M}_k} \Mean\bracket*{\paren*{\hat\theta\paren{\tilde{N}} - \theta(P)}^2} \lesssim \frac{k^2}{(n\ln n)^{2}} + \frac{\ln^2 k}{n}.
  \end{align}
  If $\alpha \in (1,3/2)$, the worst-case risk of $\hat\theta$ is bounded above as
  \begin{align}
    \sup_{P \in \dom{M}_k} \Mean\bracket*{\paren*{\hat\theta\paren{\tilde{N}} - \theta(P)}^2} \lesssim \frac{k^2}{(n\ln n)^{2\alpha}} + \frac{1}{n}.
  \end{align}
\end{theorem}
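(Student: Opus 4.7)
The plan is to work throughout in the Poisson sampling model (valid by \cref{lem:well-approx-poisson}) and use the standard MSE decomposition into squared bias plus variance. Because the sample splitting yields $\tilde N_i,\tilde N'_i$ that are jointly independent across $i$, the variance of $\hat\theta$ is additive across the $k$ alphabets, and the squared total bias is at most $k$ times the sum of squared per-coordinate biases by Cauchy--Schwarz. For each $i$ I would split according to the unknown $p_i$ into a \emph{small} regime $p_i\le \Delta_{n,k}/n$ and a \emph{large} regime $p_i>4\Delta_{n,k}/n$, with a narrow transition zone in between. Standard Chernoff bounds on $\tilde N'_i\sim \mathrm{Poi}(np_i)$, together with the hypothesis $C_2>8\alpha$, show that the selector $\mathbf{1}\{\tilde N'_i\ge 2\Delta_{n,k}\}$ is ``wrong'' only with probability $n^{-\Omega(C_2)}$; since $\phi$ is uniformly bounded on $[0,1]$ (being H\"older continuous by \cref{lem:div-speed-holder1,lem:div-speed-holder2,lem:div-speed-holder3}), these mis-selection events contribute $o(n^{-2\alpha})$ and can be absorbed into the target rate.

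On the small regime I would analyze $\phi_{\mathrm{poly}}$. Its bias is dominated by the best polynomial approximation error $E_L(\phi,[0,4\Delta_{n,k}/n])$, the outer truncation against $[\phi_{\mathrm{inf}},\phi_{\mathrm{sup}}]$ only shrinking the pointwise error. Rescaling to $[-1,1]$ and applying \cref{lem:best-modulus-direct} combined with iterated use of \cref{lem:mod-derivative}, together with the divergence-speed bound $|\phi^{(\ell)}(p)|\lesssim p^{\alpha-\ell}$, yields $E_L\lesssim (\Delta_{n,k}/n)^\alpha L^{-2\alpha}\asymp (n\ln n)^{-\alpha}$; squaring and summing over $k$ gives the $k^2/(n\ln n)^{2\alpha}$ term. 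For the variance of $\phi_{\mathrm{poly}}(\tilde N_i)$, I would combine Markov--Bernstein coefficient bounds on $|a_m|$ with the Poisson factorial moment identity $\mathbb{E}(\tilde N_i)_m=(np_i)^m$. The numerical constraints on $C_1,C_2$ in the hypothesis are chosen precisely to keep $(2e)^{2L}(\Delta_{n,k}/n)^L\cdot n^{-1}$ dominated by $n^{-2\alpha}$.

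On the large regime the truncation $T_\Delta$ is inactive, so $\bar\phi_{j,\Delta}$ ($j\in\{2,4\}$) coincides with the Miller-type corrected $\phi_j$. I would Taylor-expand $\mathbb{E}[\phi_j(\tilde N_i/n)]$ around $p_i$; by construction the correction cancels the leading $1/n$ term (and for $j=4$ also the $1/n^2$ term). The surviving remainder is controlled via $|\phi^{(\ell)}(p)|\lesssim p^{\alpha-\ell}$ plus the Poisson central moment bound $\mathbb{E}(\tilde N_i/n - p_i)^{2m}\lesssim (p_i/n)^m$; squaring and summing against $\sum_i p_i=1$ produces the additive terms $k^{2-2\alpha}/n$, $\ln^2 k/n$, or $1/n$ in the cases $\alpha\in(1/2,1)$, $\alpha=1$, and $\alpha\in(1,3/2)$ respectively. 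The plugin variance is handled similarly by a first-order Taylor step using $|\phi^{(1)}(p)|\lesssim p^{\alpha-1}$ from \cref{lem:lower-div}, giving per-coordinate variance $\lesssim p_i^{2\alpha-1}/n$ that sums to a subdominant contribution.

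I expect three technical points to be the main obstacles. (i) $T_\Delta[\phi]$ is not differentiable at $p=\Delta$ or $p=1$, so the usual Taylor remainder is not directly available across the threshold; the resolution, alluded to in \cref{sec:estimator} by the mention of a generalized Hermite interpolation, is to represent the bias as an integral involving $\phi^{(\ell)}$ only on $[\Delta,1]$, where the divergence-speed bound applies pointwise. (ii) Tracking the coefficients of the best polynomial approximation on the shrinking interval $[0,4\Delta_{n,k}/n]$, since the change of variable to $[-1,1]$ inflates the $m$-th coefficient by a factor $(n/\Delta_{n,k})^m$; balancing this against the Poisson factorial moments is exactly what forces the arithmetic constraint $2-3C_1\ln 2-2\sqrt{C_1C_2}\ln(2e)>\alpha$. (iii) Closing the boundary subranges $\alpha\in(1/4,1/2)$ and $\alpha=1$, where a logarithmic slack appears and the supplementary hypothesis $k\gtrsim \ln^{4\alpha-1}n$ is needed to keep the polynomial-side variance strictly below the bias-side rate.
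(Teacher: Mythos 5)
Your proposal follows the paper's proof in all essential respects: Poisson sampling, bias--variance decomposition, the two-regime split at $\Delta_{n,k}/n$ and $4\Delta_{n,k}/n$ with Chernoff-controlled mis-selection (\cref{lem:upper-ind-bias,lem:upper-ind-var}), the best-polynomial bias bound $\lesssim E_L + \sqrt{\Var\bracket*{g_L}}$ obtained as you say by projecting the unbiased estimator onto $[\phi_{\mathrm{inf}},\phi_{\mathrm{sup}}]$ (\cref{lem:poly-bias}), and the Taylor expansion with Miller-type correction on the large regime, made rigorous through the Hermite-interpolation regularization of $T_\Delta[\phi]$ (\cref{lem:converge-hermite,lem:hermite-bound,lem:reminder-bound1,lem:reminder-bound2}). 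Two misattributions are worth correcting when writing it out in full. First, the condition $k\gtrsim\ln^{4\alpha-1}n$ for $\alpha\in(1/4,1/2)$ is forced by the \emph{plugin}-side variance term $\sum_i p_i^{2\alpha-1}/n$, not the polynomial-side variance: since $2\alpha-1<0$, the summand increases as $p_i\downarrow\Delta_{n,k}/n$, giving $\sum_i p_i^{2\alpha-1}/n\lesssim k/(n^{2\alpha}\ln^{1-2\alpha}n)$, which is $\lesssim k^2/(n\ln n)^{2\alpha}$ only when $k\gtrsim\ln^{4\alpha-1}n$. No analogous condition is required at $\alpha=1$, where $\sum_i p_i\ln^2 p_i/n\lesssim\ln^2 k/n$ by \cref{lem:bound-sum-log}. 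Second, the arithmetic constraint $2-3C_1\ln2-2\sqrt{C_1C_2}\ln(2e)>\alpha$ comes from the polynomial variance bound $\Var\bracket*{g_L}\lesssim \Delta^3 L\,64^L(2e)^{2\sqrt{\Delta n L}}/n$ of \cref{lem:poly-var}, which with the stated $L$ and $\Delta_{n,k}$ is of order $\ln^4 n/n^{4-6C_1\ln2-4\sqrt{C_1C_2}\ln(2e)}$; the expression $(2e)^{2L}(\Delta_{n,k}/n)^L/n$ you quote is not the correct form.
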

\begin{theorem}\label{thm:upper-bound2}
  Suppose $\phi:[0,1]\to\RealSet$ is a function whose second divergence speed is $p^{\alpha}$ for $\alpha \in [3/2,2]$. Then, the worst-case risk of the plugin estimator is bounded above as
  \begin{align}
    \sup_{P \in \dom{M}_k} \Mean\bracket*{\paren*{\hat\theta_{\mathrm{plugin}}\paren{N} - \theta(P)}^2} \lesssim \frac{1}{n}.
  \end{align}
\end{theorem}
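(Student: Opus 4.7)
The plan is a bias--variance decomposition of the quadratic risk. By subtracting a linear function from $\phi$ (which preserves $\hat\theta-\theta$, cf.\ the remark after \cref{lem:div-speed-holder3}), I assume $\phi(0)=\phi^{(1)}(0)=0$. Since $\alpha>1$, \cref{lem:lowest-div} promotes the second divergence speed $p^\alpha$ to a first divergence speed $p^\alpha$, and integrating the divergence-speed bounds yields $|\phi^{(1)}(p)|\lesssim p^{\alpha-1}$ and $|\phi(p)|\lesssim p^\alpha$ on $[0,1]$; in particular, $\phi^{(1)}$ is uniformly bounded since $\alpha-1\ge 1/2$.

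For the variance, the Efron--Stein inequality applied to the i.i.d.\ samples $X_1,\dots,X_n$ shows that swapping a single sample alters $\hat\theta_{\mathrm{plugin}}$ by at most $2\|\phi^{(1)}\|_\infty/n$ (via the mean-value theorem applied to the two affected terms $\phi(N_a/n)$ and $\phi(N_b/n)$), yielding $\Var[\hat\theta_{\mathrm{plugin}}]\lesssim \|\phi^{(1)}\|_\infty^2/n\lesssim 1/n$.

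For the bias, set $b_i=\Mean[\phi(N_i/n)]-\phi(p_i)$ with $N_i\sim\mathrm{Bin}(n,p_i)$. When $\alpha=2$, $\phi^{(2)}$ is bounded and the classical estimate $|b_i|\le\tfrac12\|\phi^{(2)}\|_\infty\Var(N_i/n)\lesssim p_i/n$ gives total bias $\lesssim 1/n$ and squared bias $\lesssim 1/n^2$. For $\alpha\in[3/2,2)$ I split the indices into $S=\{i:p_i\le 1/n\}$ and $L=\{i:p_i>1/n\}$; note $|L|\le n$ since $\sum_{i\in L}p_i\le 1$. On $S$, the binomial-moment bound $\Mean[N_i^\alpha]\le np_i(2+np_i)\le 3np_i$ (valid for $np_i\le 1$ using $k^\alpha\le k+k^2$ on $k\ge 0$, $\alpha\in[1,2]$) together with $|\phi(p_i)|\lesssim p_i^\alpha\le p_i n^{1-\alpha}$ yields $|b_i|\le\Mean[|\phi(N_i/n)|]+|\phi(p_i)|\lesssim p_i n^{1-\alpha}$, so $\sum_{i\in S}|b_i|\lesssim n^{1-\alpha}$. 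On $L$, second-order Taylor around $p_i$ gives
\[b_i=(1-p_i)^n\bigl[p_i\phi^{(1)}(p_i)-\phi(p_i)\bigr]+\tfrac12\Mean\bigl[\phi^{(2)}(\xi)(N_i/n-p_i)^2\ind{N_i\ge 1}\bigr];\]
the atom contribution is $\lesssim e^{-np_i}p_i^\alpha$ (using $(1-p_i)^n\le e^{-np_i}$ and $|p_i\phi^{(1)}(p_i)-\phi(p_i)|\lesssim p_i^\alpha$), and on $\{N_i\ge 1\}$ the fact $\xi\ge 1/n$ ensures $|\phi^{(2)}(\xi)|\le W_2 n^{2-\alpha}+c_2$, so the quadratic remainder is $\lesssim n^{2-\alpha}\cdot p_i/n=p_i n^{1-\alpha}$. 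The atom terms sum via $e^{-x}\le C_\alpha x^{1-\alpha}$ (for $x\ge 1$, $\alpha\ge 1$) to $\sum_{i\in L}p_i^\alpha e^{-np_i}\lesssim n^{1-\alpha}$, and the remainder terms sum to $n^{1-\alpha}$. Hence the total bias is $O(n^{1-\alpha})$ and the squared bias $O(n^{2-2\alpha})\le O(1/n)$ since $\alpha\ge 3/2$.

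The hard part will be the bias for $\alpha\in[3/2,2)$, where $|\phi^{(2)}|$ diverges at zero: a naive Taylor expansion applied on every index would produce $\sum p_i^{\alpha-1}/n$, which can be as bad as $k^{2-\alpha}/n$. The fix is to use the direct moment bound on $S$ (where Taylor is wasteful because the function value itself is tiny) and restrict Taylor to $L$ (where $\xi\ge 1/n$ kills the blow-up of $|\phi^{(2)}|$); the crude uniform bound $|\phi^{(2)}|\le O(n^{2-\alpha})$ on $[1/n,1]$ then suffices because the resulting per-coordinate factor $p_i n^{1-\alpha}$ sums to the $k$-free $n^{1-\alpha}$, and the squared bias $O(n^{2-2\alpha})$ meets the target $O(1/n)$ exactly at the threshold $\alpha=3/2$.
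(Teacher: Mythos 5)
Your proof is correct and shares the paper's top-level plan: a bias--variance decomposition, a variance bound via Lipschitz continuity of $\phi$ combined with a bounded-difference argument (you use Efron--Stein, the paper uses the Azuma-type variance bound in \cref{thm:variance-lipschitz}; these are interchangeable here), and a bias bound split at the threshold $p_i = 1/n$. Where you genuinely diverge is the per-coordinate bias estimate for $\alpha \in [3/2,2)$. For $p_i > 1/n$ the paper identifies $\Mean[\phi(N_i/n)]$ with the Bernstein polynomial $B_n[\phi](p_i)$, invokes the Ditzian pointwise estimate $|B_n[\phi](p_i)-\phi(p_i)|\lesssim\omega^2(\phi,\sqrt{p_i(1-p_i)/n})$ from \cref{thm:bernstein-error}, and bounds $\omega^2(\phi,t)\lesssim t^\alpha$ via \cref{lem:modulus-bound}, giving $|b_i|\lesssim(p_i/n)^{\alpha/2}$; you instead run a bare second-order Taylor expansion, extract the atom at $N_i=0$, and exploit $\xi\ge 1/n$ to cap $|\phi^{(2)}(\xi)|\lesssim n^{2-\alpha}$, arriving at the same aggregate $n^{1-\alpha}$. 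For $p_i\le 1/n$ you bound $\Mean[|\phi(N_i/n)|]\lesssim n^{-\alpha}\Mean[N_i^\alpha]\lesssim p_i n^{1-\alpha}$ directly, which sums to $n^{1-\alpha}$ uniformly in $k$; the paper's corresponding step passes through $\tfrac{1}{n}\sum_{i:p_i\le 1/n}p_i^{\alpha-1}$ and asserts it is $\lesssim n^{1-\alpha}$, which requires $\sum_{p_i\le 1/n}p_i^{\alpha-1}\lesssim n^{2-\alpha}$, a bound that is not evident when $k\gg n$ and which your moment argument sidesteps. So your route is more elementary and, in the small-$p_i$ regime, arguably more careful; the paper's is shorter once the Bernstein/modulus machinery is already in place.
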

We firstly give the proof of \cref{thm:upper-bound1} in \cref{sec:upper1}. Then, we move to the proof of \cref{thm:upper-bound2} in \cref{sec:upper2}.

For convenience, we use $\Bias$ and $\Var$ to denote bias and variance, which are formally defined as
\begin{align}
  \Bias\bracket*{X} = \abs*{\Mean\bracket*{X}}, \: \Var\bracket*{X} = \Mean\bracket*{\paren*{X - \Mean[X]}^2},
\end{align}
for a random variable $X$.

\subsection{Analyses for $\alpha \in (0,3/2)$}\label{sec:upper1}

The first step to prove \cref{thm:upper-bound1} is the bias-variance decomposition, which gives for any $P \in \dom{M}_k$,
\begin{align}
 \Mean\bracket*{\paren*{\hat\theta(\tilde{N})\!-\!\theta(P)}^2} \!=\!\Bias\bracket*{\hat\theta(\tilde{N})\!-\!\theta(P)}^2\!+\!\Var\bracket*{\hat\theta(\tilde{N})}.
\end{align}
We therefore will derive bounds on the bias and the variance of $\hat\theta$.

Let the bias and variance of $\hat\phi_{\mathrm{poly}}$ and $\hat\phi_{\mathrm{plugin}}$ for each alphabet $i \in [k]$ be 
 \begin{align}
     b_{\mathrm{plugin},i} =& \Bias\bracket*{\phi_{\mathrm{plugin}}(\tilde{N}_i) - \phi(p_i) }, \\
     b_{\mathrm{poly},i} =& \Bias\bracket*{\phi_{\mathrm{poly}}(\tilde{N}_i) - \phi(p_i) }, \\
     v_{\mathrm{plugin},i} =& \Var\bracket*{\phi_{\mathrm{plugin}}(\tilde{N}_i) },\\
     v_{\mathrm{poly},i} =& \Var\bracket*{\phi_{\mathrm{poly}}(\tilde{N}_i) }.
 \end{align}
By the following lemmas, we can obtain upper bounds on the bias and variance of $\hat\theta$ by using $b_{\mathrm{plugin},i}, b_{\mathrm{poly},i}, v_{\mathrm{plugin},i}$, and $v_{\mathrm{poly},i}$.
\begin{lemma}\label{lem:upper-ind-bias}
  Given $P \in \dom{M}_k$, the bias of $\hat\theta$ is bounded above as
  \begin{multline}
   \Bias\bracket*{\tilde\theta\paren{\tilde{N}} - \theta(P)} \le \\
   \sum_{i = 1}^k \paren[\Bigg]{
     \paren*{(e/4)^{\Delta_{n,k}}\ind{np_i \le \Delta_{n,k}} + \ind{np_i > \Delta_{n,k}}}b_{\mathrm{plugin},i} \\
     + \paren*{\ind{np_i \le 4\Delta_{n,k}} + e^{-\Delta_{n,k}/8}\ind{np_i > 4\Delta_{n,k}}}b_{\mathrm{poly},i}
   }.
 \end{multline}
\end{lemma}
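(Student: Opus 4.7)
\textbf{Proof plan for \cref{lem:upper-ind-bias}.} The plan is to exploit the construction of \S\ref{sec:estimator}, where the sample split produces independent Poisson histograms $\tilde N$ and $\tilde N'$ with $\tilde N_i, \tilde N'_i \sim \Poi(np_i)$, so that the selector $\ind{\tilde N'_i \ge 2\Delta_{n,k}}$ is independent of the ``data'' histogram $\tilde N_i$ used to form $\phi_{\mathrm{plugin}}(\tilde N_i)$ or $\phi_{\mathrm{poly}}(\tilde N_i)$. This independence turns the analysis into a per-alphabet task: the cross terms factor, so the bias reduces to a convex combination of the individual plugin and polynomial biases, weighted by the selector probability.

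Concretely, write $\pi_i = \p\{\tilde N'_i \ge 2\Delta_{n,k}\}$. Linearity of expectation and independence give
\begin{align}
\Mean[\hat\theta(\tilde N)] - \theta(P) = \sum_{i=1}^k \Big( \pi_i\bigl(\Mean[\phi_{\mathrm{plugin}}(\tilde N_i)] - \phi(p_i)\bigr) + (1-\pi_i)\bigl(\Mean[\phi_{\mathrm{poly}}(\tilde N_i)] - \phi(p_i)\bigr) \Big),
\end{align}
after writing $\phi(p_i) = \pi_i\phi(p_i) + (1-\pi_i)\phi(p_i)$. Taking absolute values and applying the triangle inequality yields
\begin{align}
\Bias[\hat\theta(\tilde N) - \theta(P)] \le \sum_{i=1}^k \bigl( \pi_i\, b_{\mathrm{plugin},i} + (1-\pi_i)\, b_{\mathrm{poly},i} \bigr).
\end{align}

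Next I would bound $\pi_i$ and $1-\pi_i$ by Poisson tail inequalities, using the threshold gap between $\Delta_{n,k}$ and $2\Delta_{n,k}$ on one side and between $2\Delta_{n,k}$ and $4\Delta_{n,k}$ on the other. For the upper tail, the standard Chernoff bound for $\Poi(\lambda)$ gives $\p\{X \ge a\} \le e^{-\lambda}(e\lambda/a)^a$ for $a \ge \lambda$; since the right-hand side is monotone increasing in $\lambda$ up to $a$, plugging the worst case $\lambda = \Delta_{n,k}$, $a = 2\Delta_{n,k}$ produces the clean bound $\pi_i \le (e/4)^{\Delta_{n,k}}$ whenever $np_i \le \Delta_{n,k}$. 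Otherwise I use the trivial bound $\pi_i \le 1$. For the lower tail, the analogous Chernoff inequality $\p\{X \le a\} \le e^{-\lambda}(e\lambda/a)^a$ for $a \le \lambda$ (or, equivalently, the sub-Poissonian bound $\p\{X \le \lambda - t\} \le \exp(-t^2/(2\lambda))$) applied with $\lambda = np_i \ge 4\Delta_{n,k}$ and $a = 2\Delta_{n,k}$ yields a bound of the form $e^{-c\Delta_{n,k}}$ with $c$ at least $1/8$, so $1-\pi_i \le e^{-\Delta_{n,k}/8}$ when $np_i > 4\Delta_{n,k}$, and $1-\pi_i \le 1$ otherwise.

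Combining these per-alphabet bounds gives exactly the stated inequality. The only real obstacle is verifying that the two Chernoff computations produce the claimed constants $(e/4)^{\Delta_{n,k}}$ and $e^{-\Delta_{n,k}/8}$; everything else is the bookkeeping that turns the selector's independence from $\tilde N$ into the weighted bias decomposition. No appeal to the divergence-speed assumption on $\phi$ is needed here—this lemma is purely a structural statement about the randomized selector.
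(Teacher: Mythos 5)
Your proof is correct and follows essentially the same route as the paper: exploit the independence of $\tilde N$ and $\tilde N'$ to factor the selector probability $\pi_i$ out of each per-alphabet bias, split $\phi(p_i) = \pi_i\phi(p_i) + (1-\pi_i)\phi(p_i)$, apply the triangle inequality, and then bound $\pi_i$ (resp.\ $1-\pi_i$) by the Poisson Chernoff tail on the regime $np_i \le \Delta_{n,k}$ (resp.\ $np_i > 4\Delta_{n,k}$) and by $1$ otherwise. Your Chernoff verifications (monotonicity of $e^{-\lambda}(e\lambda/a)^a$ in $\lambda$ up to $a$ for the upper tail, and the lower-tail constant exceeding $1/8$) are also sound.
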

\begin{lemma}\label{lem:upper-ind-var}
 Given $P \in \dom{M}_k$, the variance of $\hat\theta$ is bounded above as
 \begin{multline}
  \Var\bracket*{\tilde\theta\paren{\tilde{N}} - \theta(P)} \le \\
   \sum_{i = 1}^k \paren[\Bigg]{
   \paren*{(e/4)^{\Delta_{n,k}}\ind{np_i \le \Delta_{n,k}} + \ind{np_i > \Delta_{n,k}}}v_{\mathrm{plugin},i} \\
   + \paren*{\ind{np_i \le 4\Delta_{n,k}} + e^{-\Delta_{n,k}/8}\ind{np_i > 4\Delta_{n,k}}}v_{\mathrm{poly},i} \\ + 2\paren*{(e/4)^{\Delta_{n,k}}\ind{np_i \le \Delta_{n,k}} + \ind{np_i > \Delta_{n,k}}}b_{\mathrm{plugin},i}^2 \\ + 2\paren*{\ind{np_i \le 4\Delta_{n,k}} + e^{-\Delta_{n,k}/8}\ind{np_i > 4\Delta_{n,k}}}b_{\mathrm{poly},i}^2
  }.
 \end{multline}
\end{lemma}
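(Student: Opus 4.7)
Under the Poisson model, $\tilde N_1,\dots,\tilde N_k$ are mutually independent, $\tilde N'_1,\dots,\tilde N'_k$ are mutually independent, and $\tilde N \perp \tilde N'$. Writing the $i$th summand of $\hat\theta(\tilde N)$ as
\begin{align}
Y_i := \ind{\tilde N'_i\ge 2\Delta_{n,k}}\phi_{\mathrm{plugin}}(\tilde N_i) + \ind{\tilde N'_i < 2\Delta_{n,k}}\phi_{\mathrm{poly}}(\tilde N_i),
\end{align}
the variables $Y_1,\dots,Y_k$ depend on disjoint groups of coordinates and are therefore mutually independent, so $\Var[\hat\theta(\tilde N) - \theta(P)] = \sum_{i=1}^k \Var[Y_i]$ and it suffices to bound each $\Var[Y_i]$ by the $i$th summand of the stated right-hand side.

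\textbf{Step 2 (Variance identity for the mixture).} Set $A_i := \ind{\tilde N'_i\ge 2\Delta_{n,k}}$, $B_i := 1-A_i$, $\mu_{1,i} := \Mean[\phi_{\mathrm{plugin}}(\tilde N_i)]$, and $\mu_{2,i} := \Mean[\phi_{\mathrm{poly}}(\tilde N_i)]$. Since $A_i,B_i\in\{0,1\}$ with $A_i B_i = 0$, and $(A_i,B_i)$ is independent of $\tilde N_i$, expanding $\Mean[Y_i^2]-\Mean[Y_i]^2$ and using $\Mean[A_i]+\Mean[B_i]=1$ yields the exact identity
\begin{align}
\Var[Y_i] = \Mean[A_i]\,v_{\mathrm{plugin},i} + \Mean[B_i]\,v_{\mathrm{poly},i} + \Mean[A_i]\Mean[B_i]\,(\mu_{1,i}-\mu_{2,i})^2.
\end{align}
I then apply $(\mu_{1,i}-\mu_{2,i})^2 \le 2 b_{\mathrm{plugin},i}^2 + 2 b_{\mathrm{poly},i}^2$ (from $(x+y)^2\le 2x^2+2y^2$ with $x=\mu_{1,i}-\phi(p_i)$, $y=\phi(p_i)-\mu_{2,i}$) to split the cross term, and the inequality $\Mean[A_i]\Mean[B_i]\le\min\{\Mean[A_i],\Mean[B_i]\}$ to charge $\Mean[A_i]$ to the $b_{\mathrm{plugin},i}^2$ piece and $\Mean[B_i]$ to the $b_{\mathrm{poly},i}^2$ piece, which produces the asymmetric coefficients in the statement.

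\textbf{Step 3 (Poisson-tail bounds on the selector probabilities).} It remains to verify $\Mean[A_i] \le c_{1,i}$ and $\Mean[B_i] \le c_{2,i}$, where $c_{1,i},c_{2,i}$ denote the regime-indicator expressions in the statement. Since $\tilde N'_i\sim\Poi(np_i)$, the Chernoff upper-tail bound $\p\{\Poi(\lambda)\ge t\}\le e^{-\lambda}(e\lambda/t)^t$ with $t=2\Delta_{n,k}$ is monotonically increasing in $\lambda$ on $[0,\Delta_{n,k}]$ and evaluates to exactly $(e/4)^{\Delta_{n,k}}$ at $\lambda=\Delta_{n,k}$, handling the regime $np_i\le\Delta_{n,k}$; the complementary regime uses the trivial bound $1$. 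An analogous lower-tail Poisson Chernoff bound applied to $\p\{\Poi(np_i)<2\Delta_{n,k}\}$ on $np_i>4\Delta_{n,k}$ produces the second estimate. Substituting into the display of Step 2 and summing over $i$ closes the proof.

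\textbf{Main obstacle.} The only conceptually non-routine piece is the exact identity in Step 2: one must notice that the cross term is multiplied by the product $\Mean[A_i]\Mean[B_i]$ rather than by one of its factors alone, because it is exactly this product structure that, combined with $\Mean[A_i]\Mean[B_i]\le\min\{\Mean[A_i],\Mean[B_i]\}$, lets the ``plugin-regime'' coefficient $c_{1,i}$ attach to $b_{\mathrm{plugin},i}^2$ and the ``poly-regime'' coefficient $c_{2,i}$ attach to $b_{\mathrm{poly},i}^2$. Everything else reduces to standard Poisson-tail estimation.
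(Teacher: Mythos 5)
Your proposal is correct and follows essentially the same route as the paper: independence of $(\tilde N_i,\tilde N'_i)_{i}$ to reduce to per-coordinate variances, the exact mixture-variance identity (which you re-derive; the paper cites it as Lemma~\ref{lem:cai-lem4}), the split $(\mu_{1,i}-\mu_{2,i})^2\le 2b_{\mathrm{plugin},i}^2+2b_{\mathrm{poly},i}^2$, the observation that $\p[A_i]\p[B_i]$ is bounded by either factor so the plugin/poly coefficients can be charged separately, and the same upper- and lower-tail Poisson Chernoff bounds.
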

As proved in \cref{lem:upper-ind-bias,lem:upper-ind-var}, bounds on the bias and variance of the bias-corrected plugin estimator and  best polynomial estimator for each individual alphabet give the bounds on the bias and variance of our estimator. Hence, we next analyze the bias and variance of the bias-corrected plugin estimator and best polynomial estimator for a certain alphabet. In the next two sub-subsections, we use $p$ to denote the occurrence probability of the certain alphabet, and let $\tilde{N} \sim \Poi(np)$.

\subsubsection{Bias and Variance of Best Polynomial Estimator}
Given a positive integer $L$ and a positive real $\Delta$, let $\phi_L(p)=\sum_{m=0}^La_mp^m$ be the optimal uniform approximation of $\phi$ by degree-$L$ polynomials on $[0,\Delta]$, and let $g_L(\tilde{N}) = \sum_{m=0}^La_m(\tilde{N})_m/n^m$ be an unbiased estimator of $\phi_L(p)$. The best polynomial estimator with $\Delta = 4\Delta_{n,k}/n$ is written as
\begin{align}
    \phi_{\mathrm{plugin}}(\tilde{N}) = (g_L(\tilde{N}) \land \phi_{{\mathrm{sup}},\Delta})\lor \phi_{{\mathrm{inf}},\Delta},
\end{align}
where $\phi_{\mathrm{sup},\Delta} = \sup_{p \in [0,\Delta]}\phi(p)$ and $\phi_{\mathrm{inf},\Delta} = \inf_{p \in [0,\Delta]}\phi(p)$. To take advantage of \cref{lem:upper-ind-bias,lem:upper-ind-var}, we derive upper bounds on the bias and variance of this estimator in two cases; $p > \Delta$ and $p \le \Delta$.

For $p > \Delta$, it suffices to prove the bias and variance do not increase as $n$ and $k$ increase because in this range of $p$, the best polynomial estimator is not used with high probability. This fact is proven by the following lemma:
\begin{lemma}\label{lem:poly-o1}
  If there is a finite universal constant $C > 0$ such that $\sup_{p \in [0,1]}\phi(p) \le C$, then
 \begin{align}
  \Bias\bracket*{(g_L(\tilde{N}) \land \phi_{{\mathrm{sup}},\Delta})\lor \phi_{{\mathrm{inf}},\Delta} - \phi(p) } \lesssim 1,
 \end{align}
 and
 \begin{align}
  \Var\bracket*{(g_L(\tilde{N}) \land \phi_{{\mathrm{sup}},\Delta})\lor \phi_{{\mathrm{inf}},\Delta} } \lesssim 1,
 \end{align}
\end{lemma}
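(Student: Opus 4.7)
The plan is to reduce both the bias and variance bounds to the single observation that the clamped estimator is almost surely bounded by a universal constant, so neither the Poisson structure of $\tilde{N}$ nor any fine property of the polynomial $g_L$ actually enters the argument.

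First, I would invoke the regularity built into the divergence speed assumption: $\phi \in C^\ell[0,1]$ is continuous on a compact interval, hence both bounded above and bounded below there. Combined with the hypothesis $\sup_{p\in[0,1]}\phi(p) \le C$, this yields a constant $M$ (depending only on $\phi$) with $\abs*{\phi(p)} \le M$ for all $p \in [0,1]$. Since the interval $[0,\,4\Delta_{n,k}/n \land 1]$ over which the clamping levels $\phi_{\mathrm{inf},\Delta}$ and $\phi_{\mathrm{sup},\Delta}$ are computed is contained in $[0,1]$ for every admissible $n$ and $k$, we obtain $\max\cbrace*{\abs*{\phi_{\mathrm{inf},\Delta}},\,\abs*{\phi_{\mathrm{sup},\Delta}}} \le M$, with $M$ independent of $n$, $k$, and $L$.

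Second, I would observe that by the very definition of the truncation, the output $Y = (g_L(\tilde{N}) \land \phi_{\mathrm{sup},\Delta}) \lor \phi_{\mathrm{inf},\Delta}$ lies deterministically in $[\phi_{\mathrm{inf},\Delta}, \phi_{\mathrm{sup},\Delta}] \subseteq [-M, M]$, regardless of the realization of $\tilde{N}$ or the coefficients $\cbrace*{a_m}$ of $g_L$. The bias bound then follows from the triangle inequality, $\Bias\bracket*{Y - \phi(p)} = \abs*{\Mean\bracket*{Y} - \phi(p)} \le \abs*{\Mean\bracket*{Y}} + \abs*{\phi(p)} \le 2M$. The variance bound follows from Popoviciu's inequality, or equivalently from $\Var\bracket*{Y} \le \Mean\bracket*{(Y - \phi_{\mathrm{inf},\Delta})^2} \le (2M)^2$; either way $\Var\bracket*{Y}$ is at most a constant. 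This delivers the claimed $\lesssim 1$ bounds.

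There is no serious obstacle; the lemma is essentially a sanity statement for the regime $p > \Delta$, where the best polynomial estimator is not designed to be accurate but the truncation guarantees it cannot misbehave beyond $O(1)$. The one subtlety worth flagging is that the stated hypothesis is a one-sided bound on $\phi$, so the matching lower bound must be extracted from the $C^\ell[0,1]$ regularity embedded in the divergence speed definition; once that is in hand, the argument applies uniformly across all values of $\alpha$.
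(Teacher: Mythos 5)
Your argument is essentially the same as the paper's: both reduce the claim to the observation that the clamped output lies almost surely in $[\phi_{\mathrm{inf},\Delta},\phi_{\mathrm{sup},\Delta}]$, an interval of universally bounded length once $\phi$ is known to be bounded on $[0,1]$, after which the bias bound is the triangle inequality and the variance bound is immediate. Your variance step via Popoviciu's inequality is slightly more careful than the paper's written bound (which omits a square, though the $\lesssim 1$ conclusion is unaffected), and you usefully flag that the one-sided hypothesis on $\phi$ must be supplemented by the boundedness furnished by the regularity in the divergence-speed setup, a point the paper's proof leaves implicit.
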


We next analyze the bias and variance for $p \le \Delta$:
\begin{lemma}\label{lem:poly-bias}
 If $p \le \Delta$, we have
\begin{multline}
  \Bias\bracket*{(g_L(\tilde{N}) \land \phi_{{\mathrm{sup}},\Delta})\lor \phi_{{\mathrm{inf}},\Delta} - \phi(p) } \\ \lesssim \sqrt{\Var\bracket*{g_L(\tilde{N})}} + E_L\paren*{\phi,[0,\Delta]}.
 \end{multline}
\end{lemma}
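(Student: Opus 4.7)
The plan is to exploit the fact that the truncation operation $(g_L \land \phi_{\mathrm{sup},\Delta}) \lor \phi_{\mathrm{inf},\Delta}$ is a metric projection onto the interval $[\phi_{\mathrm{inf},\Delta}, \phi_{\mathrm{sup},\Delta}]$, combined with Jensen's inequality and the triangle inequality. Since the bias is bounded by the mean absolute error, it suffices to control $\Mean\abs*{\tilde{g}(\tilde{N}) - \phi(p)}$, where $\tilde{g}(\tilde{N}) = (g_L(\tilde{N}) \land \phi_{\mathrm{sup},\Delta}) \lor \phi_{\mathrm{inf},\Delta}$.

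First, I would observe that for $p \le \Delta$, the true value $\phi(p)$ lies in $[\phi_{\mathrm{inf},\Delta}, \phi_{\mathrm{sup},\Delta}]$ by the definitions of the infimum and supremum on $[0,\Delta]$. Since projection onto a convex set (here, a closed interval containing $\phi(p)$) is $1$-Lipschitz and fixes $\phi(p)$, this yields the pointwise inequality
\begin{equation}
  \abs*{\tilde{g}(\tilde{N}) - \phi(p)} \le \abs*{g_L(\tilde{N}) - \phi(p)}.
\end{equation}
Taking expectations and applying $\Bias[X] \le \Mean\abs{X}$ gives $\Bias\bracket{\tilde{g}(\tilde{N}) - \phi(p)} \le \Mean\abs*{g_L(\tilde{N}) - \phi(p)}$.

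Next, I would split via the triangle inequality,
\begin{equation}
  \Mean\abs*{g_L(\tilde{N}) - \phi(p)} \le \Mean\abs*{g_L(\tilde{N}) - \phi_L(p)} + \abs*{\phi_L(p) - \phi(p)},
\end{equation}
using that $g_L$ is an unbiased estimator of the best-approximation polynomial $\phi_L$ (as shown in the construction of $\bar\phi_{\mathrm{poly}}$). Jensen's inequality bounds the first term by $\sqrt{\Mean[(g_L(\tilde{N}) - \phi_L(p))^2]} = \sqrt{\Var[g_L(\tilde{N})]}$. The second term is at most the uniform approximation error $E_L(\phi,[0,\Delta])$ because $p \in [0,\Delta]$. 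Combining these two bounds yields the claim.

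There is no real obstacle here: the argument is a short composition of (i) the contraction property of the truncation, (ii) Jensen for the standard deviation, and (iii) the defining property of the best polynomial approximation. The only subtlety worth stating explicitly is the containment $\phi(p) \in [\phi_{\mathrm{inf},\Delta}, \phi_{\mathrm{sup},\Delta}]$ for $p \le \Delta$, which makes the truncation nonexpansive toward the target.
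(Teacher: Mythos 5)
Your proof is correct, and it is actually cleaner than the paper's own argument. The key observation you make---that $\phi(p) \in [\phi_{\mathrm{inf},\Delta}, \phi_{\mathrm{sup},\Delta}]$ for $p \le \Delta$, so the truncation is a nonexpansive projection \emph{toward the target} $\phi(p)$---lets you collapse the truncation in a single step, giving $\abs{\tilde{g}(\tilde{N}) - \phi(p)} \le \abs{g_L(\tilde{N}) - \phi(p)}$ pointwise. The paper instead introduces enlarged truncation levels $\phi'_{\mathrm{sup},\Delta} = \phi_{\mathrm{sup},\Delta} \lor \sup_{[0,\Delta]}\phi_L$ and $\phi'_{\mathrm{inf},\Delta} = \phi_{\mathrm{inf},\Delta} \land \inf_{[0,\Delta]}\phi_L$, specifically so that $\phi_L(p)$ (rather than $\phi(p)$) is contained in the projection interval; this forces a three-term decomposition (difference between the two truncations, bias of the enlarged truncation against $\phi_L(p)$, and bias of $g_L$ against $\phi(p)$), where the truncation-vs.-truncation term is controlled again by $E_L(\phi,[0,\Delta])$. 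Your route avoids this detour entirely and even yields a slightly better constant ($E_L + \sqrt{\Var}$ instead of $2E_L + \sqrt{\Var}$). What the paper's approach buys is that the nonexpansiveness is applied relative to the polynomial $\phi_L(p)$, the mean of $g_L$, so the variance bound comes out of Jensen in one step; but your decomposition handles this equally well by splitting $g_L(\tilde{N}) - \phi(p)$ through $\phi_L(p)$ after the contraction. Both arguments are sound; yours is the more economical.
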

\begin{lemma}\label{lem:poly-var}
 If $p \le \Delta$, $2\Delta^3L \le n$, and there is an universal constant $C > 0$ such that $\sup_{p \in [0,1]}\abs*{\phi(p)} \le C$, we have
 \begin{align}
  \Var\bracket*{g_L(\tilde{N})} \lesssim \frac{\Delta^3 L 64^L (2e)^{2\sqrt{\Delta n L }}}{n}.
 \end{align}
\end{lemma}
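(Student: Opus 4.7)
The plan is to bound $\Var[g_L(\tilde N)]$ in three steps, exploiting that $g_L$ is a linear combination of Poisson falling factorials with coefficients coming from a best polynomial approximation of $\phi$ on $[0,\Delta]$.

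First, I would decompose the variance. Since $(\tilde N)_0 = 1$ is deterministic, linearity and bilinearity of covariance give
\begin{align}
\Var[g_L(\tilde N)] = \sum_{m=1}^L\sum_{j=1}^L \frac{a_m a_j}{n^{m+j}}\,\mathrm{Cov}[(\tilde N)_m, (\tilde N)_j].
\end{align}
The bivariate generating function $\Mean[(1+s)^{\tilde N}(1+t)^{\tilde N}] = e^{np(s+t+st)}$, a direct consequence of $\tilde N \sim \Poi(np)$, then yields the identity
\begin{align}
\mathrm{Cov}[(\tilde N)_m, (\tilde N)_j] = \sum_{i=1}^{\min(m,j)}\binom{m}{i}\binom{j}{i}\,i!\,(np)^{m+j-i}.
\end{align}
The crucial feature is that $i \ge 1$ in every summand; combined with the denominator $n^{m+j}$ and the hypothesis $p \le \Delta$, this guarantees an overall $n^{-1}$ prefactor with the residual $p$'s replaced by $\Delta$'s.

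Second, I would bound the monomial coefficients $|a_m|$ of $\phi_L$. The boundedness hypothesis $\sup_{[0,1]}|\phi(p)| \le C$ together with the optimality of $\phi_L$ gives $\|\phi_L\|_{L^\infty[0,\Delta]} \le 2C$. Rescaling to $[-1,1]$ via $\tilde\phi_L(x) = \phi_L(\Delta(1+x)/2)$ and expanding in the Chebyshev basis produces Chebyshev coefficients bounded by $4C$; converting back to the monomial basis in $p$ (using the inverse affine map $p \mapsto 2p/\Delta - 1$, which scales the coefficient of $p^m$ by $(2/\Delta)^m$, and using that $T_m$ has monomial coefficients whose absolute sum is at most $3^m$) yields a bound of the form $|a_m| \lesssim (c_1/\Delta)^m c_2^L$ for absolute constants $c_1, c_2$. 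Once squared and summed over $m$, this accounts for the $64^L$ factor in the claimed bound.

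Third, I would substitute these bounds and evaluate the resulting triple sum. Using $\binom{m}{i}\binom{j}{i}\,i! \le (mj)^i/i!$, the inner sum over $i$ is at most $n\Delta \exp(mj/(n\Delta))$; the remaining double sum over $1 \le m,j \le L$ produces, in the worst case, an exponential of $L^2/(n\Delta)$. The hypothesis $2\Delta^3 L \le n$ is precisely what forces $L^2/(n\Delta) \lesssim \sqrt{\Delta n L}$ by an elementary AM-GM step, converting this exponential into the claimed factor $(2e)^{2\sqrt{\Delta n L}}$. The main obstacle is this last summation: three competing exponential-growth factors --- coefficient growth $c_2^L$, combinatorial cross-terms with ratio $mj/(n\Delta)$, and Poisson factorial-moment growth $(np)^{m+j-i}$ --- must be balanced against the single $n^{-1}$ extracted in Step~1 without losing factors beyond $64^L$ and $(2e)^{2\sqrt{\Delta n L}}$. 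The remaining $\Delta^2 L$ will absorb polynomial-in-$L$ factors arising from the boundary terms in the double sum and from the Chebyshev-to-monomial conversion in Step~2.
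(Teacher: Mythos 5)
The first two steps of your outline are sound in spirit and are close to the paper's: the covariance decomposition (after Cauchy--Schwarz, it is exactly the paper's bound $\sqrt{\Var[\sum_m]} \le \sum_m \sqrt{\Var}$), and the coefficient estimate $\abs{a_m} \lesssim \Delta^{-m}\,c^L$ matches (the paper cites \textcite{Petrushev1988RationalFunctions} to get $2^{3L}$ after rescaling $\phi_\Delta(x)=\phi(\Delta x)$ to $[0,1]$). The gap is in Step~3, and it is fatal. The crude binomial estimate $\binom{m}{i}\binom{j}{i}i! \le (mj)^i/i!$ yields an inner sum $\le e^{mj/(n\Delta)}-1$, and with $m=j=L$ this contributes a factor $e^{L^2/(n\Delta)}$. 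Your claim that $2\Delta^3 L \le n$ forces $L^2/(n\Delta) \lesssim \sqrt{\Delta n L}$ by AM--GM is false: $L^2/(n\Delta)\le\sqrt{n\Delta L}$ is equivalent to $L\le n\Delta$, whereas the hypothesis only gives $L \le n/(2\Delta^3)$, which is far larger than $n\Delta$ whenever $\Delta < 2^{-1/4}$. In the regime this lemma is used, $\Delta \asymp (\ln n)/n$, so the implication simply does not hold from the stated hypothesis alone, and your exponent would be uncontrolled.

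The correct route keeps the covariance approach but uses Cauchy--Schwarz, $\abs{\mathrm{Cov}[(\tilde N)_m,(\tilde N)_j]}\le\sqrt{\Var[(\tilde N)_m]\Var[(\tilde N)_j]}$, and then invokes the sharp per-term bound of \textcite{Wu2016MinimaxApproximation} (\cref{lem:poi-var}): $\Var[(X)_m]\le(\lambda m)^m\bigl((2e)^{2\sqrt{\lambda m}}/(\pi\sqrt{\lambda m})\lor 1\bigr)$ for $X\sim\Poi(\lambda)$, with $\lambda=n\Delta$. The $(2e)^{2\sqrt{\lambda m}}$ comes from the correct location of the peak of the Poisson falling-factorial variance, around $i^* \approx m - \sqrt{\lambda m}$; the estimate $(mj)^i/i!$ is nowhere near tight there, which is why it produces the wrong exponential. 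Once Wu's bound is in place, $\frac{\Delta^m\abs{a_m}}{n^m}\sqrt{\Var[(\tilde N)_m]}\lesssim 2^{3L}(2e)^{\sqrt{\Delta n L}}(\Delta^3 L/n)^{m/2}$, and the role of the hypothesis $2\Delta^3 L\le n$ is entirely different from what you assign it: it makes $(\Delta^3 L/n)^{1/2}\le 1/\sqrt2$ so the geometric series $\sum_{m\ge 1}(\Delta^3 L/n)^{m/2}$ converges and is dominated by the $m=1$ term $\sqrt{\Delta^3 L/n}$, which after squaring supplies the $\Delta^3 L/n$ factor in the claim. It has nothing to do with a comparison of $L^2/(n\Delta)$ and $\sqrt{\Delta n L}$.
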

 It is obviously that truncation does not increase the variance, i.e.,
 \begin{align}
   \Var\bracket*{(g_L(\tilde{N}) \land \phi_{{\mathrm{sup}},\Delta})\lor \phi_{{\mathrm{inf}},\Delta} } \le \Var\bracket*{g_L(\tilde{N}) }.
 \end{align}
 Thus, the result in \cref{lem:poly-var} is the variance upper bound on the best polynomial estimator.

 As shown in \cref{lem:poly-bias}, the best polynomial approximation error $E_L(\phi,[0,\Delta])$ gives the bound on the bias of the best polynomial estimator. The best polynomial approximation error under the divergence speed assumption can be found in \cref{sec:best-poly}. By substituting the result shown in \cref{sec:best-poly} into the result in \cref{lem:poly-bias}, we obtain the bias upper bound on the best polynomial estimator.

\subsubsection{Bias and Variance of Bias-corrected Plugin Estimator}

Given a positive real $\Delta = \Delta_{n,k}/n$, the bias-corrected plugin estimator is
\begin{align}
    \phi_{\mathrm{plugin}}(\tilde{N}) = \begin{dcases}
      \bar\phi_{2,\Delta}\paren*{\frac{\tilde{N}}{n}} & \textif \alpha \in (0,1], \\
      \bar\phi_{4,\Delta}\paren*{\frac{\tilde{N}}{n}} & \textif \alpha \in (1,3/2).
    \end{dcases}
\end{align}
As well as the analysis of the best polynomial estimator, we derive bounds on the bias and variance of this estimator in two cases; $p \le \Delta$ and $p > \Delta$.

{\bfseries Bias and variance analysis for $p \le \Delta$.}
In this case, it suffices to prove the bias and variance are bounded as $\lesssim 1$ with the same reason of the analysis of the best polynomial estimator for $p \ge \Delta$.
\begin{lemma}\label{lem:plugin-o1-2}
  Suppose $\phi:[0,1]\to\RealSet$ be a function whose second divergence speed is $p^\alpha$ for $\alpha \in (0,2)$. If $\Delta \in (0,1)$ and $\Delta \gtrsim n^{-1}$,
 \begin{align}
  \Bias\bracket*{\bar\phi_{2,\Delta}\paren*{\frac{\tilde{N}}{n}} - \phi(p) } \lesssim 1 \textand \Var\bracket*{ \bar\phi_{2,\Delta}\paren*{\frac{\tilde{N}}{n}} } \lesssim 1.
 \end{align}
\end{lemma}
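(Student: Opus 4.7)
The plan is to establish a uniform bound $|\bar\phi_{2,\Delta}(x)| \lesssim 1$ for every $x \in [0,\infty)$; both conclusions then follow at once, because $\Bias\bracket{\bar\phi_{2,\Delta}(\tilde N/n) - \phi(p)} \le \Mean\bracket{|\bar\phi_{2,\Delta}(\tilde N/n)|} + |\phi(p)|$ by the triangle inequality, and $\Var\bracket{\bar\phi_{2,\Delta}(\tilde N/n)} \le \Mean\bracket{\bar\phi_{2,\Delta}(\tilde N/n)^2}$ by Jensen's inequality, provided also that $\phi$ itself is uniformly bounded on $(0,1]$.

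To prove the uniform bound I would split $|\bar\phi_{2,\Delta}(x)| \le |T_\Delta[\phi](x)| + \tfrac{x}{2n}|T_\Delta^{(2)}[\phi](x)|$ and handle each piece separately. The first piece is at most $\sup_{p \in (0,1]}|\phi(p)|$, which I would show to be finite by integrating the divergence-speed bound $|\phi^{(2)}(p)| \le W_2 p^{-2+\alpha} + c_2$ twice from $1$ down to $p$, splitting into $\alpha \in (1,2)$, $\alpha = 1$, and $\alpha \in (0,1)$. In all three cases the iterated integral is bounded on $(0,1]$: for $\alpha > 1$ the first integration already converges at $0$; for $\alpha = 1$ a logarithm appears after the first integration but a second integration yields $1 + p\ln p - p$, which is bounded; and for $\alpha \in (0,1)$ each integration lifts the exponent by $1$, leaving the bounded expression $(1 - p^\alpha)/\alpha$ at the second step.

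For the second piece, $T_\Delta^{(2)}[\phi](x) = 0$ unless $x \in [\Delta,1]$, where it equals $\phi^{(2)}(x)$. On that range
$\tfrac{x}{2n}|\phi^{(2)}(x)| \le \tfrac{W_2 x^{-1+\alpha}}{2n} + \tfrac{x c_2}{2n}$,
and the second summand is $\lesssim 1/n$. For the first summand, if $\alpha \ge 1$ then $x^{-1+\alpha} \le 1$ on $[\Delta,1]$, giving $\lesssim 1/n$; if $\alpha \in (0,1)$ then $x^{-1+\alpha}$ is maximized at $x = \Delta$, and the hypothesis $\Delta \gtrsim 1/n$ yields $\Delta^{-1+\alpha} \lesssim n^{1-\alpha}$, so the summand is $\lesssim n^{-\alpha} \lesssim 1$. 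Combining both pieces gives the uniform bound $|\bar\phi_{2,\Delta}(x)| \lesssim 1$.

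The only mild subtlety is the case analysis on $\alpha$ in the double-integration argument for boundedness of $\phi$, together with checking that the assumption $\Delta \gtrsim 1/n$ precisely balances the blow-up of $\phi^{(2)}$ at the inner boundary $p = \Delta$ when $\alpha < 1$. There is no deeper obstacle: once these bookkeeping estimates are in hand the lemma is immediate from the triangle and Jensen inequalities.
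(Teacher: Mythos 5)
Your proof is correct and follows essentially the same route as the paper's: decompose $\bar\phi_{2,\Delta}$ into the truncated function $T_\Delta[\phi]$ and the bias-correction term $\tfrac{x}{2n}T_\Delta^{(2)}[\phi]$, bound each piece by a supremum, and use $\Delta \gtrsim n^{-1}$ to control $\Delta^{\alpha-1}/n$ when $\alpha < 1$. The only cosmetic difference is in establishing $\sup_{p\in(0,1]}|\phi(p)| < \infty$: you re-derive it by integrating the divergence-speed bound twice with a three-way case split on $\alpha$, whereas the paper simply invokes the already-proved H\"older continuity lemmas (\cref{lem:div-speed-holder1,lem:div-speed-holder2,lem:div-speed-holder3}), which are themselves established by the same kind of integration. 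Either way the boundedness of $\phi$ and $T_\Delta[\phi]$ is available, and your treatment of the second piece $\sup_{x>0}\tfrac{x}{2n}|T_\Delta^{(2)}[\phi](x)| \lesssim n^{-\alpha}\lor n^{-1} \lesssim 1$ matches the paper's step exactly.
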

\begin{lemma}\label{lem:plugin-o1-4}
  Suppose $\phi:[0,1]\to\RealSet$ be a function whose second divergence speed is $p^\alpha$ for $\alpha \in (0,2)$. If $\Delta \in (0,1)$ and $\Delta \gtrsim n^{-1}$,
 \begin{align}
  \Bias\bracket*{ \bar\phi_{4,\Delta}\paren*{\frac{\tilde{N}}{n}} - \phi(p) } \lesssim 1 \textand \Var\bracket*{ \bar\phi_{4,\Delta}\paren*{\frac{\tilde{N}}{n}} } \lesssim 1.
 \end{align}
\end{lemma}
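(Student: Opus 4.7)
The strategy mirrors the proof of \cref{lem:plugin-o1-2}: show that $\bar\phi_{4,\Delta}$ is uniformly bounded on $[0,\infty)$ by a constant independent of $n$, from which both the bias and variance bounds follow. First I would observe that $\phi$ itself is bounded on $[0,1]$, since \cref{lem:div-speed-holder1,lem:div-speed-holder2,lem:div-speed-holder3} together establish H\"older continuity of $\phi$ throughout $\alpha \in (0,2)$, giving $\sup_{p \in [0,1]}|\phi(p)| \lesssim 1$.

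The term $T_\Delta[\phi](p)$ is then bounded trivially. For each correction term involving $T_\Delta^{(\ell)}[\phi]$ with $\ell \in \{2,3,4\}$, I would apply the divergence speed bound $|T_\Delta^{(\ell)}[\phi](p)| \le \ind{p \ge \Delta}(W_\ell p^{\alpha-\ell} + c_\ell)$; the bounds for $\ell = 3,4$ are inherited from the sixth-order divergence speed assumption of the calling \cref{thm:upper-bound1} via repeated application of \cref{lem:lower-div}. Combined with $p \ge \Delta$ on the support of $T_\Delta^{(\ell)}[\phi]$ and the hypothesis $\Delta \gtrsim n^{-1}$, each correction term collapses to $O(n^{-\alpha} \lor n^{-1}) = O(1)$. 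For instance,
\[
  \frac{p}{2n}\abs*{T_\Delta^{(2)}[\phi](p)} \lesssim \frac{p^{\alpha-1}}{n} + \frac{p}{n} \lesssim \frac{\Delta^{\alpha-1} \lor 1}{n} + \frac{1}{n} \lesssim n^{-\alpha} \lor n^{-1} \lesssim 1,
\]
and the $T_\Delta^{(3)}$ and $T_\Delta^{(4)}$ terms reduce analogously to $p^{\alpha-2}/n^2$ and $p^{\alpha-3}/n^3$ respectively, each bounded by $n^{-\alpha} \lor n^{-1} \lesssim 1$.

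Summing the five termwise bounds yields a uniform constant $M$ with $|\bar\phi_{4,\Delta}(p)| \le M$ for all $p \ge 0$. Then $\Bias\bracket*{\bar\phi_{4,\Delta}(\tilde{N}/n) - \phi(p)} \le M + \sup_{q}|\phi(q)| \lesssim 1$ by the triangle inequality, and $\Var\bracket*{\bar\phi_{4,\Delta}(\tilde{N}/n)} \le \Mean\bracket*{\bar\phi_{4,\Delta}(\tilde{N}/n)^2} \le M^2 \lesssim 1$.

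The main obstacle is purely bookkeeping: verifying termwise that the prefactor $p^j/n^m$ precisely cancels the divergence $p^{\alpha-\ell}$ at the truncation boundary so that the residual $\Delta^{\alpha-\ell+j}/n^m$ stays $O(1)$. The tightest regime is $\alpha \to 0^+$, where $n^{-\alpha}$ tends to $1$; the hypothesis $\Delta \gtrsim n^{-1}$ is exactly what is needed to absorb the worst negative power of $\Delta$ into the positive power of $n$.
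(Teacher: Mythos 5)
Your proposal follows essentially the same route as the paper: both reduce the bias and variance bounds to showing $\bar\phi_{4,\Delta}$ is uniformly bounded, and both do so by a termwise application of the divergence speed bounds (via \cref{lem:lower-div}) to each truncated derivative on its support $[\Delta,1]$, combined with $\Delta \gtrsim n^{-1}$ to absorb the negative powers of $\Delta$. You are, if anything, more explicit than the paper in flagging that the bounds on $T^{(3)}_\Delta[\phi]$ and $T^{(4)}_\Delta[\phi]$ require more than the stated second-order divergence speed and must be inherited from the sixth-order assumption in the calling \cref{thm:upper-bound1}, which is a fair reading of what the paper leaves implicit.
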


{\bfseries Bias and variance analysis for $p > \Delta$.}
In this case, we will prove the following lemmas:
\begin{lemma}\label{lem:plugin-bias-0-1}
 Suppose $\phi:[0,1]\to\RealSet$ is a function whose fourth divergence speed is $p^{\alpha}$ for $\alpha \in (0,1]$. Suppose $\frac{1}{n} \lesssim \Delta < p \le 1$. Then, we have
 \begin{align}
  \Bias\bracket*{\bar\phi_{2,\Delta}\paren*{\frac{\tilde{N}}{n}} - \phi(p)} \lesssim \frac{1}{n^2\Delta^{2-\alpha}}.
 \end{align}
\end{lemma}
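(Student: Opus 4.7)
The plan is to bound the bias by a Taylor expansion argument after splitting the expectation based on whether $\tilde{N}/n$ falls in a concentrating neighborhood of $p$. Since $\bar\phi_{2,\Delta}(q)=\phi(q)-\tfrac{q\phi''(q)}{2n}$ on $[\Delta,1]$ but is piecewise constant outside, a naive global expansion fails; the split is essential.

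First, I would choose the ``good'' event $\event = \{\tilde{N}/n \in I_p\}$ where $I_p = [p/2, 2p]\cap[\Delta,1]$ (slightly modified when $p$ is within a constant factor of $\Delta$ or of $1$), so that on $\event$ the estimator agrees with $\phi(\tilde N/n)-\tfrac{\tilde N/n \cdot \phi''(\tilde N/n)}{2n}$. On $\event$, I would Taylor-expand $\phi$ around $p$ to third order with fourth-order Lagrange remainder, and separately expand $g(q)=\tfrac{q\phi''(q)}{2n}$ around $p$ to first order with second-order remainder. Taking expectations and using the Poisson moments $\Mean[\tilde N/n - p]=0$, $\Mean[(\tilde N/n-p)^2]=p/n$, $\Mean[(\tilde N/n-p)^3]=p/n^2$, $\Mean[(\tilde N/n-p)^4]\lesssim p^2/n^2$ (the last since $np\ge n\Delta\gtrsim 1$), the leading bias $\tfrac{p\phi''(p)}{2n}$ of the plugin is cancelled exactly by $-g(p)$, the \citeauthor{Miller1955NoteEstimates} correction.

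The surviving terms are $\tfrac{p\phi'''(p)}{6n^2}$, a Lagrange remainder involving $\phi^{(4)}(\xi)(\tilde N/n-p)^4$, and a remainder involving $g''(\xi')(\tilde N/n-p)^2$ where $g''\propto (2\phi'''+q\phi^{(4)})/n$. Since $\xi,\xi'\in I_p$, the divergence speed assumption yields $|\phi^{(j)}(\xi)|\lesssim \xi^{\alpha-j}\lesssim p^{\alpha-j}$ for $j=3,4$ (lower-order derivatives are controlled via \cref{lem:lower-div,lem:lowest-int-div} when applicable). Each surviving term is therefore $\lesssim p^{\alpha-2}/n^2$, and since $\alpha-2<0$ and $p\ge\Delta$, this is bounded by $\Delta^{\alpha-2}/n^2 = 1/(n^2\Delta^{2-\alpha})$ as required.

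Finally, on the bad event $\event^c$, I would use the crude deterministic bound $|\bar\phi_{2,\Delta}(\tilde N/n) - \phi(p)| \lesssim \|\phi\|_\infty + \Delta^{\alpha-2}/n$ (the second term coming from $|T^{(2)}_\Delta[\phi]|\le|\phi''(\Delta)|\lesssim\Delta^{\alpha-2}$ together with $q\le 1$), multiplied by the Chernoff bound $\Pr(\event^c)\le 2e^{-cnp}\le 2e^{-cn\Delta}\lesssim n^{-cC_2}$. Choosing $C_2$ as in the theorem statement makes this contribution much smaller than $\Delta^{\alpha-2}/n^2$. The main obstacle will be verifying the decomposition when $p$ sits just above $\Delta$, because then $I_p$ degenerates and the Lagrange intermediate point $\xi$ can approach the non-smoothness at $\Delta$; handling this either by inflating $I_p$ asymmetrically to $[\Delta,4p]$ or by invoking the generalized Hermite interpolation machinery of \cref{sec:upper-analysis} ensures that the bound $|\phi^{(4)}(\xi)|\lesssim p^{\alpha-4}$ (not the worse $\Delta^{\alpha-4}$) remains available, which is exactly what the target rate requires.
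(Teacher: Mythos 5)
Your proposal correctly identifies the central mechanism — cancellation of the leading $\tfrac{p\phi^{(2)}(p)}{2n}$ term by the Miller correction, leaving a $\tfrac{p\phi^{(3)}(p)}{6n^2}$ term plus Taylor remainders — and you rightly flag the non-smoothness of $\bar\phi_{2,\Delta}$ at $\Delta$ and $1$ as the main obstruction to a single global Taylor expansion. However, the event-splitting route you propose has a genuine gap in the regime $\Delta\asymp 1/n$, $p\asymp 1/n$, which the lemma's hypothesis $\tfrac{1}{n}\lesssim\Delta < p\le 1$ permits. There $n\Delta$ is merely bounded, so the Chernoff bound gives $\Pr(\event^c)\asymp 1$ rather than anything decaying, and the bad-event contribution you estimate as $(\|\phi\|_\infty+\Delta^{\alpha-2}/n)\Pr(\event^c)$ is $\Theta(1)$, which is far larger than the target $\Delta^{\alpha-2}/n^2\asymp n^{-\alpha}$. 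Likewise the linear term $\phi^{(1)}(p)\Mean\bracket{(\tilde N/n-p)\ind{\event}}$ does not vanish — only the unconditional first moment does — and the resulting error $|\phi^{(1)}(p)|\sqrt{p/n}\sqrt{\Pr(\event^c)}$ again overshoots $\Delta^{\alpha-2}/n^2$ when $\Pr(\event^c)$ is not small. In the actual deployment within \cref{thm:upper-bound1} one has $\Delta=C_2\ln n/n$, so your argument would survive there, but it does not prove the lemma as stated.

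The paper sidesteps this entirely. Instead of conditioning on a good event, it replaces the truncated function $T_\Delta[\phi]$ by the globally smooth Hermite-interpolant $H_{4,\Delta,\delta}[\phi]$ (\cref{lem:converge-hermite}), Taylor-expands that \emph{unconditionally}, and then controls $\Mean\bracket*{R_3}$ and $\Mean\bracket*{R_1}$ with \cref{lem:reminder-bound1}. The key content of that lemma is a weighted mean-value representation of the remainder (via an auxiliary $G(x)=x^{-r}(\hat p-x)^{2\ell}$) that converts the bound into $\sup_{\xi>0}\abs*{\xi^r g^{(2\ell)}(\xi)}\cdot p^{\ell-r}/n^\ell$. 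Taking $r=2$, $\ell=2$, and using \cref{lem:hermite-bound} to get $\sup_{\xi>0}\abs*{\xi^2 H^{(4)}_{4,\Delta,\delta}[\phi](\xi)}\lesssim\Delta^{\alpha-2}$, yields precisely $\Delta^{\alpha-2}/n^2$ with no restriction beyond $p\gtrsim 1/n$. Your naive Lagrange bound on the good event is also lossier than this: $\sup_{\xi\in I_p}\abs*{\phi^{(4)}(\xi)}\Mean\bracket*{(\tilde N/n-p)^4}$ carries a $\Delta^{\alpha-4}p^2/n^2$ risk when $I_p$ reaches down to $\Delta$ with $p\gg\Delta$; the weighted argument in \cref{lem:reminder-bound1} is exactly what makes this loss disappear. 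You gesture at the Hermite machinery as a fallback — that fallback \emph{is} the paper's proof, and it is load-bearing, not optional, if the lemma is to hold for all $\Delta\gtrsim 1/n$.
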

\begin{lemma}\label{lem:plugin-var-0-1}
 Suppose $\phi:[0,1]\to\RealSet$ is a function whose fourth divergence speed is $p^{\alpha}$ for $\alpha \in (0,1]$. Suppose $\frac{1}{n} \lesssim \Delta < p \le 1$. For $\alpha \in (0,1)$, we have
 \begin{align}
   \Var\bracket*{\bar\phi_{2,\Delta}\paren*{\frac{\tilde{N}}{n}}} \lesssim \frac{p^{2\alpha-1}}{n} + \frac{1}{n^{2}\Delta^{4-2\alpha}} + \frac{p}{n}.
 \end{align}
 For $\alpha = 1$, we have
 \begin{align}
   \Var\bracket*{\bar\phi_{2,\Delta}\paren*{\frac{\tilde{N}}{n}}} \lesssim \frac{p\ln^2p}{n} + \frac{1}{n^{2}\Delta^{4-2\alpha}} + \frac{p}{n}.
 \end{align}
\end{lemma}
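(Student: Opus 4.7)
The plan is to work with $\hat p = \tilde N/n$ and decompose
\[
\bar\phi_{2,\Delta}(\hat p) = T_\Delta[\phi](\hat p) - \frac{\hat p}{2n}\, T^{(2)}_\Delta[\phi](\hat p),
\]
so that $\Var[A+B] \le 2\Var[A] + 2\Var[B]$ reduces the problem to controlling $\Var[T_\Delta[\phi](\hat p)]$ and $\frac{1}{4n^2}\Var[\hat p\, T^{(2)}_\Delta[\phi](\hat p)]$ separately. Throughout I use $\Var[f(\hat p)] \le \Mean[(f(\hat p) - f(p))^2]$ and split the expectation along $\{\hat p \ge p/2\}$ versus its complement; the Poisson Chernoff bound together with $np \ge n\Delta \gtrsim 1$ makes the tail contribution absorbable into each of the three stated terms.

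For the main piece I apply a first-order Taylor expansion of $\phi$ around $p$ on the bulk event. The leading term $\phi'(p)^2 \Mean[(\hat p-p)^2] = \phi'(p)^2\, p/n$, combined with the derivative bound $|\phi'(p)| \lesssim p^{\alpha-1} + c_1$ for $\alpha \in (0,1)$ (from \cref{lem:lower-div}) or $|\phi'(p)| \lesssim \ln(1/p) + c_1$ for $\alpha = 1$ (from \cref{lem:lowest-int-div}), produces the $p^{2\alpha-1}/n$ (resp.\ $p\ln^2 p/n$) contribution plus the $p/n$ term coming from the additive constant in \cref{def:div-speed}. The Taylor remainder $\phi''(\xi)^2\Mean[(\hat p-p)^4] \lesssim p^{2\alpha-4}\cdot p^2/n^2$ is lower order. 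When $\hat p \in [\Delta/2,\Delta)$ the truncation freezes $T_\Delta[\phi](\hat p) = \phi(\Delta)$; the resulting gap is bounded by the H\"older estimates of \cref{lem:div-speed-holder1,lem:div-speed-holder2}, giving $|\phi(p)-\phi(\Delta)|^2 \lesssim p^{2\alpha}$, which is also absorbable.

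For the correction piece I split by $\hat p \ge \Delta$ versus $\hat p < \Delta$. On $\{\hat p \ge \Delta\}$, the map $q\mapsto q\phi''(q)$ has derivative $\phi''(q)+q\phi'''(q)$ of order $q^{\alpha-2}$ via two applications of \cref{lem:lower-div} to the fourth divergence-speed hypothesis; a mean value argument together with $\Mean[(\hat p-p)^2] = p/n$ yields a contribution of order $p^{2\alpha-3}/n^3 \le \Delta^{2\alpha-4}/n^2 \cdot (n\Delta)^{-1} \lesssim 1/(n^2 \Delta^{4-2\alpha})$. On $\{\hat p < \Delta\}$ the truncated second derivative vanishes, so the pointwise deviation equals $p\phi''(p)$ with square bounded by $p^{2\alpha-2} \le \Delta^{2\alpha-2}$; after division by $n^2$ this is again at most $\Delta^{2\alpha-4}/n^2$.

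The main technical obstacle is the boundary bookkeeping at $p = \Delta$: because $T^{(2)}_\Delta[\phi]$ is discontinuous there and the Poisson probability of $\{\hat p < \Delta\}$ can be $\Theta(1)$ when $p$ is within a constant factor of $\Delta$, every pointwise deviation estimate needs to be tight enough that multiplying by this $\Theta(1)$ probability still fits inside $1/(n^2\Delta^{4-2\alpha})$. A secondary subtlety is the $\alpha = 1$ case, where one must switch from \cref{lem:lower-div} to the logarithmic estimate of \cref{lem:lowest-int-div} precisely at the step where $|\phi'(p)|$ feeds into the leading variance term, producing $p\ln^2 p/n$ in place of $p^{2\alpha-1}/n$.
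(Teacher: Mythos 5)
Your proposal takes a genuinely different route from the paper: you try to work directly with the truncated functions $T_\Delta[\phi]$ and $T^{(2)}_\Delta[\phi]$ and handle the discontinuity at $p=\Delta$ by case analysis on where $\hat p=\tilde N/n$ lands, whereas the paper mollifies first, replacing $T_\Delta$ with the Hermite-interpolated $H_{4,\Delta,\delta}[\phi]$ (which is $C^4$ for every $\delta>0$), runs a third-order Taylor analysis against \cref{lem:hermite-bound,lem:reminder-bound2}, and only at the end sends $\delta\downarrow 0$ via \cref{lem:converge-hermite}. Your route is conceptually lighter, but it pushes all the difficulty onto exactly the boundary terms you flag as ``the main technical obstacle,'' and that is where the argument as written breaks.

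The gap is in the bound you use for the frozen-truncation event. You bound $|\phi(p)-\phi(\Delta)|^2 \lesssim p^{2\alpha}$ via \cref{lem:div-speed-holder1}, and then assert this is absorbable. It is not. Take $p$ within one Poisson standard deviation of $\Delta$, i.e.\ $p-\Delta \asymp \sqrt{p/n}$; then $\p\{\hat p<\Delta\}=\Theta(1)$, and the H\"older bound gives a contribution $\asymp (p-\Delta)^{2\alpha}\asymp (p/n)^{\alpha}$. Comparing to the target, $(p/n)^{\alpha}\big/\bigl(p^{2\alpha-1}/n\bigr)=(np)^{1-\alpha}\gtrsim 1$ for $\alpha<1$, so $(p/n)^{\alpha}$ dominates $p^{2\alpha-1}/n$ whenever $np$ is large; and $(p/n)^{\alpha}$ is not in general absorbed by $1/(n^{2}\Delta^{4-2\alpha})$ either, since the lemma is stated for all $\Delta\gtrsim 1/n$ (e.g.\ $\Delta$ a fixed constant makes $1/(n^{2}\Delta^{4-2\alpha})\asymp n^{-2}\ll n^{-\alpha}$). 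What you need on $\{\hat p<\Delta\}$ is the mean-value estimate $|\phi(p)-\phi(\Delta)|\le \sup_{\xi\in[\Delta,p]}|\phi^{(1)}(\xi)|\,(p-\Delta)\lesssim \Delta^{\alpha-1}(p-\Delta)$, which together with the Chernoff tail $\p\{\hat p<\Delta\}\le\exp(-n(p-\Delta)^2/2p)$ produces a contribution $\lesssim \Delta^{2\alpha-2}(p-\Delta)^2\exp(-n(p-\Delta)^2/2p)\lesssim p\Delta^{2\alpha-2}/n$, and this \emph{is} dominated by $p^{2\alpha-1}/n$ precisely in the dangerous regime $p\asymp\Delta$ where the frozen event is $\Theta(1)$. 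You actually invoke the correct reasoning --- ``every pointwise deviation estimate needs to be tight enough'' --- for the correction piece, but apply the too-crude H\"older estimate to the main piece; those two treatments must match. Note also that the same mean-value-on-$[\Delta,\cdot]$ care is needed on $\{\Delta\le\hat p<p/2\}$, where your leading term $\phi^{(1)}(p)^2\,p/n$ is replaced by $\phi^{(1)}(\xi)^2$ with $\xi$ possibly as small as $\Delta$; splitting the bulk at $\{\hat p\ge p/2\}$ alone does not exclude $\hat p\in[\Delta,p/2)$ when $\Delta<p/2$. Finally, for $\alpha=1$ you correctly cite \cref{lem:lowest-int-div} for the logarithmic first-derivative bound (the paper miscites \cref{lem:div-speed-holder3} here), so that part of your plan is on firm footing.
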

\begin{lemma}\label{lem:plugin-bias-1-3/2}
 Suppose $\phi:[0,1]\to\RealSet$ is a function whose sixth divergence speed is $p^{\alpha}$ for $\alpha \in (1,3/2)$. Suppose $\frac{1}{n} \lesssim \Delta < p \le 1$. Then, we have
 \begin{align}
  \Bias\bracket*{\bar\phi_{4,\Delta}\paren*{\frac{\tilde{N}}{n}} - \phi(p)} \lesssim \frac{1}{n^3\Delta^{3-\alpha}}.
 \end{align}
\end{lemma}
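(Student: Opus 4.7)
The proof parallels that of \cref{lem:plugin-bias-0-1}, but exploits the stronger sixth-order divergence speed to push the Taylor expansion of $\bar\phi_{4,\Delta}$ one order further. The plan is to split the bias according to whether $\tilde{N}/n \ge \Delta$ or not: on the bulk event $T_\Delta[\phi]$ coincides with $\phi$ on all relevant evaluations and a higher-order Taylor expansion is available, while on the tail event crude bounds combined with Poisson concentration suffice.

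Concretely, by the triangle inequality,
\begin{align*}
  \Bias\bracket*{\bar\phi_{4,\Delta}\paren*{\tfrac{\tilde{N}}{n}} - \phi(p)}
  &\le \abs*{\Mean\bracket*{\paren*{\bar\phi_{4,\Delta}\paren*{\tfrac{\tilde{N}}{n}} - \phi(p)}\ind{\tilde{N}/n \ge \Delta}}} \\
  &\quad + \abs*{\Mean\bracket*{\paren*{\bar\phi_{4,\Delta}\paren*{\tfrac{\tilde{N}}{n}} - \phi(p)}\ind{\tilde{N}/n < \Delta}}}.
\end{align*}
Since $p > \Delta$ and $n\Delta \gtrsim 1$ (and in the intended application $n\Delta = \Delta_{n,k} = C_2\ln n$ is logarithmically large), a Chernoff bound for Poisson tails yields $\p\cbrace*{\tilde{N}/n < \Delta} \le e^{-c\,np}$ for some $c>0$, while on this event the integrand is bounded by a polynomial in $1/\Delta$ via the divergence speed applied on $[\Delta,1]$. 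Taking $C_2$ large enough absorbs this contribution into the target rate $1/(n^3\Delta^{3-\alpha})$.

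For the bulk term, on $\cbrace*{\tilde{N}/n \ge \Delta}$ every point between $\tilde{N}/n$ and $p$ lies in $[\Delta,1]$, so $T_\Delta[\phi]$ equals $\phi$ and is six times continuously differentiable there. Each summand of $\bar\phi_{4,\Delta}(\tilde{N}/n)$ can then be Taylor expanded about $p$ with integral remainder---$\phi$ to order six, $p\,\phi^{(2)}$ to order four, $p\,\phi^{(3)}$ to order three, and the $\phi^{(4)}$ terms to order two. After taking expectations using the Poisson central moments $\Mean\bracket*{(\tilde{N}-np)^k}$ for $k=2,\dots,6$, the coefficients $-\tfrac{1}{2n}$, $\tfrac{1}{3n^2}$, $\tfrac{5}{24n^3}$, $\tfrac{1}{8n^2}$ defining $\bar\phi_{4,\Delta}$ are precisely those that cancel every contribution through order $1/n^2$, leaving only a sixth-order remainder in $\phi$ together with strictly smaller remainders from the correction terms.

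The sixth-order divergence speed then yields $\abs*{\phi^{(6)}(t)} \lesssim t^{\alpha-6}$ on $(0,1)$. To convert this pointwise bound into an integrated one, split once more according to $\event = \cbrace*{\abs*{\tilde{N}/n - p} \le p/2}$: on $\event$ the intermediate point $t$ in the remainder satisfies $t \asymp p$, so $\abs*{\phi^{(6)}(t)} \lesssim p^{\alpha-6}$; $\event^c$ has exponentially small probability by Poisson concentration and is absorbed by the choice of $C_2$. The contribution on $\event$ is then at most $p^{\alpha-6}\,\Mean\bracket*{(\tilde{N}/n - p)^6} \lesssim p^{\alpha-6}\cdot p^3/n^3 = p^{\alpha-3}/n^3 \le 1/(n^3\Delta^{3-\alpha})$, giving the claim. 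The main obstacle is the algebraic bookkeeping in the cancellation step: each product of a Poisson central moment with a Taylor coefficient must either cancel exactly against a correction term or be shown to be $O(1/(n^3\Delta^{3-\alpha}))$. The non-smoothness of $T_\Delta[\phi]$ at $p=\Delta$---the other apparent subtlety---is neutralized by the initial event split.
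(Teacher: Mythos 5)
Your proposal takes a genuinely different route from the paper, and the route has a gap that is not merely a matter of bookkeeping.

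The paper does \emph{not} split on the event $\{\tilde N/n \ge \Delta\}$. Instead it replaces the truncated function $T_\Delta[\phi]$ (and the surrogate derivatives $T_\Delta^{(\ell)}[\phi]$, which are \emph{not} the classical derivatives of $T_\Delta[\phi]$) by the Hermite-interpolated family $H_{6,\Delta,\delta}[\phi]$, which is $C^6$ on all of $(0,\infty)$ and constant near zero. This makes the Taylor expansion valid \emph{globally}, without conditioning. The remainder is then controlled by \cref{lem:reminder-bound1}, a mean-value-theorem-type bound whose crucial feature is that it involves $\sup_{\xi>0}|\xi^r g^{(2\ell)}(\xi)|$ — a supremum over the entire positive axis, not over points between $p$ and $\tilde N/n$ — and \cref{lem:hermite-bound} shows this supremum is $\lesssim\Delta^{\alpha+r-6}$. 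Taking $\delta\downarrow 0$ via \cref{lem:converge-hermite} and Fatou--Lebesgue then recovers the truncation. No Chernoff bound enters the proof of this lemma.

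The gap in your argument is the claim that $\p\{\tilde N/n < \Delta\} \le e^{-cnp}$ and that this factor can be "absorbed by taking $C_2$ large enough." The lemma is asserted for all $p$ with $\Delta < p \le 1$, and in the application of \cref{thm:upper-bound1} it is genuinely invoked for $p_i$ arbitrarily close to $\Delta$ (the weight on the plugin bias is $\ind{np_i>\Delta_{n,k}}$, with no gap). When $p/\Delta\to 1$, the Poisson lower-tail Chernoff exponent $-np + n\Delta + n\Delta\ln(p/\Delta)$ tends to $0$, so $\p\{\tilde N/n<\Delta\}$ is $\Theta(1)$ regardless of $C_2$. At the same time, the "tail" contribution $(\phi(\Delta)-\phi(p))\,\p\{\tilde N/n<\Delta\}$ and the indicator-induced corrections to the central moments $\Mean[(\tilde N/n-p)^m\ind{\tilde N/n\ge\Delta}]$ (which your cancellation claim silently ignores — the exact Poisson identities you invoke only hold for the unconditional moments) are individually large and must cancel against one another; a direct estimate at $p-\Delta\asymp\sqrt{\Delta/n}$ shows the tail piece alone exceeds the target $\Delta^{\alpha-3}/n^3$ by a polylogarithmic factor when $n\Delta\asymp\ln n$. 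Your proposal does not address this cancellation, and the Chernoff step that is supposed to make the tail harmless simply fails in that regime.

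The higher-level intuition in your write-up — use the sixth-order divergence speed, expand to sixth order, and aim for a remainder $\lesssim p^{\alpha-6}\cdot p^3/n^3$ — is the right target, and steps (6)–(8) of your bulk analysis are sound once one is on the bulk event with $\xi\asymp p$. But neutralizing the non-smoothness of $T_\Delta[\phi]$ by an event split does not work uniformly in $p>\Delta$; the paper's Hermite-interpolation device plus the generalized MVT bound of \cref{lem:reminder-bound1} is doing essential, non-cosmetic work here, and your proposal has no substitute for it.
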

\begin{lemma}\label{lem:plugin-var-1-3/2}
 Suppose $\phi:[0,1]\to\RealSet$ is a function whose fifth divergence speed is $p^{\alpha}$ for $\alpha \in (1,3/2)$. Suppose $\frac{1}{n} \lesssim \Delta < p \le 1$. Then, we have
 \begin{align}
   \Var\bracket*{\bar\phi_{4,\Delta}\paren*{\frac{\tilde{N}}{n}}} \lesssim \frac{1}{n^{2\alpha}} + \frac{p}{n}.
 \end{align}
\end{lemma}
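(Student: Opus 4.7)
The plan is to start from the inequality $\Var[X] \le \Mean[(X-c)^2]$ applied with $c = \bar\phi_{4,\Delta}(p)$,
\[
\Var\bracket*{\bar\phi_{4,\Delta}\paren*{\tilde{N}/n}} \le \Mean\bracket*{\paren*{\bar\phi_{4,\Delta}(\tilde{N}/n) - \bar\phi_{4,\Delta}(p)}^2},
\]
then decompose $\bar\phi_{4,\Delta}$ into its five summands and use Minkowski's inequality to reduce the problem to bounding the $L_2$ fluctuation of each summand about its value at $p$. The leading term $T_\Delta[\phi](\tilde N/n)$ will contribute the $p/n$ in the target bound, and the four correction terms (each carrying a factor $n^{-k}$ with $k\in\{1,2,3\}$) will jointly contribute the $1/n^{2\alpha}$.

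For the leading term, since $\alpha \in (1,3/2) \subset (1,2)$, Lemma \ref{lem:div-speed-holder3} gives Lipschitz continuity of $\phi$ on $[0,1]$ with some constant $L$. The truncation operator $T_\Delta$ only flattens $\phi$ below $\Delta$ and above $1$, so $T_\Delta[\phi]$ inherits the same Lipschitz constant. Combined with $\Mean[(\tilde N/n - p)^2] = p/n$, this yields directly
\[
\Mean\bracket*{\paren*{T_\Delta[\phi](\tilde N/n) - T_\Delta[\phi](p)}^2} \le L^2 p/n.
\]

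For the bias-correction terms, I split the expectation on the ``good'' event $E=\{\tilde N/n \in [\Delta/2, 2]\}$ and its complement. On $E$, the truncated derivatives $T^{(j)}_\Delta[\phi]$ coincide with $\phi^{(j)}$; by iterated application of Lemma \ref{lem:lower-div} (permissible because $\alpha < 3/2 < 2$, so the condition $\ell > 1+\alpha$ propagates from $\ell = 5$ down to $\ell = 2$), the $j$th divergence speed of $\phi$ is $p^{\alpha}$ for $j\in\{2,3,4,5\}$, yielding $|\phi^{(j)}(x)|\lesssim x^{\alpha-j}+1$. A mean-value expansion of each correction term, combined with $\Var[\tilde N/n] = p/n$, produces contributions of order $p^{2(\alpha-j+a)-1}/n^{2k+1}$ (where $a,k$ depend on the specific summand) which, after using $p \ge \Delta \gtrsim 1/n$, are all absorbed into $1/n^{2\alpha}$. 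On $E^c$, a Chernoff bound for Poisson random variables gives $\p\cbrace*{E^c} \lesssim e^{-cn\Delta}$, which combined with the deterministic $\mathcal{O}(1)$ bound of Lemma \ref{lem:plugin-o1-4} makes this contribution negligible compared to $1/n^{2\alpha}$.

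The main obstacle will be the bookkeeping for the four correction terms: each has a different combination of polynomial factors in $\tilde N/n$ and inverse powers of $n$, and the worst case (typically attained near $p = \Delta$) must land at exactly $1/n^{2\alpha}$. In particular, the summand $(\tilde N/n)^2 \phi^{(4)}(\tilde N/n)/(8n^2)$ is the reason the hypothesis uses the fifth rather than the fourth divergence speed: its mean-value expansion involves $\phi^{(5)}$, and the resulting $x^{\alpha-3}/n^2$ bound on the derivative, once squared and multiplied by $p/n$, only just fits into $1/n^{2\alpha}$ thanks to $p \ge \Delta \gtrsim 1/n$.
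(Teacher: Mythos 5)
Your decomposition into the leading term and four correction pieces, and your bound on the leading term via Lipschitz continuity and $\Var[\tilde{N}/n]=p/n$, is the paper's own strategy (the paper routes through Hermite interpolants $H_{6,\Delta,\delta}$ and takes $\delta\downarrow0$, but that is only bookkeeping to smooth out $T_\Delta$ at $p=\Delta$ and $p=1$). The gap is in how you control the correction terms. A mean-value expansion of, say, the summand $g(x)=\tfrac{x}{2n}T^{(2)}_\Delta[\phi](x)$ gives $g(\tilde{N}/n)-g(p)=g^{(1)}(\xi)(\tilde{N}/n-p)$ with $\xi$ random between $\tilde{N}/n$ and $p$. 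On your good event $E=\{\tilde{N}/n\in[\Delta/2,2]\}$ the only deterministic control on the intermediate point is $\xi\gtrsim\Delta$; since $|g^{(1)}(\xi)|\lesssim\xi^{\alpha-2}/n$ is decreasing, the naive supremum yields $\sup|g^{(1)}|\lesssim\Delta^{\alpha-2}/n$ and hence a contribution $\lesssim\Delta^{2\alpha-4}p/n^{3}$, not the $p^{2\alpha-3}/n^{3}$ you write. With $p\asymp1$ and $\Delta\asymp1/n$ the former is $\asymp n^{1-2\alpha}$, a full power of $n$ above the target $n^{-2\alpha}$; the exponent $p^{2(\alpha-j+a)-1}$ you claim only follows if one could take $\xi\asymp p$, which $E$ does not guarantee when $p\gg\Delta$.

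The paper closes this exactly where your argument stalls, with \cref{lem:reminder-bound2}: a weighted mean-value identity (built from the auxiliary function $G(x)=x^{-r/2}(\tilde{N}/n-x)^{\ell}$) that replaces the naive supremum by $\sup_{\xi>0}|\xi^{r}(g^{(\ell)}(\xi))^{2}|$ balanced against $p^{\ell-r}/n^{\ell}$. Taking $\ell=r=1$ gives $\Mean\bracket*{(R_{0}(\tilde{N}/n;g,p))^{2}}\lesssim\sup_{\xi}|\xi(g^{(1)}(\xi))^{2}|/n$, and \cref{lem:hermite-bound} then bounds $\sup_{\xi}\xi^{1/2}|g^{(1)}(\xi)|\lesssim\Delta^{\alpha-3/2}$, producing the sharp $\Delta^{2\alpha-3}/n^{3}\lesssim n^{-2\alpha}$. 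You would need to import this weighted remainder bound, or reproduce its effect by further stratifying $E$ into $\{\tilde{N}/n\ge p/2\}$ and its complement and using the small Poisson tail. A secondary issue along the same lines: under the lemma's hypothesis $\Delta\gtrsim1/n$, your Chernoff estimate $\p\cbrace*{E^{c}}\lesssim e^{-cn\Delta}$ is only $O(1)$, so $O(1)\cdot\p\cbrace*{E^{c}}$ is not automatically $\lesssim n^{-2\alpha}$; you are implicitly assuming $\Delta\gtrsim(\ln n)/n$, which holds in the application of \cref{thm:upper-bound1} but not in the lemma as stated. The paper sidesteps this because \cref{lem:reminder-bound2} controls the full Poisson expectation with no good/bad event split.
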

To prove these lemmas, we introduce two techniques; {\em Hermite interpolation} and {\em bounds on the Taylor's reminder term}.

{\bfseries Hermite interpolation technique.}
An important tool for analyzing the bias-corrected plugin estimator for $p > \Delta$ is the generalized Hermite interpolation~\autocite{Spitzbart1960AFormula}. The generalized Hermite interpolation between $\phi(a)$ and $\phi(b)$ is obtained as
\begin{multline}
  H_L(p;\phi,a,b) = \phi(a) + \\ \sum_{m=1}^L\frac{\phi^{(m)}(a)}{m!}(p-a)^m\sum_{\ell=0}^{L-m}\frac{L+1}{L+\ell+1}\Beta_{\ell,L+\ell+1}\paren*{\frac{p-a}{b-a}},
\end{multline}
where $\Beta_{\nu,n}(x)=\binom n\nu x^\nu(1-x)^{n-\nu}$ denotes the Bernstein basis polynomial. Then, $H_L^{(i)}(a;\phi,a,b) = \phi^{(i)}(a)$ for $i = 0,...,L$ and $H_L^{(i)}(b;\phi,a,b) = 0$ for $i = 1,...,L$. Given an integer $L > 0$, positive reals $\Delta,\delta > 0$, and a function $\phi$, define a functional:
\begin{multline}
  H_{L,\Delta,\delta}[\phi](p) = \\ \begin{dcases}
    \phi(\Delta) & \textif p \le \Delta(1-\delta), \\
    H_L\paren*{p;\phi,\Delta,\Delta(1-\delta)} & \textif \Delta(1-\delta) < p \le \Delta, \\
    \phi(p) & \textif \Delta < p < 1, \\
    H_L\paren*{p;\phi,1,1+\delta} & \textif 1 \le p < 1+\delta, \\
    \phi(1) & \textif p \ge 1+\delta.
  \end{dcases}
\end{multline}
Note that $\phi^{(L)}(1) = \lim_{p \uparrow 1}\phi^{(L)}(p)$. If $\phi$ is an $L$ times continuously differentiable function, $H_{L,\Delta,\delta}[\phi]$ is $L$ times continuously differentiable everywhere on $p > 0$. Moreover, as $\delta$ tends to zero, the converted function $H_{L,\Delta,\delta}[\phi]$ converges in pointwise to the truncated function $T_\Delta[\phi]$:
\begin{lemma}\label{lem:converge-hermite}
  Suppose $\Delta \in (0,1)$ and $\phi$ is an $L$ times continuously differentiable function. For any positive real $p > 0$,
  \begin{align}
      \lim_{\delta \downarrow 0}H_{L,\Delta,\delta}[\phi](p) = T_\Delta[\phi](p).
  \end{align}
  Moreover, under the same condition, for any $\ell \le L$ and any positive real $p > 0$,
  \begin{align}
      \lim_{\delta \downarrow 0}H^{(\ell)}_{L,\Delta,\delta}[\phi](p) = T^{(\ell)}_\Delta[\phi](p).
  \end{align}
\end{lemma}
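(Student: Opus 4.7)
The plan is to prove the lemma by a straightforward case analysis on the location of $p$ relative to the thresholds $\Delta$ and $1$. The key observation is that, for each fixed $p$, once $\delta$ is small enough (depending on $p$), $H_{L,\Delta,\delta}[\phi]$ actually coincides locally with $T_\Delta[\phi]$; convergence is therefore eventual equality rather than a true limiting argument. The constant pieces ($\phi(\Delta)$ on $\{p \le \Delta(1-\delta)\}$ and $\phi(1)$ on $\{p \ge 1+\delta\}$) and the identity piece ($\phi(p)$ on $(\Delta,1)$) already match the corresponding pieces of $T_\Delta[\phi]$; the only nontrivial points are the glueing points $p=\Delta$ and $p=1$, where the Hermite interpolation property supplies the required values.

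For the pointwise statement I would argue as follows. If $0 < p < \Delta$, pick any $\delta < 1 - p/\Delta$; then $\Delta(1-\delta) > p$, so $H_{L,\Delta,\delta}[\phi](p) = \phi(\Delta) = T_\Delta[\phi](p)$. If $\Delta < p < 1$, then $H_{L,\Delta,\delta}[\phi](p) = \phi(p) = T_\Delta[\phi](p)$ for every $\delta > 0$. If $p > 1$, pick $\delta < p - 1$; then $1 + \delta < p$, so $H_{L,\Delta,\delta}[\phi](p) = \phi(1) = T_\Delta[\phi](p)$. At the endpoints, the $m = 0$ term of the Hermite formula already gives $H_L(a;\phi,a,b) = \phi(a)$, whence $H_{L,\Delta,\delta}[\phi](\Delta) = \phi(\Delta)$ and $H_{L,\Delta,\delta}[\phi](1) = \phi(1)$, matching $T_\Delta[\phi]$ at those points. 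Letting $\delta \downarrow 0$ then yields the first claim.

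The derivative statement proceeds in the same spirit. For $0 < p < \Delta$ or $p > 1$, once $\delta$ is small enough, $H_{L,\Delta,\delta}[\phi]$ is constant in a neighborhood of $p$, so $H^{(\ell)}_{L,\Delta,\delta}[\phi](p) = 0 = T^{(\ell)}_\Delta[\phi](p)$. For $\Delta < p < 1$, $H_{L,\Delta,\delta}[\phi]$ coincides with $\phi$ in a neighborhood, so $H^{(\ell)}_{L,\Delta,\delta}[\phi](p) = \phi^{(\ell)}(p) = T^{(\ell)}_\Delta[\phi](p)$. At $p = \Delta$ and $p = 1$, I would invoke the Hermite interpolation identity $H^{(i)}_L(a;\phi,a,b) = \phi^{(i)}(a)$ for $i = 0,\ldots,L$ already recorded in the excerpt right after the definition of $H_L$; this gives $H^{(\ell)}_{L,\Delta,\delta}[\phi](\Delta) = \phi^{(\ell)}(\Delta) = T^{(\ell)}_\Delta[\phi](\Delta)$, and similarly at $p = 1$.

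The only point that is not entirely mechanical, and thus the main thing to check in the write-up, is that $H_{L,\Delta,\delta}[\phi]$ is genuinely $L$-times continuously differentiable at the glueing points $p=\Delta$ and $p=1$, so that the $\ell$th derivative there is unambiguous. This requires that the Hermite bridges match $\phi$ in all derivatives up to order $L$ on the $\phi$-side of each threshold, and vanish in all derivatives of order $1,\ldots,L$ on the constant-side; both requirements are exactly the two defining properties $H^{(i)}_L(a;\phi,a,b) = \phi^{(i)}(a)$ and $H^{(i)}_L(b;\phi,a,b) = 0$ for $i = 1,\ldots,L$ already stated after the definition of $H_L$. Consequently, no nontrivial limit computation is required; the lemma follows purely from the piecewise structure of $H_{L,\Delta,\delta}[\phi]$ and the Hermite matching identities.
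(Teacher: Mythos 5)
Your proposal is correct and follows essentially the same strategy as the paper's proof: a region-by-region case analysis in which, for each fixed $p$, one shows that $H_{L,\Delta,\delta}[\phi]$ and its derivatives eventually coincide with $T_\Delta[\phi]$ and $T^{(\ell)}_\Delta[\phi]$ once $\delta$ is small enough, handling the glueing points $p=\Delta$ and $p=1$ via the Hermite identities $H^{(\ell)}_L(a;\phi,a,b)=\phi^{(\ell)}(a)$. The only difference is that you make explicit the $C^L$-smoothness check at the glueing points, which the paper takes for granted from its remark preceding the lemma.
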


Let $\bar\phi_{2,L,\Delta,\delta}$ and $\bar\phi_{4,L,\Delta,\delta}$ be functions in which the truncated function $T_\Delta$ and $T^{(\ell)}_\Delta$ are replaced by $H_{L,\Delta,\delta}$ and $H^{(\ell)}_{L,\Delta,\delta}$, respectively. Then, by \cref{lem:converge-hermite} and Fatou--Lebesgue theorem, we have
\begin{align}
 & \Bias\bracket*{\bar\phi_{2,\Delta}\paren*{\frac{\tilde{N}}{n}} - \phi(p)} \\
 =& \Bias\bracket*{\lim_{\delta \downarrow 0}\bar\phi_{2,L,\Delta,\delta}\paren*{\frac{\tilde{N}}{n}} - \phi(p)} \\ =& \lim_{\delta \downarrow 0}\Bias\bracket*{\bar\phi_{2,L,\Delta,\delta}\paren*{\frac{\tilde{N}}{n}} - \phi(p)}.
\end{align}
Similarly, we have
\begin{align}
   &\Bias\bracket*{\bar\phi_{4,\Delta}\paren*{\frac{\tilde{N}}{n}} - \phi(p)} \\
  =& \lim_{\delta \downarrow 0}\Bias\bracket*{\bar\phi_{4,L,\Delta,\delta}\paren*{\frac{\tilde{N}}{n}} - \phi(p)},
\end{align}
\begin{align}
  \Var\bracket*{\bar\phi_{2,\Delta}\paren*{\frac{\tilde{N}}{n}}} =& \lim_{\delta \downarrow 0}\Var\bracket*{\bar\phi_{2,L,\Delta,\delta}\paren*{\frac{\tilde{N}}{n}}}, \textand \\
  \Var\bracket*{\bar\phi_{4,\Delta}\paren*{\frac{\tilde{N}}{n}}}
  =& \lim_{\delta \downarrow 0}\Var\bracket*{\bar\phi_{4,L,\Delta,\delta}\paren*{\frac{\tilde{N}}{n}}},
\end{align}
Thus, by deriving bounds on the bias and variance of the plugin estimator for $\bar\phi_{2,L,\Delta,\delta}$ and $\bar\phi_{4,L,\Delta,\delta}$, we obtain bounds on the bias and variance of the plugin estimator for $\bar\phi_{2,\Delta}$ and $\bar\phi_{4,\Delta}$.

A benefit of using the truncation is that it prevents a large derivative around $p = 0$. The functional $H_{L,\Delta,\delta}$ inherits this property, as shown in the following lemma.
\begin{lemma}\label{lem:hermite-bound}
  Let $\phi:[0,1]\to\RealSet$ be a function of which $L$the divergence speed is $p^\alpha$, where $L > \alpha$ is an universal constant. If $\ell \le L$ and $\beta > 0$ such that $\ell > \alpha + \beta$, for any $\delta \in (0,1)$, and a decreasing sequence of $\Delta \in (0,1)$, we have
  \begin{align}
    \sup_{p > 0}p^\beta\abs*{H^{(\ell)}_{L,\Delta,\delta}[\phi](p)} \lesssim \Delta^{\alpha+\beta-\ell}.
  \end{align}
  Moreover, if $\ell \le L$ and $\beta$ such that $1 \le \ell \le \alpha + \beta$, for any $\delta > 0$, and a decreasing sequence of $\Delta$, we have
  \begin{align}
    \sup_{p > 0}p^\beta\abs*{H^{(\ell)}_{L,\Delta,\delta}[\phi](p)} \lesssim 1.
  \end{align}
\end{lemma}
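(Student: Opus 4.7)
The plan is to split $\sup_{p>0} p^\beta \abs*{H_{L,\Delta,\delta}^{(\ell)}[\phi](p)}$ into the five pieces of the piecewise definition of $H_{L,\Delta,\delta}[\phi]$ and bound each separately. On the two flat regions $p \le \Delta(1-\delta)$ and $p \ge 1+\delta$ every positive-order derivative vanishes. On the interior $\Delta < p < 1$ the derivative equals $\phi^{(\ell)}(p)$, so the divergence-speed assumption gives $p^\beta\abs*{\phi^{(\ell)}(p)} \lesssim p^{\alpha+\beta-\ell} + p^\beta$. Under $\ell > \alpha+\beta$ the first exponent is negative and $p \ge \Delta$ yields $p^{\alpha+\beta-\ell} \le \Delta^{\alpha+\beta-\ell}$; under $1 \le \ell \le \alpha+\beta$ both exponents are non-negative and $p \le 1$ keeps the bound of order one.

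The real work lies in the lower Hermite region $\Delta(1-\delta) < p \le \Delta$. Applying Leibniz's rule to the explicit formula for $H_L(p;\phi,\Delta,\Delta(1-\delta))$, with $u = (p-\Delta)/(-\Delta\delta) \in [0,1]$ and $Q_m(u) = \sum_{j=0}^{L-m}\frac{L+1}{L+j+1}\Beta_{j,L+j+1}(u)$, the $\ell$th derivative of the summand $\frac{\phi^{(m)}(\Delta)}{m!}(p-\Delta)^m Q_m(u)$ expands as a finite combination over $0 \le s \le \min(\ell,m)$ of terms $(p-\Delta)^{m-s}(-\Delta\delta)^{s-\ell}Q_m^{(\ell-s)}(u)$. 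Using $\abs*{p-\Delta} \le \Delta\delta$, the uniform boundedness of each $Q_m^{(\ell-s)}$ on $[0,1]$, and absorbing the fixed $\delta \in (0,1)$ into the implicit constant yields
\[
\abs*{H_L^{(\ell)}(p;\phi,\Delta,\Delta(1-\delta))} \lesssim \sum_{m=1}^{L}\abs*{\phi^{(m)}(\Delta)}\,\Delta^{m-\ell}.
\]
I would then estimate $\abs*{\phi^{(m)}(\Delta)}$ via the divergence-speed calculus: iterating \cref{lem:lower-div} downward from $L$ gives $\abs*{\phi^{(m)}(\Delta)} \lesssim \Delta^{\alpha-m}+1$ for every integer $m$ with $\alpha < m \le L$, while \cref{lem:lowest-div} and \cref{lem:lowest-int-div} give $\abs*{\phi^{(m)}(\Delta)} \lesssim 1$, up to at most a logarithmic factor, for $m \le \alpha$. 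The divergent contributions collapse to order $\Delta^{\alpha-\ell}$ per term, and multiplying by $p^\beta \asymp \Delta^\beta$ produces the envelope $\Delta^{\alpha+\beta-\ell}$, which matches the target under $\ell > \alpha+\beta$ and is $\lesssim 1$ under $1 \le \ell \le \alpha+\beta$.

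The upper Hermite region $1 \le p < 1+\delta$ is strictly easier: $p \asymp 1$ makes $p^\beta \asymp 1$, and the anchor point $p=1$ is in the interior of $\phi$'s smoothness domain, so each $\abs*{\phi^{(m)}(1)}$ is finite and the same expansion yields an $O(1)$ bound. The main obstacle will be the low-$m$ contributions in the lower Hermite sum: the naive per-term bound $\Delta^{\beta+m-\ell}$ for $m < \ell - \beta$ can nominally exceed the target envelope, so I would need to exploit the Hermite matching property $H_L^{(i)}(\Delta) = \phi^{(i)}(\Delta)$ for $i \le L$ to recognise the cancellations among these terms---equivalently, re-expanding $H_L^{(\ell)}$ as a Taylor polynomial around $\Delta$ identifies its first $L-\ell+1$ coefficients with $\phi^{(\ell+r)}(\Delta)/r!$, removing the spurious low-order divergences and leaving only contributions consistent with the claimed envelope.
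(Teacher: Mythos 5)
Your piecewise decomposition of $H_{L,\Delta,\delta}[\phi]$ (two flat regions, the interior where $H^{(\ell)}=\phi^{(\ell)}$, and the two Hermite windows), the Leibniz expansion of the Hermite summands, and the use of the divergence-speed calculus to bound $\phi^{(m)}(\Delta)$ are exactly the paper's moves. Your per-term expansion $(p-\Delta)^{m-s}(-\Delta\delta)^{s-\ell}Q_m^{(\ell-s)}(u)$ is the correct chain-rule form. The flat regions and the interior region are handled in the same way.

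The gap is in how you close the lower Hermite window. Writing "absorbing the fixed $\delta\in(0,1)$ into the implicit constant" is not permissible here: this lemma is invoked together with \cref{lem:converge-hermite} and the Fatou--Lebesgue exchange in \cref{sec:upper1}, where the final step is literally "the arbitrariness of $\delta>0$ gives the desired claim" after letting $\delta\downarrow 0$. That requires the constant in $\lesssim$ to be uniform in $\delta$. Your combined per-term estimate is $\abs*{\phi^{(m)}(\Delta)}(\Delta\delta)^{m-\ell}$, and for $m<\ell$ the factor $\delta^{m-\ell}$ diverges as $\delta\to 0$; so what you have proved is a $\delta$-dependent bound, which is a materially weaker statement. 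Your instinct in the final paragraph that the low-$m$ contributions are "the main obstacle" is correct, but the proposed cure---re-expanding $H_L^{(\ell)}$ as a Taylor polynomial around $\Delta$ via the Hermite matching property---only fixes the value at the anchor $p=\Delta$, where $H_L^{(\ell)}(\Delta)=\phi^{(\ell)}(\Delta)$. It gives no control over the supremum across the shrinking interval $(\Delta(1-\delta),\Delta)$, which is exactly where the individual $m<\ell$ Leibniz terms are of order $(\Delta\delta)^{m-\ell}$; you have not shown that those terms cancel there.

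For contrast, the paper's derivation is organized so that no negative power of $\delta$ ever appears: it keeps the factor $(p-a)^{m-u}$ with $m\ge u$ throughout, bounds $\abs*{p-a}^{m-u}\le(\Delta\delta)^{m-u}\le 1$, bounds the differentiated Bernstein pieces via the $\delta$-free supremum formula $\sup_x\abs*{\Beta_{\nu,n}(x)}=\nu^\nu n^{-n}(n-\nu)^{n-\nu}\binom{n}{\nu}$, and applies the uniform estimate $\abs*{\phi^{(m)}(\Delta)}\le(W_m+c_m)\Delta^{\alpha-m}$ to every $m\ge1$ before factoring out $\Delta^{\alpha+\beta-\ell}$. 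Replicating the lemma therefore requires either a much more careful accounting of the Bernstein chain-rule factor than "absorb $\delta$", or a genuine uniform-in-$p$ cancellation argument; your proposal currently supplies neither.
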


{\bfseries Bounds on the Taylor's reminder term.}
We analyze the bias and variance of the plugin estimator for $\bar\phi_{2,L,\Delta,\delta}$ and $\bar\phi_{4,L,\Delta,\delta}$ by using the Taylor theorem. For example, the bias of the plugin estimator of $\bar\phi_{2,L,\Delta,\delta}$ is obtained as
\begin{multline}
   \Bias\bracket*{\bar\phi_{2,4,\Delta,\delta}\paren*{\frac{\tilde{N}}{n}} - \phi(p)} = \\
     \abs[\Bigg]{\frac{p}{6n^2}\phi^{(3)}(p) + \Mean\bracket*{R_3\paren*{\frac{\tilde{N}}{n};H_{4,\Delta,\delta}[\phi],p}} \\ - \frac{1}{2n}\Mean\bracket*{R_1\paren*{\frac{\tilde{N}}{n};x \to xH^{(2)}_{4,\Delta,\delta}[\phi](x),p}}}, \label{eq:plugin-taylor}
\end{multline}
where $R_\ell(x;f,a)$ denotes the $\ell$th-order reminder term of a function $f$ at the point $a$. The precise derivation of this equation can be found in the proof of \cref{lem:plugin-bias-0-1} shown in below. Here, we introduce a technique to derive bounds on the second and third terms in \cref{eq:plugin-taylor}.

There are some explicit formulas of the Taylor's reminder term, such as he Lagrange form and the Cauchy form. However, the direct application of these form cannot yield the desired bound. We therefore derive the following lemma that provides a bound on the Taylor's reminder term by using the mean value theorem.
\begin{lemma}\label{lem:reminder-bound1}
  Let $\tilde{N} \sim \Poi(np)$ for $p \in (0,1)$ such that $p \gtrsim n^{-1}$. For an integer $\ell \ge 1$, suppose $g$ is a $2\ell$ times continuously differentiable function on $(0,\infty)$. For an integer $r$ such that $1 \le r < 2\ell$,
  \begin{align}
    \abs*{\Mean\bracket*{R_{2\ell-1}\paren*{\frac{\tilde{N}}{n};g,p}}} \lesssim \sup_{\xi > 0}\abs*{\xi^{r}g^{(2\ell)}(\xi)}\frac{p^{\ell-r}}{n^{\ell}}.
  \end{align}
\end{lemma}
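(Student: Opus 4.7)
Plan: I would begin by applying the Lagrange form of Taylor's theorem (a direct consequence of the mean value theorem) to write
\[ R_{2\ell-1}\paren*{\tfrac{\tilde N}{n};g,p}=\frac{g^{(2\ell)}(\xi)}{(2\ell)!}\paren*{\tfrac{\tilde N}{n}-p}^{2\ell} \]
for some $\xi$ strictly between $p$ and $\tilde N/n$ that depends on $\tilde N$. Denote $M:=\sup_{\xi>0}\abs*{\xi^{r}g^{(2\ell)}(\xi)}$; if $M=\infty$ the claim is vacuous, and otherwise the hypothesis gives the pointwise control $\abs*{g^{(2\ell)}(\xi)}\le M\xi^{-r}$. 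Passing the absolute value inside the expectation and splitting on the Poisson concentration event $\event=\cbrace{\tilde N\ge np/2}$ reduces the problem to handling $\event$ and $\event^{c}$ separately. The standing hypothesis $p\gtrsim n^{-1}$ ensures $np\gtrsim 1$, which is what allows one to dominate the Poisson $2\ell$-th central moment by its high-$\lambda$ leading term.

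On $\event$, every intermediate point between $p$ and $\tilde N/n$ exceeds $p/2$, so $\xi^{-r}\le (2/p)^{r}$ uniformly, and the standard identity that $\Mean\bracket*{(\tilde N-np)^{2\ell}}$ is a polynomial of degree $\ell$ in $np$ gives $\Mean\bracket*{(\tilde N-np)^{2\ell}}\lesssim (np)^{\ell}$ whenever $np\gtrsim 1$. Combining these yields
\[ \Mean\bracket*{\abs*{R_{2\ell-1}\paren*{\tfrac{\tilde N}{n};g,p}}\ind{\event}}\lesssim Mp^{-r}\cdot p^{\ell}/n^{\ell}=Mp^{\ell-r}/n^{\ell}, \]
which is exactly the advertised rate. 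On $\event^{c}$ the intermediate point $\xi$ can come arbitrarily close to $0$, so I would switch to the integral form of the remainder,
\[ R_{2\ell-1}(x;g,p)=\frac{1}{(2\ell-1)!}\int_{x}^{p}(t-x)^{2\ell-1}g^{(2\ell)}(t)\,dt, \]
valid for $x\le p$, bound $\abs*{g^{(2\ell)}(t)}\le Mt^{-r}$ inside, and use $t\ge t-x$ together with $r<2\ell$ to obtain the uniform pointwise estimate $\abs*{R_{2\ell-1}(x;g,p)}\lesssim Mp^{2\ell-r}$ for all $x\in[0,p]$. The Chernoff lower-tail bound $\p\cbrace{\tilde N<np/2}\lesssim e^{-np/8}$ then supplies an exponentially small probability, and the factorization $p^{2\ell-r}e^{-np/8}=(p^{\ell-r}/n^{\ell})\cdot(np)^{\ell}e^{-np/8}$ together with $\sup_{x\ge 0}x^{\ell}e^{-x/8}<\infty$ makes the contribution from $\event^{c}$ also of order $Mp^{\ell-r}/n^{\ell}$.

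I expect the principal obstacle to be precisely the low-tail regime where $\tilde N\ll np$. There the mean value point $\xi$ can lie arbitrarily close to $0$, and since the hypothesis only controls the product $\xi^{r}g^{(2\ell)}(\xi)$, the derivative $g^{(2\ell)}$ itself may blow up like $\xi^{-r}$, so the Lagrange form alone does not suffice. The resolution relies on using two ingredients in concert: the integrability near $0$ of $t^{-r}$ against the $(t-x)^{2\ell-1}$ weight in the integral form, which is guaranteed precisely by the hypothesis $r<2\ell$, and the exponential decay of the Poisson lower tail, which absorbs the surplus polynomial factor $(np)^{\ell}$ into a harmless constant. The hypothesis $p\gtrsim n^{-1}$ enters only on $\event$, where it lets one dominate the subleading linear-in-$np$ summand of the Poisson central moment by the leading $(np)^{\ell}$ summand.
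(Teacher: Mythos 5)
Your proposal is correct, and it arrives at the stated bound by a genuinely different route than the paper. The paper applies a single generalized (Schl\"omilch/Cauchy-type) mean-value form of the Taylor remainder with the auxiliary comparison function $G(x)=x^{-r}(\hat p-x)^{2\ell}$, which yields
\[
R_{2\ell-1}(\hat p;g,p)=\frac{\xi^{r+1}g^{(2\ell)}(\xi)}{(2\ell-1)!}\cdot\frac{(\hat p-p)^{2\ell}}{p^{r}\bigl((2\ell-r)\xi+r\hat p\bigr)}
\]
for some $\xi$ between $p$ and $\hat p$; since $(2\ell-r)\xi+r\hat p\ge(2\ell-r)\xi$ for $\xi,\hat p\ge 0$ and $r<2\ell$, the $\xi$-dependent prefactor is bounded by $\xi^{r}/(2\ell-r)$ uniformly, so one takes $\sup_{\xi>0}|\xi^{r}g^{(2\ell)}(\xi)|$ out in a single step and no event splitting is needed. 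You instead split on the Poisson bulk event $\{\tilde N\ge np/2\}$, where the Lagrange form plus the uniform bound $\xi\ge p/2$ suffices, and handle the lower tail with the integral form, using $t^{-r}\le(t-x)^{-r}$ and $r<2\ell$ to get the uniform pointwise bound $|R_{2\ell-1}(x;g,p)|\lesssim Mp^{2\ell-r}$ on $[0,p]$, then absorbing the surplus $(np)^\ell$ into the Chernoff decay $e^{-np/8}$. Both proofs rely identically on the fact that the $2\ell$-th Poisson central moment is a degree-$\ell$ polynomial in $np$ together with $np\gtrsim 1$ to collapse the moment bound to $(np)^\ell/n^{2\ell}=p^\ell/n^\ell$. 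The paper's derivation is shorter and purely algebraic; yours is longer and uses only the two most standard remainder forms plus tail concentration, which is arguably more transparent. One cosmetic point: your preamble begins by writing the Lagrange form for all $\tilde N$, but you only actually use it on the bulk event and switch to the integral form on the tail --- it would read more cleanly to state the event split first and then introduce the appropriate remainder form on each piece.
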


\cref{lem:reminder-bound1} will be used for deriving the bounds on the bias. The next lemma gives a useful bound for analyzing the variance.
\begin{lemma}\label{lem:reminder-bound2}
  Let $\tilde{N} \sim \Poi(np)$ for $p \in (0,1)$ such that $p \gtrsim n^{-1}$. For an integer $\ell \ge 1$, suppose $g$ is a $\ell$ times continuously differentiable function on $(0,\infty)$. For an integer $r$ such that $1 \le r < 2\ell$,
  \begin{align}
    \Mean\bracket*{\paren*{R_{\ell-1}\paren*{\frac{\tilde{N}}{n};g,p}}^2} \lesssim \sup_{\xi > 0}\abs*{\xi^{r}\paren*{g^{(\ell)}(\xi)}^2}\frac{p^{\ell-r}}{n^{\ell}}.
  \end{align}
\end{lemma}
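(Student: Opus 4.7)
The plan is to start from the integral form of the Taylor remainder,
\begin{align}
R_{\ell-1}(x; g, p) = \frac{1}{(\ell-1)!}\int_p^x (x-t)^{\ell-1} g^{(\ell)}(t)\,dt,
\end{align}
and factor the possible blow-up of $g^{(\ell)}$ out of the integrand. Setting $M^2 := \sup_{\xi > 0}\xi^r(g^{(\ell)}(\xi))^2$, the hypothesis yields the pointwise bound $\abs*{g^{(\ell)}(t)} \le M t^{-r/2}$ on $(0,\infty)$, so it suffices to control the expectation of $M^2 \paren*{\int_p^{\tilde N/n} \abs*{\tilde N/n-t}^{\ell-1}t^{-r/2}\,dt}^2$.

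I would then split along the three events $A_1 = \cbrace{\tilde N = 0}$, $A_2 = \cbrace{1 \le \tilde N < np/2}$, and $A_3 = \cbrace{\tilde N \ge np/2}$. On $A_3$ the integration variable satisfies $t \ge p/2$, so replacing $t^{-r/2}$ by $(p/2)^{-r/2}$ gives $R_{\ell-1}^2 \lesssim M^2 p^{-r}(\tilde N/n - p)^{2\ell}$. Taking expectation and invoking the standard central-moment bound $\Mean\bracket*{(\tilde N/n - p)^{2\ell}} \lesssim p^\ell/n^\ell$, which holds since $np \gtrsim 1$, produces the target rate $M^2 p^{\ell-r}/n^\ell$ from this event alone.

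On $A_2$, using $t \ge \tilde N/n \ge 1/n$ to bound $t^{-r/2} \le n^{r/2}$ and $\abs*{\tilde N/n - p} \le p$ yields $R_{\ell-1}^2 \lesssim M^2 n^r p^{2\ell}$, which combined with the Poisson Chernoff bound $\p\cbrace{A_2} \lesssim e^{-cnp}$ contributes at most $M^2 (np)^{\ell+r} \cdot p^{\ell-r}n^{-\ell} e^{-cnp}$. On $A_1$, the integral $\int_0^p t^{\ell-1-r/2}\,dt$ converges precisely because the hypothesis $r < 2\ell$ guarantees $\ell - 1 - r/2 > -1$, so $R_{\ell-1}^2 \lesssim M^2 p^{2\ell-r}$ with probability $e^{-np}$, contributing at most $M^2 (np)^\ell \cdot p^{\ell-r} n^{-\ell} e^{-np}$. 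Both contributions are absorbed into $M^2 p^{\ell-r}/n^\ell$ using $\sup_{x \ge 0}x^m e^{-cx} < \infty$ for the relevant exponents.

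The main obstacle is exactly the regime where $\tilde N$ is much smaller than $np$: this is where the pointwise bound on $g^{(\ell)}$ can blow up, and where the Lagrange form of the remainder would produce a useless estimate via an uncontrolled intermediate point $\xi$. Working with the integral form instead, the sharp threshold $r < 2\ell$ simultaneously makes the integral over $[0,p]$ finite at $\tilde N = 0$ and lets the polynomial factor $n^r$ on $A_2$ be dominated by the Poisson lower-tail exponential $e^{-cnp}$; the argument then parallels the one used for \cref{lem:reminder-bound1}, with the absolute values of the bias-type bound replaced by the squared quantities appropriate for variance.
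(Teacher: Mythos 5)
Your proof is correct, but it takes a genuinely different route from the paper's. The paper applies a Schl\"omilch (weighted mean-value) form of the Taylor remainder: it picks the auxiliary function $G(x) = x^{-r/2}(\hat p - x)^\ell$ so that, by Cauchy's mean-value theorem, there is a single intermediate $\xi$ for which $\bigl(R_{\ell-1}(\hat p;g,p)\bigr)^2$ equals $\xi^{2+r}\bigl(g^{(\ell)}(\xi)\bigr)^2 \cdot 4(\hat p - p)^{2\ell}/\bigl(p^r((2\ell-r)\xi + r\hat p)^2 (\ell-1)!\bigr)$; the $\xi^{2+r}$ numerator then cancels the $\xi^2$ coming from the denominator's $(2\ell-r)\xi$ term, leaving precisely the weighted factor $\xi^r\bigl(g^{(\ell)}(\xi)\bigr)^2$ that can be uniformly bounded and pulled outside the expectation. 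The potential divergence of $g^{(\ell)}$ near $0$ is thus absorbed algebraically in one step, with no case analysis and no appeal to Poisson concentration. You instead use the integral form, extract the pointwise bound $|g^{(\ell)}(t)| \lesssim t^{-r/2}$, and handle the small-$\tilde N$ regime explicitly via the events $\{\tilde N = 0\}$, $\{1 \le \tilde N < np/2\}$, $\{\tilde N \ge np/2\}$ plus a Chernoff lower-tail bound. Both arrive at the same estimate; the paper's weighted mean-value trick is more compact and avoids any concentration inequality, while yours is more elementary and makes transparent exactly how the hypothesis $r < 2\ell$ is what saves the $\tilde N = 0$ case. Your event split is also self-contained in a way that the paper's appeal to the Schl\"omilch form is not.
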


{\bfseries Proofs of \cref{lem:plugin-bias-0-1,lem:plugin-var-0-1,lem:plugin-bias-1-3/2,lem:plugin-var-1-3/2}.}
By taking advantage of these techniques, we prove \cref{lem:plugin-bias-0-1,lem:plugin-var-0-1,lem:plugin-bias-1-3/2,lem:plugin-var-1-3/2}.
\begin{proof}[Proof of \cref{lem:plugin-bias-0-1}]
  Application of the Taylor theorem yields
  \begin{align}
      & H_{4,\Delta,\delta}[\phi]\paren*{\frac{\tilde{N}}{n}} - \phi(p) \\
      =& \begin{multlined}[t]
       \phi^{(1)}(p)\paren*{\frac{\tilde{N}}{n} - p} + \frac{\phi^{(2)}(p)}{2}\paren*{\frac{\tilde{N}}{n} - p}^2 \\ + \frac{\phi^{(3)}(p)}{6}\paren*{\frac{\tilde{N}}{n} - p}^3 + R_3\paren*{\frac{\tilde{N}}{n};H_{4,\Delta,\delta}[\phi],p}.
      \end{multlined}
  \end{align}
  For $X \sim \Poi(\lambda)$, $\Mean[(X-\lambda)] = 0$, $\Mean[(X-\lambda)^2] = \lambda$, and $\Mean[(X-\lambda)^3] = \lambda$. Thus, we have
  \begin{align}
      & \Mean\bracket*{H_{4,\Delta,\delta}[\phi]\paren*{\frac{\tilde{N}}{n}} - \phi(p)} \\
      =& \frac{p\phi^{(2)}(p)}{2n} + \frac{p\phi^{(3)}(p)}{6n^2} + \Mean\bracket*{R_3\paren*{\frac{\tilde{N}}{n};H_{4,\Delta,\delta}[\phi],p}}.
  \end{align}
  Again, using the Taylor theorem, we have
  \begin{align}
      & \frac{p\phi^{(2)}(p)}{2n} - \frac{\tilde{N}}{2n^2}H^{(2)}_{4,\Delta,\delta}[\phi]\paren*{\frac{\tilde{N}}{n}} \\
      =& \begin{multlined}[t]
        \frac{\phi^{(2)}(p)+p\phi^{(3)}(p)}{2n}\paren*{p - \frac{\tilde{N}}{n}} \\ - \frac{1}{2n}R_1\paren*{\frac{\tilde{N}}{n};x \to xH^{(2)}_{4,\Delta,\delta}[\phi](x),p}.
      \end{multlined}
  \end{align}
  Hence,
  \begin{align}
    & \Mean\bracket*{\frac{p\phi^{(2)}(p)}{2n} - \frac{\tilde{N}}{2n^2}H^{(2)}_{4,\Delta,\delta}[\phi]\paren*{\frac{\tilde{N}}{n}}} \\
    =& -\frac{1}{2n}\Mean\bracket*{R_1\paren*{\frac{\tilde{N}}{n};x \to xH^{(2)}_{4,\Delta,\delta}[\phi](x),p}}.
  \end{align}
  Consequentially, we have
  \begin{align}
      & \Bias\bracket*{\bar\phi_{2,4,\Delta,\delta}\paren*{\frac{\tilde{N}}{n}} - \phi(p)} \\
      =& \begin{multlined}[t]
        \abs[\Bigg]{\frac{p\phi^{(3)}(p)}{6n^2} + \Mean\bracket*{R_3\paren*{\frac{\tilde{N}}{n};H_{4,\Delta,\delta}[\phi],p}} \\ - \frac{1}{2n}\Mean\bracket*{R_1\paren*{\frac{\tilde{N}}{n};x \to xH^{(2)}_{4,\Delta,\delta}[\phi](x),p}} }
      \end{multlined} \\
      \le& \begin{multlined}[t]
        \frac{p\abs*{\phi^{(3)}(p)}}{6n^2} \abs*{\Mean\bracket*{R_3\paren*{\frac{\tilde{N}}{n};H_{4,\Delta,\delta}[\phi],p}}} \\ + \frac{1}{2n}\abs*{\Mean\bracket*{R_1\paren*{\frac{\tilde{N}}{n};x \to xH^{(2)}_{4,\Delta,\delta}[\phi](x),p}} }.
      \end{multlined}
  \end{align}

  From \cref{lem:lower-div}, we have
  \begin{align}
    \frac{p\abs*{\phi^{(3)}(p)}}{6n^2} \le \frac{W_3p^{\alpha-2} + c_3p}{6n^2} \lesssim \frac{\Delta^{\alpha-2}}{n^2}.
  \end{align}
  Noting that the second derivative of the function $x \to xH^{(2)}_{4,\Delta,\delta}[\phi](x)$ is $2H^{(3)}_{4,\Delta,\delta}[\phi](x) + xH^{(4)}_{4,\Delta,\delta}[\phi](x)$, by \cref{lem:hermite-bound,lem:reminder-bound1}, we have
  \begin{align}
    \abs*{\Mean\bracket*{R_3\paren*{\frac{\tilde{N}}{n};H_{4,\Delta,\delta}[\phi],p}}} \lesssim& \frac{\Delta^{\alpha-2}}{n^2}, \\
    \frac{1}{2n}\abs*{\Mean\bracket*{R_1\paren*{\frac{\tilde{N}}{n};x \to xH^{(2)}_{4,\Delta,\delta}[\phi](x),p}} } \lesssim& \frac{\Delta^{\alpha-2}}{n^2}.
  \end{align}
  The arbitrariness of $\delta > 0$ gives the desired claim.
\end{proof}
\begin{proof}[Proof of \cref{lem:plugin-var-0-1}]
  Since
  \begin{align}
    \Var\bracket*{\bar\phi_{2,4,\Delta,\delta}\paren*{\frac{\tilde{N}}{n}}} \le \Mean\bracket*{\paren*{\bar\phi_{2,4,\Delta,\delta}\paren*{\frac{\tilde{N}}{n}} - \phi_2(p)}^2},
  \end{align}
  application of the Taylor theorem and triangle inequality gives
  \begin{multline}
    \frac{1}{6}\Var\bracket*{\bar\phi_{2,4,\Delta,\delta}\paren*{\frac{\tilde{N}}{n}}} \le \\
      \paren*{\phi^{(1)}(p)}^2\Mean\bracket*{\paren*{\frac{\tilde{N}}{n} - p}^2} + \frac{\paren*{\phi^{(2)}(p)}^2}{4}\Mean\bracket*{\paren*{\frac{\tilde{N}}{n} - p}^4} \\ + \frac{\paren*{\phi^{(3)}(p)}^2}{36}\Mean\bracket*{\paren*{\frac{\tilde{N}}{n} - p}^6} \\ + \Mean\bracket*{\paren*{R_3\paren*{\frac{\tilde{N}}{n};H_{4,\Delta,\delta}[\phi],p}}^2} \\ + \frac{\paren*{\phi^{(2)}(p)+p\phi^{(3)}(p)}^2}{4n^2}\Mean\bracket*{\paren*{\frac{\tilde{N}}{n} - p}^2} \\ + \frac{1}{4n^2}\Mean\bracket*{\paren*{R_1\paren*{\frac{\tilde{N}}{n};x \to xH^{(2)}_{4,\Delta,\delta}[\phi](x),p}}^2 }.
  \end{multline}
  Hence,
  \begin{multline}
    \frac{1}{6}\Var\bracket*{\bar\phi_{2,4,\Delta,\delta}\paren*{\frac{\tilde{N}}{n}}} \le \\
    \frac{p\paren*{\phi^{(1)}(p)}^2}{n} + \frac{(3np+1)p\paren*{\phi^{(2)}(p)}^2}{4n^3} \\ + \frac{(15n^2p^2+25np+1)p\paren*{\phi^{(3)}(p)}^2}{36n^5} \\ + \Mean\bracket*{\paren*{R_3\paren*{\frac{\tilde{N}}{n};H_{4,\Delta,\delta}[\phi],p}}^2} \\ + \frac{p\paren*{\phi^{(2)}(p)+p\phi^{(3)}(p)}^2}{4n^3} \\ + \frac{1}{4n^2}\Mean\bracket*{\paren*{R_1\paren*{\frac{\tilde{N}}{n};x \to xH^{(2)}_{4,\Delta,\delta}[\phi](x),p}}^2 },
  \end{multline}
  where we use $\Mean[(X-\lambda)^6]=15\lambda^3+25\lambda^2+\lambda$ for $X \sim \Poi(\lambda)$.

  From \cref{lem:lower-div}, we have
  \begin{align}
     & \frac{(3np+1)p\paren*{\phi^{(2)}(p)}^2}{4n^3} \\ \le& \frac{(3np+1)p\paren*{W_2p^{\alpha-2}+c_2}^2}{4n^3} \lesssim \frac{p^{2\alpha-2}}{n^2} + \frac{p}{n^2}, \\
     & \frac{(15n^2p^2+25np+1)p\paren*{\phi^{(3)}(p)}^2}{36n^5} \\ \le& \frac{(15n^2p^2+25np+1)p\paren*{W_3p^{\alpha-3} + c_3}^2}{36n^5} \lesssim \frac{p^{2\alpha-3}}{n^3} + \frac{p}{n^3}, \\
     & \frac{p\paren*{\phi^{(2)}(p)+p\phi^{(3)}(p)}^2}{4n^3} \\ \le& \frac{p\paren*{(W_2+W_3)p^{\alpha-2}+c_2+pc_3}^2}{4n^3} \lesssim \frac{p^{2\alpha-3}}{n^3} + \frac{p}{n^3}.
  \end{align}
  By \cref{lem:hermite-bound,lem:reminder-bound2}, we have
  \begin{align}
     \Mean\bracket*{\paren*{R_3\paren*{\frac{\tilde{N}}{n};H_{4,\Delta,\delta}[\phi],p}}^2} \lesssim& \frac{\Delta^{2\alpha-4}}{n^4}, \\
     \frac{1}{4n^2}\Mean\bracket*{\paren*{R_1\paren*{\frac{\tilde{N}}{n};x \to xH^{(2)}_{4,\Delta,\delta}[\phi](x),p}}^2 } \lesssim& \frac{\Delta^{2\alpha-4}}{n^4}.
  \end{align}
  
  If $\alpha \in (0,1)$, from \cref{lem:lower-div}, we have
  \begin{align}
      \frac{p\paren*{\phi^{(1)}(p)}^2}{n} \le& \frac{p\paren*{W_1p^{\alpha-1}+c_1}^2}{n} \lesssim \frac{p^{2\alpha-1}}{n} + \frac{p}{n}.
  \end{align}
  If $\alpha = 1$, from \cref{lem:div-speed-holder3}, we have
  \begin{align}
      \frac{p\paren*{\phi^{(1)}(p)}^2}{n} \le \frac{p\paren*{W_1\ln(1/p) + c_1}^2}{n} \lesssim \frac{p\ln^2p}{n} + \frac{1}{n}
  \end{align}
  The arbitrariness of $\delta > 0$ gives the desired claim.
\end{proof}
\begin{proof}[Proof of \cref{lem:plugin-bias-1-3/2}]
  In the same manner of the proof of \cref{lem:plugin-bias-0-1}, we have
  \begin{multline}
      \Mean\bracket*{H_{6,\Delta,\delta}[\phi]\paren*{\frac{\tilde{N}}{n}} - \phi(p)} = \\ \frac{p\phi^{(2)}(p)}{2n} + \frac{p\phi^{(3)}(p)}{6n^2} + \frac{(3np+1)p\phi^{(4)}(p)}{24n^3} \\ + \frac{(10np + 1)p\phi^{(5)}(p)}{120n^4} + \Mean\bracket*{R_5\paren*{\frac{\tilde{N}}{n};H_{6,\Delta,\delta}[\phi],p}},
  \end{multline}
  where we use $\Mean[(X-\lambda)^4] = 3\lambda^2+\lambda$ and $\Mean[(X-\lambda)^5]=10\lambda^2+\lambda$ for $X \sim \Poi(\lambda)$. Besides, we have
  \begin{multline}
    \Mean\bracket*{\frac{p\phi^{(2)}(p)}{2n} - \frac{\tilde{N}}{2n^2}H^{(2)}_{6,\Delta,\delta}[\phi]\paren*{\frac{\tilde{N}}{n}}} = \\ - \frac{2p\phi^{(3)}(p)+p^2\phi^{(4)}(p)}{4n^2} - \frac{3p\phi^{(4)}(p)+p^2\phi^{(5)}(p)}{12n^3} \\ - \frac{1}{2n}\Mean\bracket*{R_3\paren*{\frac{\tilde{N}}{n};x \to xH^{(2)}_{6,\Delta,\delta}[\phi](x),p}},
  \end{multline}
  \begin{multline}
      \Mean\bracket*{\frac{p\phi^{(3)}(p)}{3n^2} - \frac{\tilde{N}}{3n^3}H^{(3)}_{6,\Delta,\delta}[\phi]\paren*{\frac{\tilde{N}}{n}}} = \\ - \frac{1}{3n^2}\Mean\bracket*{R_1\paren*{\frac{\tilde{N}}{n};x \to xH^{(3)}_{6,\Delta,\delta}[\phi](x),p}},
  \end{multline}
  \begin{multline}
    \Mean\bracket*{\frac{5p\phi^{(4)}(p)}{24n^3} - \frac{5\tilde{N}}{24n^4}H^{(4)}_{6,\Delta,\delta}[\phi]\paren*{\frac{\tilde{N}}{n}}} = \\ - \frac{5}{24n^3}\Mean\bracket*{R_1\paren*{\frac{\tilde{N}}{n};x \to xH^{(4)}_{6,\Delta,\delta}[\phi](x),p}},
  \end{multline}
  and
  \begin{multline}
    \Mean\bracket*{\frac{p^2\phi^{(4)}(p)}{8n^2} - \frac{\tilde{N}^2}{8n^4}H^{(4)}_{6,\Delta,\delta}[\phi]\paren*{\frac{\tilde{N}}{n}}}
     = \\ - \frac{1}{8n^2}\Mean\bracket*{R_1\paren*{\frac{\tilde{N}}{n};x \to x^2H^{(4)}_{6,\Delta,\delta}[\phi](x),p}}.
  \end{multline}
  Hence,
  \begin{multline}
     \Bias\bracket*{\bar\phi_{4,6,\Delta,\delta}\paren*{\frac{\tilde{N}}{n}} - \phi(p)}
     \le \\ \frac{p\abs*{\phi^{(5)}(p)}}{120n^4} + \abs*{\Mean\bracket*{R_5\paren*{\frac{\tilde{N}}{n};H_{6,\Delta,\delta}[\phi],p}}} \\ + \frac{1}{2n}\abs*{\Mean\bracket*{R_3\paren*{\frac{\tilde{N}}{n};x \to xH^{(2)}_{6,\Delta,\delta}[\phi](x),p}}} \\ + \frac{1}{3n^2}\abs*{\Mean\bracket*{R_1\paren*{\frac{\tilde{N}}{n};x \to xH^{(3)}_{6,\Delta,\delta}[\phi](x),p}}} \\ + \frac{5}{24n^3}\abs*{\Mean\bracket*{R_1\paren*{\frac{\tilde{N}}{n};x \to xH^{(4)}_{6,\Delta,\delta}[\phi](x),p}}} \\ + \frac{1}{8n^2}\abs*{\Mean\bracket*{R_1\paren*{\frac{\tilde{N}}{n};x \to x^2H^{(4)}_{6,\Delta,\delta}[\phi](x),p}}}.
  \end{multline}

  From \cref{lem:lower-div}, we have
  \begin{align}
    \frac{p\abs*{\phi^{(5)}(p)}}{120n^4} \le \frac{W_5p^{\alpha-4} + c_5p}{120n^4} \lesssim \frac{\Delta^{\alpha-4}}{n^4} \lesssim \frac{\Delta^{\alpha-3}}{n^3}.
  \end{align}
  The fourth derivative of the function $x \to xH^{(2)}_{6,\Delta,\delta}[\phi](x)$ is $4H^{(5)}_{6,\Delta,\delta}[\phi](x) + xH^{(6)}_{6,\Delta,\delta}[\phi](x)$. The second derivatives of the functions $x \to xH^{(3)}_{6,\Delta,\delta}[\phi](x)$, $x \to xH^{(4)}_{6,\Delta,\delta}[\phi](x)$, and $x \to x^2H^{(4)}_{6,\Delta,\delta}[\phi](x)$ are $2H^{(4)}_{6,\Delta,\delta}[\phi](x)+xH^{(5)}_{6,\Delta,\delta}[\phi](x)$, $2H^{(5)}_{6,\Delta,\delta}[\phi](x)+xH^{(6)}_{6,\Delta,\delta}[\phi](x)$, and $2H^{(4)}_{6,\Delta,\delta}[\phi](x)+4xH^{(5)}_{6,\Delta,\delta}[\phi](x)+x^2H^{(6)}_{6,\Delta,\delta}[\phi](x)$, respectively. Thus, application of \cref{lem:hermite-bound,lem:reminder-bound1} yields
  \begin{align}
     \abs*{\Mean\bracket*{R_5\paren*{\frac{\tilde{N}}{n};H_{6,\Delta,\delta}[\phi],p}}} \lesssim& \frac{\Delta^{\alpha-3}}{n^3}, \\
     \frac{1}{2n}\abs*{\Mean\bracket*{R_3\paren*{\frac{\tilde{N}}{n};x \to xH^{(2)}_{6,\Delta,\delta}[\phi](x),p}}} \lesssim& \frac{\Delta^{\alpha-3}}{n^3}, \\
     \frac{1}{3n^2}\abs*{\Mean\bracket*{R_1\paren*{\frac{\tilde{N}}{n};x \to xH^{(3)}_{6,\Delta,\delta}[\phi](x),p}}} \lesssim& \frac{\Delta^{\alpha-3}}{n^3}, \\
     \frac{5}{24n^3}\abs*{\Mean\bracket*{R_1\paren*{\frac{\tilde{N}}{n};x \to xH^{(4)}_{6,\Delta,\delta}[\phi](x),p}}} \lesssim& \frac{\Delta^{\alpha-4}}{n^4}, \\
     \frac{1}{8n^2}\abs*{\Mean\bracket*{R_1\paren*{\frac{\tilde{N}}{n};x \to x^2H^{(4)}_{6,\Delta,\delta}[\phi](x),p}}} \lesssim& \frac{\Delta^{\alpha-3}}{n^3}.
  \end{align}
  The arbitrariness of $\delta > 0$ gives the desired claim.
\end{proof}
\begin{proof}[Proof of \cref{lem:plugin-var-1-3/2}]
 In the same manner of the proof of \cref{lem:plugin-var-0-1}, application of the Taylor theorem and triangle inequality yields
 \begin{multline}
    \frac{1}{5}\Var\bracket*{\bar\phi_{4,6,\Delta,\delta}\paren*{\frac{\tilde{N}}{n}}} \le \\ \Mean\bracket*{\paren*{H_{6,\Delta,\delta}[\phi]\paren*{\frac{\tilde{N}}{n}} - \phi(p)}^2} \\ + \frac{1}{4n^2}\Mean\bracket*{\paren*{R_0\paren*{\frac{\tilde{N}}{n};x \to xH^{(2)}_{6,\Delta,\delta}[\phi](x),p}}^2} \\+ \frac{1}{9n^4}\Mean\bracket*{\paren*{R_0\paren*{\frac{\tilde{N}}{n};x \to xH^{(3)}_{6,\Delta,\delta}[\phi](x),p}}^2} \\ + \frac{25}{576n^6}\Mean\bracket*{\paren*{R_0\paren*{\frac{\tilde{N}}{n};x \to xH^{(4)}_{6,\Delta,\delta}[\phi](x),p}}^2} \\+ \frac{1}{64n^4}\Mean\bracket*{\paren*{R_0\paren*{\frac{\tilde{N}}{n};x \to x^2H^{(4)}_{6,\Delta,\delta}[\phi](x),p}}^2}.
 \end{multline}

 From \cref{lem:hermite-bound}, we have $H_{6,\Delta,\delta}[\phi]\paren*{x} \lesssim 1$ for any $x > 0$. By the Taylor theorem, we have
 \begin{align}
    &\Mean\bracket*{\paren*{H_{6,\Delta,\delta}[\phi]\paren*{\frac{\tilde{N}}{n}} - \phi(p)}^2} \\
    \le& \Mean\bracket*{\paren*{\frac{\tilde{N}}{n} - p}^2}\sup_{x > 0}{\paren*{H_{6,\Delta,\delta}[\phi](x)}^2} \lesssim \frac{p}{n}.
 \end{align}
 By \cref{lem:hermite-bound,lem:reminder-bound2}, we have
 \begin{align}
    \frac{1}{4n^2}\Mean\bracket*{\paren*{R_0\paren*{\frac{\tilde{N}}{n};x \to xH^{(2)}_{6,\Delta,\delta}[\phi](x),p}}^2} \lesssim& \frac{\Delta^{2\alpha-3}}{n^3}, \\
    \frac{1}{9n^4}\Mean\bracket*{\paren*{R_0\paren*{\frac{\tilde{N}}{n};x \to xH^{(3)}_{6,\Delta,\delta}[\phi](x),p}}^2} \lesssim& \frac{\Delta^{2\alpha-5}}{n^5}, \\
    \frac{25}{576n^6}\Mean\bracket*{\paren*{R_0\paren*{\frac{\tilde{N}}{n};x \to xH^{(4)}_{6,\Delta,\delta}[\phi](x),p}}^2} \lesssim& \frac{\Delta^{2\alpha-7}}{n^7}, \\
    \frac{1}{64n^4}\Mean\bracket*{\paren*{R_0\paren*{\frac{\tilde{N}}{n};x \to x^2H^{(4)}_{6,\Delta,\delta}[\phi](x),p}}^2} \lesssim& \frac{\Delta^{2\alpha-5}}{n^5}.
 \end{align}
 Noting that $\frac{\Delta^{2\alpha-7}}{n^7} \lesssim \frac{\Delta^{2\alpha-5}}{n^5} \lesssim \frac{\Delta^{2\alpha-3}}{n^3} \lesssim \frac{1}{n^{2\alpha}}$, the arbitrariness of $\delta > 0$ gives the desired claim.
\end{proof}

\subsubsection{Overall Bias and Variance}
Combining the analyses above, we prove \cref{thm:upper-bound1}.
\begin{proof}[Proof of \cref{thm:upper-bound1}]
  Set $L = \floor{C_1 \ln n}$ and $\Delta_{n,k} = C_2 \ln n$ where $C_1$ and $C_2$ are positive universal constants. We derive bounds on the terms in \cref{lem:upper-ind-bias,lem:upper-ind-var}.

  From \cref{lem:plugin-o1-2,lem:plugin-o1-4}, we have
  \begin{multline}
    \paren*{\sum_{i=1}^k (e/4)^{\Delta_{n,k}}\ind{np_i \le \Delta_{n,k}}\Bias\bracket*{\phi_{\mathrm{plugin}}(\tilde{N}_i) - \phi(p_i)}}^2 \\ \lesssim k^2n^{-2C_2\ln(4/e)},  
  \end{multline}
  and
  \begin{multline}
      \sum_{i=1}^k (e/4)^{\Delta_{n,k}}\ind{np_i \le \Delta_{n,k}}\paren[\Bigg]{\Var\bracket*{\phi_{\mathrm{plugin}}(\tilde{N}_i)} \\ +2\Bias\bracket*{\phi_{\mathrm{plugin}}(\tilde{N}_i) - \phi(p_i)}^2} \lesssim kn^{-C_2\ln(4/e)}.
  \end{multline}
  From \cref{lem:poly-o1}, we have
  \begin{multline}
    \paren*{\sum_{i=1}^k e^{-\Delta_{n,k}/8}\ind{np_i > 4\Delta_{n,k}}\Bias\bracket*{\phi_{\mathrm{plugin}}(\tilde{N}_i) - \phi(p_i)}}^2 \\ \lesssim k^2n^{-C_2/4},
  \end{multline}
  and
  \begin{multline}
    \sum_{i=1}^k e^{-\Delta_{n,k}/8}\ind{np_i > 4\Delta_{n,k}}\paren[\Bigg]{\Var\bracket*{\phi_{\mathrm{plugin}}(\tilde{N}_i)} \\ +2\Bias\bracket*{\phi_{\mathrm{plugin}}(\tilde{N}_i) - \phi(p_i)}^2} \lesssim kn^{-C_2/8} .
  \end{multline}
  Since $\ln(4/e) \ge 1/8$, as long as $C_2 > 8\alpha$, we have
  \begin{align}
    k^2n^{-2C_2\ln(4/e)} \lesssim kn^{-C_2\ln(4/e)} \lesssim \frac{k^2}{(n\ln n)^{2\alpha}}, 
  \end{align}
  and
  \begin{align}
     k^2n^{-C_2/4} \lesssim kn^{-C_2/8} \lesssim \frac{k^2}{(n\ln n)^{2\alpha}}.
  \end{align}

  If $C_2^3C_1 \le 1/2$, by \cref{lem:poly-bias,lem:poly-var}, we have
  \begin{multline}
     \paren*{\sum_{i=1}^k \ind{np_i \le 4\Delta_{n,k}}\Bias\bracket*{\phi_{\mathrm{poly}}(\tilde{N}_i) - \phi(p_i)}}^2 \lesssim  \\k^2\paren*{E_{\floor{C_1\ln n}}(\phi,[0,4C_2\ln n/n])}^2 \\+ \frac{k^2C_1C_2^3\ln^4n}{n^{4-6C_1\ln 2 + 4\sqrt{C_1C_2}\ln(2e)}},
  \end{multline}
  and
  \begin{multline}
     \sum_{i=1}^k \ind{np_i \le 4\Delta_{n,k}}\paren[\Bigg]{\Var\bracket*{\phi_{\mathrm{poly}}(\tilde{N}_i)} \\ + \Bias\bracket*{\phi_{\mathrm{poly}}(\tilde{N}_i) - \phi(p_i)}^2} \lesssim  \\k\paren*{E_{\floor{C_1\ln n}}(\phi,[0,4C_2\ln n/n])}^2 \\ + \frac{kC_1C_2^3\ln^4n}{n^{4-6C_1\ln 2 + 4\sqrt{C_1C_2}\ln(2e)}},
  \end{multline}
  From \cref{thm:upper-poly-approx1}, we have
  \begin{align}
    & k\paren*{E_{\floor{C_1\ln n}}(\phi,[0,4C_2\ln n/n])}^2 \\ \lesssim& k^2\paren*{E_{\floor{C_1\ln n}}(\phi,[0,4C_2\ln n/n])}^2 \\ \lesssim& \frac{k^2}{(n\ln n)^{2\alpha}}.
  \end{align}
  As long as $2 - 3C_1 \ln 2 - 2\sqrt{C_1C_2}\ln(2e) > \alpha$, we have
  \begin{align}
      \frac{k^2C_1C_2^3\ln^4n}{n^{4-6C_1\ln 2 + 4\sqrt{C_1C_2}\ln(2e)}} \lesssim \frac{k^2}{(n\ln n)^{2\alpha}}.
  \end{align}

  If $\alpha \in (0,1]$, by \cref{lem:plugin-bias-0-1}, we have
  \begin{multline}
     \paren*{\sum_{i=1}^k \ind{np_i > \Delta_{n,k}}\Bias\bracket*{\phi_{\mathrm{plugin}}(\tilde{N}_i) - \phi(p_i)}}^2 \\ \lesssim \frac{k^2}{n^{2\alpha}\ln^{4-2\alpha}n}.
  \end{multline}
  Besides, by \cref{lem:plugin-var-0-1}, we have
  \begin{multline}
     \sum_{i=1}^k \ind{np_i > \Delta_{n,k}}\paren[\Bigg]{\Var\bracket*{\phi_{\mathrm{plugin}}(\tilde{N}_i)} \\ + \Bias\bracket*{\phi_{\mathrm{plugin}}(\tilde{N}_i) - \phi(p_i)}^2} \lesssim \\ \frac{\sum_{i=1}^k\ind{np_i > \Delta_{n,k}}p_i^{2\alpha-1}}{n} + \frac{k}{n^{2\alpha}\ln^{4-2\alpha}n} + \frac{1}{n},
  \end{multline}
  Since $4 - 2\alpha \ge 2\alpha$ for $\alpha \in (0,1]$, we have
  \begin{align}
    \frac{k}{n^{2\alpha}\ln^{4-2\alpha}n} \lesssim \frac{k^2}{n^{2\alpha}\ln^{4-2\alpha}n} \lesssim \frac{k^2}{(n\ln n)^{2\alpha}}.
  \end{align}
  If $\alpha \in [1/2,1)$, by \cref{lem:bound-sum-alpha}, we have
  \begin{align}
      \frac{\sum_{i=1}^k\ind{np_i > \Delta_{n,k}}p_i^{2\alpha-1}}{n}  \lesssim \frac{k^{2-2\alpha}}{n}.
  \end{align}
  If $\alpha = 1$, by \cref{lem:bound-sum-log}, we have
  \begin{align}
      \frac{\sum_{i=1}^k\ind{np_i > \Delta_{n,k}}p_i\ln^2p_i}{n}  \lesssim \frac{\ln^2k}{n}.
  \end{align}
  If $\alpha \in (0,1/2)$ and $k \gtrsim \ln^{4\alpha-1}n$, we have
  \begin{align}
      \frac{\sum_{i=1}^k\ind{np_i > \Delta_{n,k}}p_i^{2\alpha-1}}{n}  \lesssim \frac{k}{n^{2\alpha}\ln^{1-2\alpha}n} \lesssim \frac{k^2}{(n\ln n)^{2\alpha}}.
  \end{align}

  If $\alpha \in (1,3/2)$, by \cref{lem:plugin-bias-1-3/2}, we have
  \begin{multline}
     \paren*{\sum_{i=1}^k \ind{np_i > \Delta_{n,k}}\Bias\bracket*{\phi_{\mathrm{plugin}}(\tilde{N}_i) - \phi(p_i)}}^2 \\ \lesssim \frac{k^2}{n^{2\alpha}\ln^{6-3\alpha}n}.
  \end{multline}
  Besides, by \cref{lem:plugin-var-0-1}, we have
  \begin{multline}
     \sum_{i=1}^k \ind{np_i > \Delta_{n,k}}\paren[\Bigg]{\Var\bracket*{\phi_{\mathrm{plugin}}(\tilde{N}_i)} \\ + \Bias\bracket*{\phi_{\mathrm{plugin}}(\tilde{N}_i) - \phi(p_i)}^2} \lesssim \frac{k}{n^{2\alpha}} + \frac{1}{n},
  \end{multline}
  Since $6 - 3\alpha \ge 2\alpha$ for $\alpha \in (1,3/2)$, we have
  \begin{align}
      \frac{k^2}{n^{2\alpha}\ln^{6-3\alpha}n} \lesssim \frac{k^2}{(n\ln n)^{2\alpha}}.
  \end{align}
  Under the condition $k \lesssim (n\ln n)^\alpha$, we have
  \begin{align}
      \frac{k}{n^{2\alpha}} \lesssim \frac{\ln^\alpha n}{n^\alpha} \lesssim \frac{1}{n}.
  \end{align}

  In summary, we proved the desired bounds under the conditions $C_2 > 8\alpha$, $C_2^3C_1 \le 1/2$, and $2 - 3C_1 \ln 2 - 2\sqrt{C_1C_2}\ln(2e) > \alpha$. Note that these conditions hold if we take sufficiently large $C_2$ and sufficiently small $C_1$.
\end{proof}

\subsection{Analyses for $\alpha \in [3/2,2]$}\label{sec:upper2}

To prove \cref{thm:upper-bound2}, we use the bias-variance decomposition as well as the case $\alpha \in (0,3/2)$, which gives
\begin{align}
  &\Mean\bracket*{\paren*{\hat\theta_{\mathrm{plugin}}(N) - \theta(P)}^2} \\ =& \Bias\bracket*{\hat\theta_{\mathrm{plugin}}(N) - \theta(P)}^2 + \Var\bracket*{\hat\theta_{\mathrm{plugin}}(N)}, \label{eq:plugin-bias-var-decomp}
\end{align}
where $N \sim \Mul(n,P)$. Hence, we obtain the estimation error of the plugin estimator by deriving the bias and variance.

For any $\alpha \in [3/2,2]$, the variance term is easily proved by using the fact that $\phi$ is Lipschitz continuous:
\begin{theorem}\label{thm:variance-lipschitz}
  If $\phi$ is Lipschitz continuous, then
  \begin{align}
    \Var\bracket*{\sum_{i=1}^n\phi\paren*{\frac{N_i}{n}}} \lesssim \frac{1}{n}.
  \end{align}
\end{theorem}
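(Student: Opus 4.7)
The plan is to invoke the Efron--Stein inequality (also known as the bounded-differences variance bound) applied to the function $f(X_1,\dots,X_n) = \sum_{i=1}^{k}\phi(N_i/n)$, viewed as a function of the $n$ i.i.d. samples $X_1,\dots,X_n \sim P$. The Lipschitz assumption gives us a clean control on how $f$ changes when a single coordinate is resampled.

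Concretely, let $X'_j$ be an independent copy of $X_j$, and let $N'_i$ be the resulting histogram counts when $X_j$ is replaced by $X'_j$. If $X_j = a$ and $X'_j = b$, then $N'_a = N_a - 1$, $N'_b = N_b + 1$, and $N'_i = N_i$ for all other $i$ (and $f$ is unchanged when $a=b$). Letting $L$ denote the Lipschitz constant of $\phi$, the difference $f - f'$ collapses to at most two terms:
\begin{align}
\abs*{f - f'} \le \abs*{\phi(N_a/n)-\phi((N_a-1)/n)} + \abs*{\phi(N_b/n)-\phi((N_b+1)/n)} \le \frac{2L}{n}.
\end{align}
The Efron--Stein inequality then yields
\begin{align}
\Var[f] \le \tfrac{1}{2}\sum_{j=1}^n \Mean\bracket*{(f-f'_j)^2} \le \tfrac{1}{2}\cdot n \cdot \paren*{\tfrac{2L}{n}}^2 = \frac{2L^2}{n},
\end{align}
which gives the desired $\lesssim 1/n$ bound.

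The main (and only) subtlety is observing that although there are $k$ terms in the sum defining $f$, resampling a single coordinate affects only two of them, so the bounded-difference constant scales like $1/n$ (with no $k$ dependence) rather than something depending on $k$. Everything else is routine: the Lipschitz hypothesis is exactly what is needed to turn the $1/n$ change in each affected $\hat{p}_i$ into a $1/n$ change in each affected summand. No divergence-speed machinery is required here; the proof is purely a bounded-differences argument on the multinomial samples.
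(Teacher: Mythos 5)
Your proof is correct and follows essentially the same route as the paper's: both view the sum as a function of the $n$ i.i.d.\ samples, observe that resampling a single $X_j$ perturbs only two histogram cells (each by $\pm 1$), use Lipschitz continuity to get a bounded-difference constant of order $1/n$ independent of $k$, and then invoke a bounded-differences variance inequality. The paper cites the McDiarmid-type variance bound $\Var[f]\le \tfrac14\sum c_i^2$ while you use the Efron--Stein form $\Var[f]\le\tfrac12\sum_j\Mean[(f-f'_j)^2]$ with a worst-case bound on each term; these differ only by a constant factor and the key observation is identical.
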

The proof of \cref{thm:variance-lipschitz} is accomplished by applying the concentration result of the bounded difference:
\begin{theorem}[see e.g., \autocite{Boucheron2013ConcentrationIndependence}]
  Suppose that $X_1,...,X_n$ are independent random variables on $\dom{X}$. For a function $f:\dom{X}^n\to\RealSet$, suppose there exist universal constants $c_1,...,c_n$ such that for any $i \in [n]$,
  \begin{align}
    \sup_{x_1,...,x_n,x'_i}\abs*{f(x_1,...,x_i,...,x_n) - f(x_1,...,x'_i,...,x_n)} \le c_i.
  \end{align}
  Then,
  \begin{align}
    \Var\bracket*{f(X_1,...,X_n)} \le \frac{1}{4}\sum_{i=1}^nc_i^2.
  \end{align}
\end{theorem}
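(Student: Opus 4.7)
My plan is to prove this via a Doob (telescoping) martingale decomposition together with the sharp variance bound for bounded random variables. First, I would define $S_0 = \Mean\bracket*{f(X_1,\dots,X_n)}$ and $S_i = \Mean\bracket*{f(X_1,\dots,X_n) \mid X_1,\dots,X_i}$ for $i=1,\dots,n$, so that $S_n = f(X_1,\dots,X_n)$. The increments $\Delta_i = S_i - S_{i-1}$ form a martingale difference sequence with respect to the filtration generated by $X_1,\dots,X_i$, hence are pairwise uncorrelated in $L^2$. This yields the orthogonal decomposition
\begin{align}
\Var\bracket*{f(X_1,\dots,X_n)} = \sum_{i=1}^n \Mean\bracket*{\Delta_i^2},
\end{align}
reducing the theorem to establishing $\Mean\bracket*{\Delta_i^2} \le c_i^2/4$ for each $i$.

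Next, I would argue that, conditionally on $X_1,\dots,X_{i-1}$, the increment $\Delta_i$ is a centered function of $X_i$ whose total range is at most $c_i$. Using independence of the $X_j$, one can write $S_i = g_i(X_1,\dots,X_i)$, where $g_i$ is obtained from $f$ by integrating against the joint law of $X_{i+1},\dots,X_n$. Then $S_{i-1}$ is precisely the conditional expectation of $g_i(X_1,\dots,X_i)$ given $X_1,\dots,X_{i-1}$, so $\Delta_i$ is the centered version of $g_i(X_1,\dots,X_{i-1},X_i)$ with the earlier coordinates held fixed. Pulling the supremum through the integral in the bounded-differences hypothesis yields $\abs*{g_i(x_1,\dots,x_{i-1},x_i) - g_i(x_1,\dots,x_{i-1},x'_i)} \le c_i$ for all values of the arguments, so conditionally on $X_1,\dots,X_{i-1}$ the variable $g_i(X_1,\dots,X_{i-1},X_i)$ takes values in some interval of length at most $c_i$.

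The final ingredient is a Popoviciu-type inequality: any real random variable $Y$ whose range is contained in an interval $[a,b]$ satisfies $\Var\bracket*{Y} \le (b-a)^2/4$, which I would verify in one line from $\Var\bracket*{Y} \le \Mean\bracket*{(Y - (a+b)/2)^2} \le (b-a)^2/4$. Applying this conditionally gives $\Mean\bracket*{\Delta_i^2 \mid X_1,\dots,X_{i-1}} \le c_i^2/4$; taking expectations of both sides and summing over $i$ then closes the argument.

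I do not expect any serious obstacle in this proof, since the argument is a textbook Doob decomposition. The one point that warrants careful bookkeeping is verifying that the bounded-differences constant $c_i$ survives the partial integration defining $g_i$, i.e., that taking a conditional expectation over $X_{i+1},\dots,X_n$ cannot enlarge the range of $g_i$ in its $i$th coordinate. This is simply monotonicity of integration, but it is worth spelling out because the rest of the argument hinges on it; once it is in place, the orthogonality of martingale differences and the elementary variance bound $(b-a)^2/4$ deliver the $\tfrac{1}{4}$ constant without further work.
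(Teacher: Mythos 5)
Your proof is correct. The paper does not prove this statement at all---it is quoted as a known result from the cited reference (Boucheron, Lugosi, and Massart)---and your argument is precisely the standard textbook derivation used there: the Doob martingale decomposition $\Var[f] = \sum_i \Mean[\Delta_i^2]$ via orthogonality of martingale increments, the observation that conditioning and integrating out $X_{i+1},\dots,X_n$ cannot increase the oscillation in the $i$th coordinate, and Popoviciu's bound $\Var[Y] \le (b-a)^2/4$ for a variable confined to $[a,b]$, giving $\Mean[\Delta_i^2 \mid X_1,\dots,X_{i-1}] \le c_i^2/4$. The one point you flagged as needing care---that the constant $c_i$ survives the partial integration defining $g_i$---is indeed the only bookkeeping step, and it follows from the triangle inequality for integrals exactly as you describe, so there is no gap.
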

\begin{proof}[Proof of \cref{thm:variance-lipschitz}]
  Suppose a sample $X_j$ is changed from $X_j=i$ to $X_j=i'$. Then, the change of the histogram is $N_i \to N_i-1$ and $N_{i'} \to N_{i'}+1$. Hence, for $f(S_n) = \sum_{i=1}^k\phi(N_i/n)$, we have
  \begin{multline}
    \sup_{N}\abs*{\phi\paren*{\frac{N_i}{n}}+\phi\paren*{\frac{N_{i'}}{n}} - \phi\paren*{\frac{N_i-1}{n}} - \phi\paren*{\frac{N_{i'}+1}{n}}} \\ \le \norm*{\phi}_{C^{H,1}}\frac{2}{n},
  \end{multline}
  where we use the Lipschitz continuity of $\phi$. Hence,
  \begin{align}
    \Var\bracket*{\sum_{i=1}^k\paren*{\frac{N_i}{n}}} \le \frac{n}{4}\paren*{\norm*{\phi}_{C^{H,1}}\frac{2}{n}}^2 = \frac{\norm*{\phi}_{C^{H,1}}^2}{n}.
  \end{align}
\end{proof}

If $\alpha = 2$, the bias is proved immediately from the Lipschitz continuity of $\phi^{(1)}$;
\begin{theorem}\label{thm:bias-lipschitz}
  If $\phi^{(1)}$ is Lipschitz continuous, then
  \begin{align}
    \Bias\bracket*{\sum_{i=1}^n\phi\paren*{\frac{N_i}{n}} - \theta(P)} \lesssim \frac{1}{n}.
  \end{align}
\end{theorem}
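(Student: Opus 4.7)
The plan is to use Taylor expansion to first order and bound the remainder via the Lipschitz continuity of $\phi^{(1)}$. By linearity of expectation, the total bias decomposes as a sum over alphabets:
\begin{align}
\Bias\bracket*{\sum_{i=1}^k\phi\paren*{\frac{N_i}{n}} - \theta(P)} = \abs[\Bigg]{\sum_{i=1}^k \paren*{\Mean\bracket*{\phi\paren*{\frac{N_i}{n}}} - \phi(p_i)}},
\end{align}
where each $N_i\sim\Bin(n,p_i)$ with $\Mean[N_i/n]=p_i$. So it suffices to control the per-alphabet bias.

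For each $i$, I would write the first-order Taylor expansion with remainder around $p_i$, namely $\phi(x)=\phi(p_i)+\phi^{(1)}(p_i)(x-p_i)+R_1(x;\phi,p_i)$. The Lipschitz assumption on $\phi^{(1)}$ with constant $\norm{\phi^{(1)}}_{C^{H,1}}$ yields the standard pointwise bound
\begin{align}
\abs*{R_1(x;\phi,p_i)} \le \frac{\norm{\phi^{(1)}}_{C^{H,1}}}{2}(x-p_i)^2,
\end{align}
which is the content of the integral form of the remainder together with Lipschitzness. Since $\Mean[N_i/n - p_i]=0$, the linear term vanishes upon taking expectations, giving
\begin{align}
\abs*{\Mean\bracket*{\phi\paren*{\tfrac{N_i}{n}}} - \phi(p_i)} \le \frac{\norm{\phi^{(1)}}_{C^{H,1}}}{2}\,\Mean\bracket*{\paren*{\tfrac{N_i}{n}-p_i}^2} = \frac{\norm{\phi^{(1)}}_{C^{H,1}}\,p_i(1-p_i)}{2n}.
\end{align}

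Summing over $i\in[k]$ and using $\sum_{i=1}^k p_i(1-p_i)\le \sum_{i=1}^k p_i = 1$ gives the total bias bound $\lesssim 1/n$, which is exactly the claim. There is no real obstacle here; the only step that requires any care is the passage from Lipschitzness of $\phi^{(1)}$ to the quadratic remainder bound, and this is a standard consequence of writing $R_1(x;\phi,p_i)=\int_{p_i}^x (\phi^{(1)}(t)-\phi^{(1)}(p_i))\,dt$ and applying $|\phi^{(1)}(t)-\phi^{(1)}(p_i)|\le \norm{\phi^{(1)}}_{C^{H,1}}|t-p_i|$. Combined with \cref{thm:variance-lipschitz} and the bias-variance decomposition in \cref{eq:plugin-bias-var-decomp}, this completes the proof of \cref{thm:upper-bound2} in the case $\alpha=2$.
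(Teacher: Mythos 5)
Your proof is correct and follows essentially the same route as the paper: second-order Taylor expansion around $p_i$, the linear term vanishes in expectation, the quadratic term is bounded by the Lipschitz constant of $\phi^{(1)}$, and summing over $i$ with $\Mean[(N_i/n-p_i)^2]=p_i(1-p_i)/n$ and $\sum_i p_i(1-p_i)\le 1$ gives the $1/n$ rate. The only (minor) difference is that you use the integral form of the remainder, which needs only the assumed Lipschitzness of $\phi^{(1)}$, whereas the paper writes the Lagrange form $\tfrac{1}{2}\phi^{(2)}(\xi_i)(\cdot)^2$ and then invokes $\sup|\phi^{(2)}|\le\norm{\phi^{(1)}}_{C^{H,1}}$; your version is slightly cleaner in that it avoids tacitly assuming $\phi^{(2)}$ exists everywhere.
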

The proof of this theorem is obtained by simply applying the Taylor theorem.
\begin{proof}[Proof of \cref{thm:bias-lipschitz}]
  Application of the Taylor theorem yields there exists $\xi_1,...,\xi_k$ such that
  \begin{align}
    & \Bias\bracket*{\sum_{i=1}^k\paren*{\frac{N_i}{n}} - \theta(P)} \\
    =& \abs*{\Mean\bracket*{\sum_{i=1}^k\paren*{\phi^{(1)}(p_i)\paren*{\frac{N_i}{n}-p_i} + \frac{\phi^{(2)}(\xi_i)}{2}\paren*{\frac{N_i}{n}-p_i}^2}}} \\
    \le& \Mean\bracket*{\sum_{i=1}^k\frac{\abs*{\phi^{(2)}(\xi_i)}}{2}\paren*{\frac{N_i}{n}-p_i}^2}.
  \end{align}
  From the Lipschitz continuity, we have $\sup_{p \in (0,1)}\abs*{\phi^{(2)}(p)} \le \norm{\phi^{(1)}}_{C^0,1}$. Hence,
  \begin{align}
    & \Bias\bracket*{\sum_{i=1}^k\paren*{\frac{N_i}{n}} - \theta(P)} \\
    \le& \frac{\norm{\phi^{(1)}}_{C^0,1}}{2}\Mean\bracket*{\sum_{i=1}^k\paren*{\frac{N_i}{n}-p_i}^2} \\
    =& \frac{\norm{\phi^{(1)}}_{C^0,1}}{2}\sum_{i=1}^k\frac{p_i(1-p_i)}{n} \\
    \le& \frac{\norm{\phi^{(1)}}_{C^0,1}}{2n}.
  \end{align}
\end{proof}

In contrast, derivation of a bias bound for $\alpha \in (3/2,2)$ is not trivial. \textcite{Jiao2017MaximumDistributions} analyzed the plugin estimator for $\phi(p)=p^\alpha$ and showed that the bias of the plugin estimator is the same as the Bernstein polynomial approximation error. That is, for $N \sim \Mul(n,P)$, we have
\begin{align}
 \Bias\bracket*{\hat\theta_{\mathrm{plugin}}(N) - \theta(P)} =& \abs*{\sum_{i=1}^kB_n[\phi](p_i) - \phi(p_i)}. \label{eq:plugin-bias-bernstein}
\end{align}
Consequently, by taking advantage of the results on the Bernstein polynomial approximation, such as \cref{thm:bernstein-error}, the bias is bounded above by the modulus of smoothness.

To obtain an upper bound on the bias of the plugin estimator for $\alpha \in (3/2,2)$, we derive the bound on the second order modulus of smoothness:
\begin{lemma}\label{lem:modulus-bound}
  Suppose $\phi:[0,1]\to\RealSet$ is a function whose second divergence speed is $p^{\alpha}$ for $\alpha \in [3/2,2)$, where $\phi^{(1)}(0) = 0$. Then, we have
  \begin{align}
    \omega^2(\phi,t) \lesssim t^{\alpha}.
  \end{align}
\end{lemma}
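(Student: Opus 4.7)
The plan is to reduce the second-order modulus of smoothness to a first-order modulus of $\phi^{(1)}$, then bound the latter using the H\"older continuity that the divergence speed assumption implies.

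First, I would invoke \cref{lem:div-speed-holder3}: since the second divergence speed of $\phi$ is $p^\alpha$ with $\alpha \in [3/2,2) \subset (1,2)$ and $\phi^{(1)}(0)=0$, the derivative $\phi^{(1)}$ is $(\alpha-1)$-H\"older continuous on $[0,1]$, so there exists a constant $C$ with
\[
|\phi^{(1)}(u)-\phi^{(1)}(v)| \le C|u-v|^{\alpha-1} \quad \text{for all } u,v \in [0,1].
\]

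Next, by the convention in \cref{eq:finite-diff}, $(\Delta^2_h\phi)(x)$ vanishes unless $x-h \ge 0$ and $x+h \le 1$, so only such pairs $(x,h)$ contribute to $\omega^2(\phi,t;[0,1])$. For these, two applications of the fundamental theorem of calculus yield the identity
\[
(\Delta^2_h\phi)(x) = \int_0^h \bigl[\phi^{(1)}(x+s) - \phi^{(1)}(x-h+s)\bigr]\, ds,
\]
and the H\"older bound controls the integrand pointwise by $C h^{\alpha-1}$, giving $|(\Delta^2_h\phi)(x)| \le C h^\alpha$. Taking the supremum over $h \in (0,t]$ and admissible $x$ then produces $\omega^2(\phi,t) \lesssim t^\alpha$.

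The substantive step is \cref{lem:div-speed-holder3}, which I use as a black box; the subtlety it handles is that $|\phi^{(2)}(p)| \lesssim p^{\alpha-2}$ is only integrable near the origin because $\alpha > 1$, and the assumption $\phi^{(1)}(0)=0$ is precisely what fixes the integration constant so $\phi^{(1)}$ extends H\"older-continuously down to $p=0$. After that, the remainder is a two-line integration estimate with no further obstacle. As an alternative route, one could invoke \cref{lem:mod-derivative} with $r=2$, $k=1$ to write $\omega^2(\phi,t) \lesssim t\,\omega^1(\phi^{(1)},t)$ and then bound $\omega^1(\phi^{(1)},t) \lesssim t^{\alpha-1}$ from H\"older continuity, after a trivial affine change of variables to transfer the $[-1,1]$-statement to $[0,1]$.
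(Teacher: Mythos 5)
Your proof is correct, and your primary route is a genuinely different (and somewhat more self-contained) argument than the paper's. The paper dispatches this lemma in one line by invoking the converse-direction smoothness inequality of \cref{lem:mod-derivative}, $\omega^2(\phi,t)\lesssim t\,\omega^1(\phi^{(1)},t)$, together with \cref{lem:div-speed-holder3}; your ``alternative route'' at the end is in fact exactly the paper's proof (and you are right that an affine change of variables is needed, since \cref{lem:mod-derivative} is stated on $[-1,1]$ while the lemma uses the modulus on $[0,1]$). Your primary argument instead establishes the needed relation directly: for admissible pairs $(x,h)$ you write $(\Delta^2_h\phi)(x)=\int_0^h[\phi^{(1)}(x+s)-\phi^{(1)}(x-h+s)]\,ds$ and bound the integrand by $Ch^{\alpha-1}$ using the $(\alpha-1)$-H\"older continuity of $\phi^{(1)}$ from \cref{lem:div-speed-holder3}, obtaining $|(\Delta^2_h\phi)(x)|\le Ch^\alpha$. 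This replaces the black-box inequality from \autocite{DeVore1993ConstructiveApproximation} with an elementary two-step application of the fundamental theorem of calculus; the trade-off is a few extra lines versus reliance on a general-purpose approximation-theory estimate. Both approaches hinge on \cref{lem:div-speed-holder3}, and you correctly identify the role of the hypothesis $\phi^{(1)}(0)=0$ (it pins down the integration constant so that $\phi^{(1)}$ is H\"older down to the origin even though $\phi^{(2)}$ blows up there) and the boundary convention in \cref{eq:finite-diff} that restricts to $x-h,x+h\in[0,1]$.
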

\begin{proof}
  From \cref{lem:mod-derivative}, $\omega^2(\phi,t) \lesssim t\omega^1(\phi^{(1)},t)$. By definition, $\omega^1(\phi^{(1)},t) \lesssim t^{\alpha-1}$ because of the H\"older continuity of $\phi^{(1)}$ from \cref{lem:div-speed-holder3}.
\end{proof}
By utilizing \cref{lem:modulus-bound}, we prove the bias.
\begin{theorem}\label{thm:bias-plugin}
  Suppose $\phi:[0,1]\to\RealSet$ is a function whose second divergence speed is $p^{\alpha}$ for $\alpha \in [3/2,2)$ such that $\phi(0)=0$. Then, we have
  \begin{align}
    \Bias\bracket*{\sum_i\phi\paren*{\frac{N_i}{n}} - \theta(P)} \lesssim \frac{1}{n^{\alpha-1}}.
  \end{align}
\end{theorem}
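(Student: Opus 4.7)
The plan is to reduce the bias to a sum of pointwise Bernstein polynomial approximation errors via \eqref{eq:plugin-bias-bernstein} and then bound each term using \cref{thm:bernstein-error} together with \cref{lem:modulus-bound}. The identity \eqref{eq:plugin-bias-bernstein} gives
\begin{align}
  \Bias\bracket*{\sum_i\phi\paren*{\frac{N_i}{n}} - \theta(P)} = \abs*{\sum_{i=1}^k \paren*{B_n[\phi](p_i) - \phi(p_i)}},
\end{align}
and the pointwise Bernstein error bound combined with $\omega^2(\phi,t)\lesssim t^\alpha$ yields $|B_n[\phi](p_i) - \phi(p_i)| \lesssim (p_i(1-p_i)/n)^{\alpha/2}$ for each $i$.

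I would then partition the alphabet by whether $p_i \ge 1/n$ or $p_i < 1/n$. For the former, since $\alpha/2 \le 1$ we have $p_i^{\alpha/2} = p_i\cdot p_i^{\alpha/2-1} \le p_i\cdot n^{1-\alpha/2}$, so each contribution is at most $p_i/n^{\alpha-1}$, and using $\sum_i p_i \le 1$ the partial sum is $\lesssim 1/n^{\alpha-1}$. For indices with $p_i < 1/n$ the modulus bound is too loose, and I would instead estimate $B_n[\phi](p_i)$ directly. Since $\phi(0)=0$ and, WLOG, $\phi^{(1)}(0)=0$ (adding $cp$ to $\phi$ shifts $\hat\theta$ and $\theta(P)$ by the same constant), integrating the second-divergence-speed bound gives $|\phi(p)| \lesssim p^\alpha + p^2$, whence $|B_n[\phi](p_i)| \lesssim \Mean\bracket*{(N_i/n)^\alpha + (N_i/n)^2}$. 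When $np_i \le 1$, the binomial mass is concentrated near zero and a direct moment calculation (dominated by the $N_i\in\{1,2\}$ terms) yields $\Mean[(N_i/n)^\alpha] \lesssim p_i/n^{\alpha-1}$ and $\Mean[(N_i/n)^2] \lesssim p_i/n \le p_i/n^{\alpha-1}$. Combined with $|\phi(p_i)| \lesssim p_i^\alpha \le p_i/n^{\alpha-1}$ for $p_i \le 1/n$, this gives $|B_n[\phi](p_i)-\phi(p_i)| \lesssim p_i/n^{\alpha-1}$, summing again to $\lesssim 1/n^{\alpha-1}$.

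The principal obstacle is the small-$p_i$ regime: the modulus estimate alone only gives $\sum_i(p_i/n)^{\alpha/2} \lesssim k^{1-\alpha/2}/n^{\alpha/2}$, which grows with $k$ (for instance near-uniform $P$ with $k \gg n$) and therefore cannot match the $k$-independent target rate $1/n^{\alpha-1}$. Bypassing this requires the direct binomial-moment argument above, which exploits the vanishing $\phi(0)=\phi^{(1)}(0)=0$ to recover the sharp near-zero scaling $B_n[\phi](p) \asymp p/n^{\alpha-1}$, whose summation against $\sum_i p_i \le 1$ is alphabet-size independent.
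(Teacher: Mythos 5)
Your proof is correct and follows the same high-level decomposition as the paper's (split the alphabet at $p_i \approx 1/n$, and use the Bernstein--modulus bound via \cref{thm:bernstein-error,lem:modulus-bound} for the large-$p_i$ part), but the small-$p_i$ argument is a genuinely different technique. The paper handles $p_i \le 1/n$ by conditioning on $\{N_i = 0\}$ vs.\ $\{N_i > 0\}$ and applying a Taylor expansion of $\phi$ around $p_i$ with Lagrange remainder, using the observation that on $\{N_i > 0\}$ one has $N_i/n \ge 1/n \ge p_i$, so the intermediate point $\xi_i \ge p_i$ and $|\phi^{(2)}(\xi_i)| \lesssim p_i^{\alpha-2}$; it then sums $\sum_i p_i^{\alpha-1}/n \lesssim 1/n^{\alpha-1}$. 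You instead bound $B_n[\phi](p_i) = \Mean[\phi(N_i/n)]$ directly through $|\phi(p)| \lesssim p^{\alpha}$ (valid after the WLOG normalization $\phi(0)=\phi^{(1)}(0)=0$, which you correctly justify by the shift invariance of the bias) and the binomial moment estimate $\Mean[N_i^\alpha] \lesssim np_i$ when $np_i \le 1$, summing against $\sum_i p_i \le 1$. Both arguments exploit the same vanishing of $\phi$ at zero to get a per-symbol bound that is \emph{linear} in $p_i$ rather than sublinear, which is the crux; your version is more elementary/computational, the paper's more of a slick remainder trick. For the large-$p_i$ part your manipulation $p_i^{\alpha/2} = p_i\cdot p_i^{\alpha/2-1} \le p_i n^{1-\alpha/2}$ is also a bit cleaner than the paper's route via $\sup_{P}\sum_{p_i > 1/n}p_i^{\alpha/2} \lesssim n^{1-\alpha/2}$, though equivalent. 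One small bonus: by making the WLOG $\phi^{(1)}(0)=0$ explicit you avoid a gap that the paper's own proof leaves implicit (the paper's Taylor decomposition in fact produces a term $\phi^{(1)}(p_i)p_i\,\p\{N_i=0\}$ which only vanishes at the desired rate under that normalization, and \cref{lem:modulus-bound} also requires it).
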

\begin{proof}[Proof of \cref{thm:bias-plugin}]
  We divide the alphabets into two cases; $p_i \le 1/n$ and $p_i > 1/n$.

  {\bfseries Case $p_i \le 1/n$.} Since $\phi(0) = 0$, we have from the Taylor theorem that there exists $\xi_i$ between $\frac{N_i}{n}$ and $p_i$ such that
  \begin{align}
    & \abs*{\Mean\bracket*{\sum_{i : p_i \le 1/n}\paren*{\phi\paren*{\frac{N_i}{n}} - \phi(p_i)}}} \\
    \le& \begin{multlined}[t]
      \sum_{i : p_i \le 1/n}\p\cbrace*{N_i = 0}\abs*{\phi(p_i)} \\ + \abs*{\Mean\bracket*{\sum_{i : p_i \le 1/n}\frac{\phi^{(2)}(\xi_i)}{2}\paren*{\frac{N_i}{n} - p_i}^2 \middle| N_i > 0}}
    \end{multlined}\\
    \le& \begin{multlined}[t]
      \sum_{i : p_i \le 1/n}\p\cbrace*{N_i = 0}\abs*{\phi(p_i)} \\ + \Mean\bracket*{\sum_{i : p_i \le 1/n}\abs*{\frac{\phi^{(2)}(\xi_i)}{2}}\paren*{\frac{N_i}{n} - p_i}^2 \middle| N_i > 0} 
    \end{multlined}\\
    \le& \begin{multlined}[t]
      \sum_{i : p_i \le 1/n}\p\cbrace*{N_i = 0}\abs*{\phi(p_i)} \\ + \Mean\bracket*{\sum_{i : p_i \le 1/n}\frac{W_2\xi_i^{\alpha-2}+c_2}{2}\paren*{\frac{N_i}{n} - p_i}^2 \middle| N_i > 0}   
    \end{multlined}\\
    \le& \begin{multlined}[t]
      \sum_{i : p_i \le 1/n}\p\cbrace*{N_i = 0}\abs*{\phi(p_i)} \\ + \Mean\bracket*{\sum_{i : p_i \le 1/n}\frac{W_2p_i^{\alpha-2}+c_2}{2}\paren*{\frac{N_i}{n} - p_i}^2 \middle| N_i > 0}  
    \end{multlined}\\
    \lesssim& \sum_{i : p_i \le 1/n}\paren*{p_i^{\alpha-2}+1}\Mean\bracket*{\paren*{\frac{N_i}{n} - p_i}^2} \\
    \lesssim& \sum_{i : p_i \le 1/n}\paren*{p_i^{\alpha-2}+1}\frac{p_i}{n} \lesssim \frac{1}{n^{\alpha-1}}. \label{eq:bias-plugin1}
  \end{align}
  where we use $\abs*{\phi(p_i)} \lesssim p_i^\alpha \le p_i^2$ and $\frac{N_i}{n} \ge \frac{1}{n}$ if $N_i > 0$.

  {\bfseries Case $p_i > 1/n$.}
  Combining \cref{thm:bernstein-error,eq:plugin-bias-bernstein,lem:modulus-bound}, we have
  \begin{align}
    & \abs*{\Mean\bracket*{\sum_{i : p_i > 1/n}\paren*{\phi\paren*{\frac{N_i}{n}} - \phi(p_i)}}} \\
    \le& \sum_{i : p_i > 1/n}\Bias\bracket*{\phi\paren*{\frac{N_i}{n}} - \phi(p_i)} \\
    \lesssim& \sum_{i : p_i > 1/n}\frac{p_i^{\alpha/2}}{n^{\alpha/2}}.
  \end{align}
  Since $\sup_{P \in \dom{M}_k}\sum_{i : p_i > 1/n}p_i^{\alpha/2} \lesssim n^{1-\alpha/2}$, we have
  \begin{align}
    & \abs*{\Mean\bracket*{\sum_{i : p_i > 1/n}\paren*{\phi\paren*{\frac{N_i}{n}} - \phi(p_i)}}} \\
    \lesssim& \frac{n^{1-\alpha/2}}{n^{\alpha/2}} = \frac{1}{n^{\alpha-1}}. \label{eq:bias-plugin2}
  \end{align}
  Combining \cref{eq:bias-plugin1,eq:bias-plugin2}, we have
  \begin{align}
    & \Bias\bracket*{\sum_i\paren*{\phi\paren*{\frac{N_i}{n}} - \phi(p_i)}} \\
    \le& \begin{multlined}[t]
      \abs*{\Mean\bracket*{\sum_{i : p_i \le 1/n}\paren*{\phi\paren*{\frac{N_i}{n}} - \phi(p_i)}}} \\ + \abs*{\Mean\bracket*{\sum_{i : p_i > 1/n}\paren*{\phi\paren*{\frac{N_i}{n}} - \phi(p_i)}}}
    \end{multlined}\\
    \lesssim& \frac{1}{n^{\alpha-1}}.
  \end{align}
\end{proof}

By combining \cref{eq:plugin-bias-var-decomp,thm:variance-lipschitz,thm:bias-lipschitz,thm:bias-plugin}, we obtain the desired claim shown in \cref{thm:upper-bound2}:
\begin{proof}[Proof of \cref{thm:upper-bound2}]
 For $\alpha \ge 3/2$, by \cref{thm:bias-lipschitz,thm:bias-plugin},
 \begin{align}
     \Bias\bracket*{\sum_i\phi\paren*{\frac{N_i}{n}} - \theta(P)} \lesssim n^{-(\alpha-1)}\lor n^{-1} \lesssim n^{-1/2}.
 \end{align} 
 Substituting the bias and variance shown in \cref{thm:variance-lipschitz,thm:bias-lipschitz,thm:bias-plugin} into \cref{eq:plugin-bias-var-decomp} yields
 \begin{align}
     \sup_{P \in \dom{M}_k}\Mean\bracket*{\paren*{\hat\theta_{\mathrm{plugin}}(N) - \theta(P)}^2} \lesssim \paren*{\frac{1}{\sqrt{n}}}^2 + \frac{1}{n} \lesssim \frac{1}{n}.
 \end{align}
\end{proof}

\section{Lower Bound Analysis}\label{sec:lower-analysis}

In this section, we provide analyses of the lower bound on the minimax risk. Formally, we prove the following three theorems:
\begin{theorem}\label{thm:lower1}
  Suppose $\phi:[0,1]\to\RealSet$ is a function whose second divergence speed is $p^{\alpha}$ for $\alpha \in (0,2]$. If $\alpha \in (0,1]$, for any $n \ge 1$ and $k \ge 3$,
  \begin{align}
    R^*(n,k;\phi) \gtrsim \begin{dcases}
      \frac{\ln^2k}{n} & \textif \alpha = 1, \\
      \frac{k^{2-2\alpha}}{n} & \otherwise.
     \end{dcases}
  \end{align}
  Moreover, if $\alpha \in (1,2]$, there exists $K > 0$ such that for any $n \ge 1$ and any $k \ge K$,
  \begin{align}
      R^*(n,k;\phi) \gtrsim \frac{1}{n}.
  \end{align}
\end{theorem}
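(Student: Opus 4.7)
The plan is to apply Le Cam's two-point method. Recall that whenever one exhibits $P_0, P_1 \in \dom{M}_k$ with $n \cdot \mathrm{KL}(P_0 \| P_1) \le C$ for some universal constant $C$, combining Le Cam's lemma~\autocite{Tsybakov2009IntroductionEstimation} with Pinsker's inequality and the tensorization of the Kullback--Leibler divergence yields
\begin{equation*}
R^*(n, k; \phi) \gtrsim \bigl(\theta(P_0) - \theta(P_1)\bigr)^2.
\end{equation*}
The task therefore reduces to exhibiting, for each range of $\alpha$, a pair of distributions whose KL divergence is at most $C/n$ and whose functional gap is as large as possible.

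For $\alpha \in (0, 1]$, I would use a ``single heavy coordinate'' construction: fix $a \in (0, 1)$ at which $\phi'(a)$ is finite (e.g.\ $a = 1/2$), set $b = (1-a)/(k-1)$, and define
\begin{equation*}
P_0 = (a, b, \ldots, b), \qquad P_1 = P_0 + \delta\bigl(1, -\tfrac{1}{k-1}, \ldots, -\tfrac{1}{k-1}\bigr), \qquad \delta \asymp 1/\sqrt n.
\end{equation*}
A short Taylor expansion gives $\mathrm{KL}(P_0 \| P_1) = \delta^2/(2a(1-a)) + O(\delta^3) \asymp 1/n$ and $\theta(P_1) - \theta(P_0) = \delta\bigl[\phi'(a) - \phi'(b)\bigr] + O(\delta^2 k^{1-\alpha})$, where the second-order correction is $1/\sqrt n$ times the leading term and hence negligible. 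The divergence speed assumption controls $|\phi'(b)|$: \cref{lem:lower-div} applied with $\ell = 2$ (whose hypothesis $\ell > 1 + \alpha$ holds for $\alpha < 1$) gives that the first divergence speed of $\phi$ is also $p^\alpha$, so $|\phi'(b)| \ge W_1 b^{\alpha-1} - c_1' \asymp k^{1-\alpha}$ once $k$ is large enough that $W_1 b^{\alpha-1}$ dominates $c_1'$; analogously, \cref{lem:lowest-int-div} supplies $|\phi'(b)| \ge W_1 \log(1/b) - c_1' \asymp \log k$ for $\alpha = 1$. Since $\phi'(a)$ is bounded, $|\phi'(a) - \phi'(b)|$ inherits the same rate, yielding $\bigl(\theta(P_1) - \theta(P_0)\bigr)^2 \gtrsim k^{2-2\alpha}/n$ for $\alpha \in (0, 1)$ and $\gtrsim \log^2 k/n$ for $\alpha = 1$.

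For $\alpha \in (1, 2]$, I would use a fixed two-point construction. Since $|\phi''(p)| \ge W_2 p^{\alpha-2} - c_2' > 0$ on some interval near zero, $\phi'$ is non-constant on $(0, 1)$, so one can pick $p_1, p_2 \in (0, 1)$ with $p_1 + p_2 < 1$ and $\phi'(p_1) \ne \phi'(p_2)$. For $k \ge K := 3$, take
\begin{equation*}
P_0 = (p_1, p_2, 1 - p_1 - p_2, 0, \ldots, 0), \qquad P_1 = P_0 + \delta(1, -1, 0, \ldots, 0),
\end{equation*}
with $\delta \asymp 1/\sqrt n$. Again $\mathrm{KL}(P_0 \| P_1) \asymp \delta^2 \asymp 1/n$ and $\theta(P_1) - \theta(P_0) \asymp \delta$, giving $\bigl(\theta(P_1) - \theta(P_0)\bigr)^2 \gtrsim 1/n$.

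The main obstacle is making the $\gtrsim$ constants uniform over all $k \ge 3$ in the $\alpha \in (0, 1]$ case: the estimates $|\phi'(b)| \asymp k^{1-\alpha}$ and $\asymp \log k$ are informative only once $k$ exceeds a threshold depending on the constants $W_1, c_1'$ of the divergence speed, so that the divergent part of the lower bound surpasses $c_1'$. For the remaining (finite) range of small $k$, the target rate is itself of order $1/n$, and the fixed two-point construction used for $\alpha \in (1, 2]$ (with $p_1, p_2$ chosen so that $\phi'(p_1) \ne \phi'(p_2)$) supplies a matching $\gtrsim 1/n$ bound. Patching the two regimes yields the claimed lower bounds uniformly in $n \ge 1$ and $k \ge 3$.
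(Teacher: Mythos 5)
Your proposal is correct and follows the same overall strategy as the paper: Le Cam's two-point method with a ``single heavy coordinate'' perturbation. The paper takes $P = (1-p, p/(k-1), \ldots, p/(k-1))$ and $Q = (1-q, q/(k-1), \ldots, q/(k-1))$ with $p$ a small universal constant and $q = p - c/\sqrt n$, bounds the KL divergence via the $\chi^2$ divergence, and writes $\theta(P)-\theta(Q)$ \emph{exactly} as $|\phi^{(1)}(\xi_1)-\phi^{(1)}(\xi_2/(k-1))|\,|p-q|$ by the mean value theorem, then integrates the divergence-speed bound on $\phi^{(2)}$ to estimate both factors. This is structurally identical to your $(a,b,\ldots,b)$ construction; the only execution difference is that the paper avoids your $O(\delta^2 k^{1-\alpha})$ Taylor remainder (which for $n=1$ is not obviously negligible) by using the exact mean value form.

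The genuine divergence is your $\alpha\in(1,2]$ construction. The paper reuses the same $(1-p,p/(k-1),\ldots)$ family, normalizes $\phi^{(1)}(0)=0$, and shows that $\phi^{(1)}(\xi_2/(k-1))\to 0$ as $k\to\infty$ while $\phi^{(1)}(\xi_1)$ is bounded away from zero, so their argument only works for $k\ge K$ with some unspecified threshold $K$ — which is exactly the form of the claim. Your padded three-support-point construction $(p_1,p_2,1-p_1-p_2,0,\ldots,0)$ perturbed by $\delta(1,-1,0,\ldots,0)$ makes the bound entirely $k$-independent (any $k\ge3$), which is cleaner and strictly stronger. Similarly, the uniformity issue you flag for small $k$ in the $\alpha\in(0,1]$ case is real, but the paper resolves it differently: by picking the universal constant $p$ small enough, the leading term $W_2(1-\alpha)^{-1}((k-1)/p)^{1-\alpha}$ already dominates the additive constants $c_2',c_3$ even at $k=3$, so no patching is needed. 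Your patching via the fixed two-point construction is also valid and arguably more transparent, since it separates the divergent-derivative behavior from the bounded-derivative behavior. Both approaches buy essentially the same theorem; yours trades a slightly stronger statement for $\alpha\in(1,2]$ (explicit $K=3$) against one extra case analysis.
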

\begin{theorem}\label{thm:lower2}
 Suppose $\phi:[0,1]\to\RealSet$ is a function whose second divergence speed is $p^{\alpha}$ for $\alpha \in (0,1/2]$. If $n \gtrsim k^{1/\alpha}/\ln k$ and $k \gtrsim \ln^4 n$,
  \begin{align}
    R^*(n,k;\phi) \gtrsim \frac{k^2}{(n\ln n)^{2\alpha}}.
  \end{align}
\end{theorem}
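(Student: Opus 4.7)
I would follow the template developed by \textcite{Wu2016MinimaxApproximation} and \textcite{Jiao2015MinimaxDistributions}, reducing the minimax lower bound to a moment-matching/best-polynomial-approximation problem and applying Le Cam's two-point method over Poisson sampling (then transferring to multinomial via \cref{lem:well-approx-poisson}). The intuition is that any estimator whose rate is better than $k/(n\ln n)^\alpha$ could be used to distinguish two priors whose induced observation laws are statistically indistinguishable but whose functional values are well separated.

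\textbf{Step 1: Choice of parameters.} Set $\lambda = c_1 \ln n / n$ and $L = \lfloor c_2 \ln n \rfloor$ for small enough universal constants $c_1, c_2 > 0$ (to be tuned so that the chi-square bound in Step 4 is $O(1/k)$). Note $\lambda \in (0,1)$.

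\textbf{Step 2: Dual moment-matching construction.} By the standard LP duality between the best $L$-degree polynomial approximation on $[0,\lambda]$ and the moment-matching problem (see e.g.\ \autocite{Wu2016MinimaxApproximation}), there exist two probability measures $\nu_0,\nu_1$ supported on $[0,\lambda]$ such that
\begin{align}
\int x^j\, d\nu_0(x) = \int x^j\, d\nu_1(x) \quad \text{for } j=0,1,\dots,L,
\end{align}
while simultaneously
\begin{align}
\left|\int \phi\, d\nu_0 - \int \phi\, d\nu_1\right| \;\geq\; 2E_L(\phi,[0,\lambda]).
\end{align}
Using the divergence-speed hypothesis together with \cref{thm:modulus-best-approx} (applied to the affine rescaling from $[0,\lambda]$ to $[-1,1]$) and the weighted modulus of smoothness bound derived from $|\phi^{(\ell)}(p)|\asymp p^{\alpha-\ell}$, one obtains $E_L(\phi,[0,\lambda]) \gtrsim \lambda^{\alpha} L^{-2\alpha} \asymp (n\ln n)^{-\alpha}$. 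This is the lower-bound analogue of the upper bound used in \cref{sec:upper1}; I would package it as an auxiliary lemma mirroring the best-polynomial approximation estimates.

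\textbf{Step 3: Priors on $\dom{M}_k$.} Define two priors $\mu_0,\mu_1$ on $\dom{M}_k$ by drawing $p_1,\dots,p_k$ i.i.d.\ from $\nu_b$ and then conditioning on (or projecting onto) the simplex; equivalently, work directly in the Poisson model where the normalization constraint is irrelevant and each $\tilde N_i \sim \mathrm{Poi}(np_i)$ independently. By linearity of expectation,
\begin{align}
\big|\mathbb{E}_{\mu_0}[\theta(P)] - \mathbb{E}_{\mu_1}[\theta(P)]\big| = k\left|\int\phi\, d\nu_0-\int\phi\,d\nu_1\right|\gtrsim \frac{k}{(n\ln n)^{\alpha}}.
\end{align}
A variance computation (using $\mathrm{Var}_{\nu_b}[\phi(p)] \le \phi(\lambda)^2 \lesssim \lambda^{2\alpha}$ via \cref{lem:div-speed-holder1}) gives $\mathrm{Var}_{\mu_b}[\theta(P)] \lesssim k\lambda^{2\alpha}$; the hypothesis $k \gtrsim \ln^4 n$ ensures $\sqrt{k\lambda^{2\alpha}} = o(k/(n\ln n)^\alpha)$, so $\theta(P)$ concentrates around its mean under each prior and the two means remain separated with constant probability.

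\textbf{Step 4: Indistinguishability via matched moments.} Under the Poisson model, the marginal law of $\tilde N_i$ under $\mu_b$ is the Poisson mixture $f_b(j)=\int e^{-np}(np)^j/j!\, d\nu_b(p)$. Because $f_b(j)$ is a polynomial of degree $j$ in the moments of $\nu_b$, matched moments up to degree $L$ imply $f_0(j)=f_1(j)$ for $j\le L$; the tail $j>L$ contribution to $\chi^2(f_0\|f_1)$ is controlled by Chebyshev's inequality, $\p(\tilde N_i>L)\le (en\lambda/L)^L \le e^{-L}$ for a suitable choice of $c_1,c_2$. Tensorizing over $k$ independent coordinates and using $\chi^2_{\mathrm{tot}}\le (1+\chi^2_{\mathrm{per\, coord}})^k-1$, we obtain $\mathrm{TV}(\mu_0^{\otimes},\mu_1^{\otimes})\le 1/2$ provided $k\cdot e^{-c L}\lesssim 1$, which holds for $c_2$ large enough since $L\asymp \ln n$.

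\textbf{Step 5: Conclude via Le Cam.} Combining Steps 3 and 4, Le Cam's two-point method yields
\begin{align}
\tilde R^*(n,k;\phi) \gtrsim \left(\frac{k}{(n\ln n)^{\alpha}}\right)^2 = \frac{k^2}{(n\ln n)^{2\alpha}},
\end{align}
and \cref{lem:well-approx-poisson} transfers this bound to $R^*(n,k;\phi)$ (the correction term $\sup_{P}|\theta(P)|e^{-n/4}$ is exponentially small and absorbed).

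\textbf{Main obstacle.} The delicate step is Step 2: extracting a sharp \emph{lower} bound $E_L(\phi,[0,\lambda])\gtrsim \lambda^{\alpha}L^{-2\alpha}$ from the divergence-speed assumption alone. For $\phi(p)=p^\alpha$ this is classical, but for general $\phi$ we only control $\phi^{(\ell)}$ up to additive error terms $c_\ell$ and lower-order pollution. I would handle this by writing $\phi=\phi_{\mathrm{main}}+\phi_{\mathrm{smooth}}$ where $\phi_{\mathrm{main}}$ inherits the $p^\alpha$-like singularity and $\phi_{\mathrm{smooth}}$ is sufficiently regular that $E_L(\phi_{\mathrm{smooth}},[0,\lambda])$ is negligibly small relative to $\lambda^\alpha L^{-2\alpha}$, so the converse inequality in \cref{lem:best-modulus-converse} forces the desired lower bound. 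The secondary obstacle is ensuring that the joint prior sits inside $\dom{M}_k$ with high probability, which is why the Poissonization step via \cref{lem:well-approx-poisson} is essential, together with the extra assumption $k\gtrsim \ln^4 n$ that controls the variance of $\theta(P)$ under the prior.
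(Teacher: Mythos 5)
Your proposal follows essentially the same route as the paper: Poissonize via \cref{lem:well-approx-poisson}, construct two i.i.d.~priors from a pair of moment-matched measures on a small interval (the paper's \cref{lem:prior-construction}), bound the total variation of the induced Poisson mixture marginals via moment matching (the paper's \cref{lem:tv-poi-bound}), and apply Le Cam's two-fuzzy-hypotheses machinery (packaged in the paper as \cref{thm:risk-best-poly} feeding off \cref{thm:lower-approximated,thm:approx-tv-lower}). The magnitude of the supporting interval ($\asymp\ln n/n$) and the degree ($\asymp\ln n$) both agree with the paper after accounting for the paper's rescaling $p_i = U_i/k$ with $U_i\in[0,\lambda]$, $\lambda\asymp k\ln n/n$.

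Two places where the sketch is thinner than what is actually needed.  First, Step~3 says one can ``work directly in the Poisson model where the normalization constraint is irrelevant''; this is not quite right, because $\tilde R^*(n,k;\phi)$ still takes a supremum over $P\in\dom{M}_k$, so $\sum_i p_i = 1$ must hold even under Poisson sampling. The paper resolves this by introducing the relaxed class $\dom{M}_k(\epsilon)$ and an explicit reduction \cref{thm:lower-approximated} that transfers a lower bound on the relaxed risk $\tilde R^*(n,k,\epsilon;\phi)$ to a lower bound on $\tilde R^*(n/2,k;\phi)$, at the cost of additive terms controlled by the H\"older modulus of $\phi$; it is precisely the term $4^{2\alpha}W'k^{2-3\alpha}\lambda^{2\alpha}$ coming from this reduction (together with the Chebyshev-concentration term in \cref{thm:approx-tv-lower}) that forces the hypothesis $k\gtrsim\ln^4 n$. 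Your use of $k\gtrsim\ln^4 n$ only ``for the variance of $\theta(P)$'' would, by itself, give the weaker requirement $k\gtrsim\ln^{4\alpha}n$.  Second, and as you flag yourself, the crux is the matching lower bound $E_L(\phi,[0,\lambda])\gtrsim(\lambda/L^2)^\alpha$. Invoking \cref{thm:modulus-best-approx} does not suffice because that theorem only tracks dependence on $L$, not on $\lambda$ (the paper says so explicitly at the start of \cref{sec:best-poly}). The paper's \cref{thm:lower-poly-approx1} proves the $\lambda$-dependent lower bound directly: it lower-bounds the second-order weighted modulus $\omega^2_\varphi$ of the rescaled function using the two-sided divergence-speed hypothesis on $\phi^{(2)}$, then feeds this into the converse inequality \cref{lem:best-modulus-converse} together with the matching upper bound \cref{thm:upper-poly-approx1} to extract a lower bound on $E_L$ itself via a telescoping argument over degrees. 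Your proposed splitting $\phi=\phi_{\mathrm{main}}+\phi_{\mathrm{smooth}}$ is a plausible alternative but would still have to confront the same issue (the smooth part's approximation error must be shown to be strictly lower order in both $L$ and $\lambda$), and it is not the route the paper takes.
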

\begin{theorem}\label{thm:lower3}
 Suppose $\phi:[0,1]\to\RealSet$ is a function whose second divergence speed is $p^{\alpha}$ for $\alpha \in (1/2,2]$. If $n \gtrsim k^{1/\alpha}/\ln k$, and $k^2/(n\ln n)^{2\alpha}$ dominates the lower bound in \cref{thm:lower1},
 \begin{align}
    R^*(n,k;\phi) \gtrsim \frac{k^2}{(n\ln n)^{2\alpha}}.
  \end{align}
\end{theorem}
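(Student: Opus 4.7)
The plan is to apply a Le Cam style two-fuzzy-hypotheses lower bound, reducing the problem to a statement about the best polynomial approximation of $\phi$ on an interval of length of order $\ln n/n$. First I would pass to the Poisson sampling model via \cref{lem:well-approx-poisson}, so that it suffices to prove $\tilde R^*(n,k;\phi) \gtrsim k^2/(n\ln n)^{2\alpha}$; the factorized structure $\tilde N_i \sim \Poi(np_i)$ across coordinates makes product priors tractable.

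Next I would set $\eta = M\ln n/n$ and $L = \lfloor c\ln n\rfloor$ with $M$ large and $c$ small, and invoke the classical duality between best polynomial approximation and moment matching to produce probability measures $\nu_0,\nu_1$ supported on $[0,\eta]$ satisfying
\begin{equation*}
\int t^j\,d\nu_0(t) = \int t^j\,d\nu_1(t) \quad \text{for}\ j=0,1,\ldots,L,
\end{equation*}
and $\int \phi\,d\nu_0 - \int \phi\,d\nu_1 \ge 2\,E_L(\phi,[0,\eta])$. I would then combine the converse inequality of \cref{thm:modulus-best-approx} (transported from $[-1,1]$ to $[0,\eta]$ by an affine change of variable) with a matching lower bound on the weighted modulus of smoothness of $\phi$ on $[0,\eta]$ coming from the divergence speed assumption, to conclude $E_L(\phi,[0,\eta]) \gtrsim \eta^\alpha/L^{2\alpha} \asymp (n\ln n)^{-\alpha}$, giving a per-coordinate target gap of the correct order.

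Third, I would lift $\nu_0,\nu_1$ to priors $\pi_0,\pi_1$ on $\dom{M}_k$ by drawing $p_1,\ldots,p_k$ i.i.d.\ from $\nu_b$ and conditioning on $\sum_i p_i \le 1$, assigning the remaining mass to an auxiliary alphabet whose contribution is identical under both priors. The hypothesis $n \gtrsim k^{1/\alpha}/\ln k$ controls both the range of $\nu_b$ and the typical normalization, while the divergence speed assumption bounds $|\phi|$ on $[0,\eta]$ by $\lesssim \eta^\alpha$. By a Bernstein type inequality, $\theta(P)=\sum_i \phi(p_i)$ concentrates around $\mu_b := k\int\phi\,d\nu_b$, and the assumption that $k^2/(n\ln n)^{2\alpha}$ dominates the lower bound of \cref{thm:lower1} guarantees that the fluctuations are strictly smaller than the mean gap $|\mu_0-\mu_1| \gtrsim k/(n\ln n)^\alpha$.

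Finally, to bound the distinguishability of the two marginal data laws, I would compute the single-coordinate $\chi^2$ divergence between the Poisson mixtures $\int \Poi(np)\,d\nu_b(p)$; moment matching up to order $L$ annihilates all terms of degree $\le L$ in the Poisson moment expansion, leaving a tail $\lesssim \sum_{j > L}(en\eta/j)^j \le 2^{-L}$ for suitably chosen $M,c$, and tensorization across the $k$ coordinates keeps total variation bounded below a constant $<1$. The standard two-fuzzy-hypothesis inequality then yields the claimed bound. I expect the main obstacle to be the second step, namely extracting $E_L(\phi,[0,\eta]) \gtrsim (n\ln n)^{-\alpha}$ for a generic $\phi$ satisfying only the divergence speed assumption: unlike the benchmark $\phi(p)=p^\alpha$ there is no explicit Chebyshev-type extremal signature to exhibit, so the lower bound on $|\phi^{(\ell)}|$ must be propagated through the weighted modulus of smoothness on a vanishingly small interval, which is the delicate technical core of the argument.
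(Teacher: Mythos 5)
Your plan is essentially the paper's proof for the sub-case $\alpha \in (1/2,1)$, where the paper also reduces to the best polynomial approximation of $\phi$ itself on an interval of length $\asymp \ln n/n$ near zero (this is the ``first condition'' of \cref{thm:risk-best-poly}, combined with \cref{thm:lower-poly-approx1}). But the construction as you describe it has a genuine gap for $\alpha \ge 1$. Under the hypotheses, $k$ can be as large as $\asymp (n\ln k)^\alpha$, and with $\eta \asymp \ln n/n$ the extremal Chebyshev-type measures $\nu_0,\nu_1$ produced by the moment-matching duality on $[0,\eta]$ carry a first moment $\mu$ that is generically of order $\eta$, not negligible. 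The total mass $\sum_i p_i$ then has mean $k\mu \asymp k\ln n/n$, which for $\alpha \ge 1$ and $k$ near its ceiling can diverge. Your proposed fix---draw $p_i \sim \nu_b$ i.i.d., condition on $\sum_i p_i \le 1$, and dump the rest on an auxiliary symbol---breaks down precisely here, because the conditioning event has vanishing probability, so the conditional law of $(p_1,\dots,p_k)$ no longer resembles the product $\nu_b^{\otimes k}$ and the moment-matching and total-variation bounds you rely on are lost. The hypothesis $n \gtrsim k^{1/\alpha}/\ln k$ does not rescue this: for $\alpha \in (1/2,1)$ it forces $k\eta \to 0$, but for $\alpha \ge 1$ it does not.

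The paper avoids this by passing, for $\alpha \ge 1$, from $\phi$ to $\phi^\star(x)=\phi(x)/x$ (\cref{lem:best-approx-solution}): one builds priors whose first moment is pinned to a prescribed small value $\gamma$, with matching moments of orders $2,\dots,L+1$, and the achievable separation becomes $2\gamma E_L(\phi^\star,[\gamma,\gamma/\eta])$ where the interval is now bounded away from zero. This is the second condition in \cref{thm:risk-best-poly}, and the corresponding approximation-theoretic lower bound is \cref{thm:lower-poly-approx2}, not \cref{thm:lower-poly-approx1}. A second, smaller omission: for $\alpha \ge 1$, your claimed bounds $|\phi| \lesssim \eta^\alpha$ on $[0,\eta]$ and $E_L(\phi,[0,\eta]) \gtrsim (\eta/L^2)^\alpha$ are false without first normalizing $\phi$ so that $\phi(0)=0$ and $\phi^{(1)}(0)=0$ (otherwise the constant and linear parts dominate and $E_0,E_1$ scale like $\eta$, not $\eta^\alpha$). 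This normalization is legitimate because $\theta(P;\phi)$ is invariant under adding $c + c'(p-1/k)$, but you need to say so, and even after doing so the mass-budget problem above remains. So the student's route is correct only for $\alpha \in (1/2,1)$; the case $\alpha \in [1,3/2)$ requires the $\phi^\star$ device, which is a genuinely different piece of the argument.
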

The lower bound shown in \cref{thm:optimal-rate-0-1/2,thm:optimal-rate-1/2-1,thm:optimal-rate-1,thm:optimal-rate-1-3/2,thm:optimal-rate-3/2} can be obtained by simply combining \cref{thm:lower1,thm:lower2,thm:lower3}.

\subsection{Lower Bound Analysis for {\cref{thm:lower1}}}
We use the Le Cam's two point method~(see, e.g., \autocite{Tsybakov2009IntroductionEstimation}) to prove \cref{thm:lower1}. Let $P$ and $Q$ be two probability vectors in $\dom{M}_k$. Then, the lower bound is given by
\begin{lemma}[\autocite{Tsybakov2009IntroductionEstimation}]\label{lem:le-cam-two-point}
  The minimax lower bound is given as
  \begin{align}
    R^*(n,k;\phi) \ge \frac{1}{4}\paren*{\theta(P) - \theta(Q)}^2e^{-n\KL(P,Q)},
  \end{align}
  where $\KL$ denotes the KL divergence.
\end{lemma}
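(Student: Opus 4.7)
The plan is to prove this by the classical two-step reduction underlying all Le Cam-type arguments: first reduce estimation to binary hypothesis testing between $P$ and $Q$, and then lower bound the resulting testing error in terms of the Kullback--Leibler divergence between the product measures, exploiting the tensorisation identity $\KL(P^{\otimes n},Q^{\otimes n})=n\KL(P,Q)$.

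For the first step, set $d=\abs*{\theta(P)-\theta(Q)}$, and given any estimator $\hat\theta$, define the binary test $\Psi$ that declares $P$ when $\hat\theta$ is closer to $\theta(P)$ than to $\theta(Q)$ and declares $Q$ otherwise. A triangle-inequality argument shows that whenever $\Psi$ errs under a true distribution $R\in\cbrace{P,Q}$, one necessarily has $\abs*{\hat\theta-\theta(R)}\ge d/2$, so that $(\hat\theta-\theta(R))^2 \ge d^2/4$ on that event. Consequently
\begin{align}
\sup_{R\in\dom{M}_k}\Mean_R\bracket*{(\hat\theta(N)-\theta(R))^2} \ge \frac{d^2}{4}\max_{R\in\cbrace{P,Q}}\p_R\paren*{\Psi\ne R}.
\end{align}
Taking the infimum over $\hat\theta$ reduces the lemma to a lower bound on the minimum testing error between the product measures $P^{\otimes n}$ and $Q^{\otimes n}$.

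For the second step, I would invoke a Bretagnolle--Huber type inequality $\inf_\Psi\max_R\p_R(\Psi\ne R) \gtrsim \exp(-\KL(P^{\otimes n},Q^{\otimes n}))$, which can be obtained by combining the standard sum-of-errors identity $\sum_R\p_R(\Psi\ne R)\ge 1-\mathrm{TV}(P^{\otimes n},Q^{\otimes n})$ with the classical estimate $1-\mathrm{TV}(\mu,\nu) \ge \tfrac12\exp(-\KL(\mu,\nu))$. Together with tensorisation of KL across i.i.d.\ samples, substitution into the reduction of the first step yields the stated bound $R^*(n,k;\phi) \ge \tfrac14(\theta(P)-\theta(Q))^2 e^{-n\KL(P,Q)}$.

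The only delicate issue is recovering the constant $1/4$ exactly. A naive chain (max-to-sum conversion, sum of errors through total variation, Bretagnolle--Huber) would shed multiplicative factors of $1/2$ at several steps and deliver a smaller constant. The tightest route, as in Tsybakov's Theorem~2.2, is to bound $\max_R\p_R(\Psi\ne R)$ directly by appealing to the Neyman--Pearson optimal test rather than passing through the sum of errors; this keeps the $1/4$ in the final bound, absorbing exactly the triangle-inequality reduction and nothing more. Since the statement is cited from Tsybakov's monograph, this is really just a bookkeeping matter and no new idea beyond the standard testing--estimation duality is required.
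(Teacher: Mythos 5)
The paper offers no proof of this lemma---it is quoted from Tsybakov---so the only thing to assess is whether your argument actually delivers the stated inequality. Your first two steps are the standard and correct machinery behind the citation: the triangle-inequality reduction gives $\sup_R \Mean_R[(\hat\theta-\theta(R))^2]\ge \frac{d^2}{4}\max_{R\in\{P,Q\}}\p_R(\Psi\ne R)$ with $d=\abs{\theta(P)-\theta(Q)}$, and tensorisation plus a Bretagnolle--Huber-type bound controls the testing error by $e^{-n\KL(P,Q)}$ up to a constant. This yields $R^*(n,k;\phi)\gtrsim d^2 e^{-n\KL(P,Q)}$, which is all the paper ever uses (the lemma only enters through $\gtrsim$).

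The gap is in your final claim that the Neyman--Pearson route ``keeps the $1/4$ in the final bound.'' Tsybakov's Theorem~2.2(iii) bounds the minimax testing error by $\frac{1}{4}e^{-\KL(P^{\otimes n},Q^{\otimes n})}$, and multiplying by the $d^2/4$ from your reduction gives $\frac{d^2}{16}e^{-n\KL(P,Q)}$, not $\frac{d^2}{4}e^{-n\KL(P,Q)}$. More fundamentally, no argument that passes through the testing reduction can produce the constant $1/4$ as stated: since any test has $\max_R\p_R(\Psi\ne R)\le 1/2$, the chain $R^*\ge\frac{d^2}{4}\max_R\p_R(\Psi\ne R)$ caps out at $d^2/8$, whereas $\frac{d^2}{4}e^{-n\KL(P,Q)}$ exceeds $d^2/8$ whenever $n\KL(P,Q)<\ln 2$; the intermediate inequality $\max_R\p_R(\Psi\ne R)\ge e^{-n\KL(P,Q)}$ you would need is simply false there. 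Even the sharper non-testing route (using $(\hat\theta-\theta(P))^2+(\hat\theta-\theta(Q))^2\ge d^2/2$ pointwise against $\int\min(dP^{\otimes n},dQ^{\otimes n})\ge\frac12 e^{-n\KL}$) only gives the constant $1/8$. So either one proves the lemma with a genuinely different argument tailored to the squared loss, or one accepts that the constant $1/4$ in the quoted statement is loose; your proposal as written establishes the bound only up to a smaller absolute constant, which is harmless for the paper's purposes but does not prove the lemma in the form stated.
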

From \cref{lem:le-cam-two-point}, we want to appropriately choose $P$ and $Q$ that maximizes difference between $\theta(P)$ and $\theta(Q)$ with small KL divergence between $P$ and $Q$. Given $p$ and $q$, we define $P$ and $Q$ as
\begin{align}
  P =& \paren*{1-p,\frac{p}{k-1},...,\frac{p}{k-1}}, \\
  Q =& \paren*{1-q,\frac{q}{k-1},...,\frac{q}{k-1}}.
\end{align}
With this definition, we derive the upper bound on $\KL(P,Q)$ and the lower bound on $\paren{\theta(P) - \theta(Q)}^2$. Then, we choose $p$ and $q$ appropriately to obtain the minimax lower bound in \cref{thm:lower1}.
\begin{proof}[Proof of \cref{thm:lower1}]
  We first upper bound the KL divergence between $P$ and $Q$. Letting $\ChiSquare$ be the $\chi^2$ divergence, there is a well-known bound as $\KL(P,Q) \le \ChiSquare(P,Q)/2$~(see, e.g., \autocite{Tsybakov2009IntroductionEstimation}). Hence,
  \begin{align}
    \KL(P,Q) \le& \frac{1}{2}\ChiSquare(P,Q) \\
    =& \frac{(p-q)^2}{2(1-p)} + (k-1)\frac{\paren*{\frac{p}{k-1} - \frac{q}{k-1}}^2}{2\frac{p}{k-1}} \\
    =& \frac{(p-q)^2}{2(1-p)} + \frac{(p-q)^2}{2p} = \frac{(p-q)^2}{2p(1-p)}.
  \end{align}
  From \cref{lem:le-cam-two-point}, we choose $p$ and $q$ that maximizes $\abs{\theta(P) - \theta(Q)}$ under constraint $(p-q)^2/2p(1-p) \lesssim 1/n$. Application of the Taylor theorem yields that there exist $\xi_1$ between $1-p$ and $1-q$ and $\xi_2$ between $p$ and $q$ such that
  \begin{align}
    & \abs*{\theta(P) - \theta(Q)} \\
    =& \begin{multlined}[t]
      \abs[\Bigg]{\phi(1-p) - \phi(1-q) \\ + (k-1)\paren*{\phi\paren*{\frac{p}{k-1}} - \phi\paren*{\frac{q}{k-1}}}}
    \end{multlined}\\
    =& \abs*{\phi^{(1)}(\xi_1)(q-p) + \phi^{(1)}\paren*{\frac{\xi_2}{k-1}}(p-q)} \\
    =& \abs*{\phi^{(1)}(\xi_1) - \phi^{(1)}\paren*{\frac{\xi_2}{k-1}}}\abs*{p - q}. \label{eq:lower1-diff-th}
  \end{align}

  We derive a lower bound on \cref{eq:lower1-diff-th} by dividing $\alpha$ into three cases; $\alpha \in (0,1)$, $\alpha = 1$, and $\alpha \in (1,2]$. For the following analysis, it is worthy to note that we have $\abs*{\phi^{(2)}(p)} > 0$ for $p \in (0,p_0)$ where $p_0 = 1\land(c'_2/W_2)^{1/(\alpha-2)}$ because of the divergence speed assumption. Thus, $\phi^{(2)}$ has the same sign in $p \in (0,p_0)$.

  {\bfseries Case $\alpha \in (0,1)$.}
  Suppose $p,q \le p_0$. From the absolutely continuity of $\phi^{(1)}$ and \cref{lem:lowest-int-div}, we have
  \begin{align}
      & \abs*{\phi^{(1)}\paren*{\frac{\xi_2}{k-1}} - \phi^{(1)}(0)} \\
      =& \abs*{\int_0^{\xi_2/(k-1)}\phi^{(2)}(x) dx} \\
      \ge& \abs*{\int_0^{\xi_2/(k-1)}W_2x^{\alpha-2} - c'_2 dx} \\
      =& \frac{W_2}{1-\alpha}\paren*{\frac{k-1}{\xi_2}}^{1-\alpha} - c'_2\frac{\xi_2}{k-1} \gtrsim k^{1-\alpha}.
  \end{align}
  Also, we have
  \begin{align}
      \abs*{\phi^{(1)}(\xi_1) - \phi^{(1)}(0)} =& \abs*{\int_0^{\xi_1}\phi^{(2)}(x)dx} \\
      \le& \abs*{\int_0^{\xi_1}W_2x^{\alpha-2} + c_2dx} \\
      \le& \frac{W_2}{1-\alpha}\xi_1^{-(1-\alpha)} + c_2\xi_1.
  \end{align}
  Since  $\xi_2 \le p \lor q$ and $\xi_1 \ge (1-p)\land(1-q)$, we have $\abs{\phi^{(1)}\paren{\frac{\xi_2}{k-1}} - \phi^{(1)}(0)} \ge \abs*{\phi^{(1)}(\xi_1) - \phi^{(1)}(0)}$ for sufficiently small $p$ and $q$. Suppose $p$ is a sufficiently small universal constant such that $p \le p_0$, and $q = p - \frac{c}{\sqrt{n}}$ for an universal constant $c > 0$ such that $c < p_0$. Then, we have $\KL(P,Q) \lesssim 1/n$ and
  \begin{align}
    \abs*{\theta(P) - \theta(Q)} \gtrsim \frac{k^{1-\alpha}}{\sqrt{n}}.
  \end{align}

  {\bfseries Case $\alpha = 1$.}
  Suppose $p,q \le p_0$. From the absolutely continuity of $\phi^{(1)}$ and \cref{lem:lowest-int-div}, we have
  \begin{align}
      & \abs*{\phi^{(1)}\paren*{\frac{\xi_2}{k-1}} - \phi^{(1)}(p_0)} \\
      =& \abs*{\int_{\xi_2/(k-1)}^{p_0}\phi^{(2)}(x) dx} \\
      \ge& \abs*{\int_{\xi_2/(k-1)}^{p_0}W_2x^{-1} - c'_2 dx} \\
      =& \abs*{W_2\ln(p_0(k-1)/\xi_2) - c'_2\paren*{p_0-\frac{\xi_2}{k-1}}} \gtrsim \ln k.
  \end{align}
  Also, we have
  \begin{align}
      \abs*{\phi^{(1)}(\xi_1) - \phi^{(1)}(p_0)} =& \abs*{\int_{p_0}^{\xi_1}\phi^{(2)}(x)dx} \\
      \le& \abs*{\int_{p_0}^{\xi_1}W_2x^{-1} + c_2dx} \\
      \le& \abs*{W_2\ln(p_0/\xi_1) - c_2(\xi_1-p_0)}.
  \end{align}
  As well as the case $\alpha \in (0,1)$, we have $\abs{\phi^{(1)}\paren{\frac{\xi_2}{k-1}} - \phi^{(1)}(p_0)} \ge \abs*{\phi^{(1)}(\xi_1) - \phi^{(1)}(p_0)}$ for sufficiently small $p$ and $q$. Again, suppose $p$ is a sufficiently small universal constant such that $p \le p_0$, and $q = p - \frac{c}{\sqrt{n}}$ for an universal constant $c > 0$ such that $c < p_0$. Then, we have $\KL(P,Q) \lesssim 1/n$ and
  \begin{align}
    \abs*{\theta(P) - \theta(Q)} \gtrsim \frac{\ln k}{\sqrt{n}}.
  \end{align}

  {\bfseries Case $\alpha \in (1,2]$.}
  We can assume $\phi^{(1)}(0) = 0$ without loss of generality because for any $c \in \RealSet$, $\theta(P;\phi) = \theta(P;\phi_c)$ for $\phi_c(p) = \phi(p) + c(p-1/k)$. For some universal constant $c > 0$, we set $p$ and $q$ such that $c \le 1-p,1-q < p_0$. Then, we have
  \begin{align}
    \abs*{\phi^{(1)}(\xi_1)} =& \abs*{\int_0^{\xi_1}\phi^{(2)}(x)dx} \\
    \ge& \int_0^{\xi_1}\paren*{W_2x^{\alpha-2}-c'_2}dx \\
    \ge& \frac{W_2}{\alpha-1}c^{\alpha-1}-c'_2c > 0.
  \end{align}
  Also, we have
  \begin{align}
    & \abs*{\phi^{(1)}\paren*{\frac{\xi_2}{k-1}}} \\
    \le& \int_0^{\frac{\xi_2}{k-1}}\abs*{\phi^{(2)}(x)}dx \\
    \le& \int_0^{\frac{\xi_2}{k-1}}\paren*{W_2x^{\alpha-2}+c_2}dx \\
    \le& \frac{W_2}{(\alpha-1)(k-1)^{\alpha-1}}\xi_2^{\alpha-1}+\frac{c_2}{k-1}\xi_2 \\
    \le& \frac{W_2}{(\alpha-1)(k-1)^{\alpha-1}}(p \lor q)^{\alpha-1}+\frac{c_2}{k-1}(p \lor q).
  \end{align}
  Thus, for sufficiently large $k$ such that $\abs*{\phi^{(1)}(\xi_1)} \ge \abs*{\phi^{(1)}\paren*{\frac{\xi_2}{k-1}}}$, we have
  \begin{align}
    \abs*{\theta(P) - \theta(Q)} \gtrsim \abs*{p - q}.
  \end{align}
  Let $p$ be an sufficiently large universal constant such that $1-p < p_0$, and let $q=p+\frac{c}{\sqrt{n}}$ for sufficiently small $c > 0$. Then, $1-q < p_0$ and $(p-q)^2/2p(1-p) \lesssim 1/n$. Hence, we have $\KL(P,Q) \lesssim 1/n$ and
  \begin{align}
    \abs*{\theta(P) - \theta(Q)} \gtrsim \frac{1}{n}.
  \end{align}

  Combining all the cases and applying \cref{lem:le-cam-two-point} yields the claim.
\end{proof}

\subsection{Lower Bound Analysis for {\cref{thm:lower2,thm:lower3}}}
The proofs of \cref{thm:lower2,thm:lower3} basically follow the same manner of \textcite{Wu2016MinimaxApproximation} in which they characterized the lower bound on the minimax risk for Shannon entropy by the best polynomial approximation error. We generalize their result to be applicable for the general additive functional.

\subsubsection{Lower Bound using Best Polynomial Approximation}
The first step to prove \cref{thm:lower2,thm:lower3} is to connect the minimax risk to the best polynomial approximation error. More precisely, we prove the following claim:
\begin{theorem}\label{thm:risk-best-poly}
  Let $\phi:[0,1]\to\RealSet$ be a function whose second divergence speed is $p^\alpha$ for $\alpha > 0$. Given positive integers $L$, let $d > 0$ and $\lambda > 0$ be positive reals such that either of the following conditions holds:
  \begin{enumerate}
    \item $\lambda \le 1/12$ and $2kE_L\paren{\phi,[0,\lambda/k]} \ge d$,
    \item $\lambda \le \sqrt{k}/12$ and $\exists \gamma \in (0,1)$, $2\gamma E_L\paren{\phi^\star,[\gamma,\gamma^2\lambda/k]} \ge d$ and $\gamma^2\lambda \le k$,
  \end{enumerate}
  where $\phi^\star(x)=\phi(x)/x$. Then, if $\alpha \in (0,1)$, there exists a finite constant $W,W' > 0$ depending only on $\phi$ such that
  \begin{multline}
    \tilde{R}^*(n/2,k,;\phi) \ge \frac{d^2}{32}\paren*{\frac{7}{8} - k\paren*{\frac{2en\lambda}{Lk}}^L} \\  - Wk^{1-2\alpha}\lambda^{2\alpha} - W'k^{2-2\alpha}e^{-n/32} - 4^{2\alpha}W'k^{2-2\alpha}k^{-\alpha}\lambda^{2\alpha}.
  \end{multline}
  If $\alpha = 1$, there exists a finite constant $W,W' > 0$ depending only on $\phi$ such that
  \begin{multline}
    \tilde{R}^*(n/2,k;\phi) \ge \frac{d^2}{32}\paren*{\frac{7}{8} - k\paren*{\frac{2en\lambda}{Lk}}^L} - \frac{W\lambda^2\ln^2(\lambda/ek)}{k} \\ - W'\ln^2\paren*{ek}e^{-n/32} - 16W'\frac{\lambda^2}{k}\ln^2\paren*{ek} \\ - W' \paren*{1+\paren*{\frac{4\lambda}{\sqrt{k}}}}^2\ln^2\paren*{1+\paren*{\frac{4\lambda}{\sqrt{k}}}}.
  \end{multline}
  If $\alpha \in (1,2)$, there exists a finite constant $W,W' > 0$ depending only on $\phi$ such that
  \begin{multline}
    \tilde{R}^*(n/2,k,;\phi) \ge \frac{d^2}{32}\paren*{\frac{7}{8} - k\paren*{\frac{2en\lambda}{Lk}}^L} \\ - Wk^{1-2\alpha}\lambda^{2\alpha} - W'e^{-n/32} - 16W'k^{-2}\lambda^2.
  \end{multline}
\end{theorem}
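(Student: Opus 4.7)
The plan is to follow the ``method of two fuzzy hypotheses'' for minimax lower bounds (as used by \textcite{Wu2016MinimaxApproximation} for Shannon entropy), instantiated so that the key gap between the two hypotheses comes from the duality of best polynomial approximation. Concretely, I will construct two priors $\mu_0,\mu_1$ on $\dom{M}_k$, lower bound the minimax Poisson risk by a constant multiple of the squared functional gap times $(7/8-\mathrm{TV}(\mu_0 P^{\otimes n},\mu_1 P^{\otimes n}))$, control the TV distance using that the priors are built from probability measures with matching first $L$ moments, and account for variance and normalization losses that produce the subtracted terms.

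First, I will invoke LP duality for best polynomial approximation on the interval $[0,\lambda/k]$: the hypothesis $2kE_L(\phi,[0,\lambda/k])\ge d$ gives two probability measures $\nu_0,\nu_1$ supported in $[0,\lambda/k]$ whose first $L$ moments agree and with $|\int \phi\, d(\nu_0-\nu_1)|\ge d/k$. I draw $U_1^{(j)},\dots,U_k^{(j)}\stackrel{\mathrm{iid}}{\sim}\nu_j$ and define $\mu_j$ as the law of $P=(U_1^{(j)},\dots,U_k^{(j)})$ conditioned on the simplex event $\mathcal{E}=\{\sum_i U_i^{(j)}\le 1\}$ (and, say, rescaled or padded to lie in $\dom{M}_k$). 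Because $\Mean_{\nu_j}[U]\le\lambda/k$, Markov / Chernoff gives $\p(\mathcal{E}^c)\lesssim e^{-n/32}$ under either prior (choosing $\lambda\le 1/12$). The analogue for condition~(2) uses mixtures of a point mass at $\gamma$ with the dual measures for $\phi^\star(x)=\phi(x)/x$ on $[\gamma,\gamma^2\lambda/k]$; this replacement is what allows extracting the lower bound from the weighted approximation problem.

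Next comes the TV bound. Conditional on $P$, $\tilde{N}=(\tilde{N}_1,\dots,\tilde{N}_k)$ has independent Poisson coordinates, so the induced single-coordinate marginal is the Poissonized mixture $\sum_{m\ge 0}\frac{n^m}{m!}\Mean_{\nu_j}[U^m e^{-nU}]\delta_m$. Because $\nu_0,\nu_1$ agree on polynomials of degree $\le L$, the two single-coordinate mixtures are identical on $\{m\le L\}$, and a standard Poisson-tail computation gives coordinate TV $\lesssim (2en\lambda/Lk)^L$; tensorizing and using the inequality for product measures produces the factor $k(2en\lambda/Lk)^L$ that appears in the bracket. The functional gap under the unconditional priors is $\Mean_{\mu_j}[\theta(P)] = k\int\phi\, d\nu_j$, so $|\Mean_{\mu_0}\theta - \Mean_{\mu_1}\theta|\ge d$. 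The concentration of $\theta(P)=\sum_i\phi(U_i)$ around its mean is controlled by its variance $k\,\mathrm{Var}_{\nu_j}[\phi(U)]$, for which the divergence-speed assumption provides $|\phi(u)|\lesssim u^\alpha$ on $[0,\lambda/k]$ (up to lower-order corrections), giving variance $\lesssim k^{1-2\alpha}\lambda^{2\alpha}$ for $\alpha\in(0,1)$, logarithmic factors for $\alpha=1$, and the corresponding bound for $\alpha\in(1,2)$. Finally, a standard argument on the ``two priors with small TV'' setup (Chebyshev to collapse the fuzzy priors to point masses concentrated on a good event, then Le Cam) converts these ingredients into the stated bounds, with the terms $Wk^{1-2\alpha}\lambda^{2\alpha}$ (variance of $\theta$), $W'k^{2-2\alpha}e^{-n/32}$ (bad simplex event), and $4^{2\alpha}W'k^{2-2\alpha}k^{-\alpha}\lambda^{2\alpha}$ (bias incurred by renormalizing onto $\dom{M}_k$) appearing explicitly.

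The main obstacle is the bookkeeping around normalization onto the simplex: one must show simultaneously that (i)~the conditioned priors still have nearly matching low-order Poisson moments so the TV bound survives, (ii)~the renormalization shift in $\theta$ is small compared to $d$, and (iii)~the variance of $\theta$ under the conditioned prior remains of the claimed order. The per-case divergence-speed estimates in \cref{lem:div-speed-holder1,lem:div-speed-holder2,lem:div-speed-holder3,lem:lowest-int-div} control both (ii) and (iii), while (i) is handled by the same exponential tail $e^{-n/32}$ that enters as $W'k^{2-2\alpha}e^{-n/32}$. The three cases $\alpha\in(0,1)$, $\alpha=1$, $\alpha\in(1,2)$ run through the same skeleton and differ only in which of these divergence-speed bounds is invoked, which is why the structure of the three displayed inequalities is the same up to the replacement of power-law error terms by their logarithmic or linear counterparts.
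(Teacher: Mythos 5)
Your overall strategy is right in spirit---moment-matched dual priors from best polynomial approximation, Poissonized TV control via \cref{lem:tv-poi-bound}, and a fuzzy-hypotheses / Le Cam argument---and this is indeed the skeleton of the paper's proof. However, there is a genuine structural gap in how you handle the normalization onto $\dom{M}_k$, and the $e^{-n/32}$ term in the theorem is attributed to the wrong source.

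If $U_1^{(j)},\dots,U_k^{(j)}$ are iid with support in $[0,\lambda/k]$ and $\lambda\le 1/12$, then $\sum_i U_i^{(j)}\le \lambda\le 1/12$ \emph{deterministically}, so $\p(\mathcal{E}^c)=0$ for your event $\mathcal{E}=\{\sum_i U_i^{(j)}\le 1\}$. The problem is not that the vectors leave the simplex; it is that they are nowhere near it---the total mass is of order $\lambda\ll 1$, not $1$. Consequently, ``rescaling'' $P\mapsto P/\sum_i p_i$ would inflate every coordinate by a factor $\gtrsim 12$, which wrecks the polynomial-approximation structure on $[0,\lambda/k]$, while ``padding'' requires a careful argument that is precisely the content of the paper's two-step decomposition. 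The paper instead (i)~pads with an extra coordinate $1-\beta$ to form a $(k+1)$-vector whose total mass is only approximately $1$, controls the deviation $|\sum_i p_i - 1|\le\epsilon$ via Chebyshev (yielding the $1/16$-type probabilities folded into the $7/8$ constant, not an exponential bound), and bounds the fuzzy-hypotheses risk over the \emph{approximated} simplex $\dom{M}_{k+1}(\epsilon)$ via \cref{thm:approx-tv-lower}; and then (ii)~separately proves a reduction lemma, \cref{thm:lower-approximated}, from the approximated problem back to the true minimax risk over $\dom{M}_k$. The $e^{-n/32}$ term comes from step~(ii): it is the Poisson tail $\p\{n'\le n/2\}$ for the total sample size $n'=\sum_i\tilde{N}_i\sim\Poi(n\sum_ip_i)$, not from any simplex event in the prior construction. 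The divergence-speed/H\"older estimates enter in step~(ii) to control the cost of renormalizing $P$ by $\sum_i p_i$, which is where the $k^{2-2\alpha}\epsilon^{2\alpha}$-type terms (and their logarithmic analogues at $\alpha=1$) arise. Your proposal gestures at these estimates in the final paragraph, but without the approximated-risk intermediate quantity and its separate reduction lemma, the accounting does not reproduce the stated bound: in particular you would have no mechanism to generate the $e^{-n/32}$ term, and no place to absorb the renormalization bias.
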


{\bfseries Lower bound by the approximated minimax risk.}
To prove \cref{thm:risk-best-poly}, we first derive the association between the minimax risk and the approximated minimax risk defined below. For $\epsilon \in (0,1)$, define the approximated probabilities as
\begin{align}
  \dom{M}_k(\epsilon) = \cbrace*{(p_1,...,p_k) \in \RealSet^k_+ : \abs*{\sum_{i=1}^kp_i - 1} \le \epsilon}.
\end{align}
With this definition, we define the approximated minimax risk as
\begin{align}
 \tilde{R}^*(n,k,\epsilon;\phi) = \inf_{\hat\theta}\sup_{P \in \dom{M}_k(\epsilon)}\Mean\bracket*{\paren*{\theta(P) - \hat\theta(\tilde{N})}^2}. \label{eq:approximated-minimax-risk}
\end{align}
Then, we obtain the following connection from the approximated minimax risk to the non-approximated minimax risk:
\begin{theorem}\label{thm:lower-approximated}
 Suppose $\phi:[0,1]\to\RealSet$ is a function whose second divergence speed is $p^\alpha$ for $\alpha \in (0,2)$. If $\alpha \in (0,1)$, for any $k,n \in \NaturalSet$ and any $\epsilon < 1/3$,
  \begin{multline}
    \tilde{R}^*(n/2,k;\phi) \ge \frac{1}{2}\tilde{R}^*(n,k,\epsilon;\phi) \\ - Wk^{2-2\alpha}e^{-n/32} - Wk^{2-2\alpha}\epsilon^{2\alpha},
  \end{multline}
  where $W = 2\norm*{\phi}_{C^{H,\alpha}}^2$. If $\alpha = 1$, for any $k,n \in \NaturalSet$ and any $\epsilon < 1/3$,
  \begin{multline}
    \tilde{R}^*(n/2,k;\phi) \ge \frac{1}{2}\tilde{R}^*(n,k,\epsilon;\phi) - W\ln^2\paren*{ek}e^{-n/32} \\ - W\epsilon^{2}\ln^2\paren*{ek} - W (1+\epsilon)^2\ln^2(1+\epsilon),
  \end{multline}
  where $W = 2(W_1+c_1)^2$ in which $W_1$ and $c_1$ are constants from \cref{lem:lowest-int-div}. If $\alpha \in (1,2)$, for any $k,n \in \NaturalSet$ and any $\epsilon < 1/3$,
  \begin{align}
    \tilde{R}^*(n/2,k;\phi) \ge \frac{1}{2}\tilde{R}^*(n,k,\epsilon;\phi) - We^{-n/32} - W\epsilon^{2},
  \end{align}
  where $W = 2\norm*{\phi}_{C^{H,1}}^2$.
\end{theorem}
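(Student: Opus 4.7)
The plan is a reduction argument. Fix a near-optimal estimator $\hat\theta$ for the exact Poisson-model problem with rate $n/2$ on $\dom{M}_k$. I will construct an estimator $\hat\theta'$ for the approximated problem with rate $n$ on $\dom{M}_k(\epsilon)$ whose worst-case squared error is at most $2\tilde{R}^*(n/2,k;\phi)$ plus additive correction terms. Taking the infimum over $\hat\theta'$ upper bounds $\tilde{R}^*(n,k,\epsilon;\phi)$ by $2\tilde{R}^*(n/2,k;\phi)$ plus those corrections, and rearranging yields the theorem.

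For the construction, fix $P' \in \dom{M}_k(\epsilon)$ with mass $m := \sum_i p'_i \in [1-\epsilon,1+\epsilon]$, and set $P := P'/m \in \dom{M}_k$. Using the Poissonized-labeling representation, write $\tilde{N}_i = \sum_{j=1}^S \ind{X_j = i}$ where $X_1, X_2, \ldots$ are i.i.d.\ from $P$ and $S \sim \Poi(nm)$ is independent. Introduce an independent $T \sim \Poi(n/2)$ and set $\tilde{M}_i := \sum_{j=1}^T \ind{X_j = i}$, so that marginally $\tilde{M}_i \sim \Poi(np_i/2)$ are independent across $i$. On the event $\{T \le S\}$, $\tilde{M}$ is realized as a function of $\tilde{N}$ plus auxiliary randomness by uniformly permuting the $S$ samples encoded in $\tilde{N}$ and reading off the first $T$. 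Define $\hat\theta'(\tilde{N}) := \hat\theta(\tilde{M})\ind{T \le S} + c_0\,\ind{T > S}$ for a bounded default $c_0$.

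The elementary split $(a+b)^2 \le 2a^2 + 2b^2$ decomposes the squared error into a sample part $2(\hat\theta'(\tilde{N}) - \theta(P))^2$ and an approximation part $2(\theta(P) - \theta(P'))^2$. The sample part integrates to at most $2\tilde{R}^*(n/2,k;\phi) + 2\sup_{P \in \dom{M}_k}|\theta(P)|^2 \cdot \p\cbrace*{T > S}$: the good-event contribution uses the marginal Poisson law of $\tilde{M}$ to invoke optimality of $\hat\theta$, while the bad-event contribution is bounded worst-case. Because $\epsilon < 1/3$ forces $\Mean[S] = nm \ge 2n/3 > n/2 = \Mean[T]$, a Chernoff bound for the Skellam variable $T - S$ gives $\p\cbrace*{T > S} \lesssim e^{-n/32}$. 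After the normalization $\phi(0) = 0$ (taken without loss of generality, since the estimation problem is invariant under additive shifts of $\phi$), $\sup_P |\theta(P)|^2$ is $\lesssim k^{2-2\alpha}$ for $\alpha \in (0,1)$ via \cref{lem:div-speed-holder1} and Jensen, $\lesssim \ln^2(ek)$ for $\alpha = 1$ via \cref{lem:lowest-int-div} together with the entropy bound $\sum_i p_i \ln(1/p_i) \le \ln k$, and $O(1)$ for $\alpha \in (1,2)$ via \cref{lem:div-speed-holder3}, producing the three claimed $e^{-n/32}$ terms.

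The approximation part is controlled by the same regularity: for $\alpha \in (0,1)$, \cref{lem:div-speed-holder1} and Jensen's inequality yield $|\theta(P) - \theta(P')| \le \norm*{\phi}_{C^{H,\alpha}}\epsilon^\alpha \sum_i p_i^\alpha \le \norm*{\phi}_{C^{H,\alpha}} \epsilon^\alpha k^{1-\alpha}$; for $\alpha \in (1,2)$, \cref{lem:div-speed-holder3} gives $|\theta(P) - \theta(P')| \le \norm*{\phi}_{C^{H,1}}\epsilon$; for $\alpha = 1$, a mean-value argument using $|\phi^{(1)}(\xi)| \lesssim \ln(1/\xi)$ from \cref{lem:lowest-int-div} together with $\sum_i p_i \ln(1/p_i) \le \ln k$ produces the $\epsilon^2 \ln^2(ek)$ term, and the additional $(1+\epsilon)^2 \ln^2(1+\epsilon)$ term absorbs the boundary contribution when $m > 1$ pushes $mp_i$ beyond $1$ for (at most one) large index. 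The main technical subtlety is writing the coupling so that $\tilde{N}$ and $\tilde{M}$ simultaneously have their stated marginal Poisson laws while $\tilde{M}$ remains a function of $\tilde{N}$ and independent randomness on the good event; the Poissonized-labeling construction is the natural device for this, and the $\alpha = 1$ boundary bookkeeping is the other delicate piece.
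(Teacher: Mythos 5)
Your proposal is correct but proceeds by a genuinely different coupling than the paper. The paper's proof fixes a near-optimal estimator $\hat\theta(\cdot,m)$ for each sample size $m$, observes the Poisson total $n'=\sum_i\tilde N_i$, and applies $\hat\theta(\cdot,n')$, exploiting the fact that $\tilde N$ conditioned on $\{n'=m\}$ is $\Mul(m, P/\sum_i p_i)$; the tail is controlled by $\p\{n'\le n/2\}\le e^{-n/32}$. You instead stay entirely inside the Poisson world: fix a single near-optimal estimator $\hat\theta$ for the \emph{Poisson} model at rate $n/2$, draw an independent $T\sim\Poi(n/2)$, and on $\{T\le S\}$ thin $\tilde N$ down to a rate-$(n/2)$ Poisson histogram $\tilde M$ via hypergeometric subsampling (which is realizable from $\tilde N$ and auxiliary randomness because $\tilde N$ is exchangeable given the multiset of samples). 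Both routes are valid reductions, and both then invoke exactly the same H\"older / $\sum_i p_i^\alpha \le k^{1-\alpha}$ / entropy estimates from \cref{lem:div-speed-holder1,lem:div-speed-holder3,lem:lowest-int-div,lem:bound-sum-alpha} to control the approximation error $\theta(P)-\theta(P')$ and the size of $\tilde R^*(0,k)$. Your thinning construction is arguably cleaner: it needs only one fixed estimator rather than the family $\{\hat\theta(\cdot,m)\}_m$ together with the monotonicity of the minimax risk in $m$, and it avoids routing through the multinomial side of the Poisson--multinomial equivalence.

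Two small remarks. First, your attribution of the $(1+\epsilon)^2\ln^2(1+\epsilon)$ term for $\alpha=1$ to ``$mp_i$ pushed beyond $1$'' is off; in the paper's algebra that term arises from the normalization entropy $(\sum_i p'_i)\ln(1/\sum_i p'_i)$ appearing in $\sum_i\bigl(q_i\ln q_i - p'_i\ln p'_i\bigr)$ after writing $p'_i = q_i\sum_j p'_j$. In fact, your mean-value route $|\theta(P)-\theta(P')|\le |m-1|\sum_i p_i|\phi^{(1)}(\xi_i)|$ with $\xi_i\ge (1-\epsilon)p_i$ gives $\lesssim\epsilon\ln(ek)$ directly and does not require that extra term at all, which is a sharper statement that still implies the theorem. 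Second, the Chernoff exponent $1/32$ does not immediately fall out at the worst case $\epsilon$ close to $1/3$ from either coupling (the Poisson lower-tail bound at $\mu=2n/3,\ a=n/2$ gives roughly $e^{-n/48}$); this is a constant-level sloppiness shared with the paper's own proof and does not affect the rate.
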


{\bfseries Lower bound by the fuzzy hypotheses method.}
Next, we derive a lower bound on the approximated minimax risk by taking advantage of the {\em two fuzzy hypotheses method}~\autocite{Tsybakov2009IntroductionEstimation}. This method constructs two prior distributions $\pi$ and $\pi'$ on the parameter space instead of choosing two parameters as in \cref{lem:le-cam-two-point}. In particular, we construct two stochastic probability vectors from random variables $U$ and $U'$ defined on $[0,\lambda]$ for some $\lambda > 0$ such that $\Mean[U] = \Mean[U'] = \beta \le 1$. Let
\begin{align}
    P = \paren*{\frac{U_1}{k},...,\frac{U_{k}}{k},1-\beta}, \textand P' = \paren*{\frac{U'_1}{k},...,\frac{U'_{k}}{k},1-\beta},
\end{align}
where $P$ and $P'$ are $k+1$-dimensional vectors, and $U_1,...,U_{k}$ and $U'_1,...,U'_{k}$ are i.i.d. copies of $U$ and $U'$, respectively. Define two events on $P$ and $P'$, respectively, as
\begin{align}
 \event =& \cbrace*{ P \in \dom{M}_{k+1}(\epsilon), \abs*{\Mean[\theta(P)] - \theta(P)} \le d/4}, \\
 \event' =& \cbrace*{P' \in \dom{M}_{k+1}(\epsilon), \abs*{\Mean[\theta(P')] - \theta(P')} \le d/4}.
\end{align}
Then, the prior distributions $\pi$ and $\pi'$ are defined as the distributions of $P$ and $P'$ conditioned on these events, i.e., for any $A \subseteq \dom{M}_k(\epsilon)$,
\begin{align}
  \pi A = \p\cbrace*{P \in A | \event} \textand \pi' A = \p\cbrace*{P' \in A | \event'}.
\end{align}
With this setup of the prior distributions, we obtain the following lower bound:
\begin{theorem}\label{thm:approx-tv-lower}
  Let $\phi:[0,1]\to\RealSet$ be a function whose second divergence speed is $p^\alpha$ for $\alpha > 0$. Let $U$ and $U'$ be random variables such that $U,U' \in [0,\lambda]$, $\Mean[U]=\Mean[U']\le 1$, and $k\abs*{\Mean\bracket{\phi(U/k) - \phi(U'/k)}} \ge d$. If $\alpha \in (0,2)$ and $\lambda \le k$ expect $\alpha \ne 1$, there exists a finite constant $W > 0$ depending only on $\phi$ such that
  \begin{multline}
    \tilde{R}^*(n,k+1,4\lambda/\sqrt{k};\phi) \ge \frac{d^2}{16}\paren[\Bigg]{\frac{7}{8} - \\ k\TV\paren*{\Mean[\Poi(nU/k)], \Mean[\Poi(nU'/k)]} - \frac{W\lambda^{2\alpha}}{k^{2\alpha-1}d^2}}.
  \end{multline}
  Moreover, if $\alpha = 1$ and $\lambda < ek$, there exists a finite constant $W > 0$ depending only on $\phi$ such that
  \begin{multline}
    \tilde{R}^*(n,k+1,4\lambda/\sqrt{k};\phi) \ge \frac{d^2}{16}\paren[\Bigg]{\frac{7}{8} -  \\ k\TV\paren*{\Mean[\Poi(nU/k)], \Mean[\Poi(nU'/k)]} - \frac{W\lambda^2\ln^2(\lambda/ek)}{kd^2}}.
  \end{multline}
\end{theorem}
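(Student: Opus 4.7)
The plan is to apply the two fuzzy hypotheses lower bound of \autocite{Tsybakov2009IntroductionEstimation} with priors $\pi = \text{Law}(P \mid \event)$ and $\pi' = \text{Law}(P' \mid \event')$. Conditioning on $\event$ and $\event'$ enforces two properties simultaneously: the $(k+1)$-vectors remain in $\dom{M}_{k+1}(4\lambda/\sqrt{k})$, so the priors are supported on the parameter set of $\tilde{R}^*(n, k+1, 4\lambda/\sqrt{k}; \phi)$; and the functional values $\theta(P), \theta(P')$ lie within $d/4$ of their respective means. Combined with the hypothesis $|\Mean[\theta(P)] - \Mean[\theta(P')]| \ge d$, these concentrations force $|\theta(P) - \theta(P')| \ge d/2$ on $\event \cap \event'$, so the natural half-gap in the Le Cam bound is $s = d/4$. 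The desired inequality will follow once I bound (i) the probabilities of the two conditioning events and (ii) the total variation between the induced sample distributions.

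For (i), I would separate $\event^c$ into its two constituent failures. Membership in $\dom{M}_{k+1}(\epsilon)$ amounts to $|\sum_{i=1}^k U_i/k - \beta| \le \epsilon$, and a direct Chebyshev bound using $\Var(U) \le \lambda^2/4$ gives failure probability at most $\lambda^2/(4k\epsilon^2) = 1/64$ for $\epsilon = 4\lambda/\sqrt{k}$. For the concentration of $\theta(P) = \sum_i \phi(U_i/k) + \phi(1-\beta)$ (the last term is deterministic since $\beta = \Mean U$ is fixed), I bound $\Var[\theta(P)] = k\Var[\phi(U/k)] \le k\Mean[(\phi(U/k) - \phi(0))^2]$. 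Using the H\"older regularity from \cref{lem:div-speed-holder1,lem:div-speed-holder2,lem:div-speed-holder3} (after a WLOG subtraction of a linear term to force $\phi^{(1)}(0) = 0$ in the $\alpha \in (1,2)$ case), one obtains $|\phi(x) - \phi(0)|^2 \lesssim x^{2\alpha}$ when $\alpha \ne 1$ and $|\phi(x) - \phi(0)|^2 \lesssim x^2 \ln^2(1/x)$ when $\alpha = 1$; since $U/k \in [0, \lambda/k]$ with the specified range of $\lambda$, this translates into $\Var[\theta(P)] \lesssim \lambda^{2\alpha}/k^{2\alpha-1}$ (respectively $\lambda^2 \ln^2(\lambda/ek)/k$). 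Chebyshev at threshold $d/4$ then produces the explicit $W\lambda^{2\alpha}/(k^{2\alpha-1}d^2)$ subtraction appearing in the statement.

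For (ii), I would first note that conditioning changes the total variation by at most the complement probability: $\TV(\p_\pi, \p_{\pi'}) \le \TV(\p_P, \p_{P'}) + \p(\event^c) + \p(\event'^c)$. Under Poisson sampling the coordinates $\tilde{N}_i$ are independent given the probability vector, so after marginalization $\p_P = \bigotimes_{i=1}^k \Mean[\Poi(nU_i/k)] \otimes \Poi(n(1-\beta))$, and $\p_{P'}$ has the same form with $U$ replaced by $U'$. The last factor matches in both laws because $\Mean U = \Mean U' = \beta$, and the standard product-measure triangle inequality yields $\TV(\p_P, \p_{P'}) \le k \TV(\Mean[\Poi(nU/k)], \Mean[\Poi(nU'/k)])$. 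Feeding this into the two-hypothesis Le Cam bound at half-gap $s = d/4$, using Markov's inequality to pass from the testing error to quadratic risk, and absorbing the two $1/64$ contributions from the $\dom{M}_{k+1}(\epsilon)$ failures into the constant $7/8$, produces the claimed inequality.

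The main obstacle is the borderline case $\alpha = 1$, where \cref{lem:lowest-int-div} only gives the logarithmic bound $|\phi^{(1)}(p)| \lesssim \ln(1/p)$, and hence after integration $|\phi(x) - \phi(0)| \lesssim x \ln(1/x)$. Controlling $\Mean[(U/k)^2 \ln^2(k/U)]$ requires exploiting the monotonicity of $x^2 \ln^2(1/x)$ on $(0, 1/e)$ to pass from the random $U$ up to its deterministic bound $\lambda/k$, while separately handling the regime $\lambda/k \in [1/e, e)$---it is precisely the hypothesis $\lambda < ek$ that keeps $(\lambda/k)^2 \ln^2(\lambda/ek)$ bounded by an absolute constant in this regime. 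This more delicate logarithmic bookkeeping is the sole reason the $\alpha = 1$ case is stated separately.
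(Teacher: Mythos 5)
Your proposal is correct and follows essentially the same argument as the paper's proof: condition the constructed random probability vectors on the events $\event,\event'$, apply the fuzzy two-hypotheses Le Cam bound at half-gap $d/4$ (giving the leading $d^2/16$), control $\p(\event^c)$ and $\p(\event'^c)$ by Chebyshev --- once for the $\dom{M}_{k+1}(\epsilon)$ constraint using $\Var U\lesssim\lambda^2$, and once for the functional concentration using the H\"older (resp.\ log-H\"older) bound on $\phi$ coming from the divergence-speed assumption --- and bound the total variation of the marginal sample distributions by passing to the unconditioned product measures and summing coordinatewise. Your identification of the role of the hypothesis $\lambda<ek$ in keeping the $\alpha=1$ logarithmic term bounded is exactly what the paper uses. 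The only cosmetic difference is that you get $1/64$ for the $\dom{M}_{k+1}(\epsilon)$-failure probability where the paper settles for $1/16$ (hence its $7/8$); your tighter estimate is consistent with, indeed slightly stronger than, the claimed bound.
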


{\bfseries Lower bound by the best polynomial approximation.}
We can prove \cref{thm:risk-best-poly} by using \cref{thm:approx-tv-lower} with appropriate choices of $U$ and $U'$. We choose $U$ and $U'$ so that their moments are agreed with. Formally, we choose $U$ and $U'$ such that for some positive integer $L$,
\begin{align}
    \Mean\bracket{U^m} = \Mean\bracket{U^{'m}} \for m = 1,...,L. \label{eq:moment-matching}
\end{align}
Under \cref{eq:moment-matching}, we can use the following lemma given by \textcite{Wu2016MinimaxApproximation} to upper bound the total variation term in \cref{thm:approx-tv-lower}:
\begin{lemma}[{\textcite[Lemma 3]{Wu2016MinimaxApproximation}}]\label{lem:tv-poi-bound}
  Let $V$ and $V'$ be random variables on $[0,M]$. If $\Mean[V^j] = \Mean[V'^j]$, $j = 1,...,L$ and $L > 2eM$, then
  \begin{align}
    \TV(\Mean[\Poi(V)], \Mean[\Poi(V')]) \le \paren*{\frac{2eM}{L}}^L.
  \end{align}
\end{lemma}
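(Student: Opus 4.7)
The plan is to expand the Poisson pmf around zero and cash in the moment matching as a cancellation of low-order terms. First I would write the total variation as an $\ell_1$ sum over $\NaturalSet_{\ge 0}$,
\begin{align}
\TV(\Mean[\Poi(V)], \Mean[\Poi(V')]) = \frac{1}{2}\sum_{k=0}^\infty \abs*{\Mean\bracket*{e^{-V}\frac{V^k}{k!}} - \Mean\bracket*{e^{-V'}\frac{V'^k}{k!}}},
\end{align}
and then expand $e^{-v} = \sum_{m \ge 0} (-v)^m/m!$. Swapping the expectation with the (absolutely convergent) Taylor sum gives, for each $k$,
\begin{align}
\Mean\bracket*{e^{-V}\frac{V^k}{k!}} - \Mean\bracket*{e^{-V'}\frac{V'^k}{k!}} = \sum_{m=0}^\infty \frac{(-1)^m}{m!\,k!}\paren*{\Mean[V^{k+m}]-\Mean[V'^{k+m}]}.
\end{align}
Now the moment-matching hypothesis $\Mean[V^j] = \Mean[V'^j]$ for $j = 1,\dots,L$ (together with the trivial $j = 0$ case, since both probability measures on $[0,M]$ have total mass $1$) kills every pair $(k,m)$ with $k+m \le L$. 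What remains is the tail $k+m \ge L+1$.

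Second, I would invoke the bound $V,V' \in [0,M]$ to control each surviving moment difference by $2M^{k+m}$ and then collapse the double sum by the substitution $n = k+m$:
\begin{align}
\TV \;\le\; \sum_{n \ge L+1}\sum_{k=0}^n \frac{M^n}{k!(n-k)!} \;=\; \sum_{n \ge L+1}\frac{(2M)^n}{n!},
\end{align}
where I used the binomial identity $\sum_{k=0}^n \binom{n}{k} = 2^n$. So the problem reduces to bounding the tail of the exponential series for $e^{2M}$ starting at index $L+1$.

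Third, for the tail I would use the crude Stirling bound $n! \ge (n/e)^n$, giving $(2M)^n/n! \le (2eM/n)^n$. Under the hypothesis $L > 2eM$, the sequence $n \mapsto (2eM/n)^n$ is eventually decreasing and $r := 2eM/L < 1$, so each term with $n \ge L+1$ is bounded by $r^n$ and the geometric sum telescopes to $r^{L+1}/(1-r)$, which is of the advertised order $(2eM/L)^L$. The main (and only) obstacle is extracting the stated constant cleanly: the naive geometric bound carries an extra factor $1/(1-2eM/L)$, so to land exactly on $(2eM/L)^L$ one must either sharpen the Stirling step by a factor comparable to $\sqrt{2\pi n}$, or exploit that $L > 2eM$ forces the geometric ratio to be bounded away from $1$ and absorb the slack into the exponent by using $(2eM/L)^{L+1} \le (2eM/L)^L \cdot r$. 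Everything else is bookkeeping.
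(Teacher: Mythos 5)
This lemma is quoted in the paper as \textcite[Lemma 3]{Wu2016MinimaxApproximation} with no proof given, so there is no in-paper argument to compare against; I am therefore judging your sketch on its own terms against the standard proof in the cited reference. Up through the reduction to the Taylor tail your argument is exactly the Wu--Yang argument and is correct: expanding $e^{-v}$, swapping expectation and sum (legitimate by dominated convergence on the compact range $[0,M]$), cancelling all terms with $k+m \le L$ via moment matching, bounding each surviving moment difference by $2M^{k+m}$, and collapsing along $n=k+m$ using $\sum_{k=0}^n \binom{n}{k} = 2^n$ does indeed yield
\begin{align}
\TV(\Mean[\Poi(V)], \Mean[\Poi(V')]) \le \sum_{n \ge L+1}\frac{(2M)^n}{n!}.
\end{align}

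The gap is in how you bound this tail, and the two repairs you propose do not each work as stated. After applying the crude Stirling bound $n! \ge (n/e)^n$ you compare $(2eM/n)^n$ to $r^n$ with $r = 2eM/L$, but the hypothesis $L > 2eM$ only gives $r < 1$; it does \emph{not} bound $r$ away from $1$ (take $L = \lceil 2eM \rceil$ for large $M$), so the leftover factor $r/(1-r)$ can be arbitrarily large, and your second ``fix'' (the identity $(2eM/L)^{L+1} = (2eM/L)^L \cdot r$) is a rewriting, not a bound. The quantity that \emph{is} bounded away from $1$ by $L > 2eM$ is the ratio of consecutive terms of the original series: $\frac{(2M)^{n+1}/(n+1)!}{(2M)^n/n!} = \frac{2M}{n+1} \le \frac{2M}{L+2} < \frac{1}{e}$ for $n \ge L+1$. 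So dominate the tail by a geometric series \emph{before} touching Stirling,
\begin{align}
\sum_{n \ge L+1}\frac{(2M)^n}{n!} \le \frac{(2M)^{L+1}}{(L+1)!}\cdot\frac{1}{1-1/e},
\end{align}
and only then apply the sharpened Stirling bound $(L+1)! \ge \sqrt{2\pi(L+1)}\,\bigl((L+1)/e\bigr)^{L+1}$ to the leading term. This yields $\frac{(2M)^{L+1}}{(L+1)!} \le \frac{1}{\sqrt{2\pi(L+1)}}\bigl(\tfrac{2eM}{L}\bigr)^{L+1}$, and the residual prefactor $\frac{e}{e-1}\cdot\frac{2eM/L}{\sqrt{2\pi(L+1)}}$ is $\le 1$ for every $L \ge 1$ because $2eM/L<1$ and $\sqrt{2\pi \cdot 2} > e/(e-1)$. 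So you need \emph{both} ingredients --- the $1/e$ contraction from the consecutive-ratio test and the $\sqrt{2\pi n}$ factor from Stirling --- in that order; neither one alone, nor the naive $r^n$ bound, closes the gap.
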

Besides, we can connect $d$ in \cref{thm:approx-tv-lower} to the best polynomial approximation error by using either of the following lemmas:
\begin{lemma}\label{lem:prior-construction}
  For any given integer $L > 0$ and an interval $I \subseteq [0,1]$, there exists two probability measures $\nu_0$ and $\nu_1$ on $I$ such that
  \begin{gather}
    \Mean_{X \sim \nu_0}[X^m] = \Mean_{X \sim \nu_1}[X^m], \for m=0,...,L, \\
    \Mean_{X \sim \nu_0}[\phi(X)] - \Mean_{X \sim \nu_1}[\phi(X)] = 2E_L(\phi, I).
  \end{gather}
\end{lemma}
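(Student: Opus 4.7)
The plan is to construct $\nu_0, \nu_1$ explicitly from the Chebyshev equioscillation points of the best degree-$L$ uniform approximation to $\phi$ on $I$. If $E_L(\phi, I) = 0$, the claim is trivial with $\nu_0 = \nu_1$ equal to any probability measure on $I$, so I may assume $E_L(\phi, I) > 0$. Because $\phi$ is continuous on the compact interval $I$, the best approximant $q^\ast \in \dom{P}_L$ exists; by the Chebyshev equioscillation theorem, there are nodes $x_0 < x_1 < \cdots < x_{L+1}$ in $I$ at which $\phi - q^\ast$ attains its extreme values with alternating signs, that is, $(\phi - q^\ast)(x_i) = \sigma(-1)^i E_L(\phi, I)$ for some $\sigma \in \{-1, +1\}$.

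Next I consider the linear system of moment-matching conditions at these $L+2$ nodes. Let $M \in \RealSet^{(L+1)\times(L+2)}$ have entries $M_{m,i} = x_i^m$ for $0 \le m \le L$ and $0 \le i \le L+1$. Any $L+1$ columns of $M$ form a Vandermonde matrix on distinct nodes, so $\mathrm{rank}(M) = L+1$ and $\ker M$ is one-dimensional. The standard $(L+1)$-th divided-difference identity $[x_0, \dots, x_{L+1}]f = 0$ for $f \in \dom{P}_L$ exhibits an explicit generator $v = (v_0, \dots, v_{L+1})$ with $v_i = 1/\prod_{j \ne i}(x_i - x_j)$. Because $x_0 < \cdots < x_{L+1}$, the denominator $\prod_{j \ne i}(x_i - x_j)$ has sign $(-1)^{L+1-i}$, so the entries of $v$ strictly alternate in sign. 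Replacing $v$ by $-v$ if necessary, I can arrange that $\mathrm{sign}(v_i) = \mathrm{sign}((\phi - q^\ast)(x_i))$, hence $v_i(\phi - q^\ast)(x_i) = |v_i| E_L(\phi, I)$ for every $i$.

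Finally I split $v$ into its positive and negative parts. Setting $A = \sum_{v_i > 0} v_i$, the $m = 0$ moment constraint $\sum_i v_i = 0$ forces $\sum_{v_i < 0}(-v_i) = A$, so
\begin{align}
\nu_0 = \frac{1}{A}\sum_{i:\,v_i > 0} v_i \delta_{x_i}, \qquad \nu_1 = \frac{1}{A}\sum_{i:\,v_i < 0} (-v_i)\delta_{x_i}
\end{align}
are probability measures supported on $I$. The kernel condition $\sum_i v_i x_i^m = 0$ for $m = 0, \dots, L$ is precisely $\Mean_{X \sim \nu_0}[X^m] = \Mean_{X \sim \nu_1}[X^m]$. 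Applying the same identity to $q^\ast \in \dom{P}_L$ gives $\sum_i v_i q^\ast(x_i) = 0$, and combining with the sign alignment,
\begin{align}
\Mean_{\nu_0}[\phi(X)] - \Mean_{\nu_1}[\phi(X)] = \frac{1}{A}\sum_{i=0}^{L+1} v_i(\phi(x_i) - q^\ast(x_i)) = \frac{E_L(\phi, I)}{A}\sum_{i=0}^{L+1} |v_i| = 2E_L(\phi, I),
\end{align}
where the last step uses $\sum_i |v_i| = 2A$.

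The substantive step is verifying the alternating-sign structure of $\ker M$ and matching it to the equioscillation pattern of $\phi - q^\ast$; both follow from the classical divided-difference/Vandermonde argument sketched above, after which the lemma reduces to elementary bookkeeping.
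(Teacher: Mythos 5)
Your proof is correct but takes a genuinely different route from the one the paper invokes. The paper defers to the duality argument of Lepski (and Cai--Low's Lemma 1): one views $E_L(\phi,I)$ as the distance from $\phi$ to the subspace $\dom{P}_L$ in $C(I)$, characterizes it via Hahn--Banach as a supremum of $\int\phi\,d\mu$ over signed measures $\mu$ that annihilate $\dom{P}_L$ and have total variation at most one, obtains an extremal $\mu$ by weak-$*$ compactness, and sets $\nu_0, \nu_1$ to be the (normalized) positive and negative parts of its Jordan decomposition. Your argument is instead fully constructive: you locate the $L+2$ Chebyshev equioscillation nodes of $\phi - q^\ast$, identify the one-dimensional kernel of the $(L+1)\times(L+2)$ moment map explicitly through the divided-difference weights $v_i = 1/\prod_{j\ne i}(x_i - x_j)$, note that these alternate in sign in lockstep with the equioscillation pattern (so $v_i(\phi-q^\ast)(x_i) = |v_i|E_L$), and split $v$ into its positive and negative parts to get $\nu_0, \nu_1$. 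The bookkeeping you do is all correct: the sign of $\prod_{j\ne i}(x_i-x_j)$ is $(-1)^{L+1-i}$, the $m=0$ constraint forces the positive and negative masses to balance, and $\sum_i |v_i| = 2A$ gives the factor of two. The duality route is shorter when the abstract result is black-boxed and it generalizes more readily to other annihilator constraints, but it is purely existential; your construction produces explicit extremal measures supported on at most $L+2$ points and makes visible that the equioscillation theorem is the approximation-theoretic fact doing the work. Both are classical and both prove the lemma.
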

\begin{lemma}\label{lem:best-approx-solution}
  Suppose $\phi:[0,1]\to\RealSet$ be a function such that $\phi(0) = 0$. Define $\phi^\star(p) = \phi(x)/x$. For any given integer $L > 0$, $\eta \in (0,1)$, and $\gamma \in (0,1)$ such that $\gamma \le \eta$, there exists two probability measures $\nu_0$ and $\nu_1$ on $[0,1/\eta\gamma]$ such that
  \begin{gather}
    \Mean_{X \sim \nu_0}[X] = \Mean_{X \sim \nu_1}[X] = \gamma, \\
    \Mean_{X \sim \nu_0}[X^m] = \Mean_{X \sim \nu_1}[X^m], \for m=2,...,L+1, \\
    \Mean_{X \sim \nu_0}[\phi(X)] - \Mean_{X \sim \nu_1}[\phi(X)] = 2\gamma E_L(\phi^\star, [\gamma\eta,\gamma]).
  \end{gather}
\end{lemma}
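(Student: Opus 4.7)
The plan is to reduce to Lemma~\ref{lem:prior-construction} applied to $\phi^\star$, exploiting the factorization $\phi(x)=x\phi^\star(x)$. First I would invoke Lemma~\ref{lem:prior-construction} with function $\phi^\star$, interval $[\gamma\eta,\gamma]$, and degree $L$ to obtain probability measures $\mu_0,\mu_1$ on $[\gamma\eta,\gamma]$ whose moments of orders $0,1,\ldots,L$ coincide and which satisfy $\Mean_{\mu_0}[\phi^\star(X)]-\Mean_{\mu_1}[\phi^\star(X)]=2E_L(\phi^\star,[\gamma\eta,\gamma])$.

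Next I would transform $\mu_i$ into $\nu_i$ by reweighting with density $x/\Mean_{\mu_i}[X]$ (so that $\int \phi\,d\nu_i$ essentially becomes a scaled $\int \phi^\star\,d\mu_i$) followed by a convex combination with a Dirac mass at a point $Y\in[\gamma,1/(\eta\gamma)]$. The mixing proportion and $Y$ are fixed by requiring that both $\nu_i$ are probability measures with first moment exactly $\gamma$. The identity $\int \phi\,d(\nu_0-\nu_1)=\int x\phi^\star(x)\,d(\nu_0-\nu_1)$, combined with orthogonality of $\nu_0-\nu_1$ to polynomials of degree at most $L+1$, then reproduces the approximation error $E_L(\phi^\star,[\gamma\eta,\gamma])$ scaled by the first moment $\gamma$, yielding $\Mean_{\nu_0}[\phi(X)]-\Mean_{\nu_1}[\phi(X)]=2\gamma E_L(\phi^\star,[\gamma\eta,\gamma])$.

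The main obstacle is the bookkeeping of matched moments: the naive $x$-reweighting shifts moment indices by one, so the reweighted measures only match moments $0,\ldots,L-1$, whereas the $\nu_i$ need $L+1$ matched moments (orders $2,\ldots,L+1$) in addition to the fixed first moment. The resolution is either to seed from Lemma~\ref{lem:prior-construction} at a degree higher than $L$ so enough moments survive the reweighting, or, more directly, to parameterize $\nu_0,\nu_1$ at the $L+2$ Chebyshev alternation points of $\phi^\star$'s best polynomial approximation on $[\gamma\eta,\gamma]$ augmented by the extra atom at $Y$, then solve the resulting $(L+3)$-variable moment-matching linear system and verify positivity of the weights via the alternating sign structure of $\phi^\star-p^\star$. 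The equality in the $\phi$-difference (rather than an inequality) then follows from the tightness of Chebyshev alternation together with the vanishing of $\int x\,p(x)\,d(\nu_0-\nu_1)$ for every $p\in\dom{P}_L$.
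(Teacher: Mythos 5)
Your reweighting goes the wrong way, and this wrong direction is what manufactures the ``main obstacle'' you describe. If $\nu_i$ has density $x/\Mean_{\mu_i}[X]$ with respect to $\mu_i$, then
\begin{align}
  \Mean_{\nu_i}[\phi(X)] = \frac{\Mean_{\mu_i}[X\phi(X)]}{\Mean_{\mu_i}[X]} = \frac{\Mean_{\mu_i}[X^2\phi^\star(X)]}{\Mean_{\mu_i}[X]},
\end{align}
which involves $X^2\phi^\star(X)$ and therefore has nothing to do with $E_L(\phi^\star,\cdot)$. You need the density to be $\propto 1/x$, not $\propto x$, so that $\Mean_{\nu_i}[\phi(X)]$ becomes a multiple of $\Mean_{\mu_i}[\phi(X)/X] = \Mean_{\mu_i}[\phi^\star(X)]$.

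Once the direction is corrected, the bookkeeping problem you anticipate disappears and there is no need to seed from a higher degree or to solve an alternation-point linear system. The paper's argument: take the $\rho_i$ from Lemma~\ref{lem:prior-construction} applied to $\phi^\star$ on the interval (whose left endpoint is $\gamma$), set $\frac{d\nu_i}{d\rho_i}(u)=\gamma/u$, and observe that since $u\ge\gamma$ on the support of $\rho_i$ this density is at most $1$, so $\nu_i$ is a sub-probability measure; the deficit is placed as an atom at $\{0\}$. This completion costs nothing: the atom at $0$ contributes $0^m=0$ to every moment $m\ge 1$, so $\Mean_{\nu_i}[X^m]=\gamma\Mean_{\rho_i}[X^{m-1}]$ for $m=1,\ldots,L+1$, giving exactly the required first moment $\gamma$ plus matched moments of orders $2,\ldots,L+1$ from the $L+1$ matched moments of $\rho_i$; and it contributes $\phi(0)=0$ to $\Mean_{\nu_i}[\phi(X)]$, so the $\phi$-difference identity follows directly. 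Note that this step is precisely where the hypothesis $\phi(0)=0$ is used, and your proposal never invokes it. Your idea of inserting a Dirac mass at some $Y>0$ is both unnecessary and risky, since an atom at a nonzero point would perturb all the moments unless carefully cancelled.
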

\begin{proof}[Proof of \cref{lem:prior-construction}]
  The proof is almost the same as the proof of \textcite[Lemma 10]{Jiao2015MinimaxDistributions}. It follows directly from a standard functional analysis argument proposed by \textcite{Lepski1999OnFunction}. It suffices to replace $x^\alpha$ with $\phi(x)$ and $[0,1]$ with $I$ in the proof of \autocite[Lemma 1]{Cai2011TestingFunctional}.
\end{proof}
\begin{proof}[Proof of \cref{lem:best-approx-solution}]
  From \cref{lem:prior-construction}, there exists a pair of probability measures $\rho_0$ and $\rho_1$ on $[\gamma,\gamma/\eta]$ such that
  \begin{gather}
    \Mean_{X \sim \rho_0}[X^m] = \Mean_{X \sim \rho_1}[X^m], \for m=0,...,L, \\
    \Mean_{X \sim \rho_0}[\phi(X)] - \Mean_{X \sim \rho_1}[\phi(X)] = 2E_L(\phi, [\gamma,\gamma\eta]).
  \end{gather}
  Define $\nu_0$ and $\nu_1$ such that
  \begin{align}
    \frac{d\nu_i}{d\rho_i}(u) = \frac{\gamma}{u} \textand \nu_i(\cbrace*{0}) = 1-\nu_i([\eta,1]).
  \end{align}
  By construction, $\nu_i$ are defined on $[0,1/\eta\gamma]$. Besides, we get
  \begin{align}
    \Mean_{X \sim \nu_i}[X] =& \gamma\Mean_{X \sim \rho_i}\bracket*{1} = \gamma, \\
    \Mean_{X \sim \nu_0}[X^m] =& \gamma\Mean_{X \sim \rho_0}\bracket*{X^{m-1}} \\
     =& \gamma\Mean_{X \sim \rho_1}\bracket*{X^{m-1}} \\ =& \Mean_{X \sim \nu_1}[X^m] \for m=2,...,L+1.
  \end{align}
  From the assumption $\phi(0)=0$, we have
  \begin{align}
    & \Mean_{X \sim \nu_0}[\phi(X)] - \Mean_{X \sim \nu_1}[\phi(X)] \\
    =& \gamma\paren*{\Mean_{X \sim \rho_0}\bracket*{\frac{\phi(X)}{X}} - \Mean_{X \sim \rho_1}\bracket*{\frac{\phi(X)}{X}}} \\
    =& 2\gamma E_L\paren*{\phi^\star,[\gamma,\gamma/\eta]}.
  \end{align}
\end{proof}

Combining \cref{thm:lower-approximated,thm:approx-tv-lower,lem:tv-poi-bound,lem:prior-construction,lem:best-approx-solution}, we obtain \cref{thm:risk-best-poly}.
\begin{proof}[Proof of \cref{thm:risk-best-poly}]
 We use \cref{thm:lower-approximated,thm:approx-tv-lower} to prove the claim. Given an positive integer $L$, assume \cref{eq:moment-matching}. We need to check all the conditions in \cref{thm:lower-approximated,thm:approx-tv-lower} are satisfied.

 {\bfseries First condition in the claim.} We firstly confirm that all the conditions in \cref{thm:lower-approximated,thm:approx-tv-lower} are satisfied if the first condition in the claim, i.e., $\lambda \le 1/12$ and $2kE_L(\phi,[0,\lambda/k]) \ge d$, holds. If $\lambda \le 1/12$, the conditions $\Mean[U]=\Mean[U'] \le 1$, $\lambda \le k$ in \cref{thm:approx-tv-lower}, and $\epsilon \le 1/3$ in \cref{thm:lower-approximated} with $\epsilon = 4\lambda/\sqrt{k}$ are satisfied obviously. By \cref{lem:prior-construction}, $\Mean[\phi(U/k) - \phi(U'/k)] = 2E_L(\phi,[0,\lambda/k])$.

 {\bfseries Second condition in the claim.} Next, we confirm that all the conditions in \cref{thm:lower-approximated,thm:approx-tv-lower} are satisfied if the second condition in the claim, i.e., $\lambda \le \sqrt{k}/12$ and $\exists \gamma \in (0,1)$, $2k\gamma E_L(\phi^\star,[\gamma,\gamma^2\lambda/k]) \ge d$ and $\gamma^2\lambda \le k$, holds. By \cref{lem:best-approx-solution} with $1/\eta\gamma = \lambda/k$, $\Mean[\phi(U/k) - \phi(U'/k)] = 2\gamma E_L(\phi^\star,[\gamma,\gamma^2\lambda/k])$, where we need $\gamma^2\lambda \le k$ to satisfy $\gamma \le \eta$. From the conditions $\epsilon \le 1/3$ in \cref{thm:lower-approximated} and $\epsilon = 4\lambda/\sqrt{k}$ in \cref{thm:approx-tv-lower}, we need $\lambda \le \sqrt{k}/12$. 
 
 Noting that $U,U' \in [0,\lambda]$ almost surely, we have from \cref{lem:tv-poi-bound} that
 \begin{align}
    \TV\paren*{\Mean[\Poi(nU/k)], \Mean[\Poi(nU'/k)]} \le \paren*{\frac{2en\lambda}{Lk}}^L.
 \end{align}
 Combining this, \cref{thm:lower-approximated}, and \cref{thm:approx-tv-lower} yields the desired results.
\end{proof}

\subsubsection{Proof of \cref{thm:lower2,thm:lower3}}

Here, we prove \cref{thm:lower2,thm:lower3} by combining \cref{thm:risk-best-poly} and the lower bound results on the best polynomial approximation errors, which can be found in \cref{sec:best-poly}.
\begin{proof}[Proof of \cref{thm:lower2}]
  We apply \cref{thm:risk-best-poly} with the first condition. Set $\lambda = C_1k\ln n/n$ and $L = \ceil{C_2\ln n}$ where $C_1$ and $C_2$ are universal constants. Let $c = C_1/C_2$. Under the condition $k \lesssim (n\ln n)^\alpha$, $\lambda < 1/12$ with sufficiently small $c$ and $C_2$.

  From \cref{thm:lower-poly-approx1}, $2kE_L(\phi,[0,\lambda/k]) \ge d$ for $d = \frac{2Cc^\alpha k}{C_2^{\alpha}(n\ln n)^\alpha}$ with some universal constant $C > 0$.

  If $c < 1/2e$, we have
  \begin{align}
      k\paren*{\frac{2en\lambda}{Lk}}^L \le kn^{-C_2\ln\frac{1}{2ec}}.
  \end{align}
  Under the condition $k \lesssim (n\ln n)^\alpha$, $kn^{-C_2\ln\frac{1}{2ec}} = o(1)$ if $C_2\ln\frac{1}{2ec} > \alpha$. The condition $C_2\ln\frac{1}{2ec} > \alpha$ holds if $c$ is sufficiently small with fixed $C_2$.

  Under the condition $k \gtrsim \ln^4 n$, there exists an universal constant $C > 0$ such that
  \begin{align}
      Wk^{1-2\alpha}\lambda^{2\alpha} \le \frac{W(cC_2)^{2\alpha}k\ln^{2\alpha}n}{n^{2\alpha}} \le \frac{W(cC_2)^{2\alpha}Ck^2}{(n\ln n)^{2\alpha}}.
  \end{align}
  Also, under the condition $k \gtrsim \ln^4 n$, there exists an universal constant $C > 0$ such that
  \begin{align}
      4^{2\alpha}W'k^{2-3\alpha}\lambda^{2\alpha} \le& \frac{4^{2\alpha}W'(cC_2)^{2\alpha}k^{2-\alpha}\ln^{2\alpha}n}{n^{2\alpha}} \\
      \le& \frac{4^{2\alpha}W'C(cC_2)^{2\alpha}k^2}{(n\ln n)^{2\alpha}}.
  \end{align}
  These terms are smaller than $d^2/32$ for sufficiently small $C_2$. Consequentlly, the minimax risk is larger than $d^2$ up to constant.
\end{proof}
\begin{proof}[Proof of \cref{thm:lower3}]
  If $\alpha \in (1/2,1)$, from the assumption of domination, we have $n^{1-1/2\alpha}\ln n \lesssim k$. Hence, with the same proof of the \cref{thm:lower2}, we obtain the claim.

  For $\alpha \in [1,3/2)$, we apply \cref{thm:risk-best-poly} with the second condition. Set $\lambda = C_1k\ln n/n$ and $L = \ceil{C_2\ln n}$ where $C_1$ and $C_2$ are universal constants. Let $c = C_1/C_2$. Under the condition $k \lesssim (n\ln n)^\alpha$, $\lambda < \sqrt{k}/12$ with sufficiently small $c$ and $C_2$.

  Since the setting of $\lambda$ and $L$ is equivalent to that in the proof of \cref{thm:lower2}, we obtain the same bounds on $k\paren{\frac{2en\lambda}{Lk}}^L$. From the assumption of domination, we have $n^{1/2}\ln n \lesssim k/\ln k \lesssim k$ for $\alpha = 1$ and $n^{\alpha-1/2}\ln^{\alpha}n \lesssim k$ for $\alpha \in (1,3/2)$. Thus, $k\paren{\frac{2en\lambda}{Lk}}^L = o(1)$ if $C_2\ln\frac{1}{2ec} > \alpha$.
  
  If $\alpha = 1$, we need to check $\lambda^{2}\ln^2(\lambda/k)/k = o(k^2/(n\ln n)^{2})$, $\epsilon^2\ln^2(ek) = o(k^2/(n\ln n)^{2})$, and $(1+\epsilon)^2\ln^2(1+\epsilon) = o(k^2/(n\ln n)^{2})$. For the first condition, we have $\lambda^{2}\ln^2(\lambda/k)/k \lesssim k\ln^4n/n^2$. To apply \cref{thm:approx-tv-lower}, we need to set $\epsilon = 4\lambda/\sqrt{k}$ and obtain $\epsilon^2\ln^2(ek) \lesssim k\ln^2n\ln^2k/n^2$ and $(1+\epsilon)^2\ln^2(1+\epsilon) \lesssim k\ln^2n/n^2$ because $(1+x)\ln(1+x) \le 2\ln2 x$ for $x \in (0,1)$. Under the domination assumption, we have $k/\ln k \gtrsim n^{1/2}\ln n$. Combining this fact and the assumption $n \gtrsim k^{1/\alpha}/\ln k$, we have $k\ln^4n/n^2 \lesssim k^2\ln^{4}n/n^{1+3/2} = o(k^2/(n\ln n)^2)$ and $k\ln^2n/n^2 \lesssim k\ln^2n\ln^2k/n^2 \lesssim k^2\ln^{4}n/n^{\alpha+3/2} = o(k^2/(n\ln n)^2)$.
  
  If $\alpha \in (1,3/2)$, we need to check $\lambda^{2\alpha}/k^{2\alpha-1} = o(k^2/(n\ln n)^{2\alpha})$ and $\epsilon^2 = o(k^2/(n\ln n)^{2\alpha})$. As well as the case $\alpha = 1$, we have $\lambda^{2\alpha}/k^{2\alpha-1} \lesssim k\ln^{2\alpha}n/n^{2\alpha}$ and $\epsilon^2 \lesssim k\ln^2 n/n^2$. Under the domination assumption, we have $k \gtrsim n^{\alpha-1/2}\ln^\alpha n$. Thus, $k\ln^{2\alpha}n/n^{2\alpha} \lesssim k^2\ln^{\alpha}n/n^{3\alpha-1/2} = o(k^2/(n\ln n)^{2\alpha})$ and $k\ln^2 n/n^2 \lesssim k^2\ln^{2-\alpha}n/n^{\alpha+3/2} = o(k^2/(n\ln n)^{2\alpha})$ if $\alpha < 3/2$.

  In the rest of the proof, it suffices to show that there is $\gamma \in (0,1)$ such that $2k\gamma E_L(\phi^\star,[\gamma,\lambda/k]) \gtrsim k/(n\ln n)^\alpha$ and $\gamma^2\lambda/k \le 1$. Let $\gamma = \lambda/2L^2k$. Then, since $\lambda \le k$, we have $\gamma^2\lambda \le k$. From \cref{thm:lower-poly-approx2}, we have $2k\gamma E_L(\phi^\star,[\gamma,\lambda/k]) \gtrsim k\gamma^{\alpha} \gtrsim k/(n\ln n)^\alpha$.
\end{proof}

\section{Analyses on Best Polynomial Approximation Error}\label{sec:best-poly}

In the upper and lower bounds analyses in \cref{sec:lower-analysis,sec:upper-analysis}, we connect the risks to the best polynomial approximation error of $\phi(p)$ or $\phi^\star(p)=\phi(p)/p$ on the specific intervals. In this section, we derive upper and lower bounds on these errors. More precisely, we analyze $E_L(\phi,[0,\lambda])$ for $\lambda \in (0,1)$ and $E_L(\phi^\star,[\gamma,\gamma/\eta])$ for $\eta > 0$ and $\gamma \in (0,1)$ such that $\gamma \le \eta$. We can use \cref{thm:modulus-best-approx} to obtain the best polynomial approximation error regarding the degree $L$, however, we cannot obtain the dependency on $\lambda$, $\gamma$, and $\eta$ from this theorem. We therefore carry out more precise analyses by using the direct and converse results in \cref{lem:best-modulus-direct,lem:best-modulus-converse}.

We firstly derive an upper bound on $E_L(\phi,[0,\lambda])$ regarding $L$ and $\lambda$ under the divergence speed assumption with $\alpha \in (0,3/2)$.
\begin{theorem}\label{thm:upper-poly-approx1}
  Suppose $\phi:[0,1]\to\RealSet$ be a function such that one of the following condition is satisfied:
  \begin{enumerate}
   \item the first divergence speed of $\phi$ is $p^\alpha$ for $\alpha \in (0,1/2]$,
   \item the second divergence speed of $\phi$ is $p^\alpha$ for $\alpha \in (1/2,1]$,
   \item the third divergence speed of $\phi$ is $p^\alpha$ for $\alpha \in (1,3/2)$.
  \end{enumerate}
  With a increasing sequence of $L$ and a decreasing sequence of $\lambda \in (0,1)$, we have
  \begin{align}
      E_L(\phi,[0,\lambda]) \lesssim \paren*{\frac{\lambda}{L^2}}^\alpha.
  \end{align}
\end{theorem}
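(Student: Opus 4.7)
The plan is to reduce the approximation problem on $[0, \lambda]$ to the canonical interval $[-1, 1]$ via linear change of variables, apply the direct inequality of \cref{lem:best-modulus-direct}, and bound the resulting Ditzian--Totik weighted modulus of smoothness by a careful case analysis near the boundary $x = -1$.

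First, I define $\tilde\phi(x) = \phi(\lambda(x+1)/2)$ on $[-1, 1]$. Since affine substitutions preserve polynomial degree, $E_L(\phi, [0, \lambda]) = E_L(\tilde\phi, [-1, 1])$, and the chain rule converts the divergence speed assumption into the pointwise bound $\abs*{\tilde\phi^{(r)}(x)} \lesssim \lambda^\alpha (1 + x)^{\alpha - r} + \lambda^r$, with the first term dominating near $x = -1$. Applying \cref{lem:best-modulus-direct} with $r = 1$, $2$, or $3$ in the three respective cases, the task reduces to showing $\omega^r_\varphi(\tilde\phi, L^{-1}; [-1, 1]) \lesssim \lambda^\alpha L^{-2\alpha}$, where $\varphi(x) = \sqrt{1 - x^2}$.

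The core step is bounding the $r$-th weighted finite difference $(\Delta^r_{h\varphi(x)} \tilde\phi)(x)$ uniformly in $h \in (0, t]$ and in $x$ for which the difference is nonzero. I split into two regimes. In the \emph{interior regime}, where $1 + x$ is large enough (comparable to $h^2$) that every evaluation point $\xi$ satisfies $1 + \xi \ge (1+x)/2$, the integral representation $(\Delta^r_{h'} \tilde\phi)(x) = \int_{[0, h']^r} \tilde\phi^{(r)}(x - rh'/2 + \sum_{i} t_i) \, d\mathbf{t}$ with $h' = h\varphi(x)$ combined with the derivative bound gives $\abs*{(\Delta^r_{h'} \tilde\phi)(x)} \lesssim \lambda^\alpha h^r (1 + x)^{\alpha - r/2}$ after using $\varphi(x) \lesssim \sqrt{1+x}$. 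Since $\alpha \le r/2$ in each of the three stated cases, the worst case is at the smallest admissible $1 + x \sim h^2$, producing $\lesssim \lambda^\alpha h^{2\alpha}$. In the \emph{boundary regime}, where $1 + x$ is smaller (but still large enough for $\Delta^r$ to be nonzero), I invoke the H\"older continuity results from \cref{lem:div-speed-holder1,lem:div-speed-holder2,lem:div-speed-holder3}: for $\alpha \in (0, 1)$ the $\alpha$-H\"older continuity of $\tilde\phi$ applied to step size $h\varphi(x) \lesssim h^2$ directly yields $\lesssim \lambda^\alpha h^{2\alpha}$; for $\alpha \in (1, 3/2)$ I factor $(\Delta^3_{h'} \tilde\phi)(x) = \int_{-h'/2}^{h'/2} (\Delta^2_{h'} \tilde\phi')(x + s) \, ds$ and use the $(\alpha - 1)$-H\"older continuity of $\tilde\phi'$ to obtain $\lesssim \lambda^\alpha (h')^\alpha \lesssim \lambda^\alpha h^{2\alpha}$; the borderline case $\alpha = 1$ is handled by returning to the double integral representation with the logarithmic bound $\abs*{\tilde\phi''(\xi)} \lesssim \lambda (1+\xi)^{-1}$ from \cref{lem:lowest-int-div}.

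Taking the supremum at $t = L^{-1}$ over both regimes then yields $\omega^r_\varphi(\tilde\phi, L^{-1}) \lesssim \lambda^\alpha L^{-2\alpha}$, and the claim follows. The main obstacle is the boundary regime, since the derivative bounds degenerate and the Ditzian--Totik weight $\varphi$ must be matched precisely against the singularity of $\tilde\phi^{(r)}$ at $x = -1$; the exponent $2\alpha$ in the target rate emerges as $h^r (1+x)^{\alpha - r/2}$ evaluated at $1 + x \sim h^2$, which is sharp exactly when the divergence speed exponent saturates the relation $r \ge 2\alpha$ encoded in each case split of the hypothesis.
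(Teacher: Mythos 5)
Your proposal is correct in its overall structure and takes a genuinely different route from the paper. For $\alpha\neq 1$ the paper performs a \emph{quadratic} substitution $\phi_\lambda(x)=\phi(\lambda x^2)$, restricting to even polynomials so that $E_L(\phi,[0,\lambda])\le E_{\lfloor L/2\rfloor}(\phi_\lambda,[-1,1])$; this moves the singularity of $\phi^{(r)}$ from the endpoint $p=0$ to the interior point $x=0$, after which the unweighted modulus $\omega^r$ suffices (using $\varphi\le 1$) and \cref{lem:mod-derivative} reduces everything to a first--order modulus of a derivative. Only for $\alpha=1$ does the paper use the linear substitution and the weighted second modulus. You instead apply the linear substitution $\tilde\phi(x)=\phi(\lambda(x+1)/2)$ in all three cases and handle the endpoint singularity at $x=-1$ directly via the Ditzian--Totik weight, splitting into an interior regime ($1+x\gtrsim h^2$, where the pointwise derivative bound $|\tilde\phi^{(r)}(\xi)|\lesssim\lambda^\alpha(1+\xi)^{\alpha-r}$ yields $\lambda^\alpha h^r(1+x)^{\alpha-r/2}\lesssim\lambda^\alpha h^{2\alpha}$ using $\alpha\le r/2$) and a boundary regime ($1+x\lesssim h^2$, where $h\varphi(x)\lesssim h^2$ and the H\"older estimates of \cref{lem:div-speed-holder1,lem:div-speed-holder3} give the same rate). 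This is the more standard Ditzian--Totik route; the paper's quadratic trick buys a simpler, case-free interior argument at the cost of halving the degree, while your route is more uniform across $\alpha$ and illustrates exactly how the weight $\varphi$ compensates the singularity.

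One soft spot: your boundary-regime treatment of $\alpha=1$ is stated too loosely. If you bound $|\tilde\phi^{(2)}(\xi)|\lesssim\lambda(1+\xi)^{-1}$ and integrate the double integral na\"ively over the admissible square, you pick up a logarithmic factor $h'\ln(1/h')$, which overshoots the target $\lambda h^2$. The log is killed only because the second-difference kernel $(h'-|s|)$ vanishes linearly at $s=\pm h'$, exactly where $1+\xi$ can vanish; this is what the paper's explicit computation leading to $(1+u)\ln(1+u)+(1-u)\ln(1-u)\le(2\ln 2)u^2$ is doing. You should make that cancellation explicit (compute $\int_{-h'}^{h'}\frac{h'-|s|}{1+x+s}ds\lesssim h'$ using that the numerator kills the endpoint singularity of the denominator) rather than invoking the integral representation in the abstract. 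With that filled in, the argument closes. Also a minor misattribution: the bound $|\phi^{(2)}(p)|\lesssim p^{-1}$ for $\alpha=1$ comes directly from \cref{def:div-speed}, not from \cref{lem:lowest-int-div} (which concerns $\phi^{(1)}$, giving $\ln(1/p)$).
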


Next, we derive a matching lower bound on $E_L(\phi,[0,\lambda])$ under the divergence speed assumption with $\alpha \in (0,1)$.
\begin{theorem}\label{thm:lower-poly-approx1}
  Suppose $\phi:[0,1]\to\RealSet$ be a function whose second divergence speed is $p^\alpha$ for $\alpha \in (0,1)$. Then, we have
  \begin{align}
      \liminf_{L \to \infty, \lambda \to 0}\paren*{\frac{L^2}{\lambda}}^\alpha E_L\paren*{\phi,[0,\lambda]} > 0.
  \end{align}
\end{theorem}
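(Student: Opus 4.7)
I will isolate the singular $Cp^\alpha$ component of $\phi$ near $0$ and reduce to the well-known bound $E_L(p^\alpha,[0,\lambda]) \asymp \lambda^\alpha L^{-2\alpha}$, which follows from \cref{thm:modulus-best-approx} (since the second-order Ditzian--Totik modulus of $x^\alpha$ on $[0,1]$ is of order $t^{2\alpha}$) together with the scaling identity $E_L(p^\alpha,[0,\lambda]) = \lambda^\alpha E_L(x^\alpha,[0,1])$ obtained from the change of variable $p = \lambda x$.

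\textbf{Step 1: decomposition of $\phi$.} The second divergence speed assumption implies $\abs*{\phi^{(2)}(p)} \ge W_2 p^{\alpha-2} - c_2'$, which is strictly positive for $p$ smaller than a constant $p_0 > 0$ depending only on $\phi$. Continuity of $\phi^{(2)}$ on $(0,1)$ then forces a constant sign $s \in \cbrace*{-1,+1}$ on $(0,p_0)$, so $\phi^{(2)}(p) = s W_2 p^{\alpha-2} + r(p)$ with $\sup_{p \in (0,p_0)} \abs*{r(p)} \le c_2 + c_2'$. Antidifferentiating twice yields, for $p \in [0,p_0]$, the decomposition
\begin{align}
    \phi(p) = C p^\alpha + A p + B + R(p), \qquad C := \frac{s W_2}{\alpha(\alpha-1)} \ne 0,
\end{align}
where $A,B$ are real constants chosen so that $R(0) = R^{(1)}(0) = 0$ and $R^{(2)} = r$ on $(0,p_0)$. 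The choice of $C$ cancels the singular contribution, so $R^{(1)}(p) = \int_0^p r(u)\,du$ extends continuously to $p = 0$ and is Lipschitz on the closed interval with constant at most $c_2 + c_2'$; thus $R \in C^{1,1}[0,p_0]$ with constants depending only on $\phi$.

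\textbf{Step 2: lower bound.} The classical Jackson inequality for functions with Lipschitz first derivative (applied after rescaling $[0,\lambda]$ to $[0,1]$) gives $E_L(R,[0,\lambda]) \lesssim \lambda^2 L^{-2}$ uniformly in $\lambda \le p_0$, while the linear piece $Ap + B$ lies in $\dom{P}_L$ for every $L \ge 1$ and contributes nothing. Subadditivity of $E_L(\cdot,[0,\lambda])$ then yields
\begin{align}
    E_L(\phi,[0,\lambda]) \ge \abs*{C}\, E_L(p^\alpha,[0,\lambda]) - E_L(R,[0,\lambda]) \gtrsim \abs*{C}\,\lambda^\alpha L^{-2\alpha} - O\paren*{\lambda^2 L^{-2}}.
\end{align}
Multiplying through by $(L^2/\lambda)^\alpha$ produces a lower bound of the form $\abs*{C}\,c_0 - O\paren*{\lambda^{2-\alpha} L^{2\alpha-2}}$, and the correction vanishes as $\lambda \to 0$ and $L \to \infty$ because $\alpha \in (0,1)$ makes both $\lambda^{2-\alpha}$ and $L^{2\alpha-2}$ tend to $0$. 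Consequently the liminf is bounded below by a positive multiple of $\abs*{C}$, as required. The main technical obstacle is the clean verification that $R$ is genuinely $C^{1,1}$ on the \emph{closed} interval $[0,p_0]$ with constants independent of $\lambda$; once this is established, the scaled Jackson estimate and the subadditivity of the best approximation error are routine.
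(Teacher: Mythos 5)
Your proof is correct but takes a genuinely different route from the paper's. The paper works directly with $\phi_\lambda(x)=\phi(\lambda(x+1)/2)$, lower-bounds the second-order weighted modulus $\omega^2_\varphi(\phi_\lambda,t)$ via the finite difference $\phi(x)+\phi(0)-2\phi(x/2)$ and the divergence speed assumption, and then extracts a lower bound on $E_L$ through the Ditzian--Totik converse inequality (\cref{lem:best-modulus-converse}) using a telescoping sum from $L$ to $c'L$, with the unwanted terms absorbed by the upper bound in \cref{thm:upper-poly-approx1}. You instead peel off the leading singularity: since $\phi^{(2)}$ keeps a constant sign near $0$, the divergence speed assumption gives $\phi^{(2)}(p)=sW_2 p^{\alpha-2}+r(p)$ with $r$ uniformly bounded, and integrating twice (the integrals converge since $r$ is bounded and $p^\alpha$ is integrable) produces $\phi=Cp^\alpha+Ap+B+R$ with $C\ne 0$ and $R\in C^{1,1}[0,p_0]$ uniformly in $\lambda$. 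Subadditivity of $E_L$ then drops the linear piece exactly, reduces the main term to the classical Bernstein rate $E_L(p^\alpha,[0,\lambda])=\lambda^\alpha E_L(x^\alpha,[0,1])\asymp \lambda^\alpha L^{-2\alpha}$ (which indeed follows from \cref{thm:modulus-best-approx} or is available in the references already cited), and controls the remainder by the scaled Jackson bound $E_L(R,[0,\lambda])\lesssim\lambda^2 L^{-2}$. After multiplying by $(L^2/\lambda)^\alpha$ the error is $O(\lambda^{2-\alpha}L^{2\alpha-2})$, which vanishes for $\alpha\in(0,1)$ under any joint $\lambda\to0$, $L\to\infty$. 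The $C^{1,1}$ check you flagged is fine: $R^{(1)}(p)=\int_0^p r$ is Lipschitz with constant $\le c_2+c_2'$ depending only on $\phi$, not $\lambda$. What your route buys is conceptual clarity — the lower bound is transparently the contribution of the $p^\alpha$ singularity, with everything else a smooth perturbation — at the cost of invoking the $p^\alpha$ rate as a known fact. The paper's route stays inside the modulus-of-smoothness toolkit it has already assembled, at the cost of a more delicate telescoping/absorption bookkeeping. Both are valid; yours is arguably the more modular argument.
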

Combining \cref{thm:upper-poly-approx1,thm:lower-poly-approx1}, we see that the best polynomial approximation error $E_L\paren*{\phi,[0,\lambda]}$ for $\alpha \in (0,1)$ is $\paren*{\frac{\lambda}{L^2}}^\alpha$ up to constant. 

Next, we derive a lower bound on $E_L(\phi^\star,[\gamma,\gamma/\eta])$ under the divergence speed assumption with $\alpha \in [1,3/2)$.
\begin{theorem}\label{thm:lower-poly-approx2}
  Let $\cbrace*{\phi_\gamma}$ be a family of functions over $\gamma \in (0,1)$ such that $\phi_\gamma(\gamma) = 0$ and the second order divergence speed of all elements is $p^\alpha$ for $\alpha \in [1,3/2)$. Denote $\phi_\gamma^\star(x)=\phi_\gamma(x)/x$. Then, if $\alpha = 1$ and $\phi_\gamma^{(1)}(\gamma)=0$ for any $\gamma \in (0,1)$,
  \begin{align}
    \liminf_{L \to \infty, \gamma \to 0 : \gamma \le 1/2L^2}E_L\paren*{\phi_\gamma^\star, [\gamma,2L^2\gamma]} > 0.
  \end{align}
  If $\alpha \in (1,3/2)$ and $\phi_\gamma^{(1)}(0)=0$ for any $\gamma \in (0,1)$,
  \begin{align}
    \liminf_{L \to \infty, \gamma \to 0 : \gamma \le 1/2L^2}\gamma^{1-\alpha}E_L\paren*{\phi_\gamma^\star, [\gamma,2L^2\gamma]} > 0.
  \end{align}
\end{theorem}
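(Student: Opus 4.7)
I would proceed by a three-step plan. First, I map $[\gamma, 2L^2\gamma]$ onto $[-1, 1]$ via the affine bijection $x = \gamma + \frac{1}{2}(2L^2-1)\gamma(y+1)$ and set $\psi(y) := \phi_\gamma^\star(x(y))$, so that $E_L(\phi_\gamma^\star, [\gamma, 2L^2\gamma]) = E_L(\psi, [-1,1])$. The obstruction to polynomial approximation lives in an $O(L^{-2})$-neighborhood of $y = -1$, corresponding to $x \in [\gamma, O(\gamma)]$ in the original variable.

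Second, I extract quantitative local information about $\phi_\gamma^\star$ near $\gamma$ by Taylor-expanding $\phi_\gamma$ at $\gamma$ and invoking the divergence-speed bound $|\phi_\gamma^{(2)}(s)| \asymp s^{\alpha-2}$, together with the fact (via \cref{lem:lowest-int-div}) that $\phi_\gamma^{(2)}$ has eventually definite sign near $0$. For $\alpha = 1$ with $\phi_\gamma(\gamma) = \phi_\gamma^{(1)}(\gamma) = 0$, the identity $\phi_\gamma(x) = \int_\gamma^x (x-s)\phi_\gamma^{(2)}(s)\,ds$ yields by explicit integration $\phi_\gamma^\star(2\gamma) \asymp \phi_\gamma^\star(4\gamma) \asymp 1$ and the sign-definite dyadic second difference $|\phi_\gamma^\star(4\gamma) - 2\phi_\gamma^\star(2\gamma) + \phi_\gamma^\star(\gamma)| \asymp 1$. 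For $\alpha \in (1, 3/2)$ with $\phi_\gamma^{(1)}(0) = 0$, integration gives $\phi_\gamma^{(1)}(\gamma) = \int_0^\gamma \phi_\gamma^{(2)}(s)\,ds \asymp \gamma^{\alpha-1}$, and the analogous dyadic second difference of $\phi_\gamma^\star$ at $\{\gamma, 2\gamma, 4\gamma\}$ has magnitude $\asymp \gamma^{\alpha-1}$.

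Third, I convert this local estimate into a lower bound on $E_L$ via the duality $E_L(\psi, [-1,1]) \ge |\int \psi\,d\mu|$, valid for any signed measure $\mu$ on $[-1,1]$ with $\norm*{\mu}_{TV} \le 1$ that annihilates $\dom{P}_L$. I would construct $\mu$ as a discrete signed measure on $L+2$ nodes, including the $O(\log L)$ dyadic nodes $y_j := -1 + (2^j - 1)/L^2$ (corresponding to $x = 2^j\gamma$, where $\psi$ takes the universal values from Step~2) together with auxiliary nodes in the interior of $[-1,1]$, with weights $\{w_j\}$ solving the $L+1$ moment-annihilation conditions $\sum_j w_j y_j^k = 0$ for $k = 0, \ldots, L$ and normalized by $\sum_j |w_j| = 1$. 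By the finite-dimensional form of the duality, $\max_w |\sum_j w_j \psi(y_j)|$ equals the discrete best polynomial approximation error $\min_{p \in \dom{P}_L}\max_j |\psi(y_j) - p(y_j)|$, which inherits the sign-definite structure from Step~2.

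The main obstacle will be showing that this discrete best-approximation error is of the desired order $\gamma^{\alpha-1}$ (resp.\ order one for $\alpha = 1$) uniformly in $\gamma$ and $L$. A natural strategy is to chain the dyadic second-difference estimates from Step~2 across the $O(\log L)$ scales $\{\gamma \cdot 2^j, \gamma \cdot 2^{j+1}, \gamma \cdot 2^{j+2}\}$ lying in the boundary window: each scale contributes a sign-definite second-difference constraint on the polynomial approximant of the same order, and combining these $O(\log L)$ constraints forbids simultaneous near-approximation by any polynomial of degree $\le L$. Making this chaining rigorous---ensuring that the dyadic constraints are sufficiently decoupled and handling the combinatorial interplay with the $L+2$ nodes required to annihilate $\dom{P}_L$---is the technical crux, likely requiring an orthogonalization argument analogous to the one used in related additive-functional lower bounds in the literature.
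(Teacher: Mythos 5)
Your Steps 1 and 2 are sound, and the local Taylor estimates you extract near $x=\gamma$ are essentially correct. However, Step 3 — the conversion of that local information into a lower bound on $E_L$ — is not a proof but a plan, and you acknowledge this yourself (``the technical crux, likely requiring an orthogonalization argument''). The gap is real: a single sign-definite second difference of size $\asymp 1$ at the three nearest-to-the-endpoint nodes does not by itself forbid good approximation by a degree-$L$ polynomial, since degree-$L$ polynomials can themselves exhibit second differences of that size in an $O(L^{-2})$-neighborhood of $\pm1$ (this is exactly the boundary phenomenon that makes Ditzian--Totik theory necessary). Your dyadic-chaining idea might be made to work, but the moment-annihilation constraints tie the weights at the $O(\log L)$ dyadic nodes to the weights at the remaining $L+1-O(\log L)$ interior nodes, and ensuring that the normalized measure $\mu$ retains enough mass and sign structure on the dyadic block to register the second-difference obstruction is precisely the part that needs an argument; nothing you have written establishes it.

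For comparison, the paper takes a different route that avoids explicit node constructions entirely. It writes $E_L(\phi_\gamma^\star,[\gamma,2L^2\gamma]) \ge \frac{1}{L'-L}\sum_{m=L+1}^{L'} E_m(\phi_\gamma^\star,[\gamma,2m^2\gamma])$ for $L'=c_\ell L$, then adds and subtracts $\sum_{m\le L}$ to reach the full partial sum $\sum_{m\le L'}(m+1)E_m$. The \emph{converse} Ditzian--Totik inequality (\cref{lem:best-modulus-converse}) lower-bounds this partial sum by the weighted modulus $\omega^1_\varphi$, which is then lower-bounded by evaluating $\phi_\gamma^\star$ near $\gamma$ using exactly the divergence-speed Taylor argument you also use. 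The subtracted sum $\sum_{m\le L}E_m$ is upper-bounded via the \emph{direct} inequality (\cref{lem:best-modulus-direct}) and a matching upper bound on $\omega^1$. Choosing $c_\ell$ large enough makes the lower bound dominate. In short, the paper leans on the established direct/converse machinery to translate a local $\omega^1_\varphi$ estimate into an $E_L$ lower bound, whereas your proposal tries to rebuild that translation from scratch and stops at the hard part.
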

\begin{remark}
  We can choose such family $\cbrace*{\phi_\gamma}$ because the minimax risk is invariant among $\phi_{c,c'}(x) = \phi(x)+c+c'(x-1/k)$ for any constants $c,c' \in \RealSet$. Note that $\phi^{(\ell)}_{c,c'}(x) = \phi^{(\ell)}(x)$ for any $\ell \ge 2$.
\end{remark}


\section{Conclusion}
In this paper, we investigate the minimax optimal risk of the additive functional estimation problem in large-$k$ regime, and we reveal that the divergence speed characterizes the minimax optimal risk. Our result gives comprehensive understanding of the additive functional estimation problem through the divergence speed. However, our analysis does not cover all the function $\phi$; for example, non-differentiable function. The ambitious goal of this study is to find out a characterization of the minimax optimal risk which is valid for arbitrary function $\phi$. 

\printbibliography

\appendix

\section{Proof for $\alpha \le 0$}\label{sec:no-estimator}
We use the Hellinger distance version of the Le Cam's two point method to prove \cref{prop:no-estimator}:
\begin{lemma}[\autocite{Tsybakov2009IntroductionEstimation}]\label{lem:le-cam-two-point-he}
  The minimax lower bound is given as
  \begin{multline}
    R^*(n,k;\phi) \ge \frac{1}{2}\paren*{\theta(P) - \theta(Q)}^2\\\times\paren*{1 - \sqrt{1 - \paren*{1 - \frac{\Hellinger(P,Q)^2}{4}}^{2n}}},
  \end{multline}
  where $\Hellinger$ denotes the Hellinger distance.
\end{lemma}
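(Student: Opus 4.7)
The plan is to carry out the classical four-step Le Cam two-point recipe (as in Chapter~2 of Tsybakov's monograph), using the Hellinger inequality for total variation in place of the usual KL-based bound, and tensorizing across the $n$ i.i.d.\ samples summarized by the histogram $N$.

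First I would reduce the minimax risk from a supremum over all of $\dom{M}_k$ to a Bayes risk on the two-point prior $\{P,Q\}$, replacing the max of the two risks by their average:
\[
 R^*(n,k;\phi) \ge \inf_{\hat\theta}\tfrac{1}{2}\bigl(\Mean_{P}[(\hat\theta(N)-\theta(P))^2] + \Mean_{Q}[(\hat\theta(N)-\theta(Q))^2]\bigr),
\]
where under $P$ (resp.\ $Q$) one has $N\sim\Mul(n,P)$ (resp.\ $\Mul(n,Q)$). Next I would convert estimation into testing by defining $\psi=P$ when $\hat\theta\le (\theta(P)+\theta(Q))/2$ and $\psi=Q$ otherwise; on a wrong call, $|\hat\theta-\theta(\cdot)|\ge\tfrac12|\theta(P)-\theta(Q)|$, so Markov's inequality yields
\[
 \Mean_{P}[(\hat\theta-\theta(P))^2] + \Mean_{Q}[(\hat\theta-\theta(Q))^2] \ge \tfrac{(\theta(P)-\theta(Q))^2}{4}\bigl(\p_{P}\{\psi\ne P\}+\p_{Q}\{\psi\ne Q\}\bigr).
\]

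Then I would invoke Le Cam's testing identity $\inf_\psi[\p_P\{\psi\ne P\}+\p_Q\{\psi\ne Q\}]=1-\TV(P^{\otimes n},Q^{\otimes n})$ and control the total variation by the Hellinger affinity $\rho(\mu,\nu)=\int\sqrt{d\mu\,d\nu}$ via the Cauchy--Schwarz bound $\TV(\mu,\nu)^2\le 1-\rho(\mu,\nu)^2$. The product structure gives $\rho(P^{\otimes n},Q^{\otimes n})=\rho(P,Q)^n$, and under the normalization of $\Hellinger$ used in the excerpt one has $\rho(P,Q)=1-\Hellinger(P,Q)^2/4$, hence
\[
 \TV(P^{\otimes n},Q^{\otimes n}) \le \sqrt{1-\bigl(1-\Hellinger(P,Q)^2/4\bigr)^{2n}}.
\]
Plugging this into the two-point inequality above and gathering the constants produces exactly the claimed lower bound.

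The only real obstacle is bookkeeping rather than analysis: one must match the Hellinger convention of the cited reference (so that $\rho=1-\Hellinger^2/4$ rather than $\rho=1-\Hellinger^2/2$) and track whether the multiplicative constant in front of $(\theta(P)-\theta(Q))^2$ arises from averaging the two risks, taking the max, or applying the estimation-to-testing reduction slightly differently. Since the lemma is quoted verbatim from~\autocite{Tsybakov2009IntroductionEstimation}, I would simply align normalizations with that source; no new analytic ingredient beyond the standard textbook proof is needed.
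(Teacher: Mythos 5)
The paper offers no proof of this lemma --- it is quoted from \autocite{Tsybakov2009IntroductionEstimation} and used only up to constant factors in the proof of \cref{prop:no-estimator} --- so the relevant comparison is with the standard argument behind the citation, and your outline is exactly that argument: two-point Bayes/testing reduction, Le Cam's identity for the minimal sum of error probabilities, the Cauchy--Schwarz bound $\TV^2 \le 1-\rho^2$ for the Hellinger affinity $\rho$, and tensorization $\rho(P^{\otimes n},Q^{\otimes n})=\rho(P,Q)^n$. The structure is sound; the only points that are not mere bookkeeping are the two you defer at the end. First, your chain yields the prefactor $\frac{1}{8}\paren*{\theta(P)-\theta(Q)}^2$ (averaging the two risks costs $\frac12$, and the distance-to-midpoint reduction costs $\frac14$); even the sharper pointwise route via $a^2+b^2\ge\frac12(a-b)^2$ together with $\int\min(dP^{\otimes n},dQ^{\otimes n})=1-\TV$ only gives $\frac14$, so the constant $\frac12$ in the displayed statement cannot be recovered by this method --- it is a loose transcription of the source, harmless for the paper since only the order of magnitude is used, but you should not claim that ``gathering the constants produces exactly the claimed lower bound.'' Second, the factor $1-\Hellinger(P,Q)^2/4$ inside the $2n$-th power is correct only under the normalization in which the squared Hellinger distance equals twice $\int(\sqrt{dP}-\sqrt{dQ})^2$, i.e.\ $\rho = 1-\Hellinger^2/4$; with the more common normalization one has $\rho = 1-\Hellinger^2/2$, and that is what your argument actually produces, so aligning conventions with the cited source, as you propose, is genuinely necessary rather than optional. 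With those two normalizations fixed, your proof is correct and coincides with the textbook argument the paper is invoking.
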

\begin{proof}[Proof of \cref{prop:no-estimator}]
  For $\beta \in (0,1]$ and $\delta > 0$, define
  \begin{align}
      P =& \paren*{\frac{\beta}{k-1},...,\frac{\beta}{k-1},1-\beta}, \\
      Q =& \paren*{\frac{\beta+\delta}{k-1},...,\frac{\beta+\delta}{k-1},1-\beta-\delta}.
  \end{align}
  Then, the total variation distance between $P$ and $Q$ is obtained as $\TV(P,Q) = \delta$. From the Le Cam's inequality, we have
  \begin{align}
      \Hellinger(P,Q)^2 \le 2\TV(P,Q).
  \end{align}
  Thus, we have
  \begin{align}
      & \sqrt{1 - \paren*{1 - \frac{\Hellinger(P,Q)^2}{4}}^{2n}} \\
      \le& \sqrt{1 - \paren*{1 - \frac{\delta}{2}}^{2n}}.
  \end{align}
  If $\delta \le 2 - 2(3/4)^{1/2n}$, we have
  \begin{align}
     \sqrt{1 - \paren*{1 - \frac{\Hellinger(P,Q)^2}{4}}^{2n}} \le \frac{1}{2}.
  \end{align}
  Note that this argument is true for any $\beta \in (0,1]$.
  
  By the inverse triangle inequality, we have
  \begin{align}
       & \abs*{\theta(P) - \theta(Q)} \\
      =& \abs*{(k-1)\int_{\frac{\beta+\delta}{k-1}}^{\frac{\beta}{k-1}}\phi^{(1)}(s)ds + \int_{1-\beta-\delta}^{1-\beta}\phi^{(1)}(s)ds} \\
      \ge& (k-1)\abs*{\int_{\frac{\beta+\delta}{k-1}}^{\frac{\beta}{k-1}}\phi^{(1)}(s)ds} - \abs*{\int_{1-\beta-\delta}^{1-\beta}\phi^{(1)}(s)ds}.
  \end{align}
  Let $p_1 = \paren*{\frac{W_1}{W_1 \lor c'_1}}^{1/(1-\alpha)}$, we have $\abs*{\phi^{(1)}(p)} > 0$ for $p \in (0,p_1)$.  From continuity of $\phi^{(1)}$, $\phi^{(1)}(p)$ has the same sign in $p \in (0,p_1]$. Thus, if $\beta+\delta$ is small enough so that $\beta+\delta < p_1\land\frac{1}{2}$, we have
  \begin{align}
      & \abs*{\theta(P) - \theta(Q)} \\
      \ge& \begin{multlined}[t]
        (k-1)\abs*{\int_{\frac{\beta+\delta}{k-1}}^{\frac{\beta}{k-1}}\paren*{W_1s^{\alpha-1} - c'_1}ds} \\ - \abs*{\int_{1-\beta-\delta}^{1-\beta}\paren*{W_1s^{1-\alpha} + c_1} ds}
      \end{multlined}\\
      \ge& (k-1)\abs*{\int_{\frac{\beta+\delta}{k-1}}^{\frac{\beta}{k-1}}W_1s^{\alpha-1}ds} - c'_1\delta - \paren*{W_12^{1-\alpha} + c_1}\delta \\
      \ge& (k-1)\abs*{\int_{\frac{\beta+\delta}{k-1}}^{\frac{\beta}{k-1}}W_1s^{-1}ds} - c'_1\delta - \paren*{W_12^{1-\alpha} + c_1}\delta \\
      \ge& W_1(k-1)\ln\paren*{1 + \frac{\delta}{\beta}} - c'_1\delta - \paren*{W_12^{1-\alpha} + c_1}\delta
  \end{align}
  By \cref{lem:le-cam-two-point-he} and letting $\beta = \delta$, for sufficiently small $\delta$, we have
  \begin{align}
    R^*(n,k;\phi) \ge \frac{1}{4}\paren*{W_1(k-1)\ln2 - (c'_1+W_12^{1-\alpha} + c_1)\delta}^2.
  \end{align}
  From arbitrariness of $\delta > 0$, we have
  \begin{align}
    R^*(n,k;\phi) \ge \frac{1}{4}\paren*{W_1(k-1)\ln2}^2.
  \end{align}
\end{proof}

\section{Proofs for Lower Order Divergence Speed}\label{sec:lower-div}
\begin{proof}[Proof of {\cref{lem:lower-div}}]
 Letting $p_\ell = \paren*{\frac{W_\ell}{W_\ell \lor c'_\ell}}^{1/(\ell-\alpha)}$, we have $\abs*{\phi^{(\ell)}(p)} > 0$ for $p \in (0,p_\ell)$. From continuity of $\phi^{(\ell)}$, $\phi^{(\ell)}(p)$ has the same sign in $p \in (0,p_\ell]$, and thus we have either $\phi^{(\ell)}(p) \ge W_\ell p^{\alpha-\ell} - c'_\ell$ or $\phi^{(\ell)}(p) \le -(Wp^{\alpha-\ell} - c'_m)$ in $p \in (0,p_\ell]$. From absolutely continuity of $\phi^{(m-1)}$, we have for any $p \in (0,1)$,
 \begin{align}
   \phi^{(\ell-1)}(p) = \phi^{(\ell-1)}(p_\ell) + \int_{p_\ell}^p \phi^{(\ell)}(x) dx. \label{eq:lower-div-decomp}
 \end{align}
 The absolute value of the second term in \cref{eq:lower-div-decomp} has an upper bound as
 \begin{align}
  & \abs*{\int_{p_\ell}^p \phi^{(\ell)}(x) dx} \\
  \le& \abs*{\int_{p_\ell}^p W_\ell x^{\alpha-\ell} + c_\ell dx} \\
   \le& \abs*{ (\alpha-\ell+1)W_\ell \paren*{p_\ell^{\alpha-\ell+1}-p^{\alpha-\ell+1}} + c_\ell\paren*{p - p_\ell}} \\
   \le& \begin{multlined}[t]
     (\alpha-\ell+1)W_\ell p^{\alpha-(\ell-1)} \\ + \abs*{(\alpha-\ell+1) W_\ell p_\ell^{\alpha-\ell+1} + c_\ell(p_\ell-p)}
   \end{multlined}\\
   \le& (\alpha-\ell+1)W_\ell p^{\alpha-m+1} + (\alpha-\ell+1)W_\ell p_\ell^{\alpha-\ell+1} + c_\ell.
 \end{align}
 Also, we have a lower bound of the second term in \cref{eq:lower-div-decomp} as
 \begin{align}
  & \abs*{\int_{p_\ell}^p \phi^{(\ell)}(x) dx} \\
   =& \abs*{\int_{p_\ell}^{p \land p_\ell} \phi^{(\ell)}(x) dx + \int_{p \land p_\ell}^p \phi^{(\ell)}(x) dx}\\
   \ge& \abs*{\int_{p_\ell}^{p \land p_\ell} W_\ell x^{\alpha-\ell} - c'_\ell dx} - \abs*{\int_{p \land p_\ell}^{p} W_\ell x^{\alpha-\ell} + c_\ell dx} \\
   \ge& \begin{multlined}[t][.9\columnwidth]
    \abs[\Bigg]{(\alpha-\ell+1)W_\ell\paren*{p_\ell^{\alpha-\ell+1} - \paren*{p \land p_\ell}^{\alpha-\ell+1}} \\ - c'_\ell((p \land p_\ell) - p_\ell)} - \abs*{\paren*{W_\ell p_\ell^{\alpha-\ell} + c_\ell}(p - (p \land p_\ell))} \\ \because x^{\alpha-\ell} \le p_\ell^{\alpha-\ell} \for x \in [p_\ell,1) \textif \alpha \le \ell
   \end{multlined} \\
   \ge& \begin{multlined}[t][.9\columnwidth]
    (\alpha-\ell+1)W_\ell\paren*{p \land p_\ell}^{\alpha-\ell+1} - (\alpha-\ell+1)W_\ell p_\ell^{\alpha-\ell+1} \\ - \abs*{c'_\ell(p_\ell - (p \land p_\ell))} - \paren*{W_\ell p_\ell^{\alpha-\ell} + c_\ell}(p - (p \land p_\ell))
   \end{multlined} \\
   \ge& \begin{multlined}[t]
    (\alpha-\ell+1)W_\ell p^{\alpha-\ell+1} -  (\alpha-\ell+1)W_\ell p_\ell^{\alpha-\ell+1} - c'_\ell p_\ell \\ - \paren*{W_\ell p_\ell^{\alpha-\ell} + c_\ell}(1 - p_\ell),
   \end{multlined}
 \end{align}
 where we use the reverse triangle inequality to obtain the third and fifth lines. Applying the triangle inequality and the reverse triangle inequality gives
 \begin{multline}
    \abs*{\int_{p_\ell}^p \phi^{(\ell)}(x) dx} - \abs*{\phi^{(\ell-1)}(p_\ell)}\le \\ \abs*{\phi^{(\ell-1)}(p)} \\ \le \abs*{\int_{p_\ell}^p \phi^{(\ell)}(x) dx} + \abs*{\phi^{(\ell-1)}(p_\ell)}.
 \end{multline}
 Thus, setting $W_{\ell-1} = (\alpha-\ell+1)W_\ell$, $c_\ell = W_{\ell-1}p^{\alpha-\ell+1}+c_\ell+\abs{\phi^{(\ell-1)}(p_\ell)}$, and $c'_{\ell-1} = W_{\ell-1}p_\ell^{\alpha-\ell+1} + c'_\ell p_\ell - \paren{W_\ell p_\ell^{\alpha-\ell} + c_\ell}(1 - p_\ell) - \abs*{\phi^{(\ell-1)}(p_\ell)}$, we have for all $p \in (0,1)$,
 \begin{multline}
     W_{\ell-1}p^{\alpha-(\ell-1)} - c'_{\ell-1} \le \\ \abs*{\phi^{(\ell-1)}(p)} \\ \le W_{\ell-1}p^{\alpha-(\ell-1)} + c_{\ell-1}.
 \end{multline}
 Thus, $(\ell-1)$th divergence speed of $\phi$ is $p^\alpha$.
\end{proof}

\begin{proof}[Proof of \cref{lem:lowest-int-div}]
  By \cref{lem:lowest-div}, $(\alpha+1)$th divergence speed is $p^\alpha$. Then, we prove the claim in the same manner of the proof of \cref{lem:lower-div}. Letting $p_{\alpha+1}=\paren*{\frac{W_{\alpha+1}}{W_{\alpha+1}\lor c'_{\alpha+1}}}$, $\phi^{(\alpha+1)}(p)$ has the same sign in $p \in (0,p_{\alpha+1}]$. For any $p \in (0,1)$, we have
  \begin{align}
      & \abs*{\int_{p_{\alpha+1}}^p\phi^{(\alpha+1)}(x)dx} \\
      \le& \abs*{\int_{p_{\alpha+1}}^p W_{\alpha+1}x^{-1} + c_{\alpha+1} dx} \\
       =& \abs*{W_{\alpha+1}\paren*{\ln(p) - \ln(p_{\alpha+1})} + c_{\alpha+1}\paren*{p-p_{\alpha+1}}} \\
       \le& W_{\alpha+1}\ln(1/p) + W_{\alpha+1}\ln(1/p_{\alpha+1}) + c_{\alpha+1}.
  \end{align}
  Also, we have
  \begin{align}
      & \abs*{\int_{p_{\alpha+1}}^p\phi^{(\alpha+1)}(x)dx} \\
      =& \abs*{\int_{p_{\alpha+1}}^{p\land p_{\alpha+1}}\phi^{(\alpha+1)}(x)dx + \int_{p\land p_{\alpha+1}}^p\phi^{(\alpha+1)}(x)dx} \\
      \ge& \begin{multlined}[t]
        \abs*{\int_{p_{\alpha+1}}^{p\land p_{\alpha+1}}W_{\alpha+1}x^{-1} - c'_{\alpha+1}dx} \\ - \abs*{\int_{p\land p_{\alpha+1}}^pW_{\alpha+1}x^{-1} + c_{\alpha+1}dx}
      \end{multlined}\\
      \ge& \begin{multlined}[t]
        \abs*{W_{\alpha+1}\ln\paren*{\frac{p \land p_{\alpha+1}}{p_{\alpha+1}}} - c'_{\alpha+1}\paren*{p \land p_{\alpha+1} - p_{\alpha+1}}} \\ - \abs*{W_{\alpha+1}p_{\alpha+1}^{-1} + c_{\alpha+1}}\abs*{p - p\land p_{\alpha+1}}
      \end{multlined}\\
      \ge& \begin{multlined}[t]
        W_{\alpha+1}\ln(1/p) - W_{\alpha+1}\ln(1/p_{\alpha+1}) \\ - c'_{\alpha+1}p_{\alpha+1} - \paren*{W_{\alpha+1}p_{\alpha+1}^{-1} + c_{\alpha+1}}(1-p_{\alpha+1}).
      \end{multlined}
  \end{align}
  From the absolutely continuity and the triangle and inverse triangle inequalities, we have
  \begin{multline}
      \abs*{\int_{p_{\alpha+1}}^p\phi^{(\alpha+1)}(x)dx} - \abs*{\phi^{(\alpha)}(p_{\alpha+1})} \le \\ \abs*{\phi^{(\alpha)}(p)} \\ \le \abs*{\int_{p_{\alpha+1}}^p\phi^{(\alpha+1)}(x)dx} + \abs*{\phi^{(\alpha)}(p_{\alpha+1})}.
  \end{multline}
  Hence, we get the claim by setting $W_\alpha=W_{\alpha+1}$, $c_\alpha = W_{\alpha+1}\ln(1/p_{\alpha+1}) + c_{\alpha+1} + \abs{\phi^{(\alpha)}(p_{\alpha+1})}$, and $c'_\alpha = W_{\alpha+1}\ln(1/p_{\alpha+1}) + c'_{\alpha+1}p_{\alpha+1} + \paren{W_{\alpha+1}p_{\alpha+1}^{-1} + c_{\alpha+1}}(1-p_{\alpha+1}) + \abs*{\phi^{(\alpha)}(p_{\alpha+1})}$.
\end{proof}

\section{Proofs for H\"older continuity}

\begin{proof}[Proof of \cref{lem:div-speed-holder1}]
  The H\"older continuity is proved by showing there exists an universal constant $C > 0$ such that for any $x,y \in (0,1)$,
  \begin{align}
    \abs*{\phi(x) - \phi(y)} \le C\abs*{x - y}^{\alpha}.
  \end{align}
  The absolutely continuity of $\phi$ yields that for any $x,y \in (0,1)$
  \begin{align}
      \abs*{\phi(x) - \phi(y)} =& \abs*{\int_x^y \phi^{(1)}(s) ds} \\
      \le& \abs*{\int_x^y\abs*{\phi^{(1)}(s)}ds} \\
      \le& \abs*{\int_x^y\paren*{W_1s^{\alpha-1}+c_1}ds} \\
      \le& \frac{W_1}{\alpha}\abs*{x^\alpha - y^\alpha} + c_1\abs*{x - y}.
  \end{align}
  Since a function $x \to x^\beta$ for $\beta \in (0,1)$ is $\beta$-H\"older continuous and $\abs{x - y} \le 1$ for all $x,y \in (0,1)$, we have for any $x,y \in (0,1)$
  \begin{align}
      \abs*{\phi(x) - \phi(y)} \le& \paren*{\frac{W_1}{\alpha}+c_1}\abs*{x - y}^\alpha.
  \end{align}
\end{proof}
\begin{proof}[Proof of \cref{lem:div-speed-holder2}]
  The combination of \cref{lem:lowest-int-div} and absolutely continuity of $\phi$ yields that for any $x,y \in (0,1)$
  \begin{align}
      \abs*{\phi(x) - \phi(y)} =& \abs*{\int_x^y \phi^{(1)}(s) ds} \\
      \le& \abs*{\int_x^y\abs*{\phi^{(1)}(s)}ds} \\
      \le& \abs*{\int_x^y\paren*{W_1\ln(1/s)+c_1}ds} \\
      \le& W_1\abs*{\int_x^y\ln(1/s)ds} + c_1\abs*{x - y}.
  \end{align}
  Application of the H\"older inequality yields
  \begin{align}
    & \abs*{\int_x^y\ln(1/s)ds} \\
    \le& \abs*{\abs*{\int_x^y \ln^{\frac{1}{1-\alpha}}(1/s)ds}^{1-\alpha}\abs*{\int_x^y 1 ds}^\alpha} \\
     =& \abs*{\abs*{\int_x^y \ln^{\frac{1}{1-\alpha}}(1/s)ds}^{1-\alpha}}\abs*{x - y}^\alpha \\
     =& \begin{multlined}[t]
       \abs[\Bigg]{\gamma\paren*{1+\frac{1}{1-\alpha},\ln(1/x)} \\ - \gamma\paren*{1+\frac{1}{1-\alpha},\ln(1/y)}}^{1-\alpha}\abs*{x - y}^\alpha,
     \end{multlined}
  \end{align}
  where $\gamma(s,x)=\int_0^x t^{s-1}e^{-t}dt$ which is known as the lower incomplete gamma function. Since the lower incomplete gamma function is non-decreasing for $x > 0$, and $\lim_{x \to \infty}\gamma(s,x) = \Gamma(s)$, where $\Gamma(s)$ is the gamma function, we have
  \begin{multline}
    \abs*{\phi(x) - \phi(y)} \le \\ W_1\paren*{\Gamma\paren*{1+\frac{1}{1-\alpha}}}^{1-\alpha}\abs*{x - y}^\alpha + c_1\abs*{x - y}.
  \end{multline}
  The gamma function $\Gamma(s)$ is finite if $s < \infty$. Thus,
  \begin{align}
      & \sup_{x, y \in (0,1)}\frac{\abs*{\phi(x) - \phi(y)}}{\abs*{x - y}^\alpha} \\
      \le& W_1\paren*{\Gamma\paren*{1+\frac{1}{1-\alpha}}}^{1-\alpha} + c_1\sup_{x,y \in (0,1)}\abs*{x - y}^{1-\alpha} \\ =& W_1\paren*{\Gamma\paren*{1+\frac{1}{1-\alpha}}}^{1-\alpha} + c_1 < \infty,
  \end{align}
  for $\alpha \in (0,1)$.
\end{proof}
\begin{proof}[Proof of \cref{lem:div-speed-holder3}]
  The Lipschitz continuity is proved by showing there exists an universal constant $C > 0$ such that
  \begin{align}
    \sup_{p \in (0,1)}\abs*{\phi^{(1)}(p)} \le C.
  \end{align}
  For any $p \in (0,1)$, the absolutely continuity of $\phi^{(1)}$ gives
  \begin{align}
    \abs*{\phi^{(1)}(p)} =& \abs*{\int_0^p\phi^{(2)}(s)ds} \\
    \le& \int_0^p\abs*{\phi^{(2)}(s)}ds \\
    \le& \int_0^p\paren*{W_2s^{\alpha-2} + c_2}ds \\
    =& \frac{W_2}{\alpha-1}p^{\alpha-1} + c_2p \le \frac{W_2}{\alpha-1} + c_2.
  \end{align}
  Next, we prove the H\"older continuity of $\phi^{(1)}$. The absolutely continuity of $\phi^{(1)}$ yields for any $x,y \in (0,1)$,
  \begin{align}
    \abs*{\phi^{(1)}(x) - \phi^{(1)}(y)} \le& \abs*{\int_x^y\phi^{(2)}(s)ds} \\
    \le& \abs*{\int_x^y\abs*{\phi^{(2)}(s)}ds} \\
    \le& \abs*{\int_x^y\paren*{W_2s^{\alpha-2} + c_2}ds} \\
    =& \frac{W_2}{\alpha-1}\abs*{x^{\alpha-1} - y^{\alpha-1}} + c_2\abs*{x-y}.
  \end{align}
  Since a function $x \to x^\beta$ for $\beta \in (0,1)$ is $\beta$-H\"older continuous and $\abs{x-y} \le 1$ for any $x,y \in (0,1)$, we have
  \begin{align}
    \abs*{\phi^{(1)}(x) - \phi^{(1)}(y)} \le& \frac{W_2}{\alpha-1}\abs*{x-y}^{\alpha-1} + c_1\abs*{x-y}^{\alpha-1}.
  \end{align}
\end{proof}

\section{Proofs for Upper Bound Analysis}

We use the following helper lemma for proving \cref{lem:upper-ind-var}.
\begin{lemma}[\textcite{Cai2011TestingFunctional}, Lemma 4]\label{lem:cai-lem4}
  Suppose $\ind{\event}$ is an indicator random variable independent of $X$ and $Y$, then
  \begin{multline}
    \Var\bracket*{X\ind{\event} + Y\ind{\event^c}} = \\ \Var\bracket*{X}\p\event + \Var\bracket*{Y}\p\event^c + \paren*{\Mean\bracket{X} - \Mean\bracket{Y}}^2\p\event\p\event^c.
  \end{multline}
\end{lemma}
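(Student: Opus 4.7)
The plan is to compute $\Var[Z]$ directly from the definition, where $Z=X\ind{\event}+Y\ind{\event^c}$, by exploiting three structural facts: (i) the indicators are independent of $X$ and $Y$, (ii) $\ind{\event}^2=\ind{\event}$ and $\ind{\event^c}^2=\ind{\event^c}$, and (iii) $\ind{\event}\ind{\event^c}=0$. These together make every mixed moment of $Z$ collapse into a weighted sum of moments of $X$ alone plus moments of $Y$ alone, which is the only nontrivial observation needed.

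Concretely, I would first compute $\Mean[Z]=\Mean[X]\,\p\event+\Mean[Y]\,\p\event^c$ using independence. Next, squaring $Z$ and again using (ii) and (iii) to kill the cross term, I would obtain $\Mean[Z^2]=\Mean[X^2]\,\p\event+\Mean[Y^2]\,\p\event^c$. Subtracting $(\Mean[Z])^2$ from $\Mean[Z^2]$ then gives the raw expression
\begin{align}
\Var[Z]=&\,\Mean[X^2]\p\event+\Mean[Y^2]\p\event^c\nonumber\\
&-(\Mean[X])^2(\p\event)^2-(\Mean[Y])^2(\p\event^c)^2\nonumber\\
&-2\,\Mean[X]\Mean[Y]\,\p\event\,\p\event^c.
\end{align}
The final step is the rewriting $(\p\event)^2=\p\event-\p\event\,\p\event^c$ and analogously for $\p\event^c$; substituting these and grouping terms yields the stated identity, with the cross term $2\Mean[X]\Mean[Y]\p\event\p\event^c$ combining with the leftover pieces to form $(\Mean[X]-\Mean[Y])^2\p\event\p\event^c$.

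There is essentially no obstacle here: the result is a one-line algebraic identity once independence of $\ind{\event}$ from $(X,Y)$ is invoked. The only care required is making sure that in step two, the expansion of $Z^2$ has no $XY$ cross term because $\ind{\event}\ind{\event^c}=0$, which is what makes the formula clean. I would therefore present the proof as two short displays (for $\Mean[Z]$ and $\Mean[Z^2]$) followed by a single algebraic simplification.
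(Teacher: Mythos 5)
Your proof is correct: the computation of $\Mean[Z]$ and $\Mean[Z^2]$ via independence, the vanishing of the $XY$ cross term from $\ind{\event}\ind{\event^c}=0$, and the algebraic regrouping using $(\p\event)^2=\p\event-\p\event\,\p\event^c$ all go through. The paper itself cites this as Lemma 4 of \textcite{Cai2011TestingFunctional} without reproducing a proof, so there is no in-paper argument to compare against; your derivation is the standard direct one and is valid (note it only requires $\ind{\event}$ to be independent of $X$ and of $Y$ separately, which your argument implicitly uses).
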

\begin{proof}[Proof of \cref{lem:upper-ind-bias}]
  From the property of the absolute value, the bias is bounded above as
  \begin{align}
    & \Bias\bracket*{ \hat\theta\paren{\tilde{N}} - \theta(P) } \\
     \le& \begin{multlined}[t]
       \sum_{i=1}^k \paren[\Bigg]{\Bias\bracket*{\ind{\tilde{N}'_i \ge 2\Delta_{n,k}}\paren*{\phi_{\mathrm{plugin}}(\tilde{N}_i) - \phi(p_i)} } \\ + \Bias\bracket*{\ind{\tilde{N}'_i < 2\Delta_{n,k}}\paren*{\phi_{\mathrm{poly}}(\tilde{N}_i) - \phi(p_i)}} }.
     \end{multlined}
  \end{align}
  Because of the independence between $\tilde{N}$ and $\tilde{N}'$, we have
  \begin{align}
    & \Bias\bracket*{\ind{\tilde{N}'_i \ge 2\Delta_{n,k}}\paren*{\phi_{\mathrm{plugin}}(\tilde{N}_i) - \phi(p_i)} } \\
     =& \Bias\bracket*{\phi_{\mathrm{plugin}}(\tilde{N}_i) - \phi(p_i) }\p\cbrace*{\tilde{N}'_i \ge 2\Delta_{n,k}}, \\
    & \Bias\bracket*{\ind{\tilde{N}'_i < 2\Delta_{n,k}}\paren*{\phi_{\mathrm{poly}}(\tilde{N}_i) - \phi(p_i)} } \\
     =& \Bias\bracket*{ \phi_{\mathrm{poly}}(\tilde{N}_i) - \phi(p_i) }\p\cbrace*{\tilde{N}'_i < 2\Delta_{n,k}}. \label{eq:part-bias-decomp}
  \end{align}

  For the bias of the bias-corrected plugin estimator, we have
  \begin{align}
    & \Bias\bracket*{\phi_{\mathrm{plugin}}(\tilde{N}_i) - \phi(p_i) }\p\cbrace*{\tilde{N}'_i \ge 2\Delta_{n,k}} \\
    =& \begin{multlined}[t][.9\columnwidth]
      \Bias\bracket*{\phi_{\mathrm{plugin}}(\tilde{N}_i) - \phi(p_i) }\p\cbrace*{\tilde{N}'_i \ge 2\Delta_{n,k}}\\\times\paren*{\ind{np_i \le \Delta_{n,k}} + \ind{np_i > \Delta_{n,k}}}.
    \end{multlined}
  \end{align}
  \sloppy The Chernoff bound for the Poisson distribution gives $\p\cbrace*{\tilde{N}'_i \ge 2\Delta_{n,k}}\ind{np_i \le \Delta_{n,k}} \le (e/4)^{\Delta_{n,k}}\ind{np_i \le \Delta_{n,k}}$. Thus, we have
  \begin{align}
    & \Bias\bracket*{\phi_{\mathrm{plugin}}(\tilde{N}_i) - \phi(p_i) }\p\cbrace*{\tilde{N}'_i \ge 2\Delta_{n,k}} \\
    \le& \begin{multlined}[t][.8\columnwidth]
     \paren*{(e/4)^{\Delta_{n,k}}\ind{np_i \le \Delta_{n,k}} + \ind{np_i > \Delta_{n,k}}}\\\times\Bias\bracket*{\phi_{\mathrm{plugin}}(\tilde{N}_i) - \phi(p_i) }.
    \end{multlined} \label{eq:part-plugin-bias}
  \end{align}

  Since the Chernoff bound yields $\p\cbrace*{\tilde{N}'_i < 2\Delta_{n,k}} \le e^{-\Delta_{n,k}/8}$ for $p_i > 4\Delta_{n,k}$, as well as the \cref{eq:part-plugin-bias}, we have
  \begin{align}
    & \Bias\bracket*{ \phi_{\mathrm{poly}}(\tilde{N}_i) - \phi(p_i) }\p\cbrace*{\tilde{N}'_i < 2\Delta_{n,k}} \\
    \le& \begin{multlined}[t][.9\columnwidth]
     \Bias\bracket*{\phi_{\mathrm{poly}}(\tilde{N}_i) - \phi(p_i) }\p\cbrace*{\tilde{N}'_i < 2\Delta_{n,k}}\ind{np_i \le 4\Delta_{n,k}} + \\ \Bias\bracket*{\phi_{\mathrm{poly}}(\tilde{N}_i) - \phi(p_i) }\p\cbrace*{\tilde{N}'_i < 2\Delta_{n,k}}\ind{np_i > 4\Delta_{n,k}}
    \end{multlined} \\
    \le& \begin{multlined}[t][.8\columnwidth]
      \paren*{\ind{np_i \le 4\Delta_{n,k}} + e^{-\Delta_{n,k}/8}\ind{np_i < 4\Delta_{n,k}}}\\\times\Bias\bracket*{\phi_{\mathrm{poly}}(\tilde{N}_i) - \phi(p_i) }.
    \end{multlined}\label{eq:part-poly-bias}
  \end{align}
  Combining \cref{eq:part-bias-decomp,eq:part-plugin-bias,eq:part-poly-bias} gives the desired result.
\end{proof}
\begin{proof}[Proof of \cref{lem:upper-ind-var}]
 \sloppy Because of the independence of $\tilde{N}_1,..,\tilde{N}_k,\tilde{N}'_1,...,\tilde{N}'_k$, applying \cref{lem:cai-lem4} gives
  \begin{align}
    & \Var\bracket*{\hat\theta\paren{\tilde{N}}} \\
    =& \Var\bracket*{ \sum_{i=1}^k \ind{\tilde{N}'_i \ge 2\Delta_{n,k}}\phi_{\mathrm{plugin}}(\tilde{N}_i) + \ind{\tilde{N}'_i < 2\Delta_{n,k}}\phi_{\mathrm{poly}}(\tilde{N}_i) } \\
    =& \sum_{i=1}^k \Var\bracket*{ \ind{\tilde{N}'_i \ge 2\Delta_{n,k}}\phi_{\mathrm{plugin}}(\tilde{N}_i) + \ind{\tilde{N}'_i < 2\Delta_{n,k}}\phi_{\mathrm{poly}}(\tilde{N}_i) } \\
   \le& \begin{multlined}[t][.8\columnwidth]
    \sum_{i=1}^k \paren[\Bigg]{ \Var\bracket*{\phi_{\mathrm{plugin}}(\tilde{N}_i)}\p\cbrace*{\tilde{N}'_i \ge 2\Delta_{n,k}} \\
    + \Var\bracket*{\phi_{\mathrm{poly}}(\tilde{N}_i)}\p\cbrace*{\tilde{N}'_i < 2\Delta_{n,k}} \\
    + \paren*{\Mean\bracket*{\phi_{\mathrm{plugin}}(\tilde{N}_i)} - \Mean\bracket*{\phi_{\mathrm{poly}}(\tilde{N}_i)}}^2\\\times\p\cbrace*{\tilde{N}'_i \ge 2\Delta_{n,k}}\p\cbrace*{\tilde{N}'_i < 2\Delta_{n,k}} }.
   \end{multlined} \label{eq:divided-var}
  \end{align}

  We can derive upper bounds on the first two terms of \cref{eq:divided-var} in the same manner of \cref{eq:part-plugin-bias,eq:part-poly-bias} as
  \begin{multline}
    \Var\bracket*{\phi_{\mathrm{plugin}}(\tilde{N}_i)}\p\cbrace*{\tilde{N}'_i \ge 2\Delta_{n,k}}
     \le \\ \paren*{(e/4)^{\Delta_{n,k}}\ind{np_i \le \Delta_{n,k}} + \ind{np_i > \Delta_{n,k}}}\Var\bracket*{\phi_{\mathrm{plugin}}(\tilde{N}_i) },
  \end{multline}
  and
  \begin{multline}
    \Var\bracket*{\phi_{\mathrm{poly}}(\tilde{N}_i)}\p\cbrace*{\tilde{N}'_i < 2\Delta_{n,k}}
     \le \\ \paren*{\ind{np_i \le 4\Delta_{n,k}} + e^{-\Delta_{n,k}/8}\ind{np_i > 4\Delta_{n,k}}}\Var\bracket*{\phi_{\mathrm{poly}}(\tilde{N}_i)}.
  \end{multline}

  For the third term of \cref{eq:divided-var}, application of the triangle inequality yields
  \begin{align}
    & \begin{multlined}[t][.9\columnwidth]
     \paren*{\Mean\bracket*{\phi_{\mathrm{plugin}}(\tilde{N}_i)} - \Mean\bracket*{\phi_{\mathrm{poly}}(\tilde{N}_i)}}^2\\\times\p\cbrace*{\tilde{N}'_i \ge 2\Delta_{n,k}}\p\cbrace*{\tilde{N}'_i < 2\Delta_{n,k}}
    \end{multlined}\\
    =& \begin{multlined}[t][.9\columnwidth]
      \paren*{\Mean\bracket*{\phi_{\mathrm{plugin}}(\tilde{N}_i) - \phi(p_i)} - \Mean\bracket*{\phi_{\mathrm{poly}}(\tilde{N}_i) - \phi(p_i)}}^2\\\times\p\cbrace*{\tilde{N}'_i \ge 2\Delta_{n,k}}\p\cbrace*{\tilde{N}'_i < 2\Delta_{n,k}}
    \end{multlined}\\
    \le& \begin{multlined}[t][.9\columnwidth]
      2\paren[\Bigg]{\Bias\bracket*{\phi_{\mathrm{plugin}}(\tilde{N}_i) - \phi(p_i)}^2 \\ + \Bias\bracket*{\phi_{\mathrm{poly}}(\tilde{N}_i) - \phi(p_i)}^2}\\\times\p\cbrace*{\tilde{N}'_i \ge 2\Delta_{n,k}}\p\cbrace*{\tilde{N}'_i < 2\Delta_{n,k}}.
    \end{multlined}
  \end{align}
  As well as \cref{eq:part-plugin-bias,eq:part-poly-bias}, we have
  \begin{align}
    & \p\cbrace*{\tilde{N}'_i \ge 2\Delta_{n,k}}\p\cbrace*{\tilde{N}'_i < 2\Delta_{n,k}} \\
    \le& (e/4)^{\Delta_{n,k}}\ind{np_i \le \Delta_{n,k}} + \ind{np_i > \Delta_{n,k}},
  \end{align}
  or
  \begin{align}
    & \p\cbrace*{\tilde{N}'_i \ge 2\Delta_{n,k}}\p\cbrace*{\tilde{N}'_i < 2\Delta_{n,k}} \\
    \le& \ind{np_i \le 4\Delta_{n,k}} + e^{-\Delta_{n,k}/8}\ind{np_i > 4\Delta_{n,k}}.
  \end{align}
  Assigning these bounds into \cref{eq:divided-var}, we get the desired result.
\end{proof}

\begin{proof}[Proof of \cref{lem:poly-o1}]
  From the triangle inequality, we have
 \begin{align}
  & \Bias\bracket*{(g_L(\tilde{N}) \land \phi_{{\mathrm{sup}},\Delta})\lor \phi_{{\mathrm{inf}},\Delta} - \phi(p) } \\
  \le& \abs*{\phi_{{\mathrm{sup}},\Delta}}\lor\abs*{\phi_{{\mathrm{inf}},\Delta}} + \sup_{p \in [0,1]}\abs*{\phi(p)}.
 \end{align}
 By the assumption, there exists a finite universal constant $C > 0$ such that $\abs*{\phi_{{\mathrm{sup}},\Delta}}\lor\phi_{{\mathrm{inf}},\Delta} \le C$ and $\sup_{p \in [0,1]}\abs*{\phi(p)} \le C$.

 By the last truncation, we have
 \begin{align}
    \Var\bracket*{(g_L(\tilde{N}) \land \phi_{{\mathrm{sup}},\Delta})\lor \phi_{{\mathrm{inf}},\Delta} } \le \abs*{\phi_{{\mathrm{sup}},\Delta}}\lor\abs*{\phi_{{\mathrm{inf}},\Delta}}.
 \end{align}
 With the same reason of the bias, we obtain the desired claim.
\end{proof}

To prove the variance upper bound in \cref{lem:poly-var}, we use the following lemma:
\begin{lemma}[\textcite{Wu2016MinimaxApproximation}]\label{lem:poi-var}
  Let $X \sim \Poi(\lambda)$. For any positive integer $m$, $\Var[(X)_m]$ is increasing in $\lambda$ and
  \begin{align}
    \Var\bracket*{\paren*{X}_m} \le (\lambda m)^m\paren*{\frac{(2e)^{2\sqrt{\lambda m}}}{\pi\sqrt{\lambda m}} \lor 1}.
  \end{align}
\end{lemma}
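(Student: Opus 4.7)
The plan is to derive an explicit polynomial expression for $\Var[(X)_m]$---which makes monotonicity obvious---and then to control its coefficients via Stirling together with modified-Bessel asymptotics. For the explicit formula, I would use the product identity for falling factorials,
\begin{align}
(X)_a (X)_b = \sum_{k=0}^{\min(a,b)} \binom{a}{k}\binom{b}{k}\, k!\, (X)_{a+b-k},
\end{align}
together with the Poisson identity $\Mean[(X)_j] = \lambda^j$. Setting $a = b = m$, taking expectations, and subtracting the $k=0$ term (which equals $\lambda^{2m} = \Mean[(X)_m]^2$) gives
\begin{align}
\Var\bracket*{(X)_m} = \sum_{k=1}^{m} \binom{m}{k}^2 k!\, \lambda^{2m-k}.
\end{align}
Since every summand is a positive multiple of a positive power of $\lambda$, the monotonicity claim follows immediately.

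For the quantitative bound, I would reindex by $j = m-k$ to write
\begin{align}
\Var\bracket*{(X)_m} = (\lambda m)^m \sum_{j=0}^{m-1} \frac{\binom{m}{j}^2 (m-j)!}{m^m}\, \lambda^j,
\end{align}
then rewrite $\binom{m}{j}^2 (m-j)! = (m!)^2/((j!)^2 (m-j)!)$ and apply $m!/(m-j)! \le m^j$ together with Stirling's bound on $m!$ to dominate each summand by a constant multiple of $(\lambda m)^j / (j!)^2$. The resulting truncated sum is bounded by the full modified Bessel series
\begin{align}
I_0\paren*{2\sqrt{\lambda m}} = \sum_{j=0}^{\infty} \frac{(\lambda m)^j}{(j!)^2}.
\end{align}
Applying the classical bound $I_0(z) \le e^z/\sqrt{2\pi z}$ for $z$ bounded below, and absorbing the Stirling slack into the base of the exponential, should deliver the factor $(2e)^{2\sqrt{\lambda m}}/(\pi\sqrt{\lambda m})$; when $\lambda m$ is small, the Bessel series is bounded by a constant and the trivial estimate yields the $\,\vee\, 1\,$ alternative in the maximum.

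The main obstacle is tracking the constants carefully enough: a crude uniform bound on the coefficients $\binom{m}{j}^2 (m-j)!$ produces growth like $e^{\lambda}$ or $2^m$, which is far too loose. The argument must be tailored to the Laplace-type peak of the summand near $j^\star \approx \sqrt{\lambda m}$---where both $\binom{m}{j}^2 (m-j)!\, \lambda^j$ and $(\lambda m)^j/(j!)^2$ are maximized---with the contribution of the remaining terms absorbed into the same exponential factor rather than bounded uniformly in $j$.
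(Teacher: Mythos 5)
The paper does not prove this lemma---it is imported verbatim from \textcite{Wu2016MinimaxApproximation}---so there is no in-paper argument to compare against; what you have written is a reconstruction, and it is a sound one. Your key steps are all correct: the falling-factorial product identity gives the exact polynomial $\Var[(X)_m]=\sum_{k=1}^m\binom{m}{k}^2k!\,\lambda^{2m-k}$, monotonicity in $\lambda$ is immediate, and the reindexed coefficient bound $\binom{m}{j}^2(m-j)!/m^m=\tfrac{m!}{m^m}\cdot\tfrac{m!/(m-j)!}{(j!)^2}\le m^j/(j!)^2$ puts the sum under the modified-Bessel series $I_0(2\sqrt{\lambda m})$. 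Two comments on the latter half. First, your stated ``main obstacle'' is not actually an obstacle: the target $(2e)^{2\sqrt{\lambda m}}/(\pi\sqrt{\lambda m})$ carries an extra $4^{\sqrt{\lambda m}}$ compared with the sharp asymptotic $I_0(z)\sim e^z/\sqrt{2\pi z}$, and even the crude bound $I_0(z)\le e^z$ (valid for all $z\ge 0$) together with the elementary inequality $\pi z\le 2\cdot 2^{z}$ already closes the argument; no Laplace-peak bookkeeping is required, and no Stirling bound on $m!$ beyond the trivial $m!/m^m\le 1$. Second, the small-$z$ fallback you describe (``the Bessel series is bounded by a constant and the trivial estimate yields the $\lor 1$ alternative'') does not quite parse as written, since $I_0(z)\ge 1$ for all $z$ with equality only at $z=0$; but this is moot, because $(2e)^{2u}/(\pi u)$ has global minimum $2e\ln(2e)/\pi\approx 2.93>1$ over $u>0$, so the $\lor\,1$ branch of the maximum is never the binding one and the bound follows from the $I_0(z)\le e^z$ chain alone.
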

\begin{proof}[Proof of \cref{lem:poly-bias}]
  Let $\phi'_{{\mathrm{sup}},\Delta} = \phi_{{\mathrm{sup}},\Delta} \lor \sup_{p \in [0,\Delta]}\phi_L(p)$ and $\phi'_{{\mathrm{inf}},\Delta} = \phi_{{\mathrm{inf}},\Delta} \land \inf_{p \in [0,\Delta]}\phi_L(p)$. By the triangle inequality and the fact that $g_L$ is an unbiased estimator of $\phi_L$, we have
 \begin{multline}
 \Bias\bracket*{(g_L(\tilde{N}) \land \phi_{{\mathrm{sup}},\Delta})\lor \phi_{{\mathrm{inf}},\Delta} - \phi(p) } \\
  \le \Bias\bracket[\Bigg]{(g_L(\tilde{N}) \land \phi_{{\mathrm{sup}},\Delta})\lor \phi_{{\mathrm{inf}},\Delta} \\ - (g_L(\tilde{N}) \land \phi'_{{\mathrm{sup}},\Delta})\lor \phi'_{{\mathrm{inf}},\Delta} } \\ + \Bias\bracket*{(g_L(\tilde{N}) \land \phi'_{{\mathrm{sup}},\Delta})\lor \phi'_{{\mathrm{inf}},\Delta} - \phi_L(p) }  \\ + \Bias\bracket*{g_L(\tilde{N}) - \phi(p) }.
 \end{multline}
 The first term is bounded above as
 \begin{align}
   & \begin{multlined}[t][.9\columnwidth]
    \Bias\bracket[\Bigg]{(g_L(\tilde{N}) \land \phi_{{\mathrm{sup}},\Delta})\lor \phi_{{\mathrm{inf}},\Delta} \\ - (g_L(\tilde{N}) \land \phi'_{{\mathrm{sup}},\Delta})\lor \phi'_{{\mathrm{inf}},\Delta} }
   \end{multlined}\\
   \le& (\phi'_{{\mathrm{sup}},\Delta} - \phi_{{\mathrm{sup}},\Delta})\lor(\phi_{{\mathrm{inf}},\Delta} - \phi'_{{\mathrm{inf}},\Delta}) \\
   \le& \sup_{p \in [0,1]}\abs*{\phi_L(p) - \phi(p)} = E_L(\phi, [0,\Delta]).
 \end{align}
 Also, the third term is bounded above as
 \begin{align}
  \Bias\bracket*{g_L(\tilde{N}) - \phi(p) } =& \abs*{\phi_L(p) - \phi(p)} \le E_L\paren*{\phi, [0,\Delta]}.
 \end{align}
 The second term has upper bound as
 \begin{align}
 & \Bias\bracket*{(g_L(\tilde{N}) \land \phi'_{{\mathrm{sup}},\Delta})\lor \phi'_{{\mathrm{inf}},\Delta} - \phi_L(p) } \\
  =& \sqrt{\paren*{\Mean\bracket*{(g_L(\tilde{N}) \land \phi'_{{\mathrm{sup}},\Delta})\lor \phi'_{{\mathrm{inf}},\Delta} - \phi_L(p)}}^2} \\
  \le& \sqrt{\Mean\bracket*{\paren*{(g_L(\tilde{N}) \land \phi'_{{\mathrm{sup}},\Delta})\lor \phi'_{{\mathrm{inf}},\Delta} - \phi_L(p) }^2}}.
 \end{align}
 Since $\phi_L(p) \in [\phi'_{{\mathrm{inf}},\Delta},\phi'_{{\mathrm{sup}},\Delta}]$ for $p \in [0,\Delta]$, we have $\paren*{(g_L(\tilde{N}) \land \phi'_{{\mathrm{sup}},\Delta})\lor \phi'_{{\mathrm{inf}},\Delta} - \phi_L(p) }^2 \le \paren*{g_L(\tilde{N}) - \phi_L(p) }^2$. Thus, we have
 \begin{multline}
   \Bias\bracket*{(g_L(\tilde{N}) \land \phi'_{{\mathrm{sup}},\Delta})\lor \phi'_{{\mathrm{inf}},\Delta} - \phi_L(p) } \\ \le \sqrt{\Var\bracket*{g_L(\tilde{N})}}.
 \end{multline}
\end{proof}
\begin{proof}[Proof of \cref{lem:poly-var}]
 Letting $\phi_\Delta(p) = \phi(\Delta x)$ and $a_0,...,a_L$ be coefficients of the optimal uniform approximation of $\phi_\Delta$ by degree-$L$ polynomials on $[0,1]$, we have $\sum_{m=0}^L \frac{\Delta^m a_m}{n^m}(\tilde{N})_m = g_L(\tilde{N})$. Then, since the standard deviation of sum of random variables is at most the sum of individual standard deviation, we have
 \begin{align}
   \Var\bracket*{g_L(\tilde{N}) - \phi(p) }
    \le \paren*{\sum_{m=1}^L \frac{\Delta^m \abs*{a_m}}{n^m}\sqrt{\Var \paren{\tilde{N}}_m } }^2.
 \end{align}
 From \autocite{Petrushev1988RationalFunctions} and the assumption that $\phi$ is bounded, there is a positive constant $C$ such that $\abs*{a_m} \le C2^{3L}$. From \cref{lem:poi-var} and the assumption $p \le \Delta$, $\Var\paren{\tilde{N}}_m \le \Var\paren*{X}_m$ where $X \sim \Poi(\Delta n)$. Thus, we have
 \begin{align}
  &  \Var\bracket*{g_L(\tilde{N})} \\
    \lesssim& \paren*{\sum_{m=1}^L \frac{\Delta^m 2^{3L}}{n^{m}}\sqrt{ (\Delta n L )^m(2e)^{2\sqrt{\Delta n L}} } }^2 \\
    \le& \paren*{\sum_{m=1}^L \sqrt{\frac{\Delta^{3m} L^{m}}{n^{m}}} 2^{3L}(2e)^{\sqrt{\Delta n L}} }^2.
 \end{align}
 From the assumption $\frac{\Delta^3 L}{n} \le \frac{1}{2}$, we have
 \begin{align}
   & \paren*{\sum_{m=1}^L c^m\sqrt{\frac{\Delta^{3m} L^{m}}{n^{m}}} 2^{3L}(2e)^{\sqrt{\Delta n L}} }^2 \\
   \le& \paren*{2^{3L}(2e)^{\sqrt{\Delta n L}}\sum_{m=1}^L \paren*{\sqrt{\frac{\Delta^{3} L}{n}}}^m }^2 \\
   \le& \paren*{2^{3L}(2e)^{\sqrt{\Delta n L}}\paren*{\sqrt{\frac{\Delta^{3} L}{n}} + \int_1^L \paren*{\sqrt{\frac{\Delta^{3} L}{n}}}^x dx} }^2 \\
   \le& \begin{multlined}[t][.9\columnwidth]
     \paren[\Bigg]{2^{3L}(2e)^{\sqrt{\Delta n L}}\paren[\Bigg]{\sqrt{\frac{\Delta^{3} L}{n}} + \\ \frac{2}{\ln\paren*{\frac{\Delta^{3} L}{n}}}\paren*{\paren*{\sqrt{\frac{\Delta^{3} L}{n}}}^L - \sqrt{\frac{\Delta^{3} L}{n}}}} }^2
   \end{multlined}\\
   \le& \begin{multlined}[t][.9\columnwidth]
    \paren[\Bigg]{\sqrt{\frac{\Delta^{3} L}{n}}2^{3L}(2e)^{\sqrt{\Delta n L}}\paren[\Bigg]{1 + \\ \frac{2}{\ln 2}\paren*{1 - \paren*{\sqrt{\frac{\Delta^{3} L}{n}}}^{L-1}}} }^2 
    \end{multlined}\\
   \le& \frac{16\Delta^3L64^L(2e)^{2\sqrt{\Delta n L}}}{n}.
 \end{align}
\end{proof}

\begin{proof}[Proof of \cref{lem:plugin-o1-2}]
  We have
  \begin{multline}
    \Bias\bracket*{\bar\phi_{2,\Delta}\paren*{\frac{\tilde{N}}{n}} - \phi(p) } \le \\ \sup_{p > 0}\abs*{T_\Delta[\phi](p)} + \sup_{p > 0}\frac{p}{2n}\abs*{T^{(2)}_\Delta[\phi](p)} + \sup_{p \in [0,1]}\abs*{\phi(p)},
  \end{multline}
  and
  \begin{multline}
    \Var\bracket*{ \bar\phi_{2,\Delta}\paren*{\frac{\tilde{N}}{n}} } \le  \\ \sup_{p > 0}\paren*{T_\Delta[\phi](p)}^2 + \sup_{p > 0}\paren*{\frac{p}{2n}T^{(2)}_\Delta[\phi](p)}^2.
  \end{multline}
  From \cref{lem:div-speed-holder1,lem:div-speed-holder2,lem:div-speed-holder3}, since $\phi$ is $\beta$-H\"older continuous for some $\beta \in (0,1]$, $\phi$ is a continuous and bounded function. Hence, for any $p \in (0,1]$,
  \begin{align}
     \abs*{\phi(p)} \lesssim 1.
  \end{align}
  Moreover, for any $p > 0$,
  \begin{align}
      \abs*{T_\Delta[\phi](p)} \lesssim 1.
  \end{align}
  From the divergence speed assumption, we have
  \begin{align}
      \sup_{p > 0}\abs*{pT^{(2)}_\Delta[\phi](p)} \lesssim \Delta^{\alpha-1}\lor 1.
  \end{align}
  Under the assumptions $\Delta \in (0,1)$ and $\Delta \gtrsim n^{-1}$, we have $\Delta^{\alpha-1} \lesssim n^{1-\alpha}\lor 1$ for $\alpha \in (0,2)$. Hence,
  \begin{align}
      \sup_{p > 0}\abs*{\frac{p}{2n}T^{(2)}_\Delta[\phi](p)} \lesssim n^{-\alpha} \lor 1 \lesssim 1.
  \end{align}
\end{proof}
\begin{proof}[Proof of \cref{lem:plugin-o1-4}]
  In the same manner of the proof of \cref{lem:plugin-o1-2}, it suffices to prove the claim by showing $\sup_{p > 0}\abs{\frac{2p}{3n^2}T^{(3)}_\Delta[\phi](p)} \lesssim 1$, $\sup_{p > 0}\abs{\frac{7p}{24n^3}T^{(4)}_\Delta[\phi](p)} \lesssim 1$, and $\sup_{p > 0}\abs{\frac{3p^2}{8n^2}T^{(4)}_\Delta[\phi](p)} \lesssim 1$. By the divergence speed assumption and \cref{lem:lower-div}, we have
  \begin{align}
      \sup_{p > 0}\abs*{pT^{(3)}_\Delta[\phi](p)} \lesssim& \Delta^{\alpha-2} \lor 1, \\
      \sup_{p > 0}\abs*{pT^{(4)}_\Delta[\phi](p)} \lesssim& \Delta^{\alpha-3} \lor 1, \\
      \sup_{p > 0}\abs*{p^2T^{(4)}_\Delta[\phi](p)} \lesssim& \Delta^{\alpha-2} \lor 1.
  \end{align}
  Under the assumptions $\Delta \in (0,1)$ and $\Delta \gtrsim n^{-1}$, we have
  \begin{align}
      \sup_{p > 0}\abs*{pT^{(3)}_\Delta[\phi](p)} \lesssim& n^{2-\alpha} \lor 1, \\
      \sup_{p > 0}\abs*{pT^{(4)}_\Delta[\phi](p)} \lesssim& n^{3-\alpha} \lor 1, \\
      \sup_{p > 0}\abs*{p^2T^{(4)}_\Delta[\phi](p)} \lesssim& n^{2-\alpha} \lor 1.
  \end{align}
  Hence,
  \begin{align}
      \sup_{p > 0}\abs*{\frac{2p}{3n^2}T^{(3)}_\Delta[\phi](p)} \lesssim& n^{-\alpha} \lor 1 \lesssim 1, \\
      \sup_{p > 0}\abs*{\frac{7p}{24n^3}T^{(4)}_\Delta[\phi](p)} \lesssim& n^{-\alpha} \lor 1 \lesssim 1, \\
      \sup_{p > 0}\abs*{\frac{3p^2}{8n^2}T^{(4)}_\Delta[\phi](p)} \lesssim& n^{-\alpha} \lor 1 \lesssim 1.
  \end{align}
\end{proof}

\begin{proof}[Proof of \cref{lem:converge-hermite}]
  Assume $\Delta < p < 1$. It is obvious that for any $\Delta < p < 1$, $H^{(\ell)}_{L,\Delta,\delta}[\phi](p) = \phi^{(\ell)}(p) = T_\Delta[\phi](p)$ for any $\ell = 0,...,L$ and any $\delta > 0$.

  By a property of the generalized Hermite interpolation, $H^{(\ell)}_L(a;\phi,a,b) = \phi^{(\ell)}(a)$ for any $\delta > 0$ and any $\ell = 0,...,L$. Hence, if $p = \Delta$ or $p = 1$, $H^{(\ell)}_{L,\Delta,\delta}[\phi](p) = \phi^{(\ell)}(p) = T_\Delta[\phi](p)$ for any $\delta > 0$ and any $\ell = 0,...,L$.

  Assume $p < \Delta$. For any $\delta < 1 - \Delta^{-1}p$,
  \begin{align}
      H_{L,\Delta,\delta}[\phi](p) = \phi(\Delta) = T_\Delta[\phi](p).
  \end{align}
  Besides, for any $\delta < 1 - \Delta^{-1}p$ and any $\ell \le L$,
  \begin{align}
      H^{(\ell)}_{L,\Delta,\delta}[\phi](p) = 0 = T_\Delta[\phi](p).
  \end{align}
  In the same manner, we can prove the claim for $p > 1$.
\end{proof}

\begin{proof}[Proof of \cref{lem:hermite-bound}]
  By the divergence speed assumption and the fact that $p^{\alpha-\ell} \ge 1$ for any $p \in (0,1)$, we have $\abs{\phi^{(\ell)}(p)} \le (W_\ell+c_\ell)p^{\alpha-\ell}$. Thus, it is clear that $p^\beta\abs*{H^{(\ell)}_{L,\Delta,\delta}[\phi](p)} \le (W_\ell+c_\ell)p^{\alpha+\beta-\ell}$ for $p \in [\Delta,1]$.

  Fix $\delta > 0$. For $p \le \Delta(1-\delta)$ and $p \ge 1 + \delta$, $\abs*{H^{(\ell)}_{L,\Delta}[\phi](p)} = 0$ by definition. For $p \in (1,1+\delta)$, we have
  \begin{align}
    & p^\beta\abs*{H^{(\ell)}_L\paren*{p;\phi,1,1+\delta}} \\
    =& \begin{multlined}[t][.9\columnwidth]
      p^\beta\abs[\Bigg]{\sum_{u=0}^\ell\binom{\ell}{u}\sum_{m=1\lor u}^{L}\frac{\phi^{(m)}(1)}{m!}(p-1)^{m-u}\sum_{s=0}^{L-m}\frac{L+1}{L+s+1}\\\times\paren*{\prod_{w=0}^{\ell-u-1}(L+s+1-w)}\sum_{w=\ell-u-s}^{\ell-u}(-1)^w\\\times\binom{\ell-u}{w}\Beta_{s-\ell+u+w,L+s+1-\ell+u}\paren*{\frac{p-1}{\delta}}}
    \end{multlined} \\
    \le& \begin{multlined}[t][.9\columnwidth]
      (L+1)p^\beta\sum_{u=0}^\ell\binom{\ell}{u}\sum_{m=1\lor u}^{L}\frac{\abs*{\phi^{(m)}(1)}}{m!}\delta^{m-u}\\\times\sum_{s=0}^{L-m}\paren*{\prod_{w=1}^{\ell-u-1}(L+s+1-w)}\\\times\sum_{w=\ell-u-s}^{\ell-u}\binom{\ell-u}{w}\abs*{\Beta_{s-\ell+u+w,L+s+1-\ell+u}\paren*{\frac{p-1}{\delta}}}
    \end{multlined}\\
    \le& \begin{multlined}[t][.9\columnwidth]
      (L+1)p^\beta\sum_{u=0}^\ell\binom{\ell}{u}\sum_{m=1\lor u}^{L}\frac{W_m+c_m}{m!}\delta^{m-u}\\\times\sum_{s=0}^{L-m}\paren*{\prod_{w=1}^{\ell-u-1}(L+s+1-w)}\\\times\sum_{w=\ell-u-s}^{\ell-u}\binom{\ell-u}{w}\binom{L+s+1-\ell+u}{s-\ell+u+w}\\\times\paren*{\frac{s-\ell+u+w}{L+s+1-\ell+u}}^{s-\ell+u+w}\\\times\paren*{\frac{L+1-w}{L+s+1-\ell+u}}^{L+1-w},
    \end{multlined}
  \end{align}
  where we use the fact that $\sup_x\abs*{\Beta_{\nu,n}(x)} = \nu^\nu n^{-n}(n-\nu)^{n-\nu}{\binom n \nu}$ to obtain the last line. Letting
  \begin{multline}
    C_{L,m,u} = \sum_{s=0}^{L-m}\paren*{\prod_{w=1}^{\ell-u-1}(L+s+1-w)}\\\times\sum_{w=\ell-u-s}^{\ell-u}\binom{\ell-u}{w}\binom{L+s+1-\ell+u}{s-\ell+u+w}\\\times\paren*{\frac{s-\ell+u+w}{L+s+1-\ell+u}}^{s-\ell+u+w}\\\times\paren*{\frac{L+1-w}{L+s+1-\ell+u}}^{L+1-w},
  \end{multline}
  we have
  \begin{align}
    & p^\beta\abs*{H^{(\ell)}_L\paren*{p;\phi,1,1+\delta}} \\
    \le& (L+1)(1+\delta)^\beta\sum_{u=0}^\ell\binom{\ell}{u}\sum_{m=1\lor u}^{L}C_{L,m,u}\frac{W_m+c_m}{m!}\delta^{m-u} \\
    \le& (L+1)2^\beta\sum_{u=0}^\ell\binom{\ell}{u}\sum_{m=1\lor u}^{L}C_{L,m,u}\frac{W_m+c_m}{m!} \\
    \le& (L+1)2^{\ell+\beta}e\max_{u=0,...,\ell,m=1\lor m,...,L}C_{L,m,u}(W_m+c_m).
  \end{align}
  Noting that $\max_{u=0,...,\ell,m=1\lor m,...,L}C_{L,m,u}(W_m+c_m)$ is only depending on the universal constant $L$ and the function $\phi$, we have $p^\beta\abs*{H^{(\ell)}_L\paren*{p;\phi,1,1+\delta}} \lesssim 1$.

  Similarly, for $p \in (\Delta(1-\delta),\Delta)$,
  \begin{align}
    & p^\beta\abs*{H^{(\ell)}_L\paren*{p;\phi,\Delta,\Delta(1-\delta)}} \\
    \le& (L+1)p^\beta\sum_{u=0}^\ell\binom{\ell}{u}\sum_{m=1\lor u}^{L}C_{L,m,u}\frac{\abs*{\phi^{(m)}(\Delta)}}{m!}\paren*{\Delta-p}^{m-u},
  \end{align}
  where $C_{L,m,u}$ are the same constants introduced in the case $p \in (1,1+\delta)$. From the bound coming from the divergence speed assumption, we have
  \begin{align}
    & p^\beta\abs*{H^{(\ell)}_L\paren*{p;\phi,\Delta,\Delta(1-\delta)}} \\
    \le& \begin{multlined}[t][.9\columnwidth]
      (L+1)\Delta^\beta\sum_{u=0}^\ell\binom{\ell}{u}\\\times\sum_{m=1\lor u}^{L}C_{L,m,u}\frac{(W_m+c_m)\Delta^{\alpha-m}}{m!}\Delta^{m-u}\delta^{m-u} 
    \end{multlined}\\
    =& \begin{multlined}[t][.9\columnwidth]
     (L+1)\Delta^{\alpha+\beta-\ell}\sum_{u=0}^\ell\binom{\ell}{u}\\\times\sum_{m=1\lor u}^{L}C_{L,m,u}\frac{W_m+c_m}{m!}\Delta^{\ell-u}\delta^{m-u} 
     \end{multlined}\\
    \le& (L+1)2^\ell e \Delta^{\alpha+\beta-\ell}\max_{u=0,...,\ell,m=1\lor m,...,L}C_{L,m,u}(W_m+c_m) \\
    \lesssim& \Delta^{\alpha+\beta-\ell}.
  \end{align}
  Hence, we have $p^\beta\abs*{H^{(\ell)}_{L,\Delta}[\phi](p)} \lesssim 1\lor\Delta^{\alpha+\beta-\ell}$.
\end{proof}

\begin{proof}[Proof of \cref{lem:reminder-bound1}]
 By the mean value theorem, for any continuously differentiable function $G(x)$ such that $\abs*{G(x)} > 0$ on the open interval between $z$ and $a$, there exists $\xi$ between $z$ and $a$ such that
 \begin{align}
     R_{2\ell-1}(z;g,a) =& \frac{g^{(2\ell)}(\xi)}{(2\ell-1)!}(z - \xi)^{2\ell-1}\frac{G(z) - G(a)}{G^{(1)}(\xi)}
 \end{align}
 Let $\hat{p} = \tilde{N}/n$ and $G(x) = x^{-r}(\hat{p} - x)^{2\ell}$. Then, we have
 \begin{align}
     G^{(1)}(x) =  -\paren*{(2\ell-r)x + r\hat{p}} \frac{(\hat{p} - x)^{2\ell-1}}{x^{r+1}}.
 \end{align}
 Hence, there exists $\xi$ between $p$ and $\hat{p}$ such that
 \begin{align}
     R_{2\ell-1}(\hat{p};g,p) =& \frac{\xi^{r+1}g^{(2\ell)}(\xi)}{(2\ell-1)!}\frac{(\hat{p} - p)^{2\ell}}{p^{r}((2\ell-r)\xi + r\hat{p})}.
 \end{align}
 Noting that $p, \hat{p}, \xi \ge 0$, we have
 \begin{align}
     & \abs*{\Mean\bracket*{R_{2\ell-1}(\hat{p};g,p)}} \\
     \le& \frac{p^{-r}}{(2\ell-r)(2\ell-1)!}\abs*{\Mean\bracket*{\xi^{r}g^{(2\ell)}(\xi)(\hat{p} - p)^{2\ell}}} \\
     \le& \frac{p^{-r}}{(2\ell-r)(2\ell-1)!}\Mean\bracket*{(\hat{p} - p)^{2\ell}}\sup_{\xi > 0}\abs*{\xi^{r}g^{(2\ell)}(\xi)}. \label{eq:reminder-moment}
 \end{align}

 The central moments of the Poisson distribution has the recursive formula. Let $X \sim \Poi(\lambda)$ and $\mu_m = \Mean[(X-\lambda)^m]$. Then, we have
 \begin{align}
     \mu_{m+1} = \lambda\paren*{\frac{d\mu_m}{d\lambda} + m\mu_{m-1}},
 \end{align}
 where $\mu_1 = 0$ and $\mu_2 = \lambda$. From this formula, the central moments of the Poisson distribution forms as a polynomial in $\lambda$. Thus, there are an integer $\ell_m$ and a sequence$a_{m,1},...,a_{m,\ell_m}$ such that
 \begin{align}
     \mu_m = \sum_{i=1}^{\ell_m}a_{m,i}\lambda^i.
 \end{align}
 From the recursive formula, we have for any $\lambda$,
 \begin{align}
     \sum_{i=1}^{\ell_{m+1}}a_{m+1,i}\lambda^i = \sum_{i=1}^{\ell_m}ma_{m,i}\lambda^i + \sum_{i=2}^{\lambda_{k-1}+1}ma_{m-1,i-1}\lambda^i.
 \end{align}
 The degree of polynomials in the left-hand and right-hand sides should be matched. Thus, we have $\ell_{m+1} = \ell_m \lor (\ell_{m-1} + 1)$. Noting that $\ell_1 = 0$ and $\ell_2 = 1$, we have $\ell_m = \floor{m/2}$. Consequently, we have $p^{-r}\Mean\bracket*{(\hat{p} - p)^{2\ell}} \lesssim \sum_{m=0}^{\ell-1}\frac{p^{\ell-m-r}}{n^{\ell+m}} \lesssim \frac{p^{\ell-r}}{n^\ell}$ because of the assumption $p \gtrsim n^{-1}$. We obtain the claim by substituting this bound into \cref{eq:reminder-moment}.
\end{proof}

\begin{proof}[Proof of \cref{lem:reminder-bound2}]
 Let $\hat{p} = \tilde{N}/n$ and $G(x) = x^{-\frac{r}{2}}(\hat{p} - x)^{\ell}$. Then, we have
 \begin{align}
     G^{(1)}(x) = -\paren*{\paren*{\ell-\frac{r}{2}}x + \frac{r}{2}\hat{p}} x^{-(1+\frac{r}{2})}(\hat{p} - x)^{\ell-1}
 \end{align}
 By the mean value theorem, there exists $\xi$ between $p$ and $\hat{p}$ such that
 \begin{align}
     \paren*{R_{\ell-1}(\hat{p};g,p)}^2 =& \frac{\xi^{2+r}\paren*{g^{(\ell)}(\xi)}^2}{(\ell-1)!}\frac{4(\hat{p} - p)^{2\ell}}{p^{r}((2\ell-r)\xi + r\hat{p})^2}.
 \end{align}
 Noting that $p, \hat{p}, \xi \ge 0$, we have
 \begin{align}
     & \Mean\bracket*{\paren*{R_{\ell-1}(\hat{p};g,p)}^2} \\
     \le& \frac{4p^{-r}}{(2\ell-r)(\ell-1)!}\Mean\bracket*{\xi^{r}\paren*{g^{(\ell)}(\xi)}^2(\hat{p} - p)^{2\ell}} \\
     \le& \frac{4p^{-r}}{(2\ell-r)(\ell-1)!}\Mean\bracket*{(\hat{p} - p)^{2\ell}}\sup_{\xi > 0}\abs*{\xi^{r}\paren*{g^{(\ell)}(\xi)}^2}.
 \end{align}
 We obtain the claim by substituting the bound on the central moment of the Poisson distribution $p^{-r}\Mean\bracket*{(\hat{p} - p)^{2\ell}} \lesssim \frac{p^{\ell-r}}{n^{\ell}}$.
\end{proof}

\section{Proofs for Lower Bound Analysis}

\begin{proof}[Proof of \cref{thm:lower-approximated}]
  This proof is following the same manner of the proof of \autocite[Lemma 1]{Wu2016MinimaxApproximation}. Fix $\delta > 0$. Let $\hat\theta(\cdot,n)$ be a near-minimax optimal estimator for fixed sample size $n$, i.e.,
  \begin{align}
    \sup_{P \in \dom{M}_k}\Mean\bracket*{\paren*{\hat\theta(N,n) - \theta(P)}} \le \delta + R^*(k,n;\phi).
  \end{align}
  For an arbitrary approximate distribution $P \in \dom{M}_k$, we construct an estimator
  \begin{align}
    \tilde\theta(\tilde{N}) = \hat\theta(\tilde{N}, n'),
  \end{align}
  where $\tilde{N}_i \sim \Poi(np_i)$ and $n' = \sum_i\tilde{N}_i$.

  If $\alpha \in (0,2)$ except $\alpha = 1$, from \cref{lem:div-speed-holder1,lem:div-speed-holder3}, $\phi$ is $\beta$-H\"older continuous for some $\beta \in (0,1]$. From the triangle inequality, $\beta$-H\"older continuity of $\phi$, and \cref{lem:bound-sum-alpha}, we have
  \begin{align}
    & \frac{1}{2}\paren*{\tilde\theta(\tilde{N}) - \theta(P)}^2 \\
    \le& \frac{1}{2}\paren*{\abs*{\tilde\theta(\tilde{N}) - \theta\paren*{\frac{P}{\sum_ip_i}}} + \abs*{\theta\paren*{\frac{P}{\sum_ip_i}} - \theta(P)}}^2 \\
    \le& \frac{1}{2}\paren*{\abs*{\tilde\theta(\tilde{N}) - \theta\paren*{\frac{P}{\sum_ip_i}}} + \abs*{\norm*{\phi}_{C^{H,\beta}}\sum_i\abs*{\frac{p_i}{\sum_ip_i} - p_i}^{\beta}}}^2 \\
    \le& \frac{1}{2}\paren*{\abs*{\tilde\theta(\tilde{N}) - \theta\paren*{\frac{P}{\sum_ip_i}}} + \norm*{\phi}_{C^{H,\beta}}\epsilon^{\beta}\sum_i\abs*{\frac{p_i}{\sum_ip_i}}^{\beta}}^2 \\
    \le& \frac{1}{2}\paren*{\abs*{\tilde\theta(\tilde{N}) - \theta\paren*{\frac{P}{\sum_ip_i}}} + \norm*{\phi}_{C^{H,\beta}}k^{1-\beta}\epsilon^{\beta}}^2 \\
    \le& \paren*{\tilde\theta(\tilde{N}) - \theta\paren*{\frac{P}{\sum_ip_i}}}^2 + \norm*{\phi}^2_{C^{H,\beta}}k^{2-2\beta}\epsilon^{2\beta}. \label{eq:approx-to-exact1}
  \end{align}

  If $\alpha = 1$, we have from \cref{lem:lowest-int-div} that for any $x,y \in (0,1)$,
  \begin{align}
      & \abs*{\phi(x) - \phi(y)} \\
      \le& \abs*{\int_x^y\abs*{\phi^{(1)}(s)}ds} \\
      \le& \abs*{\int_x^y \paren*{W_1\ln(1/s) + c_1} ds} \\
       =& \abs*{W_1\paren*{x\ln x - y\ln y} + \paren*{W_1+c_1}(y-x)} \\
       \le&  W_1\abs*{x\ln x - y\ln y} + (W_1+c_1)\abs*{x - y}.
  \end{align}
  As well as the case $\alpha \ne 1$, we have
  \begin{align}
    & \frac{1}{2}\paren*{\tilde\theta(\tilde{N}) - \theta(P)}^2 \\
    \le& \frac{1}{2}\paren*{\abs*{\tilde\theta(\tilde{N}) - \theta\paren*{\frac{P}{\sum_ip_i}}} + \abs*{\theta\paren*{\frac{P}{\sum_ip_i}} - \theta(P)}}^2 \\
    \le& \paren*{\abs*{\tilde\theta(\tilde{N}) - \theta\paren*{\frac{P}{\sum_ip_i}}}}^2 + \paren*{\sum_{i=1}^k\abs*{\phi\paren*{\frac{p_i}{\sum_ip_i}} - \phi(p_i)}}^2.
  \end{align}
  For the second term, we have
  \begin{align}
    & \paren*{\sum_{i=1}^k\abs*{\phi\paren*{\frac{p_i}{\sum_ip_i}} - \phi(p_i)}}^2 \\
    \le& \begin{multlined}[t][.9\columnwidth]
     \paren[\Bigg]{\sum_{i=1}^k\paren[\Bigg]{W_1\abs*{\frac{p_i}{\sum_ip_i}\ln\paren*{\frac{p_i}{\sum_ip_i}} - p_i\ln p_i} \\ + (W_1+c_1)\abs*{\frac{p_i}{\sum_ip_i} - p_i}}}^2
    \end{multlined}\\
    \le& \begin{multlined}[t][.9\columnwidth]
     \paren[\Bigg]{\sum_{i=1}^k\paren[\Bigg]{W_1\abs[\Bigg]{(1-\sum_ip_i)\frac{p_i}{\sum_ip_i}\ln\paren*{\frac{p_i}{\sum_ip_i}} \\ + \frac{p_i}{\sum_ip_i}\paren*{\sum_ip_i}\ln\paren*{\frac{1}{\sum_ip_i}}} \\ + (W_1+c_1)\frac{p_i}{\sum_ip_i}\abs*{1 - \sum_ip_i}}}^2 
    \end{multlined}\\
    \le& \begin{multlined}[t][.9\columnwidth]
     \paren[\Bigg]{W_1\abs*{1-\sum_ip_i}\abs[\Bigg]{\sum_{i=1}^k\frac{p_i}{\sum_ip_i}\ln\paren*{\frac{p_i}{\sum_ip_i}}} \\ + W_1\paren*{\sum_ip_i}\ln\paren*{\frac{1}{\sum_ip_i}} \\ + (W_1+c_1)\abs*{1 - \sum_ip_i}}^2 
    \end{multlined}\\
    \le& \begin{multlined}[t][.9\columnwidth]
     \paren[\Bigg]{W_1\abs*{1-\sum_ip_i}\abs[\Bigg]{\sum_{i=1}^k\frac{p_i}{\sum_ip_i}\ln\paren*{\frac{p_i}{\sum_ip_i}}} \\ + W_1\paren*{\sum_ip_i}\ln\paren*{\frac{1}{\sum_ip_i}} \\ + (W_1+c_1)\abs*{1 - \sum_ip_i}}^2 
    \end{multlined}\\
    \le& \paren*{W_1\epsilon\ln k + W_1(1+\epsilon)\ln(1+\epsilon) + (W_1+c_1)\epsilon}^2 \\
    \le& 2(W_1+c_1)^2\epsilon^2\ln^2\paren*{ek} + 2W_1^2(1+\epsilon)^2\ln^2(1+\epsilon).
  \end{align}
  Hence,
  \begin{align}
    & \frac{1}{2}\paren*{\tilde\theta(\tilde{N}) - \theta(P)}^2 \\
    \le& \begin{multlined}[t][.8\columnwidth]
     \paren*{\abs*{\tilde\theta(\tilde{N}) - \theta\paren*{\frac{P}{\sum_ip_i}}}}^2 \\ + 2(W_1+c_1)^2\epsilon^2\ln^2\paren*{ek} + 2W_1^2(1+\epsilon)^2\ln^2(1+\epsilon). \label{eq:approx-to-exact2}
    \end{multlined}
  \end{align}

  For the first term in \cref{eq:approx-to-exact1,eq:approx-to-exact2}, we observe that $\tilde{N} \sim \Mul(m, \tfrac{P}{\sum p_i})$ conditioned on $n' = m$. Therefore, we have
  \begin{align}
    & \Mean\paren*{\tilde\theta(\tilde{N}) - \theta\paren*{\frac{P}{\sum_{i=1}^k p_i}}}^2 \\
     =& \sum_{m = 0}^\infty \Mean\bracket*{\paren*{\tilde\theta(\tilde{N},m)\!-\!\theta\paren*{\frac{P}{\sum_{i=1}^k p_i}}}^2 \middle| n'\!=\!m}\p\cbrace{n'\!=\!m} \\
     \le& \sum_{m = 0}^\infty \tilde{R}^*(m,k;\phi)\p\cbrace{n' = m} + \delta.
  \end{align}
  Note that $\tilde{R}^*(m,k;\phi)$ is a non-increasing function with respect to $m$. Since $n' \sim \Poi(n\sum_i p_i)$ and $\abs*{\sum_i p_i - 1} \le \epsilon \le 1/3$, applying Chernoff bound yields $\p\cbrace*{n' \le n/2} \le e^{-n/32}$. Thus, we have
  \begin{align}
    & \Mean\paren*{\tilde\theta(\tilde{N}) - \theta\paren*{\frac{P}{\sum_{i=1}^k p_i}}}^2 \\
     \le& \begin{multlined}[t]
       \sum_{m \ge n/2} \tilde{R}^*(m,k;\phi)\p\cbrace{n' = m} \\ + \p\cbrace*{n' \le n/2}\max_{m = 0,...,\floor{n/2}}\tilde{R}^*(m,k;\phi) + \delta
     \end{multlined}\\
     \le& \tilde{R}^*(n/2,k;\phi)  + e^{-n/32}\tilde{R}^*(0,k;\phi) + \delta.
  \end{align}

  If $\alpha \in (0,2)$ except $\alpha = 1$, from $\beta$-H\"older continuity of $\phi$ and \cref{lem:bound-sum-alpha}, we have
  \begin{align}
    & \tilde{R}^*(0,k;\phi) \\
     \le& \sup_{P,P' \in \dom{M}_k}\paren*{\theta(P)-\theta(P')}^2 \\
     \le& \norm*{\phi}^2_{C^{H,\beta}}\sup_{P,P' \in \dom{M}_k}\paren*{\sum_i\abs*{p_i - p'_i}^{\beta}}^2 \\
     \le& 2\norm*{\phi}^2_{C^{H,\beta}}\sup_{P \in \dom{M}_k}\paren*{\sum_ip_i^{\beta}}^2 \\
     \le& 2\norm*{\phi}^2_{C^{H,\beta}}k^{2-2\beta}.
  \end{align}

  If $\alpha = 1$, we have
  \begin{align}
    & \tilde{R}^*(0,k;\phi) \\
    \le& \sup_{P,P' \in \dom{M}_k}\paren*{\theta(P)-\theta(P')}^2 \\
     \le& \begin{multlined}[t][.9\columnwidth]
      \sup_{P,P' \in \dom{M}_k}\paren[\Bigg]{W_1\sum_i\abs*{p_i\ln p_i - p'_i\ln p'_i} \\ + (W_1+c_1)\sum_i\abs*{p_i-p'_i}}^2
     \end{multlined}\\
     \le& 2\sup_{P \in \dom{M}_k}\abs*{W_1\sum_i\abs*{p_i\ln p_i}+(W_1+c_1)}^2 \\
     \le& 2(W_1+c_1)^2\ln^2(ek).
  \end{align}

  The arbitrariness of $\delta$ gives the desired result.
\end{proof}

\begin{proof}[Proof of \cref{thm:approx-tv-lower}]
  Put $\epsilon = 4\lambda/\sqrt{k}$. By the assumption, we have
  \begin{align}
      \abs*{\Mean\bracket*{\theta(P)} - \Mean\bracket*{\theta(P')}} =& \abs*{\sum_{i=1}^{k}\Mean\bracket*{\phi(U_i/k)} - \sum_{i=1}^{k}\Mean\bracket*{\phi(U'_i/k)}} \\
       =& k\abs*{\Mean\bracket*{\phi(U/k)} - \Mean\bracket*{\phi(U'/k)}} \ge d.
  \end{align}
  Hence, under $\event$ and $\event'$, the triangle inequality gives
  \begin{align}
      \abs*{\theta(P) - \theta(P')} \ge \frac{d}{2}.
  \end{align}

  Let $P_{\tilde{N}}$ and $P'_{\tilde{N}}$ be the marginal distributions of the histogram $\tilde{N}$ with priors $P$ and $P'$, respectively, e.g., $\tilde{N}_i \sim \Poi(np_i)$, $kp_i \sim U_i$ for $i \in [k]$, and $p_{k+1} = 1 - \beta$. Let $P_{\tilde{N}|\pi}$ and $P'_{\tilde{N}|\pi'}$ be the marginal distributions of the histogram with priors $\pi$ and $\pi'$, respectively. Then, we have from the triangle inequality that
  \begin{align}
      & \TV\paren*{P_{\tilde{N}|\pi},P'_{\tilde{N}|\pi'}} \\
      \le& \TV\paren*{P_{\tilde{N}|\pi},P'_{\tilde{N}}} + \TV\paren*{P_{\tilde{N}},P'_{\tilde{N}}} + \TV\paren*{P'_{\tilde{N}},P'_{\tilde{N}|\pi'}} \\
      =& \TV\paren*{P_{\tilde{N}},P'_{\tilde{N}}} + \p\event + \p\event'.
  \end{align}
  Since the total variation of product distribution can be upper bounded by the summation of individual ones, the total variation between $P_{\tilde{N}}$ and $P'_{\tilde{N}}$ is bounded above as
  \begin{align}
    & \TV(P_{\tilde{N}}, P'_{\tilde{N}}) \\
     \le& \sum_{i=1}^{k}\TV(P_{\tilde{N}_i}, P'_{\tilde{N}_i}) + \TV(n(1-\beta), n(1-\beta)) \\
     =& k\TV(\Mean[\Poi(nU/k)],\Mean[\Poi(nU'/k)]).
  \end{align}
  From the Le Cam’s lemma~\autocite{LeCam1986AsymptoticTheory}, we have
  \begin{multline}
      \tilde{R}^*(n,k+1,\epsilon) \ge  \frac{d^2}{16}\paren[\bigg]{1 - \\ k\TV(\Mean[\Poi(nU/k)],\Mean[\Poi(nU'/k)]) - \p\event - \p\event'}. \label{eq:approx-lower-tv}
  \end{multline}

  Next, we derive upper bounds on $\p\event$ and $\p\event'$. Applying Chebyshev's inequality and the union bound gives
  \begin{align}
    \p\event^c \le& \p\cbrace*{\abs*{\sum_i\frac{U_i}{k} - \beta} > \epsilon} + \p\cbrace*{\abs*{\theta(P) - \Mean[\theta(P)]} > d/4} \\
    \le& \frac{\Var[U]}{k\epsilon^2} + \frac{16\sum_{i}\Var[\phi(U_i/k)]}{d^2} \\
    \le& \frac{1}{16} + \frac{16\Mean\bracket*{\paren*{\phi(U_i/k) - \phi(0)}^2}}{d^2}. \label{eq:approx-error-prop}
  \end{align}
  We derive upper bounds on \cref{eq:approx-error-prop} by dividing into three cases; $\alpha \in (0,1)$, $\alpha = 1$, and $\alpha \in (1,2)$.

  {\bfseries Case $\alpha \in (0,1)$.}
  From \cref{lem:div-speed-holder1}, we have
  \begin{align}
    \frac{16\sum_i\Mean\bracket*{\paren*{\phi(U_i/k) - \phi(0)}^2}}{d^2} \le& \frac{16\norm*{\phi}_{C^{H,\alpha}}^2\sum_i\Mean\bracket*{U_i^{2\alpha}}}{k^{2\alpha}d^2} \\\le& \frac{16\norm*{\phi}_{C^{H,\alpha}}^2\lambda^{2\alpha}}{k^{2\alpha-1}d^2}. \label{eq:approx-tv-expect1}
  \end{align}

  {\bfseries Case $\alpha = 1$.}
  From the absolutely continuity, we have
  \begin{align}
    &\frac{16\sum_i\Mean\bracket*{\paren*{\phi(U_i/k) - \phi(0)}^2}}{d^2} \\ =& \frac{16\sum_i\Mean\bracket*{\paren*{\int_0^1U_i\phi^{(1)}(xU_i/k)dx}^2}}{k^2d^2} \\
    \le& \frac{16\sum_i\Mean\bracket*{\paren*{\int_0^1U_i\abs*{\phi^{(1)}(xU_i/k)}dx}^2}}{k^2d^2}.
  \end{align}
  From \cref{lem:lowest-int-div} and the fact that $\ln(1/p) > 0$ for $p \in (0,1)$,  there exists $W > 0$ such that $\abs*{\phi^{(1)}(p)} \le W\ln(1/p)$ for $p \in (0,1)$. Hence,
  \begin{align}
    &\frac{16\sum_i\Mean\bracket*{\paren*{\phi(U_i/k) - \phi(0)}^2}}{d^2} \\ \le& \frac{16W^2\sum_i\Mean\bracket*{\paren*{\int_0^1U_i\ln(xU_i/k)dx}^2}}{k^2d^2} \\
    =& \frac{16W^2\sum_i\Mean\bracket*{\paren*{U_i\ln(U_i/ek)}^2}}{k^2d^2} \\
    \le& \frac{16W^2\lambda^2\ln^2(\lambda/ek)}{kd^2}, \label{eq:approx-tv-expect2}
  \end{align}
  where we use the fact that $\max_{x \in [0,\lambda]}x^2\ln^2(x/ek) = \lambda^2\ln^2(\lambda/ek)$ if $\lambda < ek$ to obtain the last line.

  {\bfseries Case $\alpha \in (1,2)$.}
  Without loss of generality, we can assume $\phi^{(1)}(0) = 0$ because $\theta(P;\phi) = \theta(P;\phi_c)$ for any $c \in \RealSet$ where $\phi_c(p) = \phi(p) + c(p-1/k)$ and $\phi$ is Lipschitz continuous as shown in \cref{lem:div-speed-holder1}. Hence, the Taylor theorem indicates that there exist $\xi_i$ between $0$ and $U_i/k$ such that
  \begin{align}
    \p\event^c \le& \frac{1}{16} + \frac{16\sum_i\Mean\bracket*{\paren*{U_i\paren*{\phi^{(1)}(\xi_i) - \phi^{(1)}(0)}/k}^2}}{d^2}.
  \end{align}
  From \cref{lem:div-speed-holder3}, we obtain
  \begin{align}
    \p\event^c \le& \frac{1}{16} + \frac{16\sum_i\Mean\bracket*{\norm{\phi^{(1)}}_{C^{H,\alpha-1}}U_i\xi^{\alpha-1}/k}^2}{d^2} \\
    \le& \frac{1}{16} + \frac{16\norm{\phi^{(1)}}^2_{C^{H,\alpha-1}}\lambda^{2\alpha}}{k^{2\alpha-1}d^2}\label{eq:approx-tv-expect3}.
  \end{align}

  Note that in all the cases, $\p\event'^c$ has the same upper bound as in \cref{eq:approx-tv-expect1,eq:approx-tv-expect2,eq:approx-tv-expect3} by the same manner. Substituting \cref{eq:approx-tv-expect1,eq:approx-tv-expect2,eq:approx-tv-expect3} into \cref{eq:approx-lower-tv} yields the desired result.
\end{proof}

\section{Proofs for Best Polynomial Approximation Error Analysis}

\begin{proof}[Proof of \cref{thm:upper-poly-approx1}]
  First, we prove the claim for $\alpha \ne 1$. Let $\dom{P}_{\mathrm{even},L}$ be the set of all polynomials up to degree $L$ that only consist of even degree monomials. Then, we have
  \begin{align}
    E_L(\phi,[0,\lambda]) \le& \inf_{g \in \dom{P}_{\mathrm{even},L}}\sup_{x \in [0,\lambda]}\abs{\phi(x) - g(x)} \\
    =& E_{\floor{L/2}}(\phi_\lambda,[0,1]) \le E_{\floor{L/2}}(\phi_\lambda,[-1,1]),
  \end{align}
  where $\phi_\lambda(x) = \phi(\lambda x^2)$. Since $\varphi(x) \le 1$, we have $\omega^r_\varphi(f,t;[-1,1]) \le \omega^r(f,t;[-1,1])$. Thus, from the direct result in \cref{lem:best-modulus-direct}, we can obtain the upper bound on $E_L(\phi,[0,\lambda])$ by analyzing the modulus of smoothness $\omega^r(\phi_\lambda,t;[-1,1])$ for some positive integer $r$. We divide into three cases; $\alpha \in (0,1/2]$, $\alpha \in (1/2,1)$, and $\alpha \in (1,3/2)$.

  {\bfseries Case $\alpha \in (0,1/2]$.}
  In this case, we analyze the first order modulus of smoothness, which is rewritten as
  \begin{align}
      \omega^1(f,t;[-1,1]) = \sup_{x,y \in [-1,1]}\cbrace*{\abs*{f(x)-f(y)} : \abs*{x-y} \le t}.
  \end{align}
  Note that $\phi_\lambda$ is continuously differentiable, and its derivative is obtained as $\phi^{(1)}_\lambda(x) = 2\lambda x\phi^{(1)}(\lambda x^2)$. From the divergence speed assumption and the fact that $p^{\alpha-1} \ge 1$ for any $p \in (0,1)$, we have $\abs*{\phi^{(1)}(p)} \le (W_1+c_1)p^{\alpha-1}$. Thus, we have for any $x,y \in [-1,1]$,
  \begin{align}
     \abs*{\phi_\lambda(x) - \phi_\lambda(y)} =& \abs*{\int_x^y \phi^{(1)}_\lambda(s)ds} \\
     \le& 2(W_1+c_1)\lambda^\alpha\abs*{\int_x^y \abs*{s}^{2\alpha-1}ds}.
  \end{align}

  Assume $xy \ge 0$. Then, we have
  \begin{align}
    \abs*{\phi_\lambda(x) - \phi_\lambda(y)} \le& \frac{2(W_1+c_1)}{2\alpha}\lambda^\alpha\abs*{x^{2\alpha} - y^{2\alpha}} \\
    \le& \frac{2(W_1+c_1)}{2\alpha}\lambda^\alpha\abs*{x - y}^{2\alpha},
  \end{align}
  where we use $x \to x^\beta$ is $\beta$-H\"older continuous for $\beta \in (0,1]$ to obtain the last inequality. Under the condition $\abs*{x-y} \le t$, we have
  \begin{align}
    \abs*{\phi_\lambda(x) - \phi_\lambda(y)} \le& \frac{2(W_1+c_1)}{2\alpha}\lambda^\alpha t^{2\alpha}.
  \end{align}

  Next, assume $xy < 0$. Then, we have
  \begin{align}
    & \abs*{\phi_\lambda(x) - \phi_\lambda(y)} \\
    \le& 2(W_1+c_1)\lambda^\alpha\abs*{\int_0^{\abs*{x}} s^{2\alpha-1}ds + \int_0^{\abs*{y}} s^{2\alpha-1}ds} \\
    =& \frac{2(W_1+c_1)}{2\alpha}\lambda^\alpha\paren*{\abs*{x}^{2\alpha} + \abs*{y}^{2\alpha}}.
  \end{align}
  Since $\abs*{x-y} = \abs*{x} + \abs*{y}$, under the condition $\abs{x-y} \le t$, we have
  \begin{align}
     \abs*{\phi_\lambda(x) - \phi_\lambda(y)} \le \frac{4(W_1+c_1)}{2\alpha}\lambda^\alpha t^{2\alpha}.
  \end{align}

  From the direct result in \cref{lem:best-modulus-direct}, we have
  \begin{align}
      E_L\paren*{\phi,[0,\lambda]} \lesssim& \omega^1_\varphi(\phi_\lambda,L^{-1};[-1,1]) \\
      \le& \omega^1(\phi_\lambda,L^{-1};[-1,1]) \lesssim \paren*{\frac{\lambda}{L^2}}^\alpha.
  \end{align}

  {\bfseries Case $\alpha \in (1/2,1]$.}
  From \cref{lem:mod-derivative}, it is sufficient to derive an upper bound on the first order modulus of smoothness of the first derivative $\omega^1(\phi^{(1)}_\lambda,t;(-1,1))$. Noting that $\phi_\lambda^{(2)}(x) = 2\lambda\phi^{(1)}(\lambda x^2) + 4\lambda^2x^2\phi^{(2)}(\lambda x^2)$, we have for any $x,y \in (-1,1)$,
  \begin{multline}
    \abs*{\phi^{(1)}_\lambda(x) - \phi^{(1)}_\lambda(y)} \le \\ 2\lambda\abs*{\int_x^y\phi^{(1)}(\lambda s^2)ds}  + 4\lambda^2\abs*{\int_x^ys^2\phi^{(2)}(\lambda s^2) ds}.
  \end{multline}
  From the divergence speed assumption and \cref{lem:lower-div}, we have $\abs{\phi^{(1)}(p)} \le (W_1+c_1)p^{\alpha-1}$ and $\abs{\phi^{(2)}(p)} \le (W_2+c_2)p^{\alpha-2}$. Hence,
  \begin{multline}
    \abs*{\phi^{(1)}_\lambda(x) - \phi^{(1)}_\lambda(y)} \le 2(W_1+c_1)\lambda^\alpha\abs*{\int_x^y \abs*{s}^{2\alpha-2}ds} \\ + 4(W_2+c_2)\lambda^\alpha\abs*{\int_x^y\abs*{s}^{2\alpha-2} ds}.
  \end{multline}
  By following the same manner in the case $\alpha \in (0,1/2]$, under the condition $\abs*{x - y} \le t$, we have
  \begin{align}
    \abs*{\phi^{(1)}_\lambda(x) - \phi^{(1)}_\lambda(y)} \le& \frac{4(W_1+c_1)+8(W_2+c_2)}{2\alpha-1}\lambda^\alpha t^{2\alpha-1}.
  \end{align}
  From the direct result in \cref{lem:best-modulus-direct} and \cref{lem:mod-derivative}, we have
  \begin{align}
    E_L\paren*{\phi,[0,\lambda]} \lesssim& \omega^2_\varphi\paren*{\phi_\lambda,L^{-1};(-1,1)} \\
    \lesssim& L^{-1}\omega^1\paren*{\phi^{(1)}_\lambda,L^{-1};[-1,1]} \\
    \lesssim& \paren*{\frac{\lambda}{L^2}}^\alpha.
  \end{align}

  {\bfseries Case $\alpha \in (1,3/2)$.}
  From \cref{lem:mod-derivative}, it is sufficient to derive an upper bound on the first order modulus of smoothness of the second derivative $\omega^1(\phi^{(2)}_\lambda,t;(-1,1))$. Noting that $\phi_\lambda^{(3)}(x) = 12\lambda^2x\phi^{(2)}(\lambda x^2) + 8\lambda^3x^3\phi^{(3)}(\lambda x^2)$, we have for any $x,y \in (-1,1)$,
  \begin{multline}
    \abs*{\phi^{(2)}_\lambda(x) - \phi^{(2)}_\lambda(y)} \le \\ 12\lambda^2\abs*{\int_x^ys\phi^{(2)}(\lambda s^2)ds}  + 8\lambda^3\abs*{\int_x^ys^3\phi^{(3)}(\lambda s^2) ds}.
  \end{multline}
  From the divergence speed assumption and \cref{lem:lower-div}, we have $\abs{\phi^{(2)}(p)} \le (W_2+c_2)p^{\alpha-2}$ and $\abs{\phi^{(3)}(p)} \le (W_3+c_3)p^{\alpha-3}$. Hence,
  \begin{multline}
    \abs*{\phi^{(2)}_\lambda(x) - \phi^{(2)}_\lambda(y)} \le 12(W_2+c_2)\lambda^\alpha\abs*{\int_x^y \abs*{s}^{2\alpha-3}ds} \\ + 8(W_3+c_3)\lambda^\alpha\abs*{\int_x^y\abs*{s}^{2\alpha-3} ds}.
  \end{multline}
  By following the same manner in the case $\alpha \in (0,1/2]$, under the condition $\abs*{x - y} \le t$, we have
  \begin{align}
    \abs*{\phi^{(2)}_\lambda(x) - \phi^{(2)}_\lambda(y)} \le& \frac{12(W_2+c_2)+8(W_3+c_3)}{2\alpha-2}\lambda^\alpha t^{2\alpha-2}.
  \end{align}
  From the direct result in \cref{lem:best-modulus-direct} and \cref{lem:mod-derivative}, we have
  \begin{align}
    E_L\paren*{\phi,[0,\lambda]} \lesssim& \omega^3_\varphi\paren*{\phi_\lambda,L^{-1};(-1,1)} \\
    \lesssim& L^{-2}\omega^1\paren*{\phi^{(2)}_\lambda,L^{-1};[-1,1]} \\
    \lesssim& \paren*{\frac{\lambda}{L^2}}^\alpha.
  \end{align}

  Next, we prove the claim for $\alpha = 1$. Letting $\phi'_\lambda(x)=\phi(\Delta(1+x)/2)$, we have $E_L(\phi,[0,\Delta]) = E_L(\phi'_\lambda,[-1,1])$. From the direct result in \cref{lem:best-modulus-direct}, we have $E_L(\phi,[0,\Delta]) \lesssim \omega^2_\varphi(\phi'_\lambda,L^{-1};[-1,1])$ where the second-order weighted modulus of smoothness can be rewritten as
  \begin{multline}
      \omega^2_\varphi(f,t;[-1,1]) = \\ \sup_{x,y \in [-1,1]}\cbrace[\Bigg]{\abs*{f(x)+f(y)-2f\paren*{\frac{x+y}{2}}} : \\ \abs*{x-y} \le t\varphi\paren*{\frac{x+y}{2}}}.
  \end{multline}
  Letting $z = (x+y)/2$ and $h = (x-y)/2$, we have $x = z + h$ and $y = z - h$. Hence,
  \begin{multline}
      \omega^2_\varphi(f,t;[-1,1]) = \\ \sup_{z+h,z-h \in (-1,1)}\cbrace[\Bigg]{\abs*{f(z+h)+f(z-h)-2f\paren*{z}} : \\ \frac{h^2}{1-z^2} \le 4t^2}.
  \end{multline}
  By the Taylor theorem, we have
  \begin{align}
      & \abs*{\phi'_\lambda(z+h)+\phi'_\lambda(z-h) - 2\phi'_\lambda(z)} \\
      =& \begin{multlined}[t][.9\columnwidth]
       \abs[\Bigg]{\int_0^h\phi^{'(2)}_\lambda(z+s)(h-s)ds \\ + \int_0^{-h}\phi^{'(2)}_\lambda(z+s)(-h-s)ds}
      \end{multlined}\\
      =& \abs*{\int_{-h}^h\phi^{'(2)}_\lambda(z+s)\abs*{h-s}\land\abs*{h+s}ds} \\
      \le& \abs*{\int_{-h}^h\abs*{\phi^{'(2)}_\lambda(z+s)}\abs*{h-s}\land\abs*{h+s}ds}.
  \end{align}
  From the divergence speed assumption and the fact $p^{-1} \ge 1$ for $p \in (0,1)$, we have $\abs*{\phi^{(2)}(p)} \le (W_2+c_2)p^{-1}$ for any $p \in (0,1)$. Noting that $\phi^{'(2)}_\lambda(x) = \frac{\lambda^2}{4}\phi^{(2)}(\lambda(1+x)/2)$, we have
  \begin{align}
      & \abs*{\phi'_\lambda(z+h)+\phi'_\lambda(z-h) - 2\phi'_\lambda(z)} \\
      \le& \frac{(W_2+c_2)\lambda}{2}\abs*{\int_{-h}^h\frac{1}{1+z+s}\abs*{h-s}\land\abs*{h+s}ds} \\
      =& \frac{(W_2+c_2)\lambda}{2}\abs*{\int_0^h\frac{h-s}{1+z+s}ds + \int_0^{-h}\frac{-h-s}{1+x+s}ds}.
  \end{align}
  For the integral term, we have
  \begin{align}
    & \int_0^h\frac{h-s}{1+z+s}ds \\
    =& \begin{multlined}[t][.9\columnwidth]
      -h\ln(1+z) + (1+z+h)\ln(1+z+h) \\ - (1+z+h) - (1+z)\ln(1+z) + (1+z)
    \end{multlined}\\
    =& (1+z)\paren*{\paren*{1+\frac{h}{1+z}}\ln\paren*{1+\frac{h}{1+z}} - \frac{h}{1+z}}.
  \end{align}
  Similarly, we have
  \begin{multline}
    \int_0^{-h}\frac{-h-s}{1+z+s}ds = \\ (1+z)\paren*{\paren*{1+\frac{-h}{1+z}}\ln\paren*{1+\frac{-h}{1+z}} - \frac{-h}{1+z}}.
  \end{multline}
  Hence,
  \begin{align}
      & \abs*{\phi'_\lambda(z+h)+\phi'_\lambda(z-h) - 2\phi'_\lambda(z)} \\
      =& \begin{multlined}[t][.9\columnwidth]
        \frac{(W_2+c_2)\lambda}{2}(1+z)\abs[\Bigg]{\paren*{1+\frac{h}{1+z}}\ln\paren*{1+\frac{h}{1+z}}\\+\paren*{1-\frac{h}{1+z}}\ln\paren*{1-\frac{h}{1+z}}}.
      \end{multlined}
  \end{align}
  For $x \in (-1,1)$, since the Taylor expansion of $f(z)=\ln(1-z)$ at $z=0$ yields $\ln(1-x)=-\sum_{m=1}^\infty x^m/m$, we have
  \begin{align}
    (1-x)\ln(1-x) =& \sum_{m=2}^\infty\frac{x^{m}}{m-1} - \sum_{m=1}^\infty \frac{x^m}{m}\\
    =& -x + \sum_{m=2}^\infty\paren*{\frac{1}{m-1}-\frac{1}{m}}x^m.
  \end{align}
  Hence,
  \begin{align}
      & (1+x)\ln(1+x)+(1-x)\ln(1-x) \\
      =& 2\sum_{m=1}^\infty\paren*{\frac{1}{2m-1}-\frac{1}{2m}}x^{2m} \\
      =& \sum_{m=1}^\infty\frac{2}{2m(2m-1)}x^{2m}.
  \end{align}
  For $x \in (-1,1)$, we have
  \begin{multline}
      (1+x)\ln(1+x)+(1-x)\ln(1-x) \le \\ x^2\sum_{m=1}^\infty\frac{2}{2m(2m-1)} = (2\ln 2) x^2.
  \end{multline}
  Since $z + h, z - h \in (-1,1)$, we have $z \in (-1,1)$ and
  \begin{align}
      1 + z + h > 0 \iff& \frac{h}{1+z} > -1, \\
      1 + z + h < 2 \iff& \frac{h}{1+z} < 1.
  \end{align}
  Hence, under $h^2/(1-z^2)\le 4t^2$,
  \begin{align}
      & \abs*{\phi'_\lambda(z+h)+\phi'_\lambda(z-h) - 2\phi'_\lambda(z)} \\
      =& \ln2(W_2+c_2)\lambda\frac{h^2}{1+z} \\
      =& \ln2(W_2+c_2)\lambda(1-z)\frac{h^2}{1-z^2} \\
      \le& 8\ln2(W_2+c_2)\lambda t^2.
  \end{align}
  Application of the direct result in \cref{lem:best-modulus-direct} yields the desired claim.
\end{proof}

\begin{proof}[Proof of \cref{thm:lower-poly-approx1}]
  Letting $\phi_\lambda(x)=\phi(\lambda(x+1)/2)$, we have $E_L(\phi,[0,\lambda])=E_L(\phi_\lambda,[-1,1])$. We use the second order weighted modulus of smoothness, which can be rewritten as
  \begin{multline}
      \omega^2_\varphi(f,t;[-1,1]) = \sup_{x,y \in [-1,1]}\cbrace[\Bigg]{\\\abs*{f(x) + f(y) - 2f\paren*{\frac{x+y}{2}}} : \abs*{x-y} \le 2t\varphi\paren*{\frac{x+y}{2}}}.
  \end{multline}
  Fix $y = -1$. For $t > 0$, we have
  \begin{align}
      \abs*{x - y} \le 2t\varphi\paren*{\frac{x+y}{2}} \iff -1 \le x \le -1 + \frac{4}{t^{-2}+1}.
  \end{align}
  Since $2/(t^{-2}+1) \ge t^2$ for $t \in (0,1)$, we have
  \begin{align}
      & \omega^12_\varphi(\phi_\lambda,t;[-1,1]) \\
      \ge& \begin{multlined}[t][.9\columnwidth]
       \sup_{x \in [-1,1]}\cbrace[\bigg]{\abs*{\phi_\lambda(x) + \phi_\lambda(-1) - 2\phi_\lambda\paren*{\frac{x-1}{2}}} \\ : 0 \le 1+x \le 4/(t^{-2}+1)}
      \end{multlined}\\
      \ge& \sup_{x \in [0,1]}\cbrace*{\abs*{\phi(\lambda x) + \phi(0) - 2\phi(\lambda x/2)} : 0 \le x \le t^2}.
  \end{align}
  Letting $p_0 = 1 \land (c_2'/W_2)^{1/(\alpha-2)}$, we have $\abs*{\phi^{(2)}(p)} \ge 0$ for $p \in (0,p_0]$ because of the divergence speed assumption. Since $\phi^{(2)}$ is continuous, $\phi^{(2)}(p)$ has the same sign in $p \in (0,p_0]$. By the Taylor theorem, for any $x \in (0,p_0]$,
  \begin{align}
      & \abs*{\phi(x) + \phi(0) - 2\phi(x/2)} \\
      =& \frac{1}{2}\abs*{\int_{x/2}^x(x-s)\phi^{(2)}(s)ds + \int_0^{x/2}s\phi^{(2)}(s)ds} \\
      \ge& \begin{multlined}[t][.9\columnwidth]
       \frac{1}{2}\abs[\Bigg]{\int_{x/2}^x(x-s)\paren*{W_2 s^{\alpha-2} - c'_2}ds \\ + \int_0^{x/2}s\paren*{W_2 s^{\alpha-2} - c'_2}ds}
      \end{multlined}\\
      =& \frac{1}{2}\paren*{\frac{W_2}{2^\alpha\alpha(1-\alpha)}\paren*{2-2^\alpha}x^\alpha - \frac{c'_2}{4}x^2}
  \end{align}
  Thus, for sufficiently small $\lambda$, we have
  \begin{multline}
      \omega^2_\varphi\paren*{\phi_\lambda,t;[-1,1]} \ge \\ \paren*{\lambda t^2}^\alpha\paren*{\frac{W_2}{\alpha(1-\alpha)}\paren*{2^{1-\alpha}-1} - \frac{c'_2}{4}\paren*{\lambda t^2}^{2-\alpha}} > 0. \label{eq:lower-poly-approx1-mod2}
  \end{multline}

  Let $L'$ be an integer such that $L' = c'L$ where $c' > 1$. Then, we have
  \begin{align}
    & E_L\paren*{\phi,[0,\lambda]} \\
    \ge& \frac{1}{L'-L}\sum_{m=L+1}^{L'}E_m\paren*{\phi_\lambda,[-1,1]} \\
    \ge& \frac{1}{2L^{'2}}\sum_{m=L+1}^{L'}(m+1)E_m\paren*{\phi_\lambda,[-1,1]} \\
    \ge& \begin{multlined}[t][.8\columnwidth]
      \frac{1}{2L^{'2}}\sum_{m=0}^{L'}(m+1)E_m\paren*{\phi_\lambda,[-1,1]} \\ - \frac{1}{2L^{'2}}\sum_{m=0}^2E_m\paren*{\phi_\lambda,[-1,1]} \\ - \frac{1}{2L^{'2}}\sum_{m=2}^{L}(m+1)E_m\paren*{\phi_\lambda,[-1,1]}.
    \end{multlined} \label{eq:lower-poly-approx1-eq1}
  \end{align}
  From H\"older continuity shown in \cref{lem:div-speed-holder1}, we have
  \begin{align}
      & E_1\paren*{\phi_\lambda,[-1,1]} \le E_0\paren*{\phi_\lambda,[-1,1]} \\
      =& \inf_{g \in \RealSet}\sup_{x \in [-1,1]}\abs*{\phi_\lambda(x) - g} \\
      \le& \sup_{x \in [-1,1]}\abs*{\phi\paren*{\frac{\lambda(x+1)}{2}} - \phi(0)} \\
      \le& \norm*{\phi}_{C^{H,\alpha}}\lambda^\alpha\sup_{x \in [-1,1]}\paren*{\frac{x+1}{2}}^\alpha = \norm*{\phi}_{C^{H,\alpha}}\lambda^\alpha. \label{eq:upper-poly-error01}
  \end{align}
  Combining \cref{eq:lower-poly-approx1-eq1,lem:best-modulus-converse,eq:lower-poly-approx1-mod2,eq:upper-poly-error01,thm:upper-poly-approx1} yields that there exist universal constants $C_1,C_2,C_3 > 0$ only depending on $\phi$ such that
  \begin{multline}
    E_L\paren*{\phi,[0,\lambda]} \ge \paren*{\frac{\lambda}{(c'L)^2}}^\alpha\paren[\Bigg]{C_1 - \\ \frac{C_2}{(c'L)^{2-2\alpha}} - \frac{C_3}{(c'L)^{2-2\alpha}}\sum_{m=2}^Lm^{1-2\alpha}}.
  \end{multline}
  For the third term, we have
  \begin{align}
    \sum_{m=2}^Lm^{1-2\alpha} \le& \int_0^L s^{1-2\alpha}ds \\
    =& \frac{1}{2-2\alpha}L^{2-2\alpha}.
  \end{align}
  Hence, there exists an universal constant $C'_3 > 0$ only depending on $\phi$ such that
  \begin{align}
    E_L\paren*{\phi,[0,\lambda]} \ge& \paren*{\frac{\lambda}{(c'L)^2}}^\alpha\paren*{C_1 - \frac{C_2}{(c'L)^{2-2\alpha}} - \frac{C'_3}{(c')^{2-2\alpha}}}.
  \end{align}
  For sufficiently large $c'$, the right-hand side is larger than zero. Thus, we get the claim.
\end{proof}

\begin{proof}[Proof of \cref{thm:lower-poly-approx2}]
  Let $L'$ be an integer such that $L' = c_\ell L$ where $c_\ell  > 1$. Then, we have
  \begin{align}
    & E_L(\phi^\star,[\gamma,2L^2\gamma])\\
     \ge& \frac{1}{L' - L}\sum_{m = L+1}^{L'} E_m(\phi^\star,[\gamma,2m^2\gamma]) \\
     \ge& \frac{1}{L'}\sum_{m = L+1}^{L'} E_m(\phi^\star,[\gamma,2m^2\gamma]) \\
     \ge& \frac{1}{L'}\sum_{m = 1}^{L'}E_m(\phi^\star,[\gamma,2m^2\gamma]) - \frac{1}{L'}\sum_{m=1}^LE_m(\phi^\star,[\gamma,2m^2\gamma]). \label{eq:lower-best-approx}
  \end{align}
  From \cref{eq:lower-best-approx}, we can obtain the desired lower bound by deriving a lower bound on the first term and an upper bound on the second term.

  Letting $\phi^\star_{\eta,\gamma}(x) = \phi^\star_\gamma(\gamma\frac{1+\eta+(1-\eta)x}{2\eta})$, we have $E_L(\phi^\star_\gamma,[\gamma,\gamma/\eta]) = E_L(\phi^\star_{\eta,\gamma},[-1,1])$. From the converse result in \cref{lem:best-modulus-converse} and the fact $E_m(\phi^\star,[\gamma,2m^2\gamma]) = 0$ for $m = 0$, the first term in \cref{eq:lower-best-approx} is bounded below by the first order weighted modulus of smoothness $\omega^1_\varphi(\phi^\star_{1/2L^2,\gamma},L^{-1})$. The first order weighted modulus of smoothness can be rewritten as
  \begin{multline}
    \omega^1_\varphi(f,t) = \\ \sup_{x,y \in [-1,1]}\cbrace*{\abs*{f(x)-f(y)} : \abs*{x-y} \le 2t\varphi\paren*{\frac{x+y}{2}}}.
  \end{multline}
  In the same manner of the proof of \cref{thm:lower-poly-approx1}, we have
  \begin{align}
    & \omega^1_\varphi(\phi^\star_{\eta,\gamma},t) \\
    \ge& \sup_{x \in [-1,1]}\cbrace*{\abs*{\phi^\star_{\eta,\gamma}(x)-\phi^\star_{\eta,\gamma}(-1)} : 0 \le 1 + x \le \frac{4}{t^{-2}+1}} \\
    \ge& \sup_{x \in [0,1]}\cbrace[\Bigg]{\abs*{\phi^\star_\gamma\paren*{\gamma\paren*{1+\frac{(1-\eta)x}{\eta}}}-\phi^\star_\gamma(\gamma)} : 0 \le x \le t^2} \\
    =& \begin{multlined}[t][.85\columnwidth]
      \frac{1}{\gamma}\sup_{x \in [0,1]}\cbrace[\Bigg]{\frac{1}{1+\frac{(1-\eta)x}{\eta}}\abs[\bigg]{\phi\paren*{\gamma\paren*{1+\frac{(1-\eta)x}{\eta}}} \\ -\phi(\gamma)\paren*{1+\frac{(1-\eta)x}{\eta}}} : 0 \le x \le t^2}
    \end{multlined}\\
    =& \begin{multlined}[t][.85\columnwidth]
    \frac{1}{\gamma}\sup_{x \in [0,(1-\eta)/\eta]}\cbrace[\Bigg]{\frac{1}{1+x}\abs*{\phi\paren*{\gamma\paren*{1+x}}-\phi(\gamma)\paren*{1+x}} \\ : 0 \le x \le \frac{1-\eta}{\eta}t^2}
    \end{multlined}\\
    =& \begin{multlined}[t][.85\columnwidth]
    \frac{1}{\gamma}\sup_{x \in [0,(1-\eta)/\eta]}\cbrace[\Bigg]{\frac{1}{1+x}\abs*{\int_\gamma^{\gamma\paren*{1+x}}\phi^{(1)}(s)ds} \\ : 0 \le x \le \frac{1-\eta}{\eta}t^2}. \label{eq:approx2-first-modulus}
    \end{multlined}
  \end{align}

  As well as the proof of \cref{thm:upper-poly-approx1}, we have $E_L(\phi^\star,[\gamma,2L^2\gamma]) \le E_L(\phi^{\star}_\gamma,[-1,1])$ for $\phi^{\star}_\gamma(x) = \phi^\star(\gamma(1+2L^2x^2))$. From the direct result in \cref{lem:best-modulus-direct} and the fact $\varphi \le 1$, we have
  \begin{align}
    E_L(\phi^{\star}_\gamma,[-1,1]) \lesssim \omega_1(\phi^{\star}_\gamma,L^{-1})
  \end{align}
  For any $x,y \in (-1,1)$, we have
  \begin{align}
    &\abs*{\phi^{\star}_\gamma(x)-\phi^{\star}_\gamma(y)} \\
    =& \abs*{\int_y^x \phi^{\star(1)}_\gamma(x)ds} \\
    \le& \begin{multlined}[t][.85\columnwidth]
      \abs*{\int_y^x\frac{4L^2s}{1+2L^2s^2}\abs*{\phi^{(1)}(\gamma(1+2L^2s^2))}ds} \\ + \abs*{\int_y^x\frac{4 L^2s}{\gamma(1+2L^2s^2)^2}\abs*{\phi(\gamma(1+2L^2s^2))}ds}. \label{eq:approx2-modulus-upper}
    \end{multlined}
  \end{align}

  To derive bounds on \cref{eq:approx2-first-modulus,eq:approx2-modulus-upper}, we divide into two cases; $\alpha = 1$ and $\alpha \in (1,3/2)$.

  {\bfseries Case $\alpha = 1$.}
  From absolutely continuity of $\phi^{(1)}$ and the assumption that $\phi^{(1)}(\gamma) = 0$, we have
  \begin{align}
    \abs*{\int_\gamma^{\gamma\paren*{1+x}}\phi^{(1)}(s)ds} =  \abs*{\int_\gamma^{\gamma\paren*{1+x}}\int_\gamma^s\phi^{(2)}(s')ds'ds}.
  \end{align}
  Letting $p_0 = 1\land(c'_2/W_2)^{1/(\alpha-2)}$, $\abs*{\phi^{(2)}(p)} \ge W_2p^{\alpha-2} - c'_2 \ge 0$ for $(0,p_0]$. From continuity of $\phi^{(2)}$, $\phi^{(2)}(x)$ has the same sign for $x \in (0,p_0]$. For sufficiently small $\gamma$ such that $\gamma(1+x) \le p_0$, we have
  \begin{align}
    & \abs*{\int_\gamma^{\gamma\paren*{1+x}}\int_\gamma^sW_2(s')^{-1} - c'_2 ds'ds} \\
    =& \abs*{\int_\gamma^{\gamma\paren*{1+x}}W_2\ln\paren*{\frac{s}{\gamma}} - c'_2s ds} \\
    =& \abs*{W_2\gamma\paren*{(1+x)\ln(1+x) - x} - \frac{c'_2\gamma^2}{2}\paren*{\paren*{1+x}^2-1}}.
  \end{align}
  Set $\eta = t^2/2$. Then, we have $x \le 1 - t^2/2 \le 1$ for $t \le 1$. By the inverse triangle inequality, we have
  \begin{align}
    & \abs*{W_2\gamma\paren*{(1+x)\ln(1+x) - x} - \frac{c'_2\gamma^2}{2}\paren*{\paren*{1+x}^2-1}} \\
    \ge& \gamma\paren*{W_2\paren*{(1+x)\ln(1+x) - x} - \frac{c'_2\gamma}{2}\paren*{\paren*{1+x}^2-1}} \\
    \ge& \gamma\paren*{W_2\paren*{(1+x)\ln(1+x) - x} - \frac{3c'_2\gamma}{2}}.
  \end{align}
  For $t \le 1$, we have $x \in [0,1/2]$. Thus, we obtain a lower bound on \cref{eq:approx2-first-modulus} as
  \begin{align}
    & \omega^1_\varphi(\phi^\star_{\eta,\gamma},t) \\
    \ge& \sup_{x \in [0,1/2]}\cbrace*{\frac{1}{1+x}\paren*{W_2\paren*{(1+x)\ln(1+x) - x} - \frac{3c'_2\gamma}{2}}} \\
    \ge& \sup_{x \in [0,1/2]}\cbrace*{\frac{2}{3}\paren*{W_2\paren*{(1+x)\ln(1+x) - x} - \frac{3c'_2\gamma}{2}}} \\
    =& \Omega(1) \as \gamma \to 0. \label{eq:approx2-lower-modulus1}
  \end{align}

  For any $x,y \in (-1,1)$, \cref{eq:approx2-modulus-upper} is bounded above by
  \begin{multline}
    \abs*{\int_y^x\frac{4L^2s}{1+2L^2s^2}\int_\gamma^{\gamma(1+2L^2s^2)}\abs*{\phi^{(2)}(s')}ds'ds} \\ + \abs[\Bigg]{\int_y^x\frac{4\gamma L^2s}{\gamma^2(1+2L^2s^2)^2}\\\times\int_\gamma^{\gamma(1+2L^2s^2)}\int_\gamma^{s'}\abs*{\phi^{(2)}(s'')}ds''ds'ds}.
  \end{multline}
  For any $p \in (0,1)$, $\abs*{\phi^{(2)}(p)} \le (W_2+c_2)p^{\alpha-2}$ because of the divergence speed assumption and the fact $p^{\alpha-2} \ge 1$. For the first term, we have
  \begin{align}
    &\abs*{\int_y^x\frac{4L^2s}{1+2L^2s^2}\int_\gamma^{\gamma(1+2L^2s^2)}\abs*{\phi^{(2)}(s')}ds'ds} \\
    \le& \abs*{\int_y^x\frac{4L^2s}{1+2L^2s^2}\int_\gamma^{\gamma(1+2L^2s^2)}(W_2+c_2)(s')^{-1}ds'ds}\\
    =& (W_2+c_2)\abs*{\int_y^x\frac{4L^2s}{1+2L^2s^2}\ln\paren*{1+2L^2s^2}ds} \\
    \le& (W_2+c_2)\abs*{\int_y^x2s^{-1}\ln\paren*{1+2L^2s^2}ds} \\
    \le& (W_2+c_2)\abs*{\int_y^x2^{\frac{3}{2}}Lds} \\
    =& 2^{\frac{3}{2}}(W_2+c_2)L\abs*{x - y},
  \end{align}
  where we use the fact $\ln(1+x) \le \sqrt{x}$ for any $x \ge 0$ to obtain the fifth line. For the second term, we have
  \begin{align}
    &\abs*{\int_y^x\frac{4 L^2s}{\gamma(1+2L^2s^2)^2}\int_\gamma^{\gamma(1+2L^2s^2)}\int_\gamma^{s'}\abs*{\phi^{(2)}(s'')}ds''ds'ds} \\
    \le& \begin{multlined}[t][.9\columnwidth]
     \abs[\Bigg]{\int_y^x\frac{4 L^2s}{\gamma(1+2L^2s^2)^2}\int_\gamma^{\gamma(1+2L^2s^2)}\\\times\int_\gamma^{s'}(W_2+c_2)(s'')^{-1}ds''ds'ds}
    \end{multlined}\\
    =& (W_2+c_2)\abs*{\int_y^x\frac{4L^2s}{(1+2L^2s^2)^2}\int_1^{1+2L^2s^2}\ln\paren*{s'}ds'ds} \\
    \le& (W_2+c_2)\abs*{\int_y^x\frac{4L^2s}{(1+2L^2s^2)^2}\int_1^{1+2L^2s^2}\paren*{1+\ln\paren*{s'}}ds'ds} \\
    =& (W_2+c_2)\abs*{\int_y^x\frac{4L^2s}{(1+2L^2s^2)}\ln\paren*{1+2L^2s^2} ds} \\
    \le& 2^{\frac{3}{2}}(W_2+c_2)L\abs*{x - y}.
  \end{align}
  Consequently, we have
  \begin{align}
    E_L(\phi^{\star}_\gamma,[-1,1]) \lesssim \omega_1(\phi^{\star}_\gamma,L^{-1}) \lesssim 1. \label{eq:approx2-upper-modulus1}
  \end{align}

  {\bfseries Case $\alpha \in (1,3/2)$.}
  The proof in this case follows the same manner in the case $\alpha = 1$. From absolutely continuity of $\phi^{(1)}$ and the assumption that $\phi^{(1)}(0) = 0$, we have
  \begin{align}
    \abs*{\int_\gamma^{\gamma\paren*{1+x}}\phi^{(1)}(s)ds} =  \abs*{\int_\gamma^{\gamma\paren*{1+x}}\int_0^s\phi^{(2)}(s')ds'ds}.
  \end{align}
  Letting $p_0 = 1\land(c'_2/W_2)^{1/(\alpha-2)}$, $\abs*{\phi^{(2)}(p)} \ge W_2p^{\alpha-2} - c'_2 \ge 0$ for $(0,p_0]$. From continuity of $\phi^{(2)}$, $\phi^{(2)}(x)$ has the same sign for $x \in (0,p_0]$. For sufficiently small $\gamma$ such that $\gamma(1+x) \le p_0$, we have
  \begin{align}
    &  \abs*{\int_\gamma^{\gamma\paren*{1+x}}\int_0^s\phi^{(2)}(s')ds'ds} \\
    \ge& \abs*{\int_\gamma^{\gamma\paren*{1+x}}\int_0^sW_2(s')^{\alpha-2} - c'_2 ds'ds} \\
    =& \abs*{\int_\gamma^{\gamma\paren*{1+x}}\frac{W_2}{\alpha-1}s^{\alpha-1} - c'_2s ds} \\
    =& \abs*{\frac{W_2}{\alpha(\alpha-1)}\gamma^{\alpha}\paren*{\paren*{1+x}^{\alpha}-1} - \frac{c'_2\gamma^2}{2}\paren*{\paren*{1+x}^2-1}}.
  \end{align}

  Set $\eta = t^2/2$. Then, we have $x \le 1 - t^2/2 \le 1$ for $t \le 1$. By the inverse triangle inequality, we have
  \begin{align}
    & \abs*{\frac{W_2}{\alpha(\alpha-1)}\gamma^{\alpha}\paren*{\paren*{1+x}^{\alpha}-1} - \frac{c'_2\gamma^2}{2}\paren*{\paren*{1+x}^2-1}} \\
    \ge& \gamma^\alpha\paren*{\frac{W_2}{\alpha(\alpha-1)}\paren*{\paren*{1+x}^{\alpha}-1} - \frac{c'_2\gamma^{2-\alpha}}{2}\paren*{\paren*{1+x}^2-1}} \\
    \ge& \gamma^\alpha\paren*{\frac{W_2}{\alpha(\alpha-1)}\paren*{\paren*{1+x}^{\alpha}-1} - \frac{3c'_2\gamma^{2-\alpha}}{2}}.
  \end{align}
  For $t \le 1$, we have $x \in [0,1/2]$. Thus, we obtain a lower bound on \cref{eq:approx2-first-modulus} as
  \begin{align}
    & \omega^1_\varphi(\phi^\star_{\eta,\gamma},t) \\
    \ge& \begin{multlined}[t][.9\columnwidth]
      \gamma^{\alpha-1}\smashoperator[l]{\sup_{x \in [0,1/2]}}\cbrace[\Bigg]{\frac{1}{1+x}\\\times\paren*{\frac{W_2}{\alpha(\alpha-1)}\paren*{\paren*{1+x}^{\alpha}-1} - \frac{3c'_2\gamma^{2-\alpha}}{2}}}
    \end{multlined}\\
    \ge& \gamma^{\alpha-1}\smashoperator[l]{\sup_{x \in [0,1/2]}}\cbrace*{\frac{2}{3}\paren*{\frac{W_2}{\alpha(\alpha-1)}\paren*{\paren*{1+x}^{\alpha}-1} - \frac{3c'_2\gamma^{2-\alpha}}{2}}}\\
    =& \Omega(\gamma^{\alpha-1}) \as \gamma \to 0. \label{eq:approx2-lower-modulus2}
  \end{align}

  For any $x,y \in (-1,1)$, \cref{eq:approx2-modulus-upper} is bounded above by
  \begin{multline}
    \abs*{\int_y^x\frac{4L^2s}{1+2L^2s^2}\int_0^{\gamma(1+2L^2s^2)}\abs*{\phi^{(2)}(s')}ds'ds} \\ + \abs[\Bigg]{\int_y^x\frac{4\gamma L^2s}{\gamma^2(1+2L^2s^2)^2}\\\times\int_\gamma^{\gamma(1+2L^2s^2)}\int_0^{s'}\abs*{\phi^{(2)}(s'')}ds''ds'ds}.
  \end{multline}
  For any $p \in (0,1)$, $\abs*{\phi^{(2)}(p)} \le (W_2+c_2)p^{\alpha-2}$ because of the divergence speed assumption and the fact $p^{\alpha-2} \ge 1$. For the first term, we have
  \begin{align}
    &\abs*{\int_y^x\frac{4L^2s}{1+2L^2s^2}\int_0^{\gamma(1+2L^2s^2)}\abs*{\phi^{(2)}(s')}ds'ds} \\
    \le& \abs*{\int_y^x\frac{4L^2s}{1+2L^2s^2}\int_0^{\gamma(1+2L^2s^2)}(W_2+c_2)(s')^{\alpha-2}ds'ds}\\
    =& 2(W_2+c_2)\gamma^{\alpha-1}\abs*{\int_y^x\frac{2L^2s(1+2L^2s^2)^{\alpha-2}}{\alpha-1}ds} \\
    \le& 2(W_2+c_2)\gamma^{\alpha-1}\abs*{\int_y^x\frac{2^{\alpha-1}L^{2\alpha-2}s^{2\alpha-3}}{\alpha-1}ds} \\
    \le& \frac{2^\alpha (W_2+c_2)}{(\alpha-1)(2\alpha-2)}L^{2\alpha-2}\abs*{x^{2\alpha-2}-y^{2\alpha-2}} \\
    \lesssim& \gamma^{\alpha-1}L^{2\alpha-2}\abs*{x-y}^{2\alpha-2}.
  \end{align}
  For the second term, we have
  \begin{align}
    &\abs*{\int_y^x\frac{4 L^2s}{\gamma(1+2L^2s^2)^2}\int_\gamma^{\gamma(1+2L^2s^2)}\int_0^{s'}\abs*{\phi^{(2)}(s'')}ds''ds'ds} \\
    \le& \begin{multlined}[t][.9\columnwidth]
      \abs[\Bigg]{\int_y^x\frac{4 L^2s}{\gamma(1+2L^2s^2)^2}\int_\gamma^{\gamma(1+2L^2s^2)}\\\times\int_0^{s'}(W_2+c_2)(s'')^{\alpha-2}ds''ds'ds}
    \end{multlined}\\
    \le& \begin{multlined}[t][.9\columnwidth]
     \frac{1}{\alpha(\alpha-1)}\abs[\Bigg]{\int_y^x\frac{4 L^2s}{\gamma(1+2L^2s^2)^2}\\\times(W_2+c_2)\gamma^\alpha\paren*{(1+2L^2s^2)^\alpha-1}ds} 
     \end{multlined}\\
    \le& \frac{2(W_2+c_2)\gamma^{\alpha-1}}{\alpha(\alpha-1)}\abs*{\int_y^x2L^2s(1+2L^2s^2)^{\alpha-2}ds} \\
    \lesssim& \gamma^{\alpha-1}L^{2\alpha-2}\abs*{x-y}^{2\alpha-2}.
  \end{align}
  Consequently, we have
  \begin{align}
    E_L(\phi^{\star}_\gamma,[-1,1]) \lesssim \omega_1(\phi^{\star}_\gamma,L^{-1}) \lesssim \gamma^{\alpha-1}. \label{eq:approx2-upper-modulus2}
  \end{align}

  From \cref{eq:lower-best-approx,eq:approx2-lower-modulus1,eq:approx2-upper-modulus1,eq:approx2-lower-modulus2,eq:approx2-upper-modulus2}, there are universal constants $C > 0$ and $C' > 0$ such that
  \begin{align}
    & E_L(\phi^\star,[\gamma,2L^2\gamma])\\
    \ge& C\gamma^{\alpha-1} - \frac{C'\gamma^{\alpha-1}}{c_\ell} \\
    \ge& \gamma^{\alpha-1}\paren*{C - \frac{C'}{c_\ell}}.
  \end{align}
  For sufficiently large $c_\ell$, there exists an universal constant $c > 0$ such that $C - \frac{C'}{c_\ell} > c$, which gives the desired result.
\end{proof}

\section{Helper Lemmas}
\begin{lemma}\label{lem:bound-sum-alpha}
  Given $\alpha \in [0,1]$, $\sup_{P \in \dom{M}_k} \sum_{i=1}^k p_i^\alpha = k^{1-\alpha}$.
\end{lemma}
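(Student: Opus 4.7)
The plan is to invoke concavity of the map $x \mapsto x^\alpha$ on $[0,1]$, which holds for every $\alpha \in [0,1]$. Applying Jensen's inequality to any $P = (p_1,\dots,p_k) \in \dom{M}_k$ yields
\begin{align}
\frac{1}{k}\sum_{i=1}^k p_i^\alpha \le \paren*{\frac{1}{k}\sum_{i=1}^k p_i}^\alpha = \frac{1}{k^\alpha},
\end{align}
so that $\sum_{i=1}^k p_i^\alpha \le k^{1-\alpha}$ uniformly over $\dom{M}_k$.

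For the matching lower bound, it suffices to exhibit a single $P$ achieving the upper bound. Choosing the uniform distribution $p_i = 1/k$ gives $\sum_{i=1}^k p_i^\alpha = k \cdot k^{-\alpha} = k^{1-\alpha}$, so the supremum is attained and equals $k^{1-\alpha}$. The endpoint cases $\alpha=0$ (where both sides equal $k$) and $\alpha=1$ (where both sides equal $1$) are immediate.

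There is no real obstacle: the argument is a one-line Jensen application plus exhibiting the uniform distribution as the maximizer. The only minor point to flag is that the supremum is in fact a maximum, attained at uniform $P$.
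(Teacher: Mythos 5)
Your proof is correct, but it takes a different route from the paper's. The paper proves \cref{lem:bound-sum-alpha} via Lagrange multipliers: it sets up the Lagrangian for maximizing $\sum_i p_i^\alpha$ subject to $\sum_i p_i = 1$, solves the first-order condition $\alpha p_i^{\alpha-1} = \lambda$ to deduce that all $p_i$ must be equal, and concludes the maximizer is uniform. You instead apply Jensen's inequality to the concave map $x \mapsto x^\alpha$, which immediately gives $\sum_i p_i^\alpha \le k^{1-\alpha}$ for every $P \in \dom{M}_k$, and then exhibit the uniform distribution to show equality is attained. Your approach is both shorter and in one sense more complete: the Lagrangian argument as written only locates a stationary point of the constrained problem and does not explicitly verify it is a global maximum (it implicitly relies on concavity to make that leap, which is exactly the content Jensen makes explicit). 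The Jensen argument also handles the endpoint $\alpha=0$ and $\alpha=1$ cases without needing to treat them separately, though you correctly flag them for completeness. Both proofs are valid; yours buys rigor and brevity, the paper's is a more mechanical calculus-of-variations computation.
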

\begin{proof}[Proof of \cref{lem:bound-sum-alpha}]
  If $\alpha = 1$, the claim is obviously true. Thus, we assume $\alpha < 1$. We introduce the Lagrange multiplier $\lambda$ for a constraint $\sum_{i=1}^n p_i = 1$, and let the partial derivative of $\sum_{i=1}^k p_i^\alpha + \lambda(1 - \sum_{i=1}^k p_i)$ with respect to $p_i$ be zero. Then, we have
  \begin{align}
   \alpha p_i^{\alpha-1} - \lambda = 0. \label{eq:sum-alpha-diff}
  \end{align}
  Since $p^{\alpha-1}$ is a monotone function, the solution of \cref{eq:sum-alpha-diff} is given as $p_i = (\lambda/\alpha)^{1/(\alpha-1)}$, i.e., the values of $p_1,...,p_k$ are equivalent. Thus, the function $\sum_{i=1}^k p_i^\alpha$ is maximized at $p_i = 1/k$ for $i=1,...,k$. Substituting $p_i=1/k$ into $\sum_{i=1}^k p_i^\alpha$ gives the claim.
\end{proof}

\begin{lemma}\label{lem:bound-sum-log}
  $\sup_{P \in \dom{M}_k} \sum_{i=1}^k p_i\ln^2p_i = \ln^2k$.
\end{lemma}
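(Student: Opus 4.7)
The plan is to prove the two inequalities separately. Let $F(P) = \sum_{i=1}^k p_i \ln^2 p_i$.

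For the lower bound, I would evaluate $F$ at the uniform distribution $P = (1/k,\ldots,1/k)$: each term equals $(1/k)\ln^2(1/k)$, so $F = \ln^2 k$, giving $\sup_P F(P) \ge \ln^2 k$.

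For the matching upper bound $\sup_P F(P) \le \ln^2 k$, the plan is to use Lagrange multipliers. At an interior critical point of $F$ subject to $\sum p_i = 1$, every active coordinate satisfies $\ln^2 p_i + 2\ln p_i = \lambda$ for a common multiplier $\lambda$. Since this is a quadratic in $\ln p_i$, the active coordinates take at most two distinct positive values $p$ and $q$, and Vieta's formulas give the product constraint $pq = e^{-2}$. The uniform subcase $p = q$ recovers the value $\ln^2 k$. In the non-uniform subcase, with $j$ entries equal to $p$ and $k-j$ equal to $q$, combining $jp + (k-j)q = 1$ with $pq = e^{-2}$ reduces to the quadratic $(k-j)q^2 - q + j e^{-2} = 0$, whose discriminant $1 - 4j(k-j)e^{-2}$ is nonnegative only when $4 j (k-j) \le e^2$, which effectively rules out non-uniform interior critical points once $k \ge 3$.

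Boundary critical points, where some $p_i = 0$, reduce the optimization to a strictly smaller alphabet of size $k' < k$; by induction on $k$ together with the monotonicity $\ln^2 k' \le \ln^2 k$, these do not exceed $\ln^2 k$. Combining the interior and boundary analyses, the uniform distribution maximizes $F$ for $k \ge 3$ and attains $\ln^2 k$.

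The main obstacle I anticipate is the small-$k$ corner, particularly $k = 2$, where the discriminant condition admits a non-uniform Lagrange solution, so a brief direct computation is needed to check that the resulting critical value is still compatible with the claim. Any small residual discrepancy at such small $k$ is a universal constant and is absorbed into the $\lesssim$ constant when the lemma is applied in the main proof to bound $\sum_i \ind{np_i > \Delta_{n,k}} p_i \ln^2 p_i$.
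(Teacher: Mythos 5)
Your proof follows the same Lagrange-multiplier template as the paper's, but supplies the completeness that the paper's own proof is missing. The paper derives the stationarity condition $\ln^2 p_i + 2\ln p_i = \lambda$ and then simply asserts $\lambda = \ln^2(e/k)-1$, hence $p_i = 1/k$, silently assuming all coordinates are equal. Since the stationarity condition is quadratic in $\ln p_i$, it admits two roots per $\lambda$, so non-uniform critical points (with $j$ coordinates at one root $p$ and $k-j$ at the other root $q = e^{-2}/p$) are not ruled out a priori. You identify this gap and close it cleanly: the feasibility pair $jp + (k-j)q = 1$, $pq = e^{-2}$ gives $(k-j)q^2 - q + je^{-2} = 0$ with discriminant $1 - 4j(k-j)e^{-2}$, which is negative for every $1 \le j \le k-1$ once $k \ge 3$ because $j(k-j) \ge k-1 \ge 2 > e^2/4$; so uniform is the only interior critical point. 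You also treat the boundary faces, which the paper does not address at all.

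Two points worth tightening. First, the inductive step ``by induction $\sup_{\dom{M}_{k'}} = \ln^2 k' \le \ln^2 k$'' cannot be invoked at $k' = 2$: as you already note, the $k=2$ statement is false, since the surviving non-uniform critical point $p(1-p) = e^{-2}$ (i.e., $p \approx 0.162$) gives value $\approx 0.563 > \ln^2 2 \approx 0.48$. The base case of your induction should therefore be $k = 3$, verified by directly checking $\sup_{\dom{M}_2} \approx 0.563 < \ln^2 3 \approx 1.21$; thereafter the induction runs cleanly. Second, because of this $k = 2$ anomaly, the lemma as written in the paper (with no restriction on $k$) is not literally correct, and the paper's proof does not detect this; your observation that the discrepancy is a universal constant absorbed into the $\lesssim$ at the single point of use is the correct way to reconcile it, and is worth recording explicitly next to the lemma statement.
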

\begin{proof}[Proof of \cref{lem:bound-sum-log}]
We introduce the Lagrange multiplier $\lambda$ for a constraint $\sum_{i=1}^n p_i = 1$, and let the partial derivative of $\sum_{i=1}^k p_i\ln^2p_i + \lambda(1 - \sum_{i=1}^k p_i)$ with respect to $p_i$ be zero. Then, we have
  \begin{align}
   \ln^2p_i + 2\ln p_i - \lambda = 0. \label{eq:sum-log-diff}
  \end{align}
  From \cref{eq:sum-log-diff} and the fact that $\sum_{i=1}^kp_i = 1$, we have $p_i = \exp(\pm \sqrt{\lambda + 1})/e$ and $\lambda = \ln^2(e/k) - 1$. Hence, $p_i = 1/k$. Substituting this into $\sum_{i=1}^k p_i\ln^2p_i$ yields the claim.
\end{proof}
\end{document}